\theoremstyle{plain}
\newtheorem{theorem}{Theorem}[section]
\newtheorem{corollary}[theorem]{Corollary}
\newtheorem{lemma}[theorem]{Lemma}
\newtheorem{proposition}[theorem]{Proposition}
\newtheorem*{ac}{Acknowledgement}
\newtheorem{remark}[theorem]{Remark}
\newtheorem{definition}[theorem]{Definition}
\newtheorem{notation}{Notation}[section]
\newtheorem{example}[theorem]{Example}
\newcommand{\cM}{\mathcal{M}}
\newcommand{\cN}{\mathcal{N}}
\newcommand{\cL}{\mathcal{L}}
\newcommand{\cJ}{\mathcal{J}}
\newcommand{\cR}{\mathcal{R}}
\newcommand{\cH}{\mathcal{H}}
\newcommand{\sI}{\mathscr{I}}
\newcommand{\bC}{\mathbb{C}}
\newcommand{\bZ}{\mathbb{Z}}
\newcommand{\bE}{\mathbb{E}}
\newcommand{\bN}{\mathbb{N}}
\newcommand{\bR}{\mathbb{R}}
\newcommand{\Ker}{\mathrm{Ker}}
\newcommand{\K}{\mathbf{K}}
\newcommand{\CS}{\mathfrak{CS}}
\newcommand{\Div}{\mathrm{Div}}
\newcommand{\Ran}{\mathrm{Ran }}
\newcommand{\Dom}{\mathrm{Dom }}
\newcommand{\Tr}{\mathrm{Tr}}
\newcommand{\grad}{\mathrm{grad}}
\newcommand{\fix}{\mathscr{M}}
\newcommand{\nfix}{\mathscr{N}}
\newcommand{\ThinLine}{\draw[line width=1pt]}
\newcommand{\FillGrey}{\path[fill=gray!40!white]}
\newcommand{\FillWhite}{\path[fill=white]}
\begin{document}

\title{Bimodule Quantum Markov Semigroups}

\author{Jinsong Wu}
\address{Jinsong Wu, Beijing Institute of Mathematical Sciences and Applications, Beijing, 101408, China}
\email{wjs@bimsa.cn}

\author{Zishuo Zhao}
\address{Zishuo Zhao, Tsinghua University, Beijing, 100084, China}
\email{zzs21@mails.tsinghua.edu.cn}


\date{}

\begin{abstract}
We present a systematic investigation of bimodule quantum Markov semigroups within the framework of quantum Fourier analysis. 
We introduce the concepts of bimodule detailed balance conditions and bimodule KMS symmetry, which not only generalize the classical notions of detailed balance but also expose interesting structures of quantum channels. 
We demonstrate that the evolution of densities governed by the bimodule quantum Markov semigroup with bimodule detailed balance is the gradient flow for the relative entropy with respect to the hidden density. 
Consequently, we obtain a modified logarithmic Sobolev inequality and a Talagrand inequality with respect to a hidden density from higher dimensional structure. 
Furthermore, we establish a Poincar\'{e} inequality for irreducible inclusions and relative ergodic bimodule quantum semigroups.
\end{abstract}

\maketitle

\section{Introduction} 

Classical Markov semigroups, such as those governing heat flow, play a fundamental role in harmonic analysis. Several important inequalities, including Young's inequality, the entropy power inequality, and the logarithmic Sobolev inequality, can be derived using the heat flow method.

In quantum mechanics, states of a system is represented by a Hilbert space, while observables correspond to self-adjoint operators. 
The system of observables is described by von Neumann algebras. 
A Markov semigroup acting on von Neumann algebras is commonly referred to as a quantum Markov semigroup.
The quantum Markov semigroup \cite{Lin76,Fri78, FriVer82} is a powerful tool in quantum statistical mechanics for modeling open quantum systems. 
It also plays a crucial role in noncommutative analysis, noncommutative probability and noncommutative geometry. 

In quantum statistical mechanics, an open system interacts with a heat flow in thermal equilibrium, which mathematically corresponds to an equilibrium state.
Due to the noncommutativity of the setting, the symmetries of the heat flow relative to the equilibrium state are more intricate than in the classical case.
Two fundamental examples of such symmetries are Gelfand-Naimark-Segal (GNS) symmetry \cite{Wirth2022a, wirth2022b} and Kubo-Martin-Schwinger (KMS) symmetry \cite{FagReb15,FagUma08, FagUma10, KFGV77, VerWir23}. 

Modern subfactor theory was pioneered by Vaughan Jones \cite{Jon83, Jon85}.
His work on subfactors revealed a wealth of unexpected symmetries beyond classical group symmetry, now known as quantum symmetries.
The axiomatic characterization of subfactors includes Ocneanu's paragroups \cite{Ocn88}, Popa's $\lambda$-lattices\cite{Popa95}, and Jones' planar algebras \cite{Jones2021}. 
Among these, Jones' planar algebras provide a topological framework for studying quantum symmetries, consisting of a sequence of $n$-box spaces.
Quantum Markov semigroups, when encoded by quantum symmetries, exhibit highly intricate and fascinating structural properties.

In this paper, we investigate quantum Markov semigroups on a finite inclusion $\cN\subset\cM$ of finite von Neumann algebras that preserve $\cN$, referred to as bimodule quantum Markov semigroups.
Assuming that the inclusion is a $\lambda$-extension\cite{PimPop86}, we leverage the computational advantages of Jones' planar algebras. 
We explore equilibrium, GNS symmetry, and KMS symmetry in the bimodule setting.
Inspired by the quantum Fourier analysis developed by Jaffe, Jiang, Liu, Ren, and Wu \cite{JJLRW20}, we introduce the notions of bimodule GNS symmetry, and bimodule KMS symmetry within the framework of quantum Fourier analysis (See Theorems \ref{thm:equivbalance}, \ref{thm:kms}).
These bimodule symmetries significantly broaden the study of symmetric quantum channels and quantum semigroups.

The bimodule quantum Markov semigroups can be fully characterized by $\mathfrak{F}$ positive elements, as introduced by Huang, Jaffe, Liu, and Wu\cite{HJLW23}, in the 2-box space, together with self-adjoint elements in the 1-box space (see Proposition \ref{prop:generatorformula} and Theorem \ref{thm:generatorform}). 
The derivation associated with a bimodule quantum Markov semigroup resides in the 3-box space.
Utilizing this characterization of the derivations, we establish the Poincaré inequality for irreducible subfactors (Theorem \ref{thm:poincare1}).
For bimodule GNS symmetric and relatively ergodic Markov semigroups, we analyze the limit of the semigroup and derive the equation of the gradient flow. 
Additionally, we introduce the concept of hidden density, obtained by projecting elements from the 2-box space into the 1-box space.
This additional structure in the bimodule semigroup framework allows us to establish both the bimodule logarithmic Sobolev inequality (Theorem \ref{thm:entropydecay}) under the intertwining property introduced in\cite{CarMaa17}, and the bimodule Talagrand inequality (Theorem \ref{thm:talagrand}). 
We shall investigate the bimodule KMS symmetric semigroups in the future.


The rest of the paper is organized as follows. 
In Section \ref{sec:: Preliminaries}, we recall the $\lambda$-extension of finite von Neumann algebras, providing examples of finite inclusions and demonstrating that our work includes matrix cases. 
Section \ref{sec:: Bimodule quantum channels} focuses on bimodule quantum channels and their Fourier multipliers. 
Section \ref{sec:: Bimodule Equilibrium and Bimodule GNS Symmetry} introduces the notions of bimodule GNS symmetry, and bimodule KMS symmetry, showing how they naturally extend from GNS symmetry and KMS symmetry. 
In Section \ref{sec:: Bimodule Quantum Markov Semigroups}, we study bimodule quantum Markov semigroups within the framework of quantum Fourier analysis and introduce the derivation in 3-box spaces. 
We also prove the Poincaré inequality for irreducible subfactors. 
In Section \ref{sec:: Gradient Flow}, we analyze the gradient flow of bimodule GNS symmetric and relatively ergodic Markov semigroups, 
introducing the concept of hidden density within the semigroup. 
Using this, we derive the logarithmic Sobolev inequality and Talagrand inequality with respect to the hidden density.

\begin{ac}
J. Wu was supported by grants from Beijing Institute of Mathematical Sciences and Applications.
J.~W. was supported by NSFC (Grant no. 12371124) and partially supported by NSFC (Grant no. 12031004). 
Z. Zhao was supported by Beijing Natural Science Foundation (Grant No. Z221100002722017). 
\end{ac}

\section{Preliminaries}\label{sec:: Preliminaries}

Let $\cN\subset \cM$ be an unital inclusion of finite von Neumann algebras and $\tau$ be a normal faithful tracial state on $\cM$. 
Denote by $L^2(\cM, \tau)$ the GNS Hilbert space of $\tau$, with cyclic separating vector $\Omega$ and modular conjugation $J$ given by $Jx\Omega = x^*\Omega$ for all $x\in\cM$. 
Let $e_1$ be the Jones projection from $L^2(\cM, \tau)$ onto $L^2(\cN, \tau)$ and $\bE_{\cN}$ the $\tau$-preserving conditional expectation of $\cM$ onto $\cN$.  
We have that $e_1 x e_1 =\bE_{\cN}(x)e_1$ for all $x\in\cM$. 
The basic construction $\cM_1=\langle \cM, e_1\rangle$ is the von Neumann algebra generated by $\cM$ and $e_1$. 
The inclusion $\cN\subset\cM$ is called finite if $\cM_1$ is a finite von Neumann algebra, and irreducible if $\cN'\cap \cM=\mathbb{C}$, where $\cN'$ is the commutant of $\cN$ on $L^2(\cM, \tau)$.  
We have $\cM_1 = J\cN'J$. 
We focus on the finite inclusions of finite von Neumann algebras in this paper. 

Suppose $\tau_1$ is a faithful normal trace on $\cM_1$ extending $\tau$, and let $\mathbb{E}_{\cM}$ be the trace-preserving conditional expectation onto $\cM$. 
The pair $(\cM_1, \tau_1)$ is called a $\lambda$-extension of $\cN\subset\cM$ if $\tau_1|_{\cM}=\tau$ and $\bE_{\cM}(e_1)=\lambda$ for some positive constant $\lambda$. 
The index of the extension is defined as $[\cM:\cN] = \lambda^{-1}$. 
We denote by $\Omega_1$ the cyclic and separating vector in $L^2(\cM_1, \tau_1)$, and by $e_2$ the Jones projection of $L^2(\cM_1, \tau_1)$ onto $L^2(\cM,\tau_1)$. 
The $\lambda$-extension is called extremal if $\tau_1(x) = \tau_1(Jx^*J)$ for all $x\in\cN'\cap \cM$. 
In this paper, $\lambda$-extensions are always assumed to be extremal. 
We denote $\mathbb{E}_{\cN'}$ to be the $\tau_1$-preserving conditional expectation from $\cM_1$ onto $\cN'\cap \cM_1$. 

Let $\cM_2 = \langle \cM_1,e_2\rangle$ be the basic construction of the inclusion $\cM\subset \cM_1$, with a normal faithful trace $\tau_2$ extending $\tau_1$. 
We assume $\cM_1\subset \cM_2$ is a $\lambda$-extension of $\cM\subset\cM_1$, i.e. $\mathbb{E}_{\cM_1}(e_2) = \lambda$, where $\mathbb{E}_{\cM_1}$ is the $\tau_2$-preserving conditional expectation onto $\cM_1$. 
Denote by $\mathbb{E}_{\cM'}$ the $\tau_2$-preserving conditional expectation from $\cM_2$ onto $\cM'\cap \cM_2$.

In the following, we assume that $\cN \subset \cM\subset \cM_1$ are $\lambda$-extensions.
Then there exists a finite set $\{\eta_j\}_{j=1}^m$ of operators in $\cM$ called Pimsner-Popa basis for $\cN\subset \cM$, which satisfies $\displaystyle x=\sum_{j=1}^m  \bE_{\mathcal{N}}(x\eta^*_j)\eta_j$, for all $x\in\cM$\cite{PimPop86}. 
In terms of the Jones projection $e_1$, this condition is expressed as $\displaystyle \sum_{j=1}^m\eta_j^*e_1\eta_j =1$. 
This implies that any operator in $\cM_1$ is a finite sum of operators of the form $ae_1b$ with $a,b\in \cM$. 
As a consequence, for any $y\in \cM_1$, there is a unique $x\in\cM$ such that $ye_1 = xe_1$.
This indicates that $x=\lambda^{-1}\bE_{\cM}(ye_1)$.
Moreover $\displaystyle \sum_{j=1}^m\eta_j^*\eta_j =\lambda^{-1}$. 
We shall assume that the basis is orthogonal, that is $\bE_{\cN}(\eta_k\eta_j^*)=0$ for $k\neq j$. 
The conditional expectation $\bE^{\cN'}_{\cM'}(=\bE_{\cM'})$ from $\cN' = J\cM_1 J$ onto $\cM' = J\cM J$ can be written as 
\begin{align*}
    \bE^{\cN'}_{\cM'} (x) = \lambda\sum^m_{j=1} \eta^*_j x \eta_j,\quad x\in \cM_2.
\end{align*}
Note that this implies that $\mathbb{E}_{\cM'}(e_1) = \lambda$. 
We also have $\mathbb{E}_{\cM'}(yx) = \mathbb{E}_{\cM'}(xy)$ for all $y\in \cM$ and $x\in \cM_2$. 

The Pimsner-Popa inequality \cite{PimPop86} for the inclusion states that $\bE_{\cN}(x)\geq \lambda_{\cN\subset\cM} x$ for any $0\leq x\in \cM$, where $\lambda_{\cN\subset\cM}$ is the Pimsner-Popa constant. 

The basic construction from a $\lambda$-extension is assumed to be iterated to produce the Jones tower 
\begin{align*}
\cN\subset\cM\subset\cM_1\subset\cM_2\subset\cdots.
\end{align*}
The sequence of higher relative commutants consists the standard invariant of the inclusion initial. 
The standard invariants are axiomatized by planar algebras in \cite{Jones2021}.

\subsection{Fourier Transform} 

The Fourier transform $\mathfrak{F}: \cN'\cap \cM_1 \to \cM'\cap \cM_2$ and the inverse Fourier transform $\mathfrak{F}^{-1}$ are defined as 
\begin{align}\label{eqn:: Fourier transform}
    \mathfrak{F}(x)=&\lambda^{-3/2}\bE_{\cM'}(xe_2e_1), \quad x\in \cN'\cap \cM_1.\\
    \mathfrak{F}^{-1}(x) =& \lambda^{-3/2}\bE_{\cM_1}(xe_1e_2), \quad x\in \cM'\cap \cM_2. 
\end{align}
We check that for any $x\in \cN'\cap \cM_1$:
\begin{align*}
    \mathfrak{F}^{-1}(\mathfrak{F}(x))
    &= \mathfrak{F}^{-1}(\lambda^{-3/2}\lambda\sum_{j=1} ^m\eta^*_jxe_2e_1\eta_j)\\
    &= \mathfrak{F}^{-1}(\lambda^{-1/2}\sum_{j=1} ^m \eta^*_jxe_2e_1\eta_j)\\
    &= \lambda^{-3/2}\lambda^{-1/2}\sum_{j=1} ^m \mathbb{E}_{\cM_1}(\eta^*_jxe_2e_1\eta_je_1e_2)\\
    &= \lambda^{-2}\sum_{j=1} ^m\mathbb{E}_{\cM_1}(\eta^*_j\mathbb{E}_{\cN}(\eta_j)xe_2e_1e_2)\\
    &= \lambda^{-2}x\mathbb{E}_{\cM_1}(\lambda e_2) = x.
\end{align*}
It is then readily checked that $\mathfrak{F}$ satisfies the Plancherel identity: $\tau_{2}(\mathfrak{F}(x)^*\mathfrak{F}(x)) = \tau_{1}(x^*x)$, for all $x\in \cN'\cap \cM_1$.  
\begin{align*}
    \lambda^{-3}\tau_2(\bE_{\cM'}(e_1e_2x^*)\bE_{\cM'}(xe_2e_1)) &= \lambda^{-2}\sum_{j=1}^m \tau_2(e_1e_2 x^*\eta^*_jxe_2e_1\eta_j)\\
    &= \lambda^{-2}\sum_{j=1}^m \tau_2(e_2 x^*\eta^*_jxe_2\bE_{\cN}(\eta_j)e_1)\\
    &= \lambda^{-2}\sum_{j=1}^m \tau_2(e_2 x^*\eta^*_j\bE_{\cN}(\eta_j)xe_2e_1)\\
    &= \lambda^{-2}\tau_2(x^*xe_2e_1e_2) = \tau_1(x^*x). 
\end{align*}
For irreducible inclusions, we have the Hausdorff-Young inequality, which states $\|\mathfrak{F}(x)\|_\infty\leq \lambda^{-1/2}\|x\|_1$, where $\|x\|_1=\tau_1(|x|)$. 

In the planar algebra of the inclusion, the Fourier transform from $\cN'\cap \cM_1$ to $\cM'\cap \cM_2$ is represented by the $90$-degree clockwise rotation: 
\begin{align*}
\mathfrak{F}(x):=\raisebox{-0.9cm}{
\begin{tikzpicture}[scale=1.5]
\path [fill=gray!40] (-0.3, -0.4) rectangle (0.8, 0.9);
\path [fill=white] (0.35, -0.4)--(0.35, 0.5) .. controls +(0, 0.3) and +(0, 0.3) .. (0.65, 0.5)--(0.65, -0.4);
\path[fill=white] (0.15, 0.9) -- (0.15, 0) .. controls +(0, -0.3) and +(0, -0.3) .. (-0.15, 0)--(-0.15, 0.9);
\draw [blue, fill=white] (0,0) rectangle (0.5, 0.5);
\node at (0.25, 0.25) {$x$};
\draw (0.35, 0)--(0.35, -0.4) (0.15, 0.5)--(0.15, 0.9);
\draw (0.35, 0.5) .. controls +(0, 0.3) and +(0, 0.3) .. (0.65, 0.5)--(0.65, -0.4);
\draw (0.15, 0) .. controls +(0, -0.3) and +(0, -0.3) .. (-0.15, 0)--(-0.15, 0.9);
\end{tikzpicture}}
\;.
\end{align*}
There is a Fourier transform from $\cM'\cap\cM_2$ to $\cN'\cap\cM_1$. 
We shall denote it and its inverse also by $\mathfrak{F}$ and $\mathfrak{F}^{-1}$ respectively: 
\begin{align}\label{eqn:: Fourier transform on the dual}
    \mathfrak{F}(y) &= \lambda^{-3/2}\bE_{\cM_1}(e_2e_1y), \quad y\in \cM'\cap \cM_2. \\
    \mathfrak{F}^{-1}(y) &= \lambda^{-3/2}\bE_{\cM'}(e_1e_2y), \quad y\in \cN'\cap \cM_1. 
\end{align}
The same planar tangle with opposite shading represents the Fourier transform from $\cM'\cap \cM_2$ to $\cN'\cap \cM_1$. 

Composing the two types of Fourier transform results in a 180-degree rotation on $\cN'\cap\cM_1$, which is related to the modular conjugation $J$ as follows: 
\begin{align}
    \mathfrak{F}^2(x) = Jx^*J,\quad x\in\cN'\cap \cM_1. 
\end{align}
To see this, we first have: 
\begin{align*}
    \mathfrak{F}^2(x) &= \lambda^{-3}\bE_{\cM_1}(e_2e_1\bE_{\cM'}(xe_2e_1))\\
    &= \lambda^{-2}\sum_{j=1}^m \bE_{\cM_1}(e_2e_1 \eta^*_jxe_2e_1\eta_j)\\
    &= \lambda^{-2}\sum_{j=1}^m \bE_{\cM_1}(e_2e_1 \eta^*_jxe_2)e_1\eta_j
\end{align*}
Next notice that the action of the operator $J(ae_1b)^*J\in J\cM_1J$ restricted to $\cM\Omega$ is given by $J(ae_1b)^*Jz\Omega = \mathbb{E}_{\cM}(e_1zae_1b)\Omega$, where $z\in\cM$. 
Therefore we have 
\begin{align*}
    \lambda^{-2}\bE_{\cM_1}(e_2e_1 \eta^*_jxe_2)\Omega = \lambda^{-1}\bE_{\cM_1}(\mathbb{E}_{\cM}(e_1 \eta^*_jx)e_2)\Omega = Jx^*J\eta^*_j\Omega,
\end{align*}
consequently 
\begin{align*}
    \lambda^{-2}\sum_{j=1}^m \bE_{\cM_1}(e_2e_1 \eta^*_jxe_2)e_1\eta_j = Jx^*J. 
\end{align*}
We call $Jx^*J$ the modular conjugation of $x$, and denote it by $\overline{x}$. 
Since $\overline{\overline{x}} = x$, we see that $\mathfrak{F}^4 = id$. 
There is a similar relation between the modular conjugation on and the square of Fourier transform on $\cM'\cap \cM_2$. 

\begin{lemma}\label{lem:fourier}
    For any $a\in \cN'\cap \cM_1$ and $x, y\in \cM$, we have 
    \begin{align}
        \mathfrak{F}(a)xe_1y \Omega_1=\lambda^{1/2} x a y\Omega_1.
    \end{align}
    Moreover, we have $\mathfrak{F}(a)e_1e_2=\lambda^{1/2} ae_2$.
\end{lemma}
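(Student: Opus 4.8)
The plan is to work from the explicit formula for the Fourier transform that is already used above in the verification of $\mathfrak{F}^{-1}\mathfrak{F}=\id$: combining $\mathfrak{F}(a)=\lambda^{-3/2}\bE_{\cM'}(ae_2e_1)$ with $\bE_{\cM'}(\,\cdot\,)=\lambda\sum_{j=1}^m\eta_j^*(\,\cdot\,)\eta_j$ gives
\[
\mathfrak{F}(a)=\lambda^{-1/2}\sum_{j=1}^m\eta_j^*\,a\,e_2e_1\,\eta_j ,\qquad a\in\cN'\cap\cM_1 .
\]
Everything takes place with $\cM_2$ acting on $L^2(\cM_1,\tau_1)$, so that $e_2$ is the orthogonal projection onto $L^2(\cM,\tau)$; then $e_2$ commutes with $\cM$ (a $*$-algebra leaving $L^2(\cM,\tau)$ invariant by left multiplication), $e_2z\Omega_1=\bE_{\cM}(z)\Omega_1$ for $z\in\cM_1$, and $e_2xe_2=\bE_{\cM}(x)e_2$ for $x\in\cM_1$, with $\bE_{\cM}(e_1)=\lambda$ by the $\lambda$-extension hypothesis. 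I will also use the $*$-adjoint form of the Pimsner--Popa expansion, namely $\sum_j\eta_j^*\bE_{\cN}(\eta_j z)=z$ for $z\in\cM$ (apply $w\mapsto\sum_j\bE_{\cN}(w\eta_j^*)\eta_j$ to $z^*$ and take adjoints), in particular $\sum_j\eta_j^*\bE_{\cN}(\eta_j)=1$.

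For the first identity I substitute the formula and simplify $\mathfrak{F}(a)xe_1y\Omega_1=\lambda^{-1/2}\sum_j\eta_j^*\,a\,e_2e_1\,\eta_j x\,e_1y\Omega_1$ in four steps: (i) $e_1\eta_jx\,e_1=\bE_{\cN}(\eta_jx)e_1$; (ii) since $\bE_{\cN}(\eta_jx)\in\cN\subseteq\cM$ commutes with $e_2$, move it to the left of $e_2$; (iii) since $a\in\cN'$, move it further to the left of $a$; (iv) collapse the sum with $\sum_j\eta_j^*\bE_{\cN}(\eta_jx)=x$. This leaves $\lambda^{-1/2}x\,a\,e_2e_1y\Omega_1$, and since $e_2e_1y\Omega_1=\bE_{\cM}(e_1y)\Omega_1=\bE_{\cM}(e_1)y\Omega_1=\lambda y\Omega_1$, we get $\lambda^{1/2}xay\Omega_1$.

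For the ``moreover'' part I would deduce it from the first identity. Both $\mathfrak{F}(a)e_1e_2$ and $\lambda^{1/2}ae_2$ belong to $\cM_2\subseteq B(L^2(\cM_1,\tau_1))$, so it suffices to check they agree on the dense subspace $\cM_1\Omega_1$. For $z\in\cM_1$ we have $e_2z\Omega_1=\bE_{\cM}(z)\Omega_1$ with $\bE_{\cM}(z)\in\cM$, hence $\mathfrak{F}(a)e_1e_2z\Omega_1=\mathfrak{F}(a)e_1\bE_{\cM}(z)\Omega_1$, which by the first identity applied with $x=1$ and $y=\bE_{\cM}(z)$ equals $\lambda^{1/2}a\bE_{\cM}(z)\Omega_1=\lambda^{1/2}ae_2z\Omega_1$. (Equivalently one can argue directly: $\mathfrak{F}(a)e_1e_2=\lambda^{-1/2}\sum_j\eta_j^*a\,e_2e_1\eta_je_1e_2$, use $e_1\eta_je_1=\bE_{\cN}(\eta_j)e_1$, commute $\bE_{\cN}(\eta_j)$ past $e_2$, apply $e_2e_1e_2=\bE_{\cM}(e_1)e_2=\lambda e_2$, commute $\bE_{\cN}(\eta_j)$ past $a$, and finish with $\sum_j\eta_j^*\bE_{\cN}(\eta_j)=1$.)

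The argument is essentially mechanical, so there is no serious obstacle; the only points requiring attention are bookkeeping ones — being sure that $e_2$ genuinely commutes with all of $\cM$ in the representation in use, invoking the correct adjoint version of the Pimsner--Popa expansion, and tracking that the prefactor $\lambda^{-1/2}$ combines with the single power of $\lambda$ produced by $e_2e_1y\Omega_1=\lambda y\Omega_1$ (respectively $e_2e_1e_2=\lambda e_2$) to yield exactly $\lambda^{1/2}$.
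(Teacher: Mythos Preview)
Your proof is correct and follows essentially the same route as the paper: expand $\mathfrak{F}(a)$ via the Pimsner--Popa basis, collapse using $e_1ze_1=\bE_\cN(z)e_1$ and $e_2e_1e_2=\lambda e_2$ (equivalently $e_2e_1y\Omega_1=\lambda y\Omega_1$). The only cosmetic differences are that the paper first commutes $x$ past $\mathfrak{F}(a)\in\cM'$ and then uses $\sum_j\eta_j^*\bE_\cN(\eta_j)=1$ rather than your $\sum_j\eta_j^*\bE_\cN(\eta_jx)=x$, and your deduction of the ``moreover'' from density of $\cM_1\Omega_1$ is slightly more careful than the paper's one-line conclusion from the single vector $\Omega_1$.
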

\begin{proof}
We have that 
\begin{align*}
 \mathfrak{F}(a)xe_1y \Omega_1
 =& \lambda^{-3/2}\bE_{\cM'}(ae_2e_1) xe_1 y\Omega_1\\
  =& \sum^m_{j=1} \lambda^{-1/2}x \eta^*_j ae_2e_1\eta_j e_1 y\Omega_1\\
  =& \sum^m_{j=1} \lambda^{-1/2}x \eta^*_j\mathbb{E}_{\cN}(\eta_j) ae_2e_1 y\Omega_1\\
    =& \lambda^{-1/2}x ae_2e_1 e_2 y\Omega_1\\
    =& \lambda^{1/2}x a y\Omega_1.
\end{align*}
Let $x=y=1$.
We obtain that $\mathfrak{F}(a)e_1\Omega_1=\lambda^{1/2} a\Omega_1$.
This implies that $\mathfrak{F}(a)e_1e_2\Omega_1=\lambda^{1/2} ae_2\Omega_1$.
Hence $\mathfrak{F}(a)e_1e_2=\lambda^{1/2} ae_2$.
This completes the proof of the lemma. 
\end{proof}

The rotation $\Theta: \cM_1\to \cM_1$ is defined as
\begin{align*}
    \Theta(x)=\lambda^{-3}\bE_{\cM_1}(e_2e_1\bE_{\cM'}(xe_2e_1)), \quad x\in \cM_1.
\end{align*}
Note that $\Theta|_{\cN'\cap \cM_1}=\mathfrak{F}^2$.

We define the convolution between $x,y\in \cM'\cap \cM_2$ as 
\begin{align*}
    x*y =& \mathfrak{F}^{-1}(\mathfrak{F}(y)\mathfrak{F}(x))\\
    =& \lambda^{-9/2}\mathbb{E}_{\cM'}(e_1e_2\mathbb{E}_{\cM_1}(e_2e_1y)\mathbb{E}_{\cM_1}(e_2e_1x)). 
\end{align*}
The convolution admits a simple graphical representation through the planar algebra: 
\begin{align*}
        x*y = \vcenter{\hbox{\begin{tikzpicture}
            \begin{scope}[scale=0.6]
        \FillWhite (-1.3,1.5) -- (-1.3, -0.5) -- (1.3, -0.5) -- (1.3, 1.5) -- (-1.3,1.5);
        \ThinLine (1.3, -0.5) -- (1.3, 1.5);
        \ThinLine (-1.3,-0.5) -- (-1.3,1.5);
        \ThinLine (0.6,1) .. controls +(0,0.4) and +(0,0.4) .. (-0.6,1);
        \ThinLine (0.6,0) .. controls +(0,-0.4) and +(0,-0.4) .. (-0.6,0);
        \FillGrey (0.6,1) .. controls +(0,0.4) and +(0,0.4) .. (-0.6,1)
        -- (-0.6,0) -- (-0.6,0) .. controls +(0,-0.4) and +(0,-0.4) .. (0.6,0)
        -- (0.6,1);
        \FillGrey (1.3,-0.5) -- (1.9,-0.5) -- (1.9,1.5) -- (1.3,1.5);
        \FillGrey (-1.9,-0.5) -- (-1.3,-0.5) -- (-1.3,1.5) -- (-1.9,1.5) -- (-1.9,-0.5);
        \draw[blue, fill=white] (0.3,0) rectangle (1.6,1);
        \node at (0.95,0.5) {$y$};
        \draw[blue, fill=white] (-1.6,0) rectangle (-0.3,1);
        \node at (-0.95,0.5) {$x$};
    \end{scope}
    \end{tikzpicture}}}.
    \end{align*}
    The Schur product theorem states $x*y\geq 0$ provided that $x,y\geq0$. 
    If the inclusion is irreducible, then we have Young's inequality: $\Vert x*y\Vert_r\leq \lambda^{-1/2}\Vert x\Vert_p \Vert y\Vert_q$, for $r^{-1} +1= p^{-1} + q^{-1}$, $p,q,r \geq 1$. 

The shift $\gamma_{1, +}: \cM_1'\cap \cM_3 \to \cN'\cap \cM_1$ is an isomorphism given by
\begin{align*}
\gamma_{1, +}(x)e_3=\lambda^{-2}e_3e_2e_1 x e_1e_2 e_3, \quad x\in \cM_1'\cap \cM_3.
\end{align*}
The inverse $\gamma_{1, +}^{-1}: \cN'\cap \cM_1\to \cM_1'\cap \cM_3$ is given by
\begin{align*}
\gamma_{1, +}^{-1}(x)e_1=\lambda^{-2}e_1e_2 e_3 x e_3e_2e_1, \quad x\in \cN'\cap \cM_1.
\end{align*}
    
\subsection{The Inclusion $\bC\subset \bC^n$} 

Let $\cN=\bC$, $\cM=\bC^n$.
Let $\{E_k\}^n_{k=1}$ be $n$ distinct minimal projections in $\mathbb{C}^n$, we define the normalized trace $\displaystyle \tau(E_k) = \frac{1}{n}$ for $k=1, \ldots, n$. 
The unit of $\cN$ is identified with that of $\cM$, which is $\displaystyle \sum^n_{k=1} E_k$. 


We can identify $L^2(\cM)$ with $\mathbb{C}^n$ under the correspondence $E_k\Omega\mapsto \begin{bmatrix}
    \underbrace{0  \cdots 0}_{k-1} &1 & 0\cdots   0
\end{bmatrix}^{\mathsf{T}}$, where $\mathsf{T}$ stands for the transpose. 
The left regular representation of $\cM$ on $L^2(\cM)$ is given as 
\begin{align*}
    E_k\mapsto \begin{bmatrix}
        0 & & & &\\
        & \ddots & & &\\
        & & 1 &\\
        & & & \ddots &\\
        & & & & 0
    \end{bmatrix}.
\end{align*}
Define $E_{j,k}$ as $E_{j,k}E_l\Omega = \delta_{k,l}E_j\Omega$ for $1\leq j,k\leq n$. 
The set $\{E_{j,k}\}^n_{j,k=1}$ forms a system of matrix units of $\mathcal{B}(L^2(\cM)) = M_n(\mathbb{C})$, in which we have $E_j = E_{j,j}$. 
The Jones projection is $\displaystyle e_1 = \frac{1}{n}\sum^n_{j,k=1}E_{j,k}$, whose image is spanned by the vector $\begin{bmatrix}
    1 & \cdots & 1
\end{bmatrix}$. 
Note that we have 
\begin{align*}
    \frac{1}{n}E_{j,k} = E_{j,j} e_1E_{k,k}. 
\end{align*}
The basic construction $\cM_1$ is the algebra generated by $\cM$ and $e_1$, i.e. the algebra generated by the algebra of diagonal matrices and $e_1$. 
We have $\cM_1 = M_n(\mathbb{C})$. 
Hence $\cN'\cap \cM = \mathbb{C}^n$, and $\cN'\cap \cM_1 = M_{n}(\mathbb{C})$. 
The modular conjugation $J$ on $\cN'\cap \cM_1$ satisfies $JX^*J = X^{\mathsf{T}}$, $X\in M_n(\mathbb{C})$. 
Note that $\cM_1=J\cN'J$. 
We see that $\cM_1=M_n(\bC)$ directly by the fact that $\cN=\bC$.

The trace $\tau_1$ is the unique normalized trace on $M_n(\mathbb{C})$. 
The conditional expectation $\mathbb{E}_{\cM}$ on $\cM_1$ is 
\begin{align*}
\mathbb{E}_{\cM}(X) = \sum^n_{k=1}E_{k,k} XE_{k,k}.
\end{align*}
We have $\displaystyle \mathbb{E}_{\cM}(e_1) = \frac{1}{n}$, therefore $\cN\subset\cM$ is a $\displaystyle \frac{1}{n}$-extension. 
We have a natural choice of Pimsner-Popa basis of $\cN\subset \cM$ given by $\{\sqrt{n}E_{k, k}\}^n_{k=1}$ subject to the condition 
\begin{align*}
\sum^n_{k=1}\sqrt{n}E_{k,k}e_1\sqrt{n}E_{k,k} = 1.
\end{align*}

The GNS Hilbert space $L^2(\cM_1)$ has a basis $\{E_{j,k}\Omega_1\}^n_{j,k=1}$, with the left action of $\cM_1$ as $E_{t,j}E_{k,l}\Omega_1 = \delta_{j,k}E_{t,l}\Omega_1$. 
Define operators $E_{(j,k),(p,q)}$ as $E_{(j,k),(p,q)}E_{p',q'}\Omega_1 = \delta_{(p,q),(p',q')}E_{j,k}\Omega_1$. 
The element $E_{j,k}\in \cM_1$ embedded in $\cM_2$ is $\displaystyle \sum_{l=1}^n E_{(j, l),(k,l)}$. 
The Jones projection on $L^2(\cM_1)$ is $\displaystyle e_2= \sum^n_{k=1}E_{(k,k),(k,k)}$. 
Then $\cM_2$ is the algebra generated by $\cM_1$ and $e_2$. 
By identifying $L^2(\cM_1)$ with $\mathbb{C}^n\otimes\mathbb{C}^n$ under the unitary transformation 
\begin{align*}
E_{j,k}\Omega_1\mapsto \sqrt{n}E_j\Omega\otimes E_k\Omega,
\end{align*}
the left/right action of $\cM$ are identified with the action on the first/second tensor factor. 

The left action of $\cM_1$ on $\mathbb{C}^n\otimes\mathbb{C}^n$ is given by $E_{j,k}\otimes I$. 
The action of $e_2$ on $\mathbb{C}^n\otimes \mathbb{C}^n$ is given by the projection $\displaystyle \sum^n_{k=1}E_{k,k}\otimes E_{k,k}$. 
From this we deduce $\cM_2 \cong M_n(\mathbb{C})\otimes \mathbb{C}^n$. 
The trace $\tau_2$ on $\cM_2$ is given as $\displaystyle \tau_2(E_{(j,l),(k,l)}) = \frac{1}{n^2}\delta_{j,k}$. 
The inclusion $\cM_1\subset \cM_2$ is again a $\displaystyle \frac{1}{n}$-extension. 
We have 
\begin{align*}
    \cM'\cap \cM_2 = \text{span}\{E_{j,j}\otimes E_{k,k}\vert 1\leq j,k\leq n\}. 
\end{align*}
We remark that $\cM'\cap \cM_2$ is a commutative $C^*$-algebra. 
 
 A particular basis of $\cM'\cap \cM_2$ is obtained from Fourier transforming the system of matrix units of $\cN'\cap \cM_1$. 
 For $1\leq j,k\leq n$, we have
 \begin{align*}
     \mathfrak{F}(E_{j,k})= \sqrt{n}E_{j,j}\otimes E_{k,k}. 
 \end{align*}
 The multiplication of matrices is dual to the Schur product of matrices under the Fourier transform $\mathfrak{F}$. 

To summarize, the Jones tower for the inclusion $\mathbb{C}\subset\mathbb{C}^n$ is $\mathbb{C}\subset\mathbb{C}^n\subset M_n(\mathbb{C})\subset M_n(\mathbb{C})\otimes \mathbb{C}^n\subset\cdots$.  
We remark that the full standard invariant of the inclusion $\mathbb{C}\subset \mathbb{C}^n$ is described by spin planar algebra\cite[Example 2.8]{Jones2021}. 
The tensor network also describes the same inclusion. (See also \cite{HJLW23}). 

\subsection{The Inclusion $\bC\subset M_n(\bC)$} 

Denote $\cN = \mathbb{C}$ and $\cM = M_n(\mathbb{C})$. 
Let $\{\vert j\rangle\}^n_{j=1}$ be a orthonormal basis of $\mathbb{C}^n$, and $\{E_{j,k}\}^n_{j,k=1}$ be a system of matrix units of $M_n(\mathbb{C})$ that satisfies $E_{j,k}\vert l\rangle = \delta_{k,l}\vert j\rangle$. 
For $1\leq j,k,p,q\leq n$, we define operators on $L^2(\cM)$ by $E_{(j,k),(p,q)}$ as $E_{(j,k),(p,q)}E_{p',q'}\Omega = \delta_{(p,q),(p',q')}E_{j,k}\Omega$. 
Then $\{E_{(j,k),(p,q)}\}^n_{j,k,p,q=1}$ forms a system of matrix units of $\mathcal{B}(L^2(\cM))$. 
The left regular representation of $E_{j,k}\in \cM$ is $\displaystyle \sum^n_{s=1} E_{(j,s),(k,s)}$. 
The Jones projection is 
\begin{align*}
e_1 = \frac{1}{n}\sum^n_{j,k=1}E_{(j,j),(k,k)}. 
\end{align*}
The basic construction $\cM_1$ is generated by $\cM$ and $e_1$. 
We have $\cM_1 = J\cN'J = \mathcal{B}(L^2(\cM))\cong M_{n^2}(\mathbb{C})$. 
We have $JX^*J = X^{\mathsf{T}}$, where $X\in \cM_1$. 
The trace $\tau_1$ is the unique normalized trace, with respect to which we have $\displaystyle \mathbb{E}_{\cM}(e_1) = \frac{1}{n^2}$. 
Therefore, $\cN\subset \cM$ is a $\displaystyle \frac{1}{n^2}$-extension. 
A natural choice of Pimsner-Popa basis for $\cN\subset\cM$ is $\{\sqrt{n}E_{j,k}\}^n_{j,k=1}$. 

We identify $L^2(\cM_1)$ with $\mathbb{C}^n\otimes\mathbb{C}^n\otimes\mathbb{C}^n\otimes\mathbb{C}^n$ by the unitary transformation 
\begin{align*}
    E_{(j,k),(p,q)}\Omega_1\mapsto \frac{1}{n} \vert j\rangle\otimes \vert k\rangle \otimes \vert p\rangle\otimes \vert q\rangle. 
\end{align*} 
The left action of $E_{(j,k),(p,q)}\in\cM_1$ on $L^2(\cM_1)$ is given by $E_{j,p}\otimes E_{k,q}\otimes I\otimes I$. 
This implies the left action of $E_{j,k}\in\cM$ to be $E_{j,k}\otimes I\otimes I\otimes I$. 
The modular conjugation $J_1$ on $L^2(\cM_1)$ acts as 
\begin{align*}
    J_1\vert j\rangle\otimes \vert k\rangle \otimes \vert p\rangle\otimes \vert q\rangle = \vert p\rangle\otimes \vert q\rangle \otimes \vert j\rangle\otimes \vert k\rangle. 
\end{align*}
The Jones projection $e_2$ is given by 
\begin{align*}
e_2 = \frac{1}{n}\sum^n_{j,k=1} I\otimes E_{j,k}\otimes I\otimes E_{j,k}.
\end{align*}
We therefore have $\cM_2 = M_n(\mathbb{C})\otimes M_n(\mathbb{C})\otimes I\otimes M_n(\mathbb{C})$, and $\cM'\cap \cM_2 = I\otimes M_n(\mathbb{C})\otimes I\otimes M_n(\mathbb{C})$. 
The trace $\tau_2$ is given as $\displaystyle \tau_2(X\otimes Y\otimes I\otimes Z) = \frac{1}{n^3}\Tr(X)\Tr(Y)\Tr(Z)$. 
The inclusion $\cM_1\subset \cM_2$ is again a $\displaystyle \frac{1}{n^2}$-extension. 
The modular conjugation $J_1$ acts on $\cM'\cap \cM_2$ as $J_1(I\otimes X\otimes I\otimes Y)^*J_1 = I\otimes Y^{\mathsf{T}}\otimes I\otimes X^{\mathsf{T}}$. 

Finally the Fourier transform of the system of matrix units of $\cN'\cap \cM_1$ is given by
\begin{align*}
    \mathfrak{F}(E_{(j,k),(p,q)}) = I\otimes E_{k,j}\otimes I\otimes E_{q,p}. 
\end{align*}
The element $E_{(j,k),(p,q)}\in\cM_1$ is depicted in the corresponding planar algebra as 
\begin{align*}
    \vcenter{\hbox{\begin{tikzpicture}[scale=1.2]
        \draw [blue] (0, -0.5)--(0, 0.5) (0.5, -0.5)--(0.5, 0.5);
        \draw [fill=white] (-0.2, -0.2) rectangle (0.2, 0.2);
        \node at (0, 0) {\tiny $E_{j,p}$};
        \begin{scope}[shift={(0.5, 0)}]
         \draw [fill=white] (-0.2, -0.2) rectangle (0.2, 0.2);
        \node at (0, 0) {\tiny $E_{k,q}$};   
        \end{scope}
    \end{tikzpicture}}}.
\end{align*}
Its Fourier transform in $\cM'\cap \cM_2$ is 
\begin{align*}
\vcenter{\hbox{\begin{tikzpicture}[scale=0.65]
    \begin{scope}[shift={(0,1.3)}]
    \draw [blue] (-0.5, 0.8)--(-0.5, 0) .. controls +(0, -0.6) and +(0,-0.6).. (0.5, 0)--(0.5, 0.8);    
\begin{scope}[shift={(0.5, 0.25)}]
\draw [fill=white] (-0.35, -0.35) rectangle (0.35, 0.35);
\node at (0, 0) {\tiny $E_{j,p}$};
\end{scope}
    \end{scope}
\draw [blue] (-0.5, -0.8)--(-0.5, 0) .. controls +(0, 0.6) and +(0,0.6).. (0.5, 0)--(0.5, -0.8);
\begin{scope}[shift={(0.5, -0.25)}]
\draw [fill=white] (-0.35, -0.35) rectangle (0.35, 0.35);
\node at (0, 0) {\tiny $E_{q,k}$};
\end{scope}
\end{tikzpicture}}}.
\end{align*}
Note that $\vcenter{\hbox{\begin{tikzpicture}[scale=0.65]
    \begin{scope}[shift={(0,1.3)}]
    \draw [blue] (-0.5, 0.8)--(-0.5, 0) .. controls +(0, -0.6) and +(0,-0.6).. (0.5, 0)--(0.5, 0.8);    
\begin{scope}[shift={(0.5, 0.25)}]
\draw [fill=white] (-0.35, -0.35) rectangle (0.35, 0.35);
\node at (0, 0) {\tiny $X$};
\end{scope}
    \end{scope}
\draw [blue] (-0.5, -0.8)--(-0.5, 0) .. controls +(0, 0.6) and +(0,0.6).. (0.5, 0)--(0.5, -0.8);
\begin{scope}[shift={(0.5, -0.25)}]
\draw [fill=white] (-0.35, -0.35) rectangle (0.35, 0.35);
\node at (0, 0) {\tiny $X^*$};
\end{scope}
\end{tikzpicture}}}$ is a scalar multiple of a minimal projection for any $X$ and 
\begin{align*}
\vcenter{\hbox{\begin{tikzpicture}[scale=0.65]
    \begin{scope}[shift={(0,1.3)}]
    \draw [blue] (-0.5, 0.8)--(-0.5, 0) .. controls +(0, -0.6) and +(0,-0.6).. (0.5, 0)--(0.5, 0.8);    
\begin{scope}[shift={(0.5, 0.25)}]
\draw [fill=white] (-0.35, -0.35) rectangle (0.35, 0.35);
\node at (0, 0) {\tiny $E_{j,k}$};
\end{scope}
    \end{scope}
\draw [blue] (-0.5, -0.8)--(-0.5, 0) .. controls +(0, 0.6) and +(0,0.6).. (0.5, 0)--(0.5, -0.8);
\begin{scope}[shift={(0.5, -0.25)}]
\draw [fill=white] (-0.35, -0.35) rectangle (0.35, 0.35);
\node at (0, 0) {\tiny $E_{k,j}$};
\end{scope}
\end{tikzpicture}}}
= \vcenter{\hbox{\begin{tikzpicture}[scale=1.2]
        \draw [blue] (0, -0.5)--(0, 0.5) (0.5, -0.5)--(0.5, 0.5);
        \draw [fill=white] (-0.2, -0.2) rectangle (0.2, 0.2);
        \node at (0, 0) {\tiny $E_{j,j}$};
        \begin{scope}[shift={(0.5, 0)}]
         \draw [fill=white] (-0.2, -0.2) rectangle (0.2, 0.2);
        \node at (0, 0) {\tiny $E_{k,k}$};   
        \end{scope}
    \end{tikzpicture}}}.
\end{align*}
To summarize, the Jones tower for the inclusion $\mathbb{C}\subset M_n(\mathbb{C})$ is $\mathbb{C}\subset M_n(\mathbb{C})\subset M_n(\mathbb{C})\otimes M_n(\mathbb{C})\subset M_n(\mathbb{C})\otimes M_n(\mathbb{C})\otimes M_n(\mathbb{C})\subset \cdots$.

\subsection{Finite index inclusions from unitary fusion categories} \label{sec:fusion}

In this section, we recall the canonical way to construct a finite index inclusion of type II$_1$ von Neumann algebras $\cN\subset\cM$ from a Frobenius algebra in a unitary fusion category $\mathcal{C}$. 

We denote the tensor unit of $\mathcal{C}$ by $1$. 
For an object $X$, denote by $\overline{X}$ the dual of $X$. 
Let $\text{ev}_X:\overline{X}\otimes X\rightarrow 1$ be the evaluation map  and $\text{coev}_X:1\rightarrow X\otimes \overline{X}$ the coevaluation map  satisfying the zigzag relations. 
Suppose that $X,Y\in \mathcal{C}$.
We denote by $\text{Hom}_{\mathcal{C}}(X,Y)$ the space of morphisms from $X$ to $Y$, and by $\text{Hom}_{\mathcal{C}}(X)$ the algebra of endomorphisms of $X$, with the identity morphism $id_X$. 
By the fact that $\mathcal{C}$ is unitary fusion category, $\text{Hom}_{\mathcal{C}}(X)$ is a finite dimensional $C^*$-algebra. 
There is a faithful trace on $\text{Hom}_{\mathcal{C}}(X)$ given by the categorical trace as follows: 
\begin{align*}
    \Tr(f) = \text{coev}^*_X(f\otimes id_{\overline{X}})\circ\text{coev}_X,\quad f\in \text{Hom}_\mathcal{C}(X). 
\end{align*}
The quantum dimension of $X$ is defined as $d_X = \Tr(id_X)$. 
In what follows we shall consider the normalized categorical trace $\displaystyle tr = \frac{1}{d_X}\Tr$. 

A $*$-Frobenius algebra is a triple $(\gamma,m,\eta)$ where $\gamma$ is an object in $\mathcal{C}$, $m:\gamma\otimes\gamma\rightarrow\gamma$ is the multiplication and $\eta:1\rightarrow
\gamma$ is the unit. (See \cite[Definition 3.1]{Muger03a} for the details.) 
By the universal construction of Müger, there exists a spherical Morita context $\mathcal{E}$ with objects $\{\mathfrak{A},\mathfrak{B}\}$, such that $\mathcal{C} = \text{END}(\mathfrak{A})$ and $\text{END}(\mathfrak{B})$ is a unitary fusion category. 
Moreoever there is a distinguished $1$-morphism $J:\mathfrak{A}\rightarrow \mathfrak{B}$ such that $\gamma = \overline{J}J$. 
Morita contexts can also be described in terms of module categories \cite{Ostrik03}. 

The construction of the inclusion from $\gamma$ is as follows. 
Let $\ell$ be a non-negative integer. 
For $k =2\ell$, define a finite dimensional $C^*$-algebra $M_k = \text{Hom}((\overline{J}J)^{\ell}\overline{J})$; for $k=2\ell+1$, define $M_k = \text{Hom}(J(\overline{J}J)^{\ell}\overline{J})$. 
There is a natural inclusion $\iota_k:M_{k}\rightarrow M_{k+1}$ defined as $id_J\otimes -$ for $k$ even and $id_{\overline{J}}\otimes -$ for $k$ odd. 
These inclusions preserve the normalized categorical traces, so we obtain a faithful trace $\tau$ on the $*$-algebra $\displaystyle M = \bigcup_{k\geq 0}M_k$.  
For $k=2\ell$, define a $C^*$-subalgebra $N_k\subset M_k$ as $N_k = \text{Hom}((\overline{J}J)^{\ell})\otimes id_{\overline{J}}$; for $k=2\ell+1$, define $N_k = \text{Hom}(J(\overline{J}J)^{\ell})\otimes id_{\overline{J}}$. 
The inclusions $\iota_k$ restricts to inclusions of $N_k$ into $N_{k+1}$, so we obtain an inclusion of $*$-algebras 
\begin{align*}
    \bigcup_{k\geq 0}N_k\subset \bigcup_{k\geq 0}M_k = M. 
\end{align*}
Let $L^2(M,\tau)$ be the GNS-construction of $M$ with respect to $\tau$, we define $\cM$ to be the closure of the left regular representation of $M$ in $\mathcal{B}(L^2(M,\tau))$ with respect to the weak operator topology. 
It is a standard procedure to show that $\tau$ extends to a normal faithful trace on $\cM$. 
Define $\cN\subset \cM$ to be the weak-closure of the $*$-subalgebra $\displaystyle \bigcup_{k\geq 0}N_k$. 
Thus $\cN\subset \cM$ is an inclusion of (hyperfinite) type II von Neumann algebras. 

It follows from the properties of commuting squares \cite[Proposition 5.1.9]{JonesSunder97} that $\cN\subset\cM$ is a $\lambda$-extension with $\displaystyle \lambda = \frac{1}{d_{\gamma}}$, where $\displaystyle d_\gamma$ is the quantum dimension of $\gamma$. 
By Ocneanu's compactness argument \cite[Theorem 5.7.1]{JonesSunder97}, the inclusion $\cN\subset \cM$ is irreducible if $\gamma$ is simple, namely $\text{Hom}_{\mathcal{C}}(1,\gamma)$ is $1$-dimensional. 
The higher relative commutants of $\cN\subset \cM$ can also be computed by this theorem. 
We have for $k=2\ell$, $\cN'\cap \cM_k = \text{Hom}(\overline{J}(J\overline{J})^{\ell})$ and $\cM'\cap \cM_k = \text{Hom}(J(\overline{J}J)^{\ell})$; for $k = 2\ell + 1$,  $\cN'\cap \cM_k = \text{Hom}((\overline{J}J)^{\ell+1})$ and $\cM'\cap \cM_k = \text{Hom}((J\overline{J})^{\ell+1})$. In particular, we have $\cN'\cap \cM_1 = \text{Hom}(\overline{J}J) = \text{Hom}_{\mathcal{C}}(\gamma)$, and $\cM'\cap \cM_2 = \text{Hom}(\gamma\otimes \gamma)$ where $\gamma\otimes \gamma$ is treated as a $\gamma$-bimodule in $\mathcal{C}$. 
By a result of Popa \cite[Corollary 3.7]{Popa1990}, $\cN\subset\cM$ is always extremal for there are only finitely many equivalent classes of simple objects in a fusion category. 
The dual inclusion $\cM\subset \cM_1$ of $\cN\subset \cM$ can be constructed in the same way with $J$ and $\overline{J}$ exchanged. 
For the dual inclusion, we have $\cM'\cap \cM_2 = \text{Hom}(\gamma\otimes \gamma)$ as $\gamma$-bimodules, and $\cM_1'\cap \cM_3 = \text{Hom}_{\mathcal{C}}(\gamma)$. 

We provide some examples of the above construction by specifying the unitary fusion category and the Frobenius algebra. 
Let $\mathcal{C}$ be a unitary fusion category, then so is $\mathcal{C}\boxtimes\mathcal{C}^{\text{op}}$. 
The object $\displaystyle \gamma = \bigoplus_{X\in \text{Irr}(\mathcal{C})}X\boxtimes X^{\text{op}}$ admits the structure of a simple Frobenius algebra, thus produces a irreducible subfactor $\cN\subset\cM$. 
This is known as the quantum double construction\cite{Muger03b}. 
The fusion ring of the underlying category $\mathcal{C}$ is encoded in the triple $(\cN'\cap\cM_1,\cM'\cap \cM_2,\mathfrak{F})$, which is an instance of fusion bialgebras \cite{LPW2021}. 
For a concrete example, consider $\mathcal{C} = \mathcal{H}_3$ being the Haagerup fusion category \cite{AsaedaHaagerup99} with simple objects $\{1,\alpha,\alpha^2,\zeta,\alpha\zeta,\alpha^2\zeta\}$ and non-commutative fusion rules: 
\begin{align*}
    \alpha^3 = 1,\  \zeta\alpha = \alpha^2\zeta, \ \zeta \alpha^2 = \alpha\zeta, \ \zeta^2 = 1+ \zeta + \alpha\zeta + \alpha^2\zeta. 
\end{align*}
It then follows from \cite[Proposition 7.4]{LPW2021} that the associated inclusion $\cN\subset \cM$ has relative commutants $\cN'\cap \cM_1 \cong \mathbb{C}^6$ and $\cM'\cap \cM_2\cong \mathbb{C}^2\oplus M_2(\mathbb{C})$. 
This is an instance of an irreducible inclusion with $1$-side commutativity for $2$-boxes. 
we have that $\overline{\alpha} = \alpha^2$, i.e. the modular conjugation acts non-trivially on $\cN'\cap \cM_1$.
The quantum dimensions $d_\alpha=d_{\alpha^2}=1$ and $\displaystyle d_{\zeta}= d_{\alpha\zeta}=d_{\alpha^2\zeta}=\frac{3+\sqrt{13}}{2}$.
The global dimension $\displaystyle \lambda^{-1}=\frac{9(5+\sqrt{13})}{4}$.
We denote by $\cN^{h}\subset\cM^h$ the inclusion $\cM'\subset \cN'$ on $L^2(\cM)$.
Then ${\cM^h_1}'\cap \cM_2^h=\bC^6$.
Let $p_\alpha, p_{\alpha^2}, p_{\zeta}, p_{\alpha\zeta},p_{\alpha^2\zeta} \in {\cM^h}'\cap \cM_2^h$ the minimal projections corresponding to $\alpha, \alpha^2, \zeta, \alpha\zeta, \alpha^2\zeta$.

\section{Bimodule Quantum Channels}\label{sec:: Bimodule quantum channels} 

Let $\cM$ be a von Neumann algebra.
A linear map $\Phi:\cM\rightarrow \cM$ is called positive if it preserves the positive cone $\cM_+$. 
The map $\Phi$ is called completely positive if $\Phi\otimes id_n$ is positive on $\cM\otimes M_n(\mathbb{C})$ for all $n\geq 1$, and completely bounded if $\displaystyle \sup_{n \geq 1}\Vert \Phi\otimes id_n\Vert$ is finite. 
The map $\Phi$ is unital if $\Phi(1)=1$.
A quantum channel is a normal unital completely positive map on $\cM$. 

For a finite inclusion $\cN\subset \cM$ of finite von Neumann algebras, a linear map $\Phi$ is said to be $\cN$-bimodule, if 
\begin{align*}
\Phi(y_1 x y_2) = y_1\Phi(x)y_2,
\end{align*}
for all $y_1, y_2\in\cN$ and $x\in\cM$. 
The $\cN$-bimodule map can be characterized by an element in the relative commutant $\cM'\cap \cM_2$.
Suppose $\Phi$ is an $\cN$-bimodule map on $\cM$.
The Fourier multiplier $\widehat{\Phi}$ of $\Phi$ is defined as follows: for all $x, y\in\mathcal{M}$,
\begin{equation}\label{eq:fouriermultiple}
    \begin{aligned}
        \widehat{\Phi}  x e_1 y \Omega_1 = & \lambda^{1/2}\sum_{j=1}^m x \eta_j^* e_1\Phi(\eta_j)y\Omega_1, \\
        =& \lambda^{1/2}\sum_{j=1}^m x \bE_{\cN'}(\eta_j^* e_1\Phi(\eta_j))y\Omega_1
    \end{aligned}
\end{equation}
We check that for $x_1,y_1,x_2,y_2\in\cM$: 
\begin{align*}
    \langle \widehat{\Phi}(x_1e_1y_1\Omega_1), x_2e_1y_2\Omega_1\rangle
    &= \lambda^{1/2} \sum^m_{j=1}\langle x_1\eta^*_je_1\Phi(\eta_j)y_1\Omega_1, x_2e_1y_2\rangle\\
    &= \lambda^{1/2}\sum^m_{j=1}\tau_1(y^*_2e_1x^*_2x_1\eta^*_je_1\Phi(\eta_j)y_1)\\
    &= \lambda^{1/2}\sum^m_{j=1}\tau_1(y^*_2e_1\mathbb{E}_{\cN}(x^*_2x_1\eta^*_j)\Phi(\eta_j)y_1)\\
    &= \lambda^{1/2}\tau_1(y^*_2e_1\Phi(x^*_2x_1)y_1) = \lambda^{3/2}\tau(y^*_2\Phi(x^*_2x_1)y_1). 
\end{align*}
Thus $\widehat{\Phi}$ is the unique element in $\cM'\cap\cM_2$ such that 
\begin{equation}\label{eqn:: bilinear form induced by Fourier multiplier}
    \langle \widehat{\Phi}(x_1e_1y_1\Omega_1), x_2e_1y_2\Omega_1\rangle = \lambda^{3/2}\tau(y^*_2\Phi(x^*_2x_1)y_1),\quad x_1, x_2,y_1, y_2\in \cM.
\end{equation}
The complete positivity of a bounded $\cN$-bimodule map $\Phi$ on $\cM$ is equivalent to the positivity of its Fourier multiplier. 
This can be directly seen from the positivity of the bilinear form induced by $\widehat{\Phi}$ as in Equation \eqref{eqn:: bilinear form induced by Fourier multiplier}. 
The bimodule map $\Phi$ can be written in terms of the Fourier multiplier $\widehat{\Phi}$ as follows:
\begin{align}\label{eq:multiplierphi}
    \Phi(x)=\lambda^{-5/2} \bE_{\mathcal{M}}(e_2e_1\widehat{\Phi} x e_1e_2).
\end{align}
This can be proved by using Equation \eqref{eqn:: bilinear form induced by Fourier multiplier} and Lemma \ref{lem:fourier}, as for all $x,y\in\cM$:  
\begin{align*}
    \tau(y^* \Phi(x)) 
    =& \lambda^{-3/2}\langle \widehat{\Phi}xe_1\Omega_1,e_1y\Omega_1\rangle \\
    =& \lambda^{-5/2}\tau_2(y^*e_2e_1\widehat{\Phi}xe_1e_2) \\
    =& \lambda^{-5/2}\tau_2(y^*\mathbb{E}_{\cM}(e_2e_1\widehat{\Phi}xe_1e_2)).
\end{align*}
Moreover, we have that $\Phi(x)e_2=\lambda^{-3/2}e_2e_1\widehat{\Phi}xe_1e_2$.
We shall informally graph $\Phi(x)$ as follows:
\begin{align*}
    \Phi(x)=\raisebox{-0.9cm}{
\begin{tikzpicture}[scale=1.5]
\path [fill=gray!40] (0.5, -0.4) rectangle (0.8, 0.9);
\path [fill=gray!40] (-0.3, 0) rectangle (0, 0.6);
\draw [blue,fill=white] (0,0) rectangle (0.7, 0.5);
\node at (0.35, 0.25) {$\widehat{\Phi}$};
\begin{scope}[shift={(-0.6, 0)}]
\draw [blue,fill=white, rounded corners] (0,0) rectangle (0.35, 0.5);
\node at (0.175, 0.25) {$x$};
\end{scope}
\draw [fill=gray!40] (0.5, 0)--(0.5, -0.4) (0.5, 0.5)--(0.5, 0.9);
\draw[fill=gray!40] (0.2, 0.5) .. controls +(0, 0.4) and +(0, 0.4) .. (-0.425, 0.5);
\draw[fill=gray!40] (0.2, 0) .. controls +(0, -0.4) and +(0, -0.4) .. (-0.425, 0);
\end{tikzpicture}},
\end{align*}
and write $\Phi(x)$ as $x* \widehat{\Phi}$.
We have that $\widehat{\text{id}}=\lambda^{-1/2}e_2$.

\begin{proposition}\label{prop:fouriermultiple2}
Suppose that $\Phi$ is a bimodule bounded map.
Then $\displaystyle \sum_{j=1}^m \Phi(\eta_j^*) e_1 \eta_j \in \cN'\cap \cM_1$ and 
\begin{align*}
    \widehat{\Phi}= \mathfrak{F}\left(\sum_{j=1}^m\bE_{\cN'}(\eta_j^* e_1\Phi(\eta_j))\right) =  \mathfrak{F}^{-1}\left(\sum_{j=1}^m\Phi(\eta_j^*) e_1 \eta_j\right).
\end{align*}
In particular, $\Phi(x)\Omega = \mathfrak{F}(\widehat{\Phi})x\Omega$, for $x\in\cM$. 
\end{proposition}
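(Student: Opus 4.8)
The plan is to verify the three assertions of the proposition in sequence, relying on Lemma~\ref{lem:fourier}, the defining relation \eqref{eq:fouriermultiple} for $\widehat\Phi$, and the formula $\mathfrak{F}^2(x) = \overline{x} = Jx^*J$ together with $\mathfrak{F}^4 = \mathrm{id}$.

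First I would show that $z := \sum_{j=1}^m \Phi(\eta_j^*)e_1\eta_j$ lies in $\cN'\cap\cM_1$. Membership in $\cM_1$ is immediate since each summand has the form $ae_1b$ with $a,b\in\cM$. To see that $z$ commutes with $\cN$, I would use that $\Phi$ is an $\cN$-bimodule map together with the fact that $\{\eta_j\}$ is a Pimsner--Popa basis: for $y\in\cN$, expand $y\eta_j^* = \sum_k \bE_\cN(y\eta_j^*\eta_k^*)\eta_k$, wait — more cleanly, use $yz = \sum_j \Phi(y\eta_j^*)e_1\eta_j$ and $zy = \sum_j \Phi(\eta_j^*)e_1\eta_j y$, and then exploit the basis identity $\sum_j \eta_j^* e_1 \eta_j = 1$ in the form of the resolution that lets one re-index; concretely one checks both sides agree when paired against $xe_1 w\Omega_1$ for arbitrary $x,w\in\cM$ using $e_1\eta_j e_1 = \bE_\cN(\eta_j)e_1$ and the bimodule property, reducing both to an expression symmetric under moving $y$ across. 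Alternatively, and more robustly, I would note that by \eqref{eq:fouriermultiple} the expression $\sum_j \Phi(\eta_j^*)e_1\eta_j$ is, up to applying $\mathfrak{F}^{-1}$, exactly what appears in the second form of \eqref{eq:fouriermultiple} read backwards, so its membership in $\cN'\cap\cM_1$ follows once the Fourier identity below is established and one invokes that $\mathfrak{F}^{-1}$ maps $\cM'\cap\cM_2$ onto $\cN'\cap\cM_1$.

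The core step is the identity $\widehat\Phi = \mathfrak{F}\big(\sum_j \bE_{\cN'}(\eta_j^* e_1\Phi(\eta_j))\big)$. Set $a := \sum_j \bE_{\cN'}(\eta_j^* e_1\Phi(\eta_j)) \in \cN'\cap\cM_1$; note that by \eqref{eq:fouriermultiple}, $\widehat\Phi\, xe_1y\,\Omega_1 = \lambda^{1/2}\sum_j x\,\bE_{\cN'}(\eta_j^*e_1\Phi(\eta_j))\,y\,\Omega_1 = \lambda^{1/2}\, x a y\,\Omega_1$, since $x,y\in\cM$ commute past nothing problematic and $\bE_{\cN'}$ lands in $\cN'\cap\cM_1$ (here one uses that each $\bE_{\cN'}(\eta_j^*e_1\Phi(\eta_j))$, summed, equals $a$, so the sum telescopes into a single operator acting on $\Omega_1$). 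On the other hand, Lemma~\ref{lem:fourier} gives $\mathfrak{F}(a)\,xe_1y\,\Omega_1 = \lambda^{1/2}\,xay\,\Omega_1$ for all $x,y\in\cM$. Since vectors of the form $xe_1y\,\Omega_1$ span a dense subspace of $L^2(\cM_1)$ (every element of $\cM_1$ is a finite sum of $ae_1b$), the two operators $\widehat\Phi$ and $\mathfrak{F}(a)$ agree, proving the first equality. The second equality, $\widehat\Phi = \mathfrak{F}^{-1}\big(\sum_j \Phi(\eta_j^*)e_1\eta_j\big)$, then follows by applying $\mathfrak{F}^2 = $ (modular conjugation) to relate $\mathfrak{F}(a)$ and $\mathfrak{F}^{-1}(z) = \mathfrak{F}^3(z)$: one checks $\mathfrak{F}^2(a) = \overline{a} = z$ — i.e. $\sum_j \overline{\bE_{\cN'}(\eta_j^*e_1\Phi(\eta_j))} = \sum_j \Phi(\eta_j^*)e_1\eta_j$ — most directly by computing $\mathfrak{F}^2(a) = Ja^*J$ on the cyclic vector using $J(ce_1d)^*J z\Omega = \bE_\cM(e_1 z c e_1 d)\Omega$ as in the $\mathfrak{F}^2$ computation in the excerpt, and matching terms; alternatively apply $\mathfrak{F}$ once more to the first identity and use $\mathfrak{F}^{-1} = \mathfrak{F}^3$.

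Finally, for $\Phi(x)\Omega = \mathfrak{F}(\widehat\Phi)x\Omega$: starting from \eqref{eq:multiplierphi}, $\Phi(x) = \lambda^{-5/2}\bE_\cM(e_2e_1\widehat\Phi x e_1 e_2)$, I would apply both sides to $\Omega$ and push through the Jones projections. Using the $\cM'\cap\cM_2$-to-$\cN'\cap\cM_1$ Fourier transform $\mathfrak{F}(\widehat\Phi) = \lambda^{-3/2}\bE_{\cM_1}(e_2e_1\widehat\Phi)$, and the relations $e_1\Omega = e_1 x\Omega$ type manipulations together with $\bE_\cM$ restricted to $\cM\Omega$, the right-hand side $\lambda^{-5/2}\bE_\cM(e_2e_1\widehat\Phi x e_1e_2)\Omega$ collapses: $x e_1 e_2\Omega$ simplifies via $e_2\Omega = \Omega$ (as $\Omega\in L^2(\cM)$) and $e_1$ acting, leaving an expression in which $\bE_{\cM_1}(e_2e_1\widehat\Phi)$ multiplies $x\Omega$. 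Matching constants $\lambda^{-5/2} = \lambda^{-3/2}\cdot\lambda^{-1}$ and absorbing the leftover $\lambda^{-1}$ from the projection $e_2$ collapse yields precisely $\mathfrak{F}(\widehat\Phi)x\Omega$.

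The main obstacle I anticipate is the bookkeeping in the last part: correctly tracking how $e_2$ and $e_1$ act on $\Omega$ versus $\Omega_1$, and getting the powers of $\lambda$ to line up — the excerpt's own computations (e.g. the verification of $\mathfrak{F}^{-1}\mathfrak{F} = \mathrm{id}$ and of $\mathfrak{F}^2(x) = Jx^*J$) show this is delicate but mechanical. A secondary subtlety is justifying rigorously that $\widehat\Phi$ and $\mathfrak{F}(a)$ agreeing on the dense set $\{xe_1y\,\Omega_1 : x,y\in\cM\}$ forces equality as operators in $\cM'\cap\cM_2$ — this is fine because both are bounded (boundedness of $\widehat\Phi$ follows from boundedness of $\Phi$ via \eqref{eqn:: bilinear form induced by Fourier multiplier}) and the set is total in $L^2(\cM_1)$.
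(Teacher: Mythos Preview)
Your core strategy for the first equality $\widehat{\Phi}=\mathfrak{F}(a)$ is exactly the paper's: combine Equation~\eqref{eq:fouriermultiple} with Lemma~\ref{lem:fourier} to see that $\widehat{\Phi}$ and $\mathfrak{F}(a)$ agree on the total set $\{xe_1y\Omega_1\}$. That part is fine.

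For the second equality the paper takes a more direct route than you do. Rather than passing through the rotation $\mathfrak{F}^2(a)=\overline{a}$ and then trying to identify $\overline{a}$ with $z=\sum_j\Phi(\eta_j^*)e_1\eta_j$, the paper simply computes $\mathfrak{F}^{-1}(z)\,xe_1y\,\Omega_1$ from the defining formula $\mathfrak{F}^{-1}(z)=\lambda^{-3/2}\bE_{\cM'}(e_1e_2z)$, expands $\bE_{\cM'}$ via the Pimsner--Popa basis, uses $e_1\eta_{j_2}\eta_{j_1}e_1=\bE_{\cN}(\eta_{j_2}\eta_{j_1})e_1$ and the basis reproducing identity, and lands on $\lambda^{1/2}\sum_{j}x\eta_j^*e_1\Phi(\eta_j)y\Omega_1=\widehat{\Phi}\,xe_1y\,\Omega_1$. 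This is four lines and avoids both the modular-conjugation bookkeeping and the $\bE_{\cN'}$ issue entirely. Your modular-conjugation route is also valid, but two remarks make it cleaner: first, the $\bE_{\cN'}$ in $a$ is redundant (the same basis re-indexing that shows $z\in\cN'$ shows $\sum_j\eta_j^*e_1\Phi(\eta_j)\in\cN'\cap\cM_1$), so you may compute $\overline{a}$ on elementary tensors $ce_1d$ without obstruction; second, the formula you quote from the paper for $J(ae_1b)^*J$ on $\cM\Omega$ is missing a factor $\lambda^{-1}$, so be careful when matching constants.

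For the ``In particular'' claim your plan via Equation~\eqref{eq:multiplierphi} is unnecessarily indirect. Once the second equality is in hand you have $\mathfrak{F}(\widehat{\Phi})=z$, and the two-line computation $z\,x\Omega=\sum_j\Phi(\eta_j^*)\bE_{\cN}(\eta_j x)\Omega=\Phi\big(\sum_j\eta_j^*\bE_{\cN}(\eta_j x)\big)\Omega=\Phi(x)\Omega$ finishes it with no $\lambda$-tracking at all.
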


\begin{proof}
    This follows from Equation \eqref{eq:fouriermultiple} and Lemma \ref{lem:fourier}, here we provide a direct computation. 
    For $x,y\in\cM$, 
    \begin{align*}
        \mathfrak{F}^{-1}\left(\sum^m_{j=1} \Phi(\eta^*_j)e_1\eta_j\right)xe_1y\Omega_1 
        &= \lambda^{-3/2}x \mathbb{E}_{\cM'}\left(e_1e_2\sum_{j=1}^m\Phi(\eta^*_j)e_1\eta_j\right)e_1y
        \Omega_1\\
        &= \lambda^{-1/2}x\sum_{j_1,j_2=1}^m\eta^*_{j_1}e_1\mathbb{E}_{\cM}\left(\Phi(\eta^*_{j_2})e_1\eta_{j_2}\eta_{j_1}e_1 y\right)\Omega_1\\
        &= \lambda^{1/2} x\sum_{j_1=1}^m\eta^*_{j_1} e_1\Phi(\eta_{j_1})y\Omega_1\\
        &=  \widehat{\Phi}xe_1y\Omega_1.
    \end{align*}
    This shows $\displaystyle \widehat{\Phi} =  \mathfrak{F}^{-1}\left(\sum_{j=1}^m\Phi(\eta_j^*) e_1 \eta_j\right)$. 
\end{proof}

Suppose $\Phi, \Psi$ are $\cN$-bimodule maps on $\cM$.
The composition of bimodule maps is characterized by the convolution of the Fourier multipliers: 
\begin{align*}
    \widehat{\Phi\Psi}= \widehat{\Psi}* \widehat{\Phi}.
\end{align*}

For a bounded $\cN$-bimodule map $\Phi$ on $\cM$, we define the adjoint of $\Phi$ (with respect to $\tau$) by $\tau(\Phi^*(y)^*x) = \tau(y^*\Phi(x))$. 
Note that $\Phi^*$ is a trace-preserving bimodule map if and only if $\Phi$ is unital. 
\begin{proposition}
Suppose that $\Phi$ is a bimodule bounded map.
We have that 
    \begin{align}
\widehat{\Phi^*}=\overline{\widehat{\Phi}}^*.
    \end{align}
    In particular, for a completely positive map $\Phi$, we always have $\widehat{\Phi^*} = \overline{\widehat{\Phi}}$.
\end{proposition}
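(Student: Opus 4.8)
The plan is to verify that $\overline{\widehat{\Phi}}^{\,*}$ satisfies the same defining bilinear relation \eqref{eqn:: bilinear form induced by Fourier multiplier} as $\widehat{\Phi^*}$, and then to conclude by uniqueness of the Fourier multiplier. Since the vectors $xe_1y\Omega_1$ with $x,y\in\cM$ span a weakly dense subspace of $L^2(\cM_1)$, and $\widehat{\Phi^*}$ is the unique element of $\cM'\cap\cM_2$ with
\[
\langle \widehat{\Phi^*}\,(x_1e_1y_1\Omega_1),\,x_2e_1y_2\Omega_1\rangle=\lambda^{3/2}\,\tau\big(y_2^*\,\Phi^*(x_2^*x_1)\,y_1\big),\qquad x_i,y_i\in\cM,
\]
it is enough to establish the same identity with $\widehat{\Phi^*}$ replaced by $\overline{\widehat{\Phi}}^{\,*}$.

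First I would record the action of $\overline{\widehat\Phi}$ on $L^2(\cM_1)$. By Proposition \ref{prop:fouriermultiple2}, $\widehat\Phi=\mathfrak F^{-1}\big(\sum_j\Phi(\eta_j^*)e_1\eta_j\big)$ with $\sum_j\Phi(\eta_j^*)e_1\eta_j\in\cN'\cap\cM_1$. Using $\mathfrak F^4=\id$ and the fact that $\mathfrak F^2$ implements the modular conjugation $\overline{(\cdot)}$ on the relative commutants, we get $\overline{\widehat\Phi}=\mathfrak F^{2}(\widehat\Phi)=\mathfrak F\big(\sum_j\Phi(\eta_j^*)e_1\eta_j\big)$, and Lemma \ref{lem:fourier} then gives, for $x,y\in\cM$,
\[
\overline{\widehat\Phi}\,\big(xe_1y\Omega_1\big)=\lambda^{1/2}\sum_j x\,\Phi(\eta_j^*)\,e_1\,\eta_j\,y\,\Omega_1 .
\]
Substituting this into $\langle\overline{\widehat\Phi}^{\,*}(x_1e_1y_1\Omega_1),\,x_2e_1y_2\Omega_1\rangle=\langle x_1e_1y_1\Omega_1,\,\overline{\widehat\Phi}\,(x_2e_1y_2\Omega_1)\rangle$ and simplifying with $e_1ze_1=\bE_{\cN}(z)e_1$ and $\tau_1(ce_1d)=\lambda\tau(cd)$ for $c,d\in\cM$ — exactly as in the verification of \eqref{eqn:: bilinear form induced by Fourier multiplier} — the left-hand side becomes
\[
\lambda^{3/2}\,\tau\!\Big(y_2^*\Big[\sum_j\eta_j^*\,\bE_{\cN}\!\big(\Phi(\eta_j^*)^*\,x_2^*x_1\big)\Big]y_1\Big).
\]

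Hence the proposition reduces to the Pimsner--Popa identity $\sum_j\eta_j^*\,\bE_{\cN}\!\big(\Phi(\eta_j^*)^*z\big)=\Phi^*(z)$ for all $z\in\cM$. To prove it, pair both sides with an arbitrary $w\in\cM$ in the inner product $\langle a,b\rangle_\tau:=\tau(b^*a)$. On one hand, cyclicity and the trace relation $\tau(\bE_{\cN}(u)v)=\tau(u\,\bE_{\cN}(v))$ give $\langle\sum_j\eta_j^*\bE_{\cN}(\Phi(\eta_j^*)^*z),\,w\rangle_\tau=\sum_j\tau\big(\Phi(\eta_j^*)^*\,z\,\bE_{\cN}(w^*\eta_j^*)\big)$. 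On the other hand, the reproducing identity $w=\sum_j\eta_j^*\,\bE_{\cN}(\eta_j w)$ together with the $\cN$-bimodularity of $\Phi$ (which pulls $\bE_{\cN}(\eta_j w)\in\cN$ through $\Phi$) yields $\Phi(w)^*=\sum_j\bE_{\cN}(w^*\eta_j^*)\,\Phi(\eta_j^*)^*$, so $\langle z,\Phi(w)\rangle_\tau=\tau(\Phi(w)^*z)=\sum_j\tau\big(\Phi(\eta_j^*)^*\,z\,\bE_{\cN}(w^*\eta_j^*)\big)$ as well. The two expressions agree, so by the defining property of the adjoint the left side equals $\langle\Phi^*(z),w\rangle_\tau$ for all $w$, which proves the identity and hence $\overline{\widehat\Phi}^{\,*}=\widehat{\Phi^*}$. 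For the final assertion, if $\Phi$ is completely positive then $\widehat\Phi\ge 0$ is self-adjoint, so $\overline{\widehat\Phi}^{\,*}=\overline{\widehat\Phi^{\,*}}=\overline{\widehat\Phi}$ because modular conjugation commutes with the $*$-operation.

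The main obstacle is the last step: matching the two Pimsner--Popa expansions — the one produced by Lemma \ref{lem:fourier} and the one produced by the reproducing formula for $w$ — and keeping careful track of where the $*$-operation and the modular conjugation act. Everything else is a mechanical unwinding of the definitions; alternatively one could carry out the whole argument inside the relative commutants, combining Proposition \ref{prop:fouriermultiple2} with the interaction of $\mathfrak F$ with $*$ and with $\overline{(\cdot)}$.
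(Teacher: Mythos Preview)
Your argument is correct. Both the identification $\overline{\widehat\Phi}=\mathfrak F\big(\sum_j\Phi(\eta_j^*)e_1\eta_j\big)$ via Proposition~\ref{prop:fouriermultiple2} and $\mathfrak F^4=\id$, and the verification of the Pimsner--Popa identity $\sum_j\eta_j^*\bE_{\cN}(\Phi(\eta_j^*)^*z)=\Phi^*(z)$ by pairing against an arbitrary $w$, go through exactly as you describe.

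Your route differs from the paper's. The paper starts from the closed formula $\Phi(x)=\lambda^{-5/2}\bE_{\cM}(e_2e_1\widehat\Phi\,xe_1e_2)$, writes the adjoint relation $\tau(y^*\Phi(x))=\tau(\Phi^*(y)^*x)$ as an identity in $\tau_2$, and then applies Lemma~\ref{lem:fourier} in the form $be_1e_2=\lambda^{1/2}\mathfrak F^{-1}(b)e_2$ to both sides to obtain $\mathfrak F^{-1}(\overline{\widehat\Phi})=\mathfrak F^{-1}(\widehat{\Phi^*}^{\,*})$ directly in $\cN'\cap\cM_1$; no Pimsner--Popa expansion is ever written down. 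Your approach instead stays in $L^2(\cM_1)$, uses the bilinear-form characterization \eqref{eqn:: bilinear form induced by Fourier multiplier}, and unpacks everything through the basis $\{\eta_j\}$. The paper's argument is shorter and remains entirely inside the relative commutants via the Fourier transform; yours is more hands-on and has the side benefit of producing an explicit basis formula for $\Phi^*$, which may be useful elsewhere.
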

\begin{proof}
For any $x, y\in \cM$, we have that
\begin{align*}
    \tau(y^*\bE_{\cM}(e_2e_1 \widehat{\Phi} x e_1e_2))
    = \tau(\bE_{\cM}(e_2e_1 \widehat{\Phi^*}^*y^* e_1 xe_2)).
\end{align*} 
By removing the conditional expectation $\bE_{\cM}$, we have that
\begin{align*}
 \tau_2(y^*e_2e_1 \widehat{\Phi} x e_1e_2)
    = \tau_2(e_2e_1 \widehat{\Phi^*}^*y^* e_1 xe_2),  
\end{align*}
i.e.
\begin{align*}
 \tau_2(e_2e_1 \widehat{\Phi} x e_1 y^* )
    = \tau_2( \widehat{\Phi^*}^* e_1 e_2 x e_1 y^*).   
\end{align*}
By Lemma \ref{lem:fourier}, we see that
\begin{align*}
  \tau_2(e_2\mathfrak{F}^{-1}(\overline{\widehat{\Phi}}) x e_1 y^* )
    = \tau_2( \mathfrak{F}^{-1}(\widehat{\Phi^*}^*) e_2 x e_1 y^*) .   
\end{align*}
This implies that 
\begin{align*}
    \mathfrak{F}^{-1}(\overline{\widehat{\Phi}})= \mathfrak{F}^{-1}(\widehat{\Phi^*}^*),
\end{align*}
i.e. $\widehat{\Phi^*}=\overline{\widehat{\Phi}}^*$.
This completes the proof of the proposition.
\end{proof}

Suppose $\Phi:\cM\rightarrow\cM$ is a $\cN$-bimodule quantum channel.
The fixed point space of $\Phi$ is given by 
\begin{align*}
\fix(\Phi) = \{x\in \cM \vert \Phi(x) = x\}.
\end{align*} 
The multiplicative domain of $\Phi$ is 
\begin{align*}
\nfix(\Phi) = \{x\in \cM \vert \Phi(x^*x) = \Phi(x)^*\Phi(x),\Phi(xx^*) = \Phi(x)\Phi(x)^*\}.
\end{align*}
Since $\Phi$ is a $\cN$-bimodule map, $\cN\subseteq \fix(\Phi)\cap \nfix(\Phi)$. 
The multiplicative domain of a quantum channel forms a von Neumann subalgebra of $\cM$, while this is not true for the fixed point space in general. 

For an $\cN$-bimodule quantum channel $\Phi$ on $\cM$ with Fourier multiplier $\widehat{\Phi}\in \cM'\cap \cM_2$, we have $\widehat{\Phi^k} = \widehat{\Phi}^{(*k)}$. 
The limit $\displaystyle \mathbb{E}_{\Phi} = \lim_{\ell\rightarrow \infty}\frac{1}{\ell}\sum^\ell_{k=1}\Phi^{k}$ exists as a $\cN$-bimodule quantum channel, with the property that $\mathbb{E}^2_{\Phi} = \mathbb{E}_{\Phi}$. 
Moreover, the image of $\mathbb{E}_{\Phi}$ is $\fix(\Phi)$. 
Taking Fourier multiplier gives: 
\begin{align*}
    \widehat{\mathbb{E}}_{\Phi} = \lim_{\ell \rightarrow \infty}\frac{1}{\ell}\sum^\ell _{k=1} \widehat{\Phi}^{(*k)},
\end{align*}
with $\widehat{\mathbb{E}}_{\Phi}*\widehat{\mathbb{E}}_{\Phi} =  \widehat{\mathbb{E}}_{\Phi}$. 
Therefore, $\mathbb{E}_{\Phi}$ is an idempotent with positive Fourier multiplier. 

Suppose that the image of $\mathbb{E}_{\Phi}$ is a $*$-subalgebra $\mathcal{P}\subset\cM$, then it is known that $\mathbb{E}_{\Phi}$ is a $\mathcal{P}$-bimodule map and is a contraction with respect to the operator norm. 
Namely, $\mathbb{E}_{\Phi}$ is a conditional expectation onto $\mathcal{P}$. 
Let $e_{\mathcal{P}}$ be the projection from $L^2(\cM, \tau)$ onto $L^2(\mathcal{P}, \tau)$.
Then we have  
\begin{align*}
\cR(\mathfrak{F}^{-1}(\widehat{\mathbb{E}}_{\Phi})\mathfrak{F}^{-1}(\widehat{\mathbb{E}}_{\Phi})^*)=e_{\mathcal{P}}.
\end{align*}
In particular, if $\fix(\Phi)=\cN$, we have $\cR(\mathfrak{F}^{-1}(\widehat{\mathbb{E}}_{\Phi})\mathfrak{F}^{-1}(\widehat{\mathbb{E}}_{\Phi})^*)=e_1.$

\begin{remark}
    We cannot conclude that $\mathbb{E}_{\Phi}$ is a trace-preserving conditional expectation, even if its image is a $*$-subalgebra. 
    However this is true when the inclusion $\cN\subset \cM$ is irreducible \cite[Proposition 3.2]{JLW2019}. 
\end{remark}

\begin{definition}[Convolution Support Projection]
Suppose $x\in \cM'\cap \cM_2$.
The convolution support projection $\CS(x)$ is defined to be
\begin{align*}
    \CS(x)=\bigvee_{k \geq 1, \epsilon_1, \ldots, \epsilon_k\in \{1, \overline{\ }\}} \cR(x^{\epsilon_1}*\cdots * x^{\epsilon_k}),
\end{align*}
where $x^\epsilon=x$ if $\epsilon=1$ and $x^\epsilon=\overline{x}$ if $\epsilon=\overline{\ }$.
We say $x$ is connected if $\CS(x)=1$.
We denote by $\displaystyle \CS_0(x)=\bigvee_{k \geq 1} \cR(x^{(*k)})$.
\end{definition}

\begin{remark}
We have that $\cR(x)\leq \CS_0(x) \leq  \CS(x)\leq 1$ if $x$ is positive.
If $\cN\subset \cM$ is irreducible, then $\CS(x)$ is a biprojection by \cite{Liu16} and \cite[Proposition 3.2]{JLW2019}.
\end{remark}

\begin{remark}
Suppose $G$ is an undirected graph and $A_G$ is its adjacent matrix.
By considering the inclusion $\bC\subset \bC^{n}$, where $n=|G|$, the cardinality of $G$, we have that $A_G\in \cN'\cap \cM_1$ and $\mathfrak{F}(A_G)$ is positive.
Moreover, $\CS(\mathfrak{F}(A_G))=1$ if and only if $G$ is connected.
\end{remark}

\begin{remark}
    Suppose $\Phi$ is a bimodule quantum channel.
    We have that $\cR(\widehat{\bE}_{\Phi})\leq \CS_0(\widehat{\Phi})$.
\end{remark}

We recall the relative irreducibility of a bimodule map introduced in \cite{HJLW24}.
\begin{definition}[Relative Irreducibility]
Suppose $\mathcal{N}\subseteq\mathcal{M}$ is a finite inclusion of von Neumann algebras and $\Phi$: $\mathcal{M}\to\mathcal{M}$ is bimodule positive map.
We say $\Phi$ is relatively irreducible if for any projection $p\in\mathcal{M}$ and positive number $c>0$ satisfying $\Phi(p)\leq cp$, we have that $p\in\mathcal{N}$.  
\end{definition}

\begin{remark}
    Theorem 5.8 in \cite{HJLW24} shows that $\CS(\widehat{\Phi})=1$ implies that $\Phi$ is relatively irreducible whenever $\Phi$ is a completely positive bimodule map.
    Lemma 5.9 in \cite{HJLW24} shows that when $\Phi$ is relatively irreducible bimodule quantum channel and $\cN$ is a factor, $\Phi$ is equilibrium with respect to a normal faithful state $\rho$, i.e. $\rho\circ \Phi=\rho$.
\end{remark}

We recall the following useful result for relatively irreducible bimodule quantum channels in \cite{HJLW24}.
\begin{proposition}[Proposition 5.12 in \cite{HJLW24}]\label{prop:wilandt}
    Suppose that $\Phi$ is a relatively irreducible bimodule quantum channel and $\cN$ is a factor. 
   If there exists a nonzero positive element $x \in\mathcal{M}$ such that $\Phi(x)\leq x$ or $\Phi(x)\geq x$, then $ \Phi(x)= x$.
\end{proposition}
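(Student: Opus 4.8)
The plan is to derive the statement from the existence of a faithful normal invariant state for $\Phi$, which is exactly Lemma~5.9 of \cite{HJLW24} recalled in the remark above: since $\Phi$ is a relatively irreducible bimodule quantum channel and $\cN$ is a factor, there is a normal faithful state $\rho$ on $\cM$ with $\rho\circ\Phi=\rho$. Given such a $\rho$, the argument is essentially one line, so in this approach all the genuine content sits in Lemma~5.9 --- this is the Perron--Frobenius phenomenon that, for a channel, relative irreducibility forces a faithful equilibrium.

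First I would reduce to the inequality $\Phi(x)\le x$. If instead $\Phi(x)\ge x$, set $y=\|x\|\,1-x$, which is positive since $x\ge 0$; unitality and linearity of $\Phi$ give $\Phi(y)=\|x\|\,1-\Phi(x)\le\|x\|\,1-x=y$, and once we know $\Phi(y)=y$ we recover $\Phi(x)=x$. So assume $\Phi(x)\le x$, hence $x-\Phi(x)\ge 0$. Applying $\rho$ and using $\rho\circ\Phi=\rho$,
\[
    \rho\bigl(x-\Phi(x)\bigr)=\rho(x)-\rho\bigl(\Phi(x)\bigr)=\rho(x)-\rho(x)=0 .
\]
Since $x-\Phi(x)$ is positive and $\rho$ is faithful, we conclude $x-\Phi(x)=0$, i.e. $\Phi(x)=x$. (The hypothesis $x\neq 0$ plays no role here; it merely excludes the trivial instance.)

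I expect the only real obstacle to be the existence of the faithful invariant state, i.e. Lemma~5.9 itself; since we may cite it, the argument above is complete. For robustness I would also record an intrinsic route that sidesteps $\rho$: from $\Phi(x)\le x$ and positivity of $\Phi$ the sequence $\Phi^n(x)$ is decreasing and bounded below, hence converges strongly to $x_\infty:=\inf_n\Phi^n(x)\ge 0$, which is a fixed point of $\Phi$ by normality and also equals $\bE_\Phi(x)$ by Ces\`aro convergence; one would then have to combine relative irreducibility with the algebraic structure of $\fix(\Phi)$ to propagate $x_\infty=\bE_\Phi(x)$ back to the equality $\Phi(x)=x$. That path is strictly more delicate than the direct argument via $\rho$, so I would present the latter.
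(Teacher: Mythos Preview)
Your proof is correct. The paper does not supply an independent proof of this proposition; it is quoted verbatim as Proposition~5.12 of \cite{HJLW24}. Your argument via the faithful invariant state furnished by Lemma~5.9 of \cite{HJLW24} (recalled in the remark immediately preceding the proposition) is the natural route and matches how the result is set up in the paper's narrative: relative irreducibility plus $\cN$ a factor gives a faithful equilibrium state, after which the conclusion is immediate from faithfulness. The reduction of the case $\Phi(x)\ge x$ to $\Phi(x)\le x$ via $y=\|x\|\,1-x$ is clean and correct.
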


Furthermore, Theorem 5.10 in \cite{HJLW24} states that
\begin{proposition}\label{prop:relativeirreducible}
    Suppose $\mathcal{N}\subseteq\mathcal{M}$ is a finite inclusion of finite von Neumann algebras and $\Phi$ is a relatively irreducible bimodule quantum channel.
    Suppose $\mathcal{N}$ is a factor.
  Then the eigenvalues of $\Phi$ with modulus $1$ form a finite cyclic subgroup $\Gamma$ of the unit circle $ U(1)$.
  The fixed points space $\fix(\Phi)=\mathcal{N}$.
  For each $\alpha\in\Gamma$, there exists a unitary $u_{\alpha}\in\fix(\Phi,\alpha)=\{x\in \cM: \Phi(x)=\alpha x\}$ such that $\fix(\Phi,\alpha)=u_{\alpha}\mathcal{N}=\mathcal{N}u_{\alpha}$.
\end{proposition}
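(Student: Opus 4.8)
\emph{Proof proposal.} The plan is to run a noncommutative Perron--Frobenius argument built on two facts from the preceding material: the multiplicative domain $\nfix(\Phi)$ is a von Neumann algebra on which $\Phi$ acts as a $*$-homomorphism, and $\Phi$ satisfies the Wielandt-type dichotomy of Proposition~\ref{prop:wilandt}. First I would prove $\fix(\Phi)=\cN$. For $0\neq x\in\fix(\Phi)$ the Kadison--Schwarz inequality gives $\Phi(x^{*}x)\geq\Phi(x)^{*}\Phi(x)=x^{*}x\geq 0$, so Proposition~\ref{prop:wilandt} forces $\Phi(x^{*}x)=x^{*}x$, and likewise $\Phi(xx^{*})=xx^{*}$; hence $\fix(\Phi)\subseteq\nfix(\Phi)$, and since $\Phi$ is multiplicative there and normal, $\fix(\Phi)=\{x\in\nfix(\Phi):\Phi(x)=x\}$ is a weakly closed $*$-subalgebra containing $\cN$. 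Any projection $p\in\fix(\Phi)$ satisfies $\Phi(p)=p\leq 1\cdot p$, so relative irreducibility gives $p\in\cN$; since a von Neumann algebra is generated by its projections, $\fix(\Phi)=\cN$.

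Next I would analyse a peripheral eigenspace $\mathcal{A}_{\alpha}:=\fix(\Phi,\alpha)$ with $\alpha\in U(1)$ and $\mathcal{A}_{\alpha}\neq 0$. Running the same Kadison--Schwarz/Proposition~\ref{prop:wilandt} step with $|\alpha|=1$ shows $\mathcal{A}_{\alpha}\subseteq\nfix(\Phi)$ and $v^{*}v,\,vv^{*}\in\fix(\Phi)=\cN$ for every $v\in\mathcal{A}_{\alpha}$; consequently $\mathcal{A}_{\alpha}^{*}\mathcal{A}_{\alpha}$ and $\mathcal{A}_{\alpha}\mathcal{A}_{\alpha}^{*}$ are nonzero two-sided ideals of the finite factor $\cN$, hence equal to $\cN$. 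Then the linking algebra $\mathcal{L}=\left(\begin{smallmatrix}\cN & \mathcal{A}_{\alpha}^{*}\\ \mathcal{A}_{\alpha} & \cN\end{smallmatrix}\right)\subseteq M_{2}(\cM)$ is a unital, weakly closed $*$-algebra, so a von Neumann algebra, and a short computation using that $\cN$ is a factor and $\mathcal{A}_{\alpha}\neq 0$ shows $Z(\mathcal{L})=\bC$. Thus $\mathcal{L}$ is a finite factor, its corner projections $e_{11},e_{22}$ are comparable, and the implementing partial isometry lies in $\mathcal{A}_{\alpha}$ and is an isometry or a co-isometry in $\cM$; since $\cM$ is finite it is a unitary $u_{\alpha}$, with $u_{\alpha}\in\nfix(\Phi)$ and $\Phi(u_{\alpha})=\alpha u_{\alpha}$.

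With the $u_{\alpha}$ in hand, the rest is bookkeeping. For $n\in\cN$ one gets $\Phi(u_{\alpha}n)=\Phi(u_{\alpha})\Phi(n)=\alpha u_{\alpha}n$, so $u_{\alpha}\cN\subseteq\mathcal{A}_{\alpha}$; conversely if $w\in\mathcal{A}_{\alpha}$ then $w\in\nfix(\Phi)$, so $u_{\alpha}^{*}w\in\nfix(\Phi)$ and $\Phi(u_{\alpha}^{*}w)=\overline{\alpha}u_{\alpha}^{*}\cdot\alpha w=u_{\alpha}^{*}w\in\fix(\Phi)=\cN$, hence $w=u_{\alpha}(u_{\alpha}^{*}w)\in u_{\alpha}\cN$; so $\mathcal{A}_{\alpha}=u_{\alpha}\cN$, and symmetrically $\mathcal{A}_{\alpha}=\cN u_{\alpha}$. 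Since $u_{\alpha}u_{\beta}$ and $u_{\alpha}^{*}$ are unitaries in $\nfix(\Phi)$ that are eigenvectors for $\alpha\beta$ and $\overline{\alpha}$, the set $\Gamma=\{\alpha\in U(1):\mathcal{A}_{\alpha}\neq 0\}$ is a subgroup of $U(1)$. Invoking the faithful normal $\Phi$-invariant state $\rho$ from \cite[Lemma~5.9]{HJLW24}, the relation $\rho\circ\Phi=\rho$ forces $\rho(z)=0$ for $z\in\mathcal{A}_{\gamma}$ with $\gamma\neq 1$, so the subspaces $u_{\alpha}\cN$ are pairwise orthogonal in the GNS space of $\rho$; each is a copy of $L^{2}(\cN)$ as a left $\cN$-module, and since $\cN\subseteq\cM$ has finite index the $\cN$-module dimension of $L^{2}(\cM)$ is finite, forcing $\Gamma$ to be finite. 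A finite subgroup of $U(1)$ is cyclic, which finishes the proof.

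The step I expect to be the real obstacle is the middle one: upgrading a peripheral eigenvector from a partial isometry with supports merely in $\cN$ to an honest unitary of $\cM$. This is precisely where all the hypotheses are used---factoriality of $\cN$ (for algebraic simplicity of $\cN$ and for factoriality of the linking algebra $\mathcal{L}$) and finiteness of $\cM$ (so that an isometry is a unitary)---and it is what makes the bimodule identity $\fix(\Phi,\alpha)=u_{\alpha}\cN=\cN u_{\alpha}$ hold exactly, not just modulo a central cut-off.
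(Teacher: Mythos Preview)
The paper does not give its own proof of this proposition: it is simply quoted as Theorem~5.10 of \cite{HJLW24}. So there is nothing in the paper to compare your argument against, and your proposal is a correct self-contained reconstruction. The fixed-point step, the linking-algebra extraction of a unitary (comparability in the finite factor $\mathcal{L}$ plus finiteness of $\cM$ to upgrade an isometry to a unitary), and the bimodule identification $\fix(\Phi,\alpha)=u_\alpha\cN=\cN u_\alpha$ are all sound as written.

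The only place I would tighten is the finiteness of $\Gamma$. Your orthogonality argument in $L^2(\cM,\rho)$ is correct, but the sentence ``each is a copy of $L^2(\cN)$ as a left $\cN$-module, and since $\cN\subseteq\cM$ has finite index the $\cN$-module dimension of $L^2(\cM)$ is finite'' mixes two Hilbert spaces: the coupling constant/$\cN$-dimension is set up for $L^2(\cM,\tau)$, not for the GNS space of an arbitrary invariant state $\rho$, and $\rho$ need not restrict to the trace on $\cN$. One can repair this (e.g.\ by showing the $u_\alpha$ normalize $\cN$ and then bounding $|\Gamma|$ by the index via $\tau$), but there is a one-line alternative already available from the paper's machinery: by Proposition~\ref{prop:fouriermultiple2} one has $\Phi(x)\Omega=\mathfrak{F}(\widehat{\Phi})\,x\Omega$, so every eigenvalue of $\Phi$ lies in the spectrum of $\mathfrak{F}(\widehat{\Phi})\in\cN'\cap\cM_1$. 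Since the inclusion is finite, $\cN'\cap\cM_1$ is finite-dimensional (this is exactly what is used in the proof of Proposition~\ref{prop:uniform}), so its spectrum is finite and hence $\Gamma$ is finite. This bypasses both the invariant state and any module-dimension bookkeeping.
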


\begin{lemma}\label{lem:subalgebra}
Suppose $\Phi$ is an $\cN$-bimodule quantum channel. 
Suppose that there exists a faithful normal  state $\rho$ on $\mathcal{M}$ such that $\Phi$ is equilibrium with respect to $\rho$.
Then $x\in\fix(\Phi)$ if and only if
\begin{align}
x\widehat{\Phi}^{1/2}e_1e_2= \widehat{\Phi}^{1/2}e_1e_2 x.
\end{align}
Consequently, $\fix(\Phi)$ is a von Neumann subalgebra.
\end{lemma}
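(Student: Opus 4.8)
The plan is to characterize $\fix(\Phi)$ via the equilibrium structure, using the formula $\Phi(x)e_2 = \lambda^{-3/2}e_2 e_1 \widehat\Phi x e_1 e_2$ together with Lemma~\ref{lem:fourier}. First I would translate the fixed-point condition $\Phi(x)=x$ into a statement at the level of $\cM_2$. Since equilibrium with respect to a faithful state $\rho$ forces $\widehat\Phi$ (hence $\widehat\Phi^{1/2}$) to interact well with the modular data, the key intermediate object is $a:=\mathfrak{F}^{-1}(\widehat\Phi^{1/2})\in\cN'\cap\cM_1$, for which Lemma~\ref{lem:fourier} gives $\widehat\Phi^{1/2}e_1e_2 = \mathfrak{F}(a)e_1e_2 = \lambda^{1/2}a e_2$. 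Thus the claimed identity $x\widehat\Phi^{1/2}e_1e_2 = \widehat\Phi^{1/2}e_1e_2\, x$ is equivalent to $x a e_2 = a e_2 x$, i.e. (applying to $\Omega_1$ and using faithfulness of $e_2$ on the appropriate subspace) to $xa = ax$ as operators; I would set up this reduction carefully so that the equivalence is genuinely an iff.

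Next I would prove the forward direction: if $\Phi(x)=x$, then $x$ commutes with $a$ (equivalently with $\widehat\Phi^{1/2}e_1e_2$). The idea is that equilibrium makes $\Phi$ behave like conjugation by $\widehat\Phi^{1/2}$ in a Stinespring/Kraus sense, so a fixed point must commute with the ``square root'' of the multiplier; concretely, one writes $\Phi(x)=x$ using \eqref{eq:multiplierphi}, inserts $\widehat\Phi = \widehat\Phi^{1/2}\widehat\Phi^{1/2}$, and uses the equilibrium identity to move one factor past $x$. For the converse, if $x$ commutes with $\widehat\Phi^{1/2}e_1e_2$ (hence with its adjoint, since $\widehat\Phi$ is positive and self-adjoint under the bar operation), then plugging into $\Phi(x)e_2 = \lambda^{-3/2}e_2e_1\widehat\Phi x e_1 e_2 = \lambda^{-3/2}e_2 e_1 \widehat\Phi^{1/2}\widehat\Phi^{1/2} x e_1 e_2$ and commuting $x$ out past both half-powers, together with $\Phi$ unital ($\widehat\Phi$ normalized so that $\lambda^{-3/2}e_2e_1\widehat\Phi e_1 e_2 = e_2$), yields $\Phi(x)e_2 = x e_2$, hence $\Phi(x)=x$.

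Finally, the ``consequently'' part is immediate once the characterization is in place: the set of $x\in\cM$ commuting with the fixed operator $\widehat\Phi^{1/2}e_1e_2$ (and with its adjoint) is the intersection of $\cM$ with a relative commutant inside $\cM_2$, hence a von Neumann subalgebra of $\cM$ — closed under products, adjoints, and weak limits. I expect the main obstacle to be the forward direction and in particular making precise how ``$\Phi$ equilibrium with respect to $\rho$'' licenses the manipulation that lets a fixed point be pulled through $\widehat\Phi^{1/2}$; this requires identifying $\rho$ with (a multiple of) $\mathfrak{F}(\widehat\Phi$-related data$)$ and invoking that a fixed point of a $\rho$-equilibrium channel lies in the multiplicative domain and commutes with the GNS/modular implementation — the cleanest route is probably to first show $\fix(\Phi)\subseteq\nfix(\Phi)$ under equilibrium and then use the multiplicative-domain structure to commute $x$ past the Kraus-type data encoded by $\widehat\Phi^{1/2}e_1e_2$.
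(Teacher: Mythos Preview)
Your converse direction and the ``consequently'' part are essentially correct and match the paper: once the commutation $x\widehat{\Phi}^{1/2}e_1e_2=\widehat{\Phi}^{1/2}e_1e_2 x$ holds, multiplying on the left by $e_2e_1\widehat{\Phi}^{1/2}$ (remembering that $\widehat{\Phi}^{1/2}\in\cM'\cap\cM_2$ already commutes with $x\in\cM$) and applying $\bE_{\cM}$ yields $\Phi(x)=x$; and the commutant description gives the subalgebra statement.

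The forward direction, however, has a genuine gap. Equilibrium $\rho\circ\Phi=\rho$ is only a scalar identity on states; it does not furnish any operator-level ``equilibrium identity'' that lets you slide a factor of $\widehat{\Phi}^{1/2}$ past $x$, and there is no need to identify $\rho$ with Fourier data of $\widehat{\Phi}$. Your last-paragraph instinct to first show $\fix(\Phi)\subseteq\nfix(\Phi)$ is the right one, but you have not said how: the mechanism is Kadison--Schwarz, $\Phi(x^*x)\geq\Phi(x)^*\Phi(x)=x^*x$, together with $\rho(\Phi(x^*x)-x^*x)=0$ and faithfulness of $\rho$, which forces $\Phi(x^*x)=x^*x$.

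What is still missing after that step is the bridge from ``$x$ lies in the multiplicative domain'' to the specific commutation with $\widehat{\Phi}^{1/2}e_1e_2$. The paper does not invoke abstract multiplicative-domain theory here; instead it sets $y=x\widehat{\Phi}^{1/2}e_1e_2-\widehat{\Phi}^{1/2}e_1e_2 x$ and computes $\tau(\bE_{\cM}(y^*y))$ directly. Expanding the four terms via \eqref{eq:multiplierphi} gives
\[
\lambda^{3/2}\bigl(\tau(\Phi(x^*x))+\tau(xx^*\Phi(1))-\tau(x^*\Phi(x))-\tau(x\Phi(x^*))\bigr),
\]
which vanishes once $\Phi(x^*x)=x^*x$, $\Phi(1)=1$, and $\Phi(x)=x$; the Pimsner--Popa inequality then forces $y=0$. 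This explicit $L^2$-computation is the missing idea in your proposal.
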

\begin{proof}
Suppose that $x\in \fix(\Phi)$.
By the Kadison-Schwarz inequality, 
\begin{align*}
\Phi(x^*x)\geq\Phi(x)^*\Phi(x)=x^*x.
\end{align*}
Since $\rho(\Phi(x^*x)-x^*x)=0$, we have $\Phi(x^*x)=x^*x$ as $\rho$ is faithful.
Let $y=x\widehat{\Phi}^{1/2}e_1e_2- \widehat{\Phi}^{1/2}e_1e_2 x$.
Then
\begin{align*}
\tau(\bE_{\mathcal{M}}(y^*y))= &\tau\left( \bE_{\mathcal{M}}\big((x\widehat{\Phi}^{1/2}e_1e_2-  \widehat{\Phi}^{1/2}e_1e_2 x)^*(x\widehat{\Phi}^{1/2}e_1e_2-  \widehat{\Phi}^{1/2}e_1e_2 x)\big)\right)\\
= &\lambda^{3/2}\tau(\Phi(x^*x))+\lambda^{3/2}\tau(xx^*\Phi(\mathbf{1}))- \lambda^{3/2}\tau(x^*\Phi(x))- \lambda^{3/2}\tau(x\Phi(x^*))\\
\leq & \lambda^{3/2}\tau(x^*x)+\lambda^{3/2}\tau(xx^*)-\lambda^{3/2}\tau(x^*x)-\lambda^{3/2}\tau(xx^*)\\
= &0.
\end{align*}
Hence $\bE_{\mathcal{M}}(y^*y)=0$.
By the Pimsner-Popa inequality, $y=0$.
Note that $x^*\in \fix(\Phi)$.
We see that $x^*\widehat{\Phi}^{1/2}e_1e_2= \widehat{\Phi}^{1/2}e_1e_2 x^*$.
This implies that $e_2e_1\widehat{\Phi}^{1/2}x=  xe_2e_1\widehat{\Phi}^{1/2}$.

Suppose that $x\in \cM$ and $x\widehat{\Phi}^{1/2}e_1e_2= \widehat{\Phi}^{1/2}e_1e_2 x.$
We see that $\Phi(x)=x$ by multiplying $e_2e_1\widehat{\Phi}^{1/2}$ from the left hand side and taking the conditional expectation $\bE_{\cM}$.
\end{proof}

\begin{remark}
    It is known that when $\Phi$ is equilibrium with respect to a normal faithful state $\rho$, we have $\fix(\Phi)\subseteq \nfix(\Phi)$. 
\end{remark}

\begin{proposition}\label{prop:subalgebra2}
Suppose $\Phi$ is an $\cN$-bimodule quantum channel equilibrium with respect to a faithful normal  state $\rho$ on $\mathcal{M}$.
Then $x\in\fix(\Phi)$ if and only if 
\begin{align}
\CS(\widehat{\Phi})x e_1e_2= \CS(\widehat{\Phi})e_1 x e_2.
\end{align}
\end{proposition}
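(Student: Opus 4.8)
The plan is to deduce this from Lemma~\ref{lem:subalgebra} by means of two elementary algebraic reformulations, after which the only substantive content is an identification of $\CS(\widehat{\Phi})$ with the support of the ergodic mean.

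\emph{Reformulation.} First I would observe that the stated identity is just a commutation relation. Since $\CS(\widehat{\Phi})\in\cM'\cap\cM_2$ commutes with every $x\in\cM$, and $e_2$ commutes with $\cM$, the equation $\CS(\widehat{\Phi})xe_1e_2=\CS(\widehat{\Phi})e_1xe_2$ is equivalent to $x\,\CS(\widehat{\Phi})e_1e_2=\CS(\widehat{\Phi})e_1e_2\,x$, i.e.\ to $x$ commuting with $\CS(\widehat{\Phi})e_1e_2$. Next I would record the general fact: for $0\le a\in\cM'\cap\cM_2$, the operator $x\in\cM$ commutes with $a^{1/2}e_1e_2$ if and only if it commutes with $\cR(a)e_1e_2$. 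Indeed, moving $x$ past $a^{1/2}$, respectively past $\cR(a)$ (both in $\cM'$), reduces the two conditions to $a^{1/2}(xe_1e_2-e_1e_2x)=0$ and $\cR(a)(xe_1e_2-e_1e_2x)=0$, which are equivalent because $\Ker a^{1/2}=\Ker a=\Ker\cR(a)$. The same argument with suprema gives: if $p=\bigvee_i\cR(a_i)$ with $a_i\ge0$ in $\cM'\cap\cM_2$, then $x$ commutes with $pe_1e_2$ iff it commutes with every $\cR(a_i)e_1e_2$ (use $\bigcap_i\Ker\cR(a_i)=\Ker p$).

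\emph{The easy implication.} Assume $\CS(\widehat{\Phi})xe_1e_2=\CS(\widehat{\Phi})e_1xe_2$, i.e.\ $x$ commutes with $\CS(\widehat{\Phi})e_1e_2$. Since $\cR(\widehat{\Phi})\le\CS(\widehat{\Phi})$ by the very definition of $\CS$ (take $k=1$, $\epsilon_1=1$), the supremum remark gives that $x$ commutes with $\cR(\widehat{\Phi})e_1e_2$, hence with $\widehat{\Phi}^{1/2}e_1e_2$, and Lemma~\ref{lem:subalgebra} yields $x\in\fix(\Phi)$.

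\emph{The other implication.} Let $x\in\fix(\Phi)$. For each $k\ge1$ the map $\Phi^k$ is again an $\cN$-bimodule quantum channel equilibrium with respect to $\rho$ (as $\rho\Phi^k=\rho$), its Fourier multiplier is $\widehat{\Phi}^{(*k)}$, and $x\in\fix(\Phi^k)$. So Lemma~\ref{lem:subalgebra} applied to $\Phi^k$ shows $x$ commutes with $(\widehat{\Phi}^{(*k)})^{1/2}e_1e_2$, hence with $\cR(\widehat{\Phi}^{(*k)})e_1e_2$; passing to the supremum over $k$, $x$ commutes with $\CS_0(\widehat{\Phi})e_1e_2$. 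Thus, via the reformulation, it suffices to prove $\CS(\widehat{\Phi})=\CS_0(\widehat{\Phi})$ — equivalently (using $\cR(\widehat{\bE}_\Phi)\le\CS_0(\widehat{\Phi})\le\CS(\widehat{\Phi})$ from the earlier Remark) the single inclusion $\CS(\widehat{\Phi})\le\cR(\widehat{\bE}_\Phi)$, where $\widehat{\bE}_\Phi$ is the Fourier multiplier of the ergodic mean $\bE_\Phi=\lim_{\ell}\tfrac1\ell\sum_{k=1}^\ell\Phi^k$.

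\emph{Main obstacle.} The crux is precisely this last inclusion, i.e.\ that adjoining the modular conjugates $\overline{\widehat{\Phi}}=\widehat{\Phi^*}$ does not enlarge the convolutive support beyond $\cR(\widehat{\bE}_\Phi)$. Concretely I would show $\cR(\widehat{\Psi})\le\cR(\widehat{\bE}_\Phi)$ for every composition $\Psi$ of copies of $\Phi$ and $\Phi^*$, using that $\widehat{\bE}_\Phi\ge0$ is a convolution idempotent absorbing $\widehat{\Phi}$ on both sides ($\widehat{\bE}_\Phi*\widehat{\Phi}=\widehat{\Phi}*\widehat{\bE}_\Phi=\widehat{\bE}_\Phi$), that $\overline{\widehat{\Phi}}^{(*k)}=\overline{\widehat{\Phi}^{(*k)}}$, and, crucially, the faithfulness of $\rho$: a faithful invariant state forces the ergodic projection to have full support on the recurrent part, which is what prevents the trace–preserving, non‑unital map $\Phi^*$ from ``escaping'' that support. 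This non‑unitality of $\Phi^*$ (so that $\Phi^*$ need not even fix $\fix(\Phi)$, whence one cannot simply feed the compositions $\Psi$ into Lemma~\ref{lem:subalgebra}) is exactly what makes the step delicate, and I expect it to be the only genuine work; everything else is the bookkeeping above.
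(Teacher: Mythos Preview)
Your reformulation is fine and the easy implication is correct, but the ``main obstacle'' you isolate is a genuine gap: you have not proved $\CS(\widehat{\Phi})=\CS_0(\widehat{\Phi})$ (equivalently $\CS(\widehat{\Phi})\le\cR(\widehat{\bE}_\Phi)$), and the sketch you offer via absorption properties of the ergodic mean runs into the very difficulty you flag, since $\Phi^*$ need not fix $\fix(\Phi)$ and $\bE_\Phi$ need not absorb $\Phi^*$ on either side. There is no obvious reason why the faithfulness of $\rho$ alone should force the convolutive support with contragredients to collapse to $\CS_0$, and you give no argument that it does.

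The paper bypasses this entirely with a much shorter trick. By Lemma~\ref{lem:fourier} the relation $x\widehat{\Phi}^{1/2}e_1e_2=\widehat{\Phi}^{1/2}e_1e_2x$ is equivalent to the genuine commutation $[x,\mathfrak{F}^{-1}(\widehat{\Phi}^{1/2})]=0$ inside $\cM_1$. Since $\fix(\Phi)$ is $*$-closed, $x^*$ is fixed as well, so $[x^*,\mathfrak{F}^{-1}(\widehat{\Phi}^{1/2})]=0$; taking adjoints gives $[x,\mathfrak{F}^{-1}(\widehat{\Phi}^{1/2})^*]=0$. Now $x$ commutes with every word in $\mathfrak{F}^{-1}(\widehat{\Phi}^{1/2})$ and its adjoint. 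Under the Fourier dictionary (products in $\cN'\cap\cM_1$ correspond to convolutions in $\cM'\cap\cM_2$, and the $*$ on $\mathfrak{F}^{-1}$ corresponds to the contragredient, since $\mathfrak{F}^{-1}(\widehat{\Phi}^{1/2})^*=\mathfrak{F}^{-1}(\overline{\widehat{\Phi}^{1/2}})$ for self-adjoint $\widehat{\Phi}^{1/2}$), this is exactly
\[
(\widehat{\Phi}^{\epsilon_1/2}*\cdots*\widehat{\Phi}^{\epsilon_k/2})\,xe_1e_2=(\widehat{\Phi}^{\epsilon_1/2}*\cdots*\widehat{\Phi}^{\epsilon_k/2})\,e_1xe_2,\qquad \epsilon_i\in\{1,\overline{\ }\},
\]
whence $\CS(\widehat{\Phi}^{1/2})xe_1e_2=\CS(\widehat{\Phi}^{1/2})e_1xe_2$, and $\CS(\widehat{\Phi}^{1/2})=\CS(\widehat{\Phi})$ finishes the proof. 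The point you missed is that the contragredient $\overline{\widehat{\Phi}}$ enters not through $\Phi^*$ but through the ordinary $*$ in $\cM_1$ applied to the commutation relation for $x^*$; this makes iterating Lemma~\ref{lem:subalgebra} over powers $\Phi^k$, and any appeal to the ergodic mean, unnecessary.
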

\begin{proof}
Suppose that $x\in \mathscr{M}(\Phi)$.
We have that $x^*\in \fix(\Phi)$.
By Lemma \ref{lem:subalgebra}, we have $x\widehat{\Phi}^{1/2}e_1e_2= \widehat{\Phi}^{1/2}e_1e_2 x$, i.e. $x \mathfrak{F}^{-1}(\widehat{\Phi}^{1/2})= \mathfrak{F}^{-1}(\widehat{\Phi}^{1/2})x$.
By considering the adjoint, we have that $x \mathfrak{F}^{-1}(\widehat{\Phi}^{1/2})^*= \mathfrak{F}^{-1}(\widehat{\Phi}^{1/2})^*x$. 
Multiplying $\mathfrak{F}^{-1}(\widehat{\Phi}^{1/2})$, $\mathfrak{F}^{-1}(\widehat{\Phi}^{1/2})^*$ multiple times, we obtain that 
\begin{align*}
x \mathfrak{F}^{-1}(\widehat{\Phi}^{1/2})^{\epsilon_1}\cdots \mathfrak{F}^{-1}(\widehat{\Phi}^{1/2})^{\epsilon_k} = \mathfrak{F}^{-1}(\widehat{\Phi}^{1/2})^{\epsilon_1}\cdots \mathfrak{F}^{-1}(\widehat{\Phi}^{1/2})^{\epsilon_k} x, \quad k \in \bN, \quad \epsilon_1, \ldots, \epsilon_k\in \{1, *\}.
\end{align*}
By Lemma \ref{lem:fourier}, we see that 
\begin{align*}
 (\widehat{\Phi}^{\epsilon_1/2}*\cdots * \widehat{\Phi}^{\epsilon_k/2}) xe_1e_2 =  (\widehat{\Phi}^{\epsilon_1/2}*\cdots *\widehat{\Phi}^{\epsilon_k/2}) e_1xe_2, \quad k  \in \bN, \quad  \epsilon_1, \ldots, \epsilon_k\in \{1, \overline{\ }\}.
\end{align*}
Hence $\CS(\widehat{\Phi}^{1/2})x e_1e_2= \CS(\widehat{\Phi}^{1/2})e_1 x e_2.$
Note that $\CS(\widehat{\Phi}^{1/2})= \CS(\widehat{\Phi})$.
We have 
\begin{align*}
\CS(\widehat{\Phi})x e_1e_2= \CS(\widehat{\Phi})e_1 x e_2.
\end{align*}

Suppose that $\CS(\widehat{\Phi})x e_1e_2= \CS(\widehat{\Phi})e_1 x e_2.$
Then we obtain that $ \widehat{\Phi} xe_1e_2 =  \widehat{\Phi}e_1xe_2$.
Multiplying $e_2e_1$ from the left hand side and taking the conditional expectation $\bE_{\cM}$, we see that $\Phi(x)=x$, i.e $x\in \fix(\Phi)$.
\end{proof}

\begin{remark}\label{rem:subalgebra3}
Suppose $\Phi$ is an $\cN$-bimodule quantum channel equilibrium with respect to a faithful normal  state $\rho$ on $\mathcal{M}$ and $x\in\fix(\Phi)$. 
We have that $\Phi^*(x)=\Phi^*(1)x$ and
\begin{align*}
  \widehat{\bE}_{\Phi}  x e_1e_2= \widehat{\bE}_{\Phi} e_1 x e_2.
\end{align*}
\end{remark}

\begin{corollary}
Suppose $\Phi$ is an $\cN$-bimodule quantum channel equilibrium with respect to a faithful normal state $\rho$ on $\mathcal{M}$.
If one of the following holds:
\begin{enumerate}[(1)]
    \item $\CS(\widehat{\Phi})=1$;
    \item $\widehat{\bE}_{\Phi}$ is invertible,
\end{enumerate}
then $\fix(\Phi)=\cN$.
\end{corollary}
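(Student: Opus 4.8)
The plan is to establish the nontrivial inclusion $\fix(\Phi)\subseteq\cN$; the reverse inclusion $\cN\subseteq\fix(\Phi)$ holds automatically for any unital $\cN$-bimodule map. First I would observe that hypothesis (2) implies hypothesis (1): since $\widehat{\bE}_\Phi$ is a positive element of $\cM'\cap\cM_2$, invertibility forces its range projection to satisfy $\cR(\widehat{\bE}_\Phi)=1$, and the chain $\cR(\widehat{\bE}_\Phi)\le\CS_0(\widehat{\Phi})\le\CS(\widehat{\Phi})\le 1$ (valid because $\widehat{\Phi}\ge 0$, as noted above) then yields $\CS(\widehat{\Phi})=1$. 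Thus it suffices to prove the corollary under (1).

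So assume $\CS(\widehat{\Phi})=1$ and let $x\in\fix(\Phi)$. By Proposition \ref{prop:subalgebra2} this is equivalent to the operator identity $x e_1 e_2 = e_1 x e_2$ in $\cM_2$. I would multiply this identity on the left by $e_2 e_1$. On the left-hand side, $e_2 e_1 x e_1 e_2 = e_2\,(e_1 x e_1)\,e_2 = e_2\,\bE_{\cN}(x)\,e_1 e_2$; since $\bE_{\cN}(x)\in\cN\subseteq\cM$ commutes with $e_2$ and $e_2 e_1 e_2 = \bE_{\cM}(e_1)e_2 = \lambda e_2$, this equals $\lambda\,\bE_{\cN}(x)\,e_2$. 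On the right-hand side, $e_2 e_1 e_1 x e_2 = e_2 e_1 x e_2 = e_2 e_1 e_2 x = \lambda e_2 x = \lambda x e_2$, again using that $x\in\cM$ commutes with $e_2$. Hence $\bigl(x-\bE_{\cN}(x)\bigr)e_2 = 0$; and since $x-\bE_{\cN}(x)\in\cM\subseteq\cM_1$ and right multiplication by $e_2$ is injective on $\cM_1$ (because $\tau_2(z e_2 z^*) = \lambda\tau_1(z^*z)$ with $\tau_1$ faithful), we conclude $x=\bE_{\cN}(x)\in\cN$, as desired.

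The argument is short because the substantive work is already carried out in Proposition \ref{prop:subalgebra2} (and, for a direct treatment of (2) bypassing the reduction, in Remark \ref{rem:subalgebra3}, from which $\widehat{\bE}_\Phi\,(x e_1 e_2 - e_1 x e_2)=0$ together with invertibility of $\widehat{\bE}_\Phi$ gives $x e_1 e_2 = e_1 x e_2$, and one then finishes exactly as above). There is no genuine obstacle here, only two routine cancellations to justify: that right multiplication by the Jones projection $e_2$ is injective on $\cM_1$, and that left multiplication by the positive invertible element $\widehat{\bE}_\Phi$ is injective. One should also double-check the elementary facts used along the way, namely $e_1 x e_1=\bE_{\cN}(x)e_1$ for $x\in\cM$, the commutation $[x,e_2]=0$ for $x\in\cM$ (equivalently $e_2\in\cM'\cap\cM_2$, which is visible from $\widehat{\id}=\lambda^{-1/2}e_2$), and the relation $e_2 e_1 e_2=\lambda e_2$.
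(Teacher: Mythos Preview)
Your proof is correct and follows essentially the same approach as the paper: both derive the identity $xe_1e_2=e_1xe_2$ from Proposition~\ref{prop:subalgebra2} (for case~(1)) and Remark~\ref{rem:subalgebra3} (for case~(2)), and then conclude $x\in\cN$. You add two things the paper leaves implicit: the observation that (2) actually implies (1) via $\cR(\widehat{\bE}_\Phi)\le\CS_0(\widehat{\Phi})\le\CS(\widehat{\Phi})$, and the explicit verification that $xe_1e_2=e_1xe_2$ forces $x\in\cN$ (your computation with $e_2e_1$ on the left is fine; an equally quick route is to cancel $e_2$ on the right by injectivity to get $xe_1=e_1x$, then apply both sides to $\Omega$).
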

\begin{proof}
    If one of the above conditions holds, we have that $xe_1e_2=e_1xe_2$ for any $x\in \fix(\Phi)$ by Proposition \ref{prop:subalgebra2} and Remark \ref{rem:subalgebra3}.
    This implies that $x\in \cN$.
    Therefore $\fix(\Phi)=\cN$.
\end{proof}

\section{Bimodule GNS Symmetry}\label{sec:: Bimodule Equilibrium and Bimodule GNS Symmetry}

In this section, we will interpret the equilibrium in the bimodule setting and introduce bimodule GNS symmetry for bimodule quantum channels.
Suppose $\rho$ is a normal faithful state on $\cM$.
Let $\Omega_\rho$ be the separating and cyclic vector in the GNS representation Hilbert space $L^2(\cM, \rho)$.

Let $S_{\rho, \tau}$ be the relative modular operator defined by $S_{\rho,\tau} x\Omega=x^*\Omega_\rho$ for any $x\in \cM$.
Let $S_{\rho,\tau}=J \Delta_\rho^{1/2}$, where $\Delta_{\rho}$ is an (unbounded) operator affiliated to $\cM$.  
Let $\sigma_t^\rho$ be the modular automorphism, $t\in \bR$.
We will denote $\sigma_t^{\rho}$ by $\sigma_t$ for simplicity when no confusion arises. 

The state $\rho$ is canonically lifted to $\cM_1$ by $\rho\circ \mathbb{E}_{\cM}$. 
The relative modular operator of $\rho\circ\mathbb{E}_{\cM}$ with respect to $\tau_1$ is just $\Delta_{\rho}$, viewed as an operator affiliated with $\cM_1$. 

The modular operator $\Delta_\rho$ is an unbounded operator in general. 
In the bimodule case, we shall consider the operator $\widehat{\Delta}_\rho$ defined as follows. 
We shall assume that the modular operator is compatible with the basic construction, namely $e_1\in \Dom(\sigma^{\rho}_{-i})$. 
Under this assumption, we define 
\begin{align}
\widehat{\Delta}_\rho
=\lambda^{-1/2}\mathfrak{F}( \bE_{\cN'} (\sigma_{-i}(e_1))). 
\end{align}
We have that $\bE_{\cN'} (\Delta_{\rho} e_1 \Delta_{\rho}^{-1} )\in \cN'\cap \cM_1$ is bounded.
Hence $\widehat{\Delta}_\rho\in \cM'\cap \cM_2$ is bounded. 
\begin{remark}
    We say the normal faithful state $\rho$ is a hypertrace on $\cM$ over $\cN$ if $\rho|_{\cN}$ is a tracial state. 
    For such $\rho$, we have $\Delta_{\rho}$ affiliated with $\cN'\cap\cM$. 
    Note that being a hypertrace does not imply that $\Delta_{\rho}$ is bounded. 
    If the modular automorphism group of $\rho$ keeps $\cN$ fixed globally and the center of $\cN$ is finite dimensional, then $\Delta_{\rho}$ is bounded. 
    For such $\rho$, we have
    \begin{align*}
        \widehat{\Delta}_{\rho} = \lambda^{-1/2}\mathfrak{F}(\Delta_{\rho}e_1\Delta^{-1}_{\rho}). 
    \end{align*}
    If $\rho|_{\cN}=\tau$, we have that $\Delta_\rho e_1=e_1\Delta_\rho$.
This implies that $\widehat{\Delta}_{\rho}=1$. 
If $\cN=\bC$ and $\rho$ is a normal faithful state on $\cM$, then $\rho$ is a hyper-trace on $\cM$ over $\cN$.
In this case, 
\begin{align*}
\widehat{\Delta}_\rho=\lambda^{-2}\bE_{\cM'}(\Delta_\rho e_1 \Delta_\rho^{-1} e_2 e_1),
\end{align*}
depicted as 
\begin{align*}
    \vcenter{\hbox{\begin{tikzpicture}[scale=1.2]
        \draw [blue] (0, -0.5)--(0, 0.5) (0.5, -0.5)--(0.5, 0.5) (-0.5, -0.5)--(-0.5, 0.5);
        \draw [fill=white] (-0.2, -0.2) rectangle (0.2, 0.2);
        \node at (0, 0) {\tiny $\overline{\Delta_\rho}$};
        \begin{scope}[shift={(0.5, 0)}]
         \draw [fill=white] (-0.2, -0.2) rectangle (0.2, 0.2);
        \node at (0, 0) {\tiny $\Delta_\rho^{-1}$};   
        \end{scope}
    \end{tikzpicture}}}.
\end{align*}
The contragredient $\overline{\widehat{\Delta}}_\rho=\lambda^{-2}\bE_{\cM'}(\Delta_\rho^{-1} e_1 \Delta_\rho e_2 e_1)$ is depicted as
\begin{align*}
    \vcenter{\hbox{\begin{tikzpicture}[scale=1.4]
        \draw [blue] (0, -0.5)--(0, 0.5) (0.5, -0.5)--(0.5, 0.5) (-0.5, -0.5)--(-0.5, 0.5);
        \draw [fill=white] (-0.2, -0.2) rectangle (0.2, 0.2);
        \node at (0, 0) {\tiny $\overline{\Delta_\rho^{-1}}$};
        \begin{scope}[shift={(0.5, 0)}]
         \draw [fill=white] (-0.2, -0.2) rectangle (0.2, 0.2);
        \node at (0, 0) {\tiny $\Delta_\rho$};   
        \end{scope}
    \end{tikzpicture}}}.
\end{align*}
\end{remark}

\begin{remark}
    Suppose that $\bC^n\subset  M_n(\bC)$ is the inclusion, $\rho$ is a faithful state on $M_n(\bC)$.
    Then $\widehat{\Delta}_\rho$ is the Schur product of $\Delta_\rho^{\mathsf{T}}$ and $\Delta_\rho^{-1}$.
\end{remark}

\begin{remark}
The normal faithful state $\rho$ can be replaced by a normal semifinite faithful weight.    
\end{remark}

\begin{lemma}
    Suppose $\rho$ is a normal faithful state on $\cM$ such that $e_1\in \Dom(\sigma_{-i})$.
    Then $\cN'\cap \cM_1\subset \Dom(\sigma_{-i})$. 
\end{lemma}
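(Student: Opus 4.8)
The plan is to reduce the statement to a boundedness assertion, prove that by a Pimsner--Popa expansion, and justify the resulting analytic continuation by a three-lines argument. Since $\tau$, and hence its extension $\tau_1$ to $\cM_1$, is a trace, the relative modular operator $\Delta_\rho$ of $\rho\circ\bE_{\cM}$ with respect to $\tau_1$ has its modular unitaries $\Delta_\rho^{it}$ already inside $\cM\subseteq\cM_1$, so that $\sigma_t=\mathrm{Ad}(\Delta_\rho^{it})$ and $\sigma_t|_{\cM}$ is the modular group of $\rho$ on $\cM$. Consequently, for $y\in\cM_1$ one has $y\in\Dom(\sigma_{-i})$ if and only if the densely defined closable operator $\Delta_\rho\, y\,\Delta_\rho^{-1}$ has bounded closure (equivalently, the affiliated operator $\rho_0\, y\,\rho_0^{-1}$ lies in $\cM_1$, where $\rho_0$ is the density of $\rho\circ\bE_{\cM}$ relative to $\tau_1$, affiliated with $\cM$), and in that case the closure equals $\sigma_{-i}(y)$; the only nontrivial implication here is that boundedness produces the continuation, which I carry out in the last step. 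Thus it suffices to show that $\Delta_\rho\, x\,\Delta_\rho^{-1}$ is bounded for every $x\in\cN'\cap\cM_1$.

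To this end I would expand $x$ through an orthogonal Pimsner--Popa basis $\{\eta_j\}_{j=1}^m$. Using $\sum_j\eta_j^*e_1\eta_j=1$ together with the facts from Section~2 that $ze_1\in\cM e_1$ for $z\in\cM_1$ and $e_1 z e_1\in\cN e_1$ for $z\in\cM_1$, one obtains the finite expansion
\begin{align*}
 x=\sum_{i,j=1}^m\eta_i^*\,c_{ij}\,e_1\,\eta_j,\qquad c_{ij}\in\cN,\quad c_{ij}e_1=e_1\eta_i x\eta_j^*e_1 .
\end{align*}
Applying $\Delta_\rho(\cdot)\Delta_\rho^{-1}$ termwise turns this into
\begin{align*}
 \Delta_\rho\,x\,\Delta_\rho^{-1}=\sum_{i,j=1}^m\sigma_{-i}(\eta_i^*)\,\sigma_{-i}(c_{ij})\,\sigma_{-i}(e_1)\,\sigma_{-i}(\eta_j),
\end{align*}
in which the central factor $\sigma_{-i}(e_1)$ is bounded by the standing hypothesis $e_1\in\Dom(\sigma_{-i})$. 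To make the remaining factors meaningful I would run the expansion with the basis $\{\eta_j\}$, and with the coefficients, chosen inside the entire analytic elements of $\cM$ for the modular group of $\rho$ — Pimsner--Popa bases are permuted by the unitary group of $M_m(\cN)$, and the analytic elements are $\sigma$-weakly dense — and then pass to an arbitrary $x\in\cN'\cap\cM_1$ by a regularization/limit: the point is that $x$ is a single fixed element of $\cN'\cap\cM_1$ and the only essentially unbounded ingredient of the formula, $\sigma_{-i}(e_1)$, has already been tamed, so the displayed finite sum is a bounded operator, necessarily $\sigma_{-i}(x)$.

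Finally I would confirm that this bounded operator is genuinely the value at $-i$ of an analytic continuation of $t\mapsto\sigma_t(x)=\Delta_\rho^{it}x\Delta_\rho^{-it}$ to the strip $\{-1\le\mathrm{Im}\,z\le0\}$: for $\xi,\zeta$ in the analytic domain of $\Delta_\rho$ the scalar function $z\mapsto\langle\Delta_\rho^{iz}x\Delta_\rho^{-iz}\xi,\zeta\rangle$ is holomorphic in the open strip and bounded on the two boundary lines by $\|x\|$ and by $\|\Delta_\rho x\Delta_\rho^{-1}\|$ respectively, hence bounded on the whole strip by the three-lines lemma, with boundary value at $-i$ computed by the formula of the previous step; density of the analytic vectors together with this uniform bound then promotes the continuation to an $\cM_1$-valued holomorphic one, so $x\in\Dom(\sigma_{-i})$.

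The main obstacle is the middle step: expanding the relative-commutant element $x$ with analytic ``legs'' and controlling the coefficients $c_{ij}\in\cN$, which need not lie in a $\sigma^{\rho}$-invariant $*$-subalgebra of $\cM$, so that the termwise application of $\sigma_{-i}$ is legitimate and the regularization can be carried through with uniform bounds. By contrast the first step is merely the observation that a trace makes the modular group inner and implemented inside $\cM$, and the last step is a routine three-lines/density argument.
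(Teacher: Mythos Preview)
Your reduction to boundedness of $\Delta_\rho x\Delta_\rho^{-1}$ and your three-lines closing argument are fine, but the middle step---which you correctly flag as the obstacle---does not go through as written. There is no reason a Pimsner--Popa basis can be chosen among the $\sigma^\rho$-entire elements of $\cM$: the condition $\sum_j\eta_j^*e_1\eta_j=1$ is an exact algebraic constraint, and perturbing a given basis into the (merely dense) analytic elements destroys it. Worse, once the basis is fixed, the coefficients $c_{ij}\in\cN$ are determined by $x$ via $c_{ij}e_1=e_1\eta_i x\eta_j^*e_1$; you cannot simultaneously choose them analytic. The proposed regularization passes to approximants $x_n$ of $x$, but you have no a priori bound on $\|\sigma_{-i}(x_n)\|$, so nothing prevents blow-up in the limit. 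In short, the only factor you actually control is $\sigma_{-i}(e_1)$, and the termwise formula never becomes legitimate.

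The paper's argument avoids all of this by passing to the commutant via the Fourier transform. Write $x=\mathfrak{F}^{-1}(y)$ with $y\in\cM'\cap\cM_2$. Since $\Delta_\rho$ is affiliated with $\cM$, every such $y$ commutes with $\Delta_\rho$; combined with Lemma~\ref{lem:fourier} (namely $ye_1e_2=\lambda^{1/2}\mathfrak{F}^{-1}(y)e_2$) and $e_2\Omega_1=\Omega_1$, one computes
\[
 y\,\sigma_{-i}(e_1)\,\Omega_1
 = y\,\Delta_\rho e_1\Delta_\rho^{-1}e_2\Omega_1
 = \Delta_\rho\, y e_1 e_2\,\Delta_\rho^{-1}\Omega_1
 = \lambda^{1/2}\,\Delta_\rho\,\mathfrak{F}^{-1}(y)\,\Delta_\rho^{-1}\Omega_1 .
\]
The left-hand side is manifestly a vector because $\sigma_{-i}(e_1)$ is bounded by hypothesis, so $\Delta_\rho x\Delta_\rho^{-1}$ is bounded. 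The missing idea is thus: instead of trying to make the factors of a Pimsner--Popa expansion analytic, use the Fourier transform to replace $x$ by a single element $y$ that \emph{automatically} commutes with $\Delta_\rho$, so that the only modular contribution is the one hitting $e_1$.
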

\begin{proof}
For any $y\in \cM'\cap \cM_2$, we have that 
    \begin{align*}
 y\sigma_{-i}(e_1)\Omega_1
 =y \Delta_\rho e_1 \Delta_\rho^{-1} \Omega_1 
 =y \Delta_\rho e_1 \Delta_\rho^{-1} e_2\Omega_1 
 =\Delta_\rho \mathfrak{F}^{-1}(y)\Delta_\rho^{-1}\Omega_1.
    \end{align*}
Note that $\widetilde{y}\in \cM_1$ such that $\widetilde{y}e_2=y\sigma_{-i}(e_1)e_2$.
We see that $\Delta_\rho \mathfrak{F}^{-1}(y)\Delta_\rho^{-1}$ is bounded and $\mathfrak{F}^{-1}(y)\in \Dom(\sigma_{-i})$.
Note that $\mathfrak{F}$ is a unitary transform from $\cN'\cap \cM_1$ onto $\cM'\cap \cM_2$.
We see that $x\in \Dom(\sigma_{-i})$ for any $x\in \cN'\cap \cM_1$.
\end{proof}

\begin{lemma}\label{lem:modular}
    The following inequality holds: 
    \begin{align*}
    \widehat{\Delta}_\rho = \lambda^{-2}\bE_{\cM'}(\Delta_{\rho}^{1/2} e_1e_2 \Delta_{\rho}^{-1} e_2e_1 \Delta_{\rho}^{1/2}) \geq  e_2.
    \end{align*}
    in $\cM'\cap \cM_2$.
    Moreover $e_2\widehat{\Delta}_\rho= \widehat{\Delta}_\rho e_2=e_2$.
\end{lemma}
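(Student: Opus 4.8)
The plan is to reduce $\widehat{\Delta}_\rho$ to the closed form $\lambda^{-2}\bE_{\cM'}(be_2b^*)$ with $b:=\sigma_{-i/2}(e_1)$ --- which makes the displayed formula in the statement transparent and shows $\widehat{\Delta}_\rho\ge 0$ --- and then to obtain the bound $\widehat{\Delta}_\rho\ge e_2$ from the relations $e_2\widehat{\Delta}_\rho=\widehat{\Delta}_\rho e_2=e_2$ by a $2\times 2$ block decomposition against the projection $e_2$.

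First, some modular preliminaries. Since $e_1\in\Dom(\sigma_{-i})$, the operator-valued map $z\mapsto\sigma_z(e_1)$ is analytic and bounded on the closed strip $\{-1\le \Im z\le 0\}$, so $b:=\sigma_{-i/2}(e_1)=\Delta_\rho^{1/2}e_1\Delta_\rho^{-1/2}$ is bounded, lies in $\cM_1$, and $b^*=\sigma_{i/2}(e_1)=\Delta_\rho^{-1/2}e_1\Delta_\rho^{1/2}$. Because $\Delta_\rho$ is affiliated with $\cM$ while $e_2\in\cM'\cap\cM_2$, the operator $\Delta_\rho$ commutes with $e_2$, so $e_2\Delta_\rho^{-1}e_2=\Delta_\rho^{-1/2}e_2\Delta_\rho^{-1/2}$ and therefore
\[
\Delta_\rho^{1/2}e_1e_2\Delta_\rho^{-1}e_2e_1\Delta_\rho^{1/2}=\bigl(\Delta_\rho^{1/2}e_1\Delta_\rho^{-1/2}\bigr)e_2\bigl(\Delta_\rho^{-1/2}e_1\Delta_\rho^{1/2}\bigr)=be_2b^*=(be_2)(be_2)^*,
\]
a well-defined bounded positive element of $\cM_2$; this already interprets and proves positivity of the right-hand side in the statement.

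Next I would identify $\widehat{\Delta}_\rho$ with $\lambda^{-2}\bE_{\cM'}(be_2b^*)$ in two moves. Unfolding the definition, $\widehat{\Delta}_\rho=\lambda^{-1/2}\mathfrak{F}(\bE_{\cN'}(\sigma_{-i}(e_1)))=\lambda^{-2}\bE_{\cM'}(\bE_{\cN'}(\sigma_{-i}(e_1))e_2e_1)$. A short computation with the Pimsner--Popa basis --- the same unwinding used to check $\mathfrak{F}^{-1}\mathfrak{F}=\mathrm{id}$, together with $e_1me_1=\bE_{\cN}(m)e_1$ and the expansion $\bE_{\cN'}(z)=\sum_j\eta_j^*z\bE_{\cN}(\eta_j)$ for $z\in\cM_1$ --- shows that $\mathfrak{F}(\bE_{\cN'}(z))=\lambda^{-3/2}\bE_{\cM'}(ze_2e_1)$, i.e. the inner conditional expectation is automatic, so $\widehat{\Delta}_\rho=\lambda^{-2}\bE_{\cM'}(\sigma_{-i}(e_1)e_2e_1)$, which is legitimate since $\sigma_{-i}(e_1)e_2e_1$ is bounded. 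Second, $be_2b^*=\Delta_\rho^{1/2}e_1e_2\Delta_\rho^{-1}e_2e_1\Delta_\rho^{1/2}$ and $\sigma_{-i}(e_1)e_2e_1=\Delta_\rho e_1\Delta_\rho^{-1}e_2e_1$ are related by conjugating the bounded operator $\sigma_{-i}(e_1)e_2e_1$ by $\Delta_\rho^{\mp1/2}$; using that $\bE_{\cM'}$ is tracial over $\cM$ (which commutes with $\Delta_\rho$ and with $\cM'\cap\cM_2$) to slide the affiliated factor $\Delta_\rho^{1/2}$ from one side to the other, one gets $\bE_{\cM'}(be_2b^*)=\bE_{\cM'}(\sigma_{-i}(e_1)e_2e_1)$. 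Hence $\widehat{\Delta}_\rho=\lambda^{-2}\bE_{\cM'}(\Delta_\rho^{1/2}e_1e_2\Delta_\rho^{-1}e_2e_1\Delta_\rho^{1/2})=\lambda^{-2}\bE_{\cM'}(be_2b^*)$, so $\widehat{\Delta}_\rho=\widehat{\Delta}_\rho^*\ge 0$.

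With the closed form in hand the rest is formal. Using $e_2\in\cM'\cap\cM_2$ and the bimodularity of $\bE_{\cM'}$,
\[
e_2\widehat{\Delta}_\rho=\lambda^{-2}\bE_{\cM'}\bigl(e_2\sigma_{-i}(e_1)e_2e_1\bigr)=\lambda^{-2}\bE_{\cM'}\bigl(\bE_{\cM}(\sigma_{-i}(e_1))\,e_2e_1\bigr)=\lambda^{-1}\bE_{\cM'}(e_2e_1)=e_2,
\]
where $e_2\sigma_{-i}(e_1)e_2=\bE_{\cM}(\sigma_{-i}(e_1))e_2$ by the Jones relation for $\cM\subset\cM_1$, $\bE_{\cM}(\sigma_{-i}(e_1))=\Delta_\rho\bE_{\cM}(e_1)\Delta_\rho^{-1}=\lambda$, and $\bE_{\cM'}(e_2e_1)=\lambda\sum_j e_2\eta_j^*e_1\eta_j=\lambda e_2$ (pull $e_2$ past $\eta_j^*\in\cM$ and use $\sum_j\eta_j^*e_1\eta_j=1$). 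Since $\widehat{\Delta}_\rho=\widehat{\Delta}_\rho^*$, taking adjoints gives $\widehat{\Delta}_\rho e_2=(e_2\widehat{\Delta}_\rho)^*=e_2$. Finally, from $e_2\widehat{\Delta}_\rho=\widehat{\Delta}_\rho e_2=e_2$ we get $e_2\widehat{\Delta}_\rho(1-e_2)=(1-e_2)\widehat{\Delta}_\rho e_2=0$, hence $\widehat{\Delta}_\rho=e_2+(1-e_2)\widehat{\Delta}_\rho(1-e_2)\ge e_2$, because $\widehat{\Delta}_\rho\ge 0$ forces $(1-e_2)\widehat{\Delta}_\rho(1-e_2)\ge 0$. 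The step I expect to be the main obstacle is the identification $\widehat{\Delta}_\rho=\lambda^{-2}\bE_{\cM'}(be_2b^*)$: it requires careful handling of the unbounded operators $\Delta_\rho^{\pm1/2}$ (boundedness of $\sigma_{\pm i/2}(e_1)$ and of the mixed products, and the validity of $\bE_{\cM'}(xy)=\bE_{\cM'}(yx)$ for $y$ affiliated with $\cM$) plus the bookkeeping that strips the inner $\bE_{\cN'}$; once that is settled, everything else is a short computation.
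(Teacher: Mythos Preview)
Your argument is correct, and for the identification $\widehat{\Delta}_\rho=\lambda^{-2}\bE_{\cM'}(be_2b^*)$ and the absorption relations $e_2\widehat{\Delta}_\rho=\widehat{\Delta}_\rho e_2=e_2$ it is essentially the paper's computation (the paper drops the inner $\bE_{\cN'}$ in one line and then multiplies the symmetric form by $e_2$ on the right, using $e_2e_1\Delta_\rho^{1/2}e_2=\lambda\Delta_\rho^{1/2}e_2$; you multiply the asymmetric form by $e_2$ on the left and use $e_2\sigma_{-i}(e_1)e_2=\bE_{\cM}(\sigma_{-i}(e_1))e_2=\lambda e_2$ --- the same idea).

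The genuine difference is in how the bound $\widehat{\Delta}_\rho\ge e_2$ is obtained. The paper proves it directly by a Kadison--Schwarz step for the conditional expectation $\bE_{\cM'}$: writing $a=\Delta_\rho^{1/2}e_1e_2\Delta_\rho^{-1/2}$ one has $\bE_{\cM'}(a)\bE_{\cM'}(a^*)\le\bE_{\cM'}(aa^*)$, and the left-hand side is $\lambda^2 e_2$ while the right-hand side is $\lambda^2\widehat{\Delta}_\rho$. You instead derive the inequality \emph{a posteriori} from the absorption identities: $e_2\widehat{\Delta}_\rho=\widehat{\Delta}_\rho e_2=e_2$ forces the off-diagonal blocks relative to $e_2$ to vanish, so $\widehat{\Delta}_\rho=e_2+(1-e_2)\widehat{\Delta}_\rho(1-e_2)\ge e_2$ by positivity. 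Your route is a little more elementary (it needs only positivity, not the operator inequality for $\bE_{\cM'}$) and makes explicit that the inequality is actually a consequence of the absorption relations; the paper's route is more self-contained in that it gives the inequality without first knowing $\widehat{\Delta}_\rho e_2=e_2$. The unboundedness caveats you flag at the end (boundedness of $\sigma_{\pm i/2}(e_1)$ and sliding $\Delta_\rho^{1/2}$ through $\bE_{\cM'}$ via its $\cM$-tracial property) are exactly the places where care is required, and the paper handles them the same way you do.
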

\begin{proof}
Note that $\sigma_{-i}(e_1)=\Delta_{\rho}e_1\Delta^{-1}_{\rho}$. 
Since $e_1\in \Dom(\Delta_{\rho})$, it is also in $\Dom(\Delta^{1/2}_{\rho})$. 
 \begin{align*}
    \widehat{\Delta}_\rho
    =& \lambda^{-2}\bE_{\cM'}(\bE_{\cN'} (\Delta_{\rho} e_1 \Delta_{\rho}^{-1} )e_2e_1) \\
    =& \lambda^{-2}\bE_{\cM'}(\Delta_{\rho} e_1 \Delta_{\rho}^{-1} e_2e_1) \\
    =& \lambda^{-2}\bE_{\cM'}(\Delta_{\rho}^{1/2} e_1e_2 \Delta_{\rho}^{-1} e_2e_1 \Delta_{\rho}^{1/2}).
 \end{align*}
 Note that 
 \begin{align*}
     \lambda^2 e_2=\bE_{\cM'}(\Delta_{\rho}^{1/2} e_1e_2 \Delta_{\rho}^{-1/2}) \bE_{\cM'}(\Delta_{\rho}^{-1/2} e_2e_1 \Delta_{\rho}^{1/2})\leq \bE_{\cM'}(\Delta_{\rho}^{1/2} e_1e_2 \Delta_{\rho}^{-1} e_2e_1 \Delta_{\rho}^{1/2}). 
 \end{align*}
 and
 \begin{align*}
   \bE_{\cM'}(\Delta_{\rho}^{1/2} e_1e_2 \Delta_{\rho}^{-1} e_2e_1 \Delta_{\rho}^{1/2})e_2
   =&  \lambda  \bE_{\cM'}(\Delta_{\rho}^{1/2} e_1e_2 \Delta_{\rho}^{-1} e_2 \Delta_{\rho}^{1/2})\\
   =& \lambda  \bE_{\cM'}( e_1e_2)\\
   =& \lambda^2 e_2.
 \end{align*}
 This completes the proof of the lemma.
\end{proof}

\begin{lemma}\label{lem:invertible}
We have that $\widehat{\Delta}_\rho$ is invertible.
\end{lemma}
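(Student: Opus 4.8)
The plan is to exploit the structure established in Lemma~\ref{lem:modular}, where $\widehat{\Delta}_\rho$ was written as a compression of the (generally unbounded) relative modular operator $\Delta_\rho$. The natural candidate for the inverse is the contragredient-type element
\begin{align*}
    \widehat{\Delta}_\rho^{\flat} = \lambda^{-2}\bE_{\cM'}(\Delta_{\rho}^{-1/2} e_1 e_2 \Delta_{\rho} e_2 e_1 \Delta_{\rho}^{-1/2}),
\end{align*}
obtained by replacing $\Delta_\rho$ with $\Delta_\rho^{-1}$ (equivalently, this is $\lambda^{-1/2}\mathfrak{F}(\bE_{\cN'}(\sigma_{i}(e_1)))$; note $\Delta_\rho^{-1}$ is the relative modular operator going the other way, and $e_1\in\Dom(\sigma_{-i})$ forces $e_1\in\Dom(\sigma_{i})$ by the standard KMS duality after conjugating by $J$). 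First I would verify this element is well-defined and bounded by the same argument as for $\widehat{\Delta}_\rho$: one shows $\bE_{\cN'}(\Delta_\rho^{-1}e_1\Delta_\rho)\in\cN'\cap\cM_1$ is bounded. Then the whole problem reduces to checking $\widehat{\Delta}_\rho \, \widehat{\Delta}_\rho^{\flat} = e_2$, since $e_2$ is the identity of $\cM'\cap\cM_2$ — indeed $\widehat{\Delta}_\rho e_2 = e_2$ by Lemma~\ref{lem:modular}, and $\cM'\cap\cM_2$ has unit $e_2$.

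The core computation I would carry out is: using the expression $\widehat{\Delta}_\rho = \lambda^{-2}\bE_{\cM'}(\Delta_\rho e_1 \Delta_\rho^{-1} e_2 e_1)$ together with the conditional-expectation formula $\bE^{\cN'}_{\cM'}(z) = \lambda\sum_j \eta_j^* z \eta_j$ and the Pimsner-Popa relation $\sum_j \eta_j^* e_1 \eta_j = 1$, expand the product $\widehat{\Delta}_\rho \, \widehat{\Delta}_\rho^{\flat}$ and collapse the $e_1$'s and $e_2$'s using $e_1 x e_1 = \bE_{\cN}(x)e_1$, $e_2 y e_2 = \bE_{\cM}(y)e_2$, and the commutation $\Delta_\rho^{\pm 1/2}$ with $\cM_1$ acting appropriately. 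The key algebraic fact is that the "inner" $\Delta_\rho^{-1} \cdots \Delta_\rho$ sandwich telescopes against the "outer" $\Delta_\rho \cdots \Delta_\rho^{-1}$ sandwich, leaving only $e_2 e_1 e_2 = \lambda e_2$ after the dust settles, and the powers of $\lambda$ match. Alternatively — and this may be cleaner — I would transport the problem through the unitary $\mathfrak{F}^{-1}$ to $\cN'\cap\cM_1$ and use Lemma~\ref{lem:fourier}: there $\mathfrak{F}^{-1}(\widehat{\Delta}_\rho)$ acts on vectors $x e_1 y\Omega_1$ essentially by the modular operator, so invertibility becomes the statement that the bounded operator $\bE_{\cN'}(\Delta_\rho e_1\Delta_\rho^{-1})$ on the finite von Neumann algebra $\cN'\cap\cM_1$ has bounded inverse $\bE_{\cN'}(\Delta_\rho^{-1}e_1\Delta_\rho)$.

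There is also a soft route worth mentioning as a fallback: Lemma~\ref{lem:modular} already gives $\widehat{\Delta}_\rho \geq e_2 > 0$ as a positive element of the finite-dimensional-center (more precisely, finite von Neumann) algebra $\cM'\cap\cM_2$ with unit $e_2$. A positive element bounded below by the identity in a von Neumann algebra is automatically invertible \emph{within that algebra}, with inverse bounded by $1$. So if one is content to read "$\geq e_2$" as "$\geq$ the unit of $\cM'\cap\cM_2$," invertibility is essentially immediate; the substance of this lemma is presumably that the inverse has the explicit contragredient form, or that $\widehat{\Delta}_\rho$ is invertible as an element of $\cM_2$ (not just of the corner), which requires the full computation above.

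The main obstacle I anticipate is \textbf{the unboundedness of $\Delta_\rho$}: all the formal manipulations with $\Delta_\rho^{\pm 1/2}$ and $\sigma_{\pm i}$ must be justified on the appropriate domains, and one must be careful that expressions like $e_2\Delta_\rho^{-1}e_2$ genuinely reduce to $\bE_{\cM}(\Delta_\rho^{-1})e_2$ with a bounded (or at least affiliated-and-controllable) right-hand side. The hypothesis $e_1\in\Dom(\sigma_{-i})$ is exactly what makes $\widehat{\Delta}_\rho$ bounded; the symmetric hypothesis for the inverse should follow, but checking that $\bE_{\cN'}(\Delta_\rho^{-1}e_1\Delta_\rho)$ is bounded — rather than merely densely defined — is where I would spend the most care, likely by writing it as $\mathfrak{F}$ of something and using the Plancherel identity plus the already-established boundedness of $\widehat{\Delta}_\rho$ and Lemma~\ref{lem:modular}'s two-sided identity $e_2\widehat{\Delta}_\rho = \widehat{\Delta}_\rho e_2 = e_2$.
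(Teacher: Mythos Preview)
Your proposal contains a fundamental error: $e_2$ is \emph{not} the unit of $\cM'\cap\cM_2$. The identity of $\cM'\cap\cM_2$ is $1$ (the identity of $\cM_2$), and $e_2$ is a proper subprojection of it --- you can see this already in the example $\bC\subset\bC^n$, where $\cM'\cap\cM_2\cong\bC^{n^2}$ has unit $1\otimes 1$ while $e_2=\sum_k E_{k,k}\otimes E_{k,k}$. Consequently your target identity $\widehat{\Delta}_\rho\,\widehat{\Delta}_\rho^{\flat}=e_2$ would, if true, show that $\widehat{\Delta}_\rho$ is a zero-divisor, not that it is invertible. Your ``soft route'' fails for the same reason: the inequality $\widehat{\Delta}_\rho\geq e_2$ from Lemma~\ref{lem:modular} only bounds $\widehat{\Delta}_\rho$ below on the range of $e_2$ and says nothing about $(1-e_2)\widehat{\Delta}_\rho(1-e_2)$; it does not give invertibility in $\cM'\cap\cM_2$. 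The Fourier-transform alternative is also flawed: $\mathfrak{F}$ is a Hilbert-space unitary, not an algebra homomorphism, so invertibility does not transfer between $\cN'\cap\cM_1$ and $\cM'\cap\cM_2$.

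The paper's argument is quite different and avoids constructing any candidate inverse. It takes a projection $p\in\cM'\cap\cM_2$ with $p\widehat{\Delta}_\rho=0$, uses the positive expression $\widehat{\Delta}_\rho=\lambda^{-2}\bE_{\cM'}(\Delta_\rho^{1/2}e_1e_2\Delta_\rho^{-1}e_2e_1\Delta_\rho^{1/2})$ from Lemma~\ref{lem:modular} to deduce $p\Delta_\rho^{1/2}e_1e_2=0$, and then successively applies $e_1$, $\bE_{\cM_1}$, the invertibility of $\Delta_\rho^{1/2}$, and the Pimsner--Popa basis to force $\bE_{\cM_1}(p)=0$, hence $p=0$. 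This is a clean kernel argument well-suited to the finite-dimensional setting of $\cM'\cap\cM_2$; your explicit-inverse strategy could in principle be repaired (aiming for $\widehat{\Delta}_\rho\,\widehat{\Delta}_\rho^{\flat}=1$), but even in the non-hypertrace case it is not clear that your $\widehat{\Delta}_\rho^{\flat}$ is the actual inverse, since the two $\bE_{\cM'}$'s do not obviously collapse.
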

\begin{proof}
Suppose $p$ is a projection in $\cM'\cap \cM_2$ such that $p\widehat{\Delta}_\rho=0$.
Then we have that $p\Delta_\rho^{1/2} e_1e_2\Delta_\rho^{-1} e_2 e_1 \Delta_\rho^{1/2}p=0$.
This implies that $p\Delta_\rho^{1/2} e_1e_2\Delta_\rho^{-1/2}=0$ and $p\Delta_\rho^{1/2} e_1e_2=0$.
Multiplying $e_1$, we obtain that $p\Delta_\rho^{1/2} e_1=0$.
By taking the conditional expectation $\bE_{\cM_1}$, we see that $\bE_{\cM_1}(p)\Delta_\rho^{1/2} e_1=0$.
Note that $\bE_{\cM_1}(p)\in \cM'\cap \cM_1$ and $\Delta_\rho^{1/2}$ is invertible.
We see that $\bE_{\cM_1}(p) e_1=0$.
Applying the Pimsner-Popa basis $\{\eta_j\}_{j=1}^m$ for $\cN \subset \cM$, we see that $\bE_{\cM_1}(p)=0$.
Finally, we see that $p=0$.
Hence $\widehat{\Delta}_\rho$ is invertible.
\end{proof}

\subsection{Equilibrium}

In the section, we shall interpret the equilibrium for bimodule quantum channel in the bimodule setting.

\begin{theorem}\label{thm:equilibrium1}
Suppose that $\Phi$ is a bimodule quantum channel and $\rho$ is a faithful normal state on $\cM$ such that $e_1\in \Dom(\sigma_{-i})$.
Then $\rho$ is an equilibrium state of $\Phi$ if and only if $\bE_{\cM}(e_2 e_1\widehat{\Delta}_\rho \overline{\widehat{\Phi}}e_1e_2)=\lambda^{5/2}$ and 
\begin{align}\label{eq:equi1}
      e_2e_1\overline{\widehat{\Phi}}\Delta_\rho e_1\Delta_\rho^{-1}\Omega_1= e_2e_1\overline{\widehat{\Phi}} \widehat{\Delta}_\rho e_1\Omega_1.
\end{align}
\end{theorem}
\begin{proof}
Suppose that $x\in \cM$.
We have that 
\begin{align*}
    \langle \bE_{\cM}(e_2 e_1 \widehat{\Phi} x e_1 e_2) \Delta_\rho^{1/2}\Omega_1, \Delta_\rho^{1/2} \Omega_1 \rangle =\lambda^{5/2} \langle x\Delta_\rho^{1/2}\Omega_1, \Delta_\rho^{1/2} \Omega_1\rangle.
\end{align*}
Since $\Delta_{\rho}$ is affiliated with $\cM$, $e_2\Delta_\rho^{1/2}\Omega_1=\Delta_\rho^{1/2}\Omega_1$. 
Thus we see that 
\begin{align*}
   \lambda \langle e_2 e_1 \widehat{\Phi} x  e_1  \Delta_\rho^{1/2}\Omega_1, \Delta_\rho^{1/2} \Omega_1 \rangle 
   =\lambda^{5/2} \langle x\Delta_\rho^{1/2}\Omega_1, \Delta_\rho^{1/2} \Omega_1\rangle.
\end{align*}
Taking Fourier transform, we have that 
\begin{align*}
    \langle \mathfrak{F}^{-1} (\overline{\widehat{\Phi}}) x  \Delta_\rho^{1/2}\Omega_1, \Delta_\rho^{1/2} \Omega_1 \rangle 
    =\lambda  \langle x\Delta_\rho^{1/2}\Omega_1, \Delta_\rho^{1/2} \Omega_1\rangle.
\end{align*}
Now we have that 
\begin{align*}
    \int \mathfrak{F}^{-1} (\overline{\widehat{\Phi}}) xe_1 \Delta_\rho d\tau_1' =\lambda \int x\Delta_\rho d\tau_1',
\end{align*}
where $\tau_1'=J_1\tau J_1$ and $\displaystyle \int \cdot d\tau_1$ is the trace-like functional on the Banach space $L^1(\cM_1)$.
By the fact that $e_1\in \Dom(\sigma_{-i})$, we have that 
\begin{align*}
    \int \mathfrak{F}^{-1} (\overline{\widehat{\Phi}}) x \Delta_\rho \sigma_{i}(e_1) d\tau_1'
    =\lambda  \int x\Delta_\rho d\tau_1'.
\end{align*}
Hence for any $\widetilde{x}\in L^1(\cM)$, we have that 
\begin{align*}
    \int \mathfrak{F}^{-1} (\overline{\widehat{\Phi}}) \widetilde{x} \sigma_{i}(e_1) d\tau_1' 
    =\lambda  \int \widetilde{x} d\tau_1'.
\end{align*}
Now we have that 
\begin{align*}
    \int \mathfrak{F}^{-1} (\overline{\widehat{\Phi}}) \widetilde{x} \bE_{\cN'}(\sigma_{i}(e_1) ) d\tau_1' 
    = \lambda \int \widetilde{x} d\tau_1'.
\end{align*}
By taking conditional expectation $\bE_{\cM}$, we have that 
\begin{align*}
    \int   \bE_{\cM}(\bE_{\cN'}(\sigma_{i}(e_1) )\mathfrak{F}^{-1}(\overline{\widehat{\Phi}}) ) \widetilde{x}  d\tau_1' 
    =\lambda \int \widetilde{x} d\tau_1'.
\end{align*}
This implies that 
\begin{align}\label{eq:equilibrium1}
\bE_{\cM}(\bE_{\cN'}(\sigma_{i}(e_1) )\mathfrak{F}^{-1}(\overline{\widehat{\Phi}}) ) =\lambda.
\end{align}
and
\begin{align}\label{eq:equilibrium2}
   \bE_{\cM}(\sigma_{i}(e_1) \mathfrak{F}^{-1}(\overline{\widehat{\Phi}}) ) =\bE_{\cM}( \bE_{\cN'}(\sigma_{i}(e_1) )\mathfrak{F}^{-1}(\overline{\widehat{\Phi}})).
\end{align}
By a direct computation, we have that the Equations \eqref{eq:equilibrium1} and \eqref{eq:equilibrium2} imply that $\Phi$ is equilibrium with respect to $\rho$.

Note that Equation \eqref{eq:equilibrium1} is equivalent to 
\begin{align*}
    e_2e_1\mathfrak{F}^{-1}(\bE_{\cN'}(\sigma_i(e_1))) \overline{\widehat{\Phi}}e_1e_2=\lambda^2 e_2,
\end{align*}
i.e. $e_2e_1\widehat{\Delta}_\rho \overline{\widehat{\Phi}}e_1e_2=\lambda^{3/2} e_2$ and this is equivalent to the first condition.
By taking Fourier transform, we see that Equation \eqref{eq:equilibrium2} is equivalent to the second condition.
This completes the proof of the theorem.
\end{proof}

\begin{remark}
Reformulating Equation \eqref{eq:equi1}, we have that
\begin{align*}
    \cR(\overline{\widehat{\Phi}}e_1 \overline{\widehat{\Phi}}) \Delta_\rho e_1 \Delta_\rho^{-1}\Omega_1 
    =\cR(\overline{\widehat{\Phi}}e_1 \overline{\widehat{\Phi}})\widehat{\Delta}_\rho e_1\Omega_1.
\end{align*}
\end{remark}

\begin{remark}
    In Theorem \ref{thm:equilibrium1}, the first condition $\bE_{\cM}(e_2 e_1\widehat{\Delta}_\rho \overline{\widehat{\Phi}}e_1e_2)=\lambda^{5/2}$ is equivalent to the following statements:
    \begin{enumerate}[(1)]
    \item $\bE_{\cM}(e_2 e_1 \overline{\widehat{\Phi}} \widehat{\Delta}_\rho  e_1e_2)=\lambda^{5/2}$;
        \item $\bE_{\cM}(\bE_{\cN'}(\sigma_{i}(e_1) )\mathfrak{F}^{-1}(\overline{\widehat{\Phi}}) ) =\lambda^{1/2}$;
        \item $\bE_{\cM}(\mathfrak{F}^{-1}(\widehat{\Phi})\bE_{\cN'}(\sigma_{-i}(e_1) ) ) =\lambda^{1/2}$;
        \item $\bE_{\cM}(e_2e_1\widehat{\Delta}_\rho\mathfrak{F}^{-1}(\overline{\widehat{\Phi}}) ) =\lambda $;
        \item $\bE_{\cM_1}(\widehat{\Phi} \overline{\widehat{\Delta}_\rho})=\lambda^{1/2}$.
    \end{enumerate}
    The last statement is obtained from the first condition by $180^\circ$ rotation in the planar algebras.
\end{remark}

\begin{corollary}\label{cor:equilibrium1}
Suppose that $\Phi$ is a bimodule quantum channel and $\rho$ is a hyper-trace on $\cM$ such that $e_1\in \Dom(\sigma_{-i})$.
Then $\Phi$ is equilibrium with respect to $\rho$ if and only if $\bE_{\cM}(e_2 e_1\widehat{\Delta}_\rho \overline{\widehat{\Phi}}e_1e_2)=\lambda^{5/2}$.    
\end{corollary}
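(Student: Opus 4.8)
The plan is to derive Corollary~\ref{cor:equilibrium1} directly from Theorem~\ref{thm:equilibrium1} by showing that, under the hypertrace assumption, the second condition of the theorem is automatically satisfied once the first condition holds. Recall that for a hypertrace $\rho$ on $\cM$ over $\cN$, the modular operator $\Delta_\rho$ is affiliated with $\cN'\cap\cM$; in particular $\Delta_\rho$ commutes with $e_1$ only in the sense that $\Delta_\rho e_1 \Delta_\rho^{-1}$ lies in $\cN'\cap\cM_1$, and the Fourier multiplier $\widehat{\Delta}_\rho = \lambda^{-1/2}\mathfrak{F}(\bE_{\cN'}(\sigma_{-i}(e_1)))$ simplifies because $\bE_{\cN'}(\Delta_\rho e_1 \Delta_\rho^{-1}) = \Delta_\rho e_1 \Delta_\rho^{-1}$ already sits in $\cN'\cap\cM_1$. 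So the key point to verify is that
\begin{align*}
  e_2e_1\overline{\widehat{\Phi}}\,\Delta_\rho e_1\Delta_\rho^{-1}\Omega_1 = e_2e_1\overline{\widehat{\Phi}}\,\widehat{\Delta}_\rho e_1\Omega_1
\end{align*}
holds for free in the hypertrace setting.

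First I would unwind the right-hand side using Lemma~\ref{lem:fourier}: since $\widehat{\Delta}_\rho = \mathfrak{F}(\bE_{\cN'}(\sigma_{-i}(e_1)))\lambda^{-1/2} = \mathfrak{F}(\Delta_\rho e_1\Delta_\rho^{-1})\lambda^{-1/2}$, and $\mathfrak{F}(a)e_1e_2 = \lambda^{1/2}ae_2$ for $a\in\cN'\cap\cM_1$, one gets $\widehat{\Delta}_\rho e_1 e_2 = \Delta_\rho e_1 \Delta_\rho^{-1} e_2$ — and more relevantly, applying the first identity of Lemma~\ref{lem:fourier} with $a = \bE_{\cN'}(\sigma_{-i}(e_1))$, $\widehat{\Delta}_\rho e_1\Omega_1$ and $\Delta_\rho e_1\Delta_\rho^{-1}\Omega_1$ become directly comparable once one observes that $e_2e_1$ acts the same way on both vectors because the discrepancy between $\Delta_\rho e_1\Delta_\rho^{-1}$ and $\bE_{\cN'}(\Delta_\rho e_1\Delta_\rho^{-1})$ vanishes after applying $\bE_{\cN'}$ — but in the hypertrace case there is no discrepancy at all. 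Thus the second condition of Theorem~\ref{thm:equilibrium1} reduces to $e_2e_1\overline{\widehat{\Phi}}\,\Delta_\rho e_1\Delta_\rho^{-1}\Omega_1 = e_2e_1\overline{\widehat{\Phi}}\,\Delta_\rho e_1\Delta_\rho^{-1}\Omega_1$, which is a tautology.

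Alternatively — and this is probably the cleaner route to write up — I would go back to the proof of Theorem~\ref{thm:equilibrium1} and observe where the two conditions split off. The decomposition of the equilibrium condition into Equations~\eqref{eq:equilibrium1} and~\eqref{eq:equilibrium2} comes from replacing $\sigma_i(e_1)$ by $\bE_{\cN'}(\sigma_i(e_1))$ inside the trace-like functional; Equation~\eqref{eq:equilibrium2} is precisely the statement that this replacement is legitimate, i.e. that $\bE_{\cM}\big((\sigma_i(e_1) - \bE_{\cN'}(\sigma_i(e_1)))\mathfrak{F}^{-1}(\overline{\widehat{\Phi}})\big) = 0$. When $\rho$ is a hypertrace, $\sigma_t$ restricts to the identity on $\cN$ (or more precisely $\Delta_\rho$ is affiliated with $\cN'\cap\cM$), so $\sigma_i(e_1) = \Delta_\rho e_1\Delta_\rho^{-1}\in\cN'\cap\cM_1$ already, whence $\bE_{\cN'}(\sigma_i(e_1)) = \sigma_i(e_1)$ and Equation~\eqref{eq:equilibrium2} holds trivially. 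Then only Equation~\eqref{eq:equilibrium1}, equivalently the first condition $\bE_{\cM}(e_2e_1\widehat{\Delta}_\rho\overline{\widehat{\Phi}}e_1e_2) = \lambda^{5/2}$, remains, which is exactly the assertion of the corollary.

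The main obstacle I anticipate is bookkeeping the precise sense in which $\Delta_\rho$ (or rather $\sigma_{\pm i}(e_1)$) lies in $\cN'\cap\cM_1$ for a general hypertrace, since as the remark following Lemma~\ref{lem:invertible} warns, $\Delta_\rho$ need not be bounded; one must phrase everything in terms of $\sigma_{-i}(e_1) = \Delta_\rho e_1\Delta_\rho^{-1}$, which is bounded by hypothesis, and check that the hypertrace property $\rho|_{\cN}$ tracial indeed forces $\sigma_t(\cN)=\cN$ pointwise in the way needed — this is a standard fact (Takesaki's theorem gives $\sigma_t(\cN)=\cN$ when $\bE_\cN$ is $\rho$-preserving, and the hypertrace condition is what delivers this), but it should be cited or spelled out carefully. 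Modulo that, the corollary is an immediate specialization of Theorem~\ref{thm:equilibrium1}.
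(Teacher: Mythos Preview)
Your proposal is correct and follows essentially the same route as the paper: the paper's proof simply notes that for a hypertrace $\rho$ one has $\Delta_\rho$ affiliated with $\cN'\cap\cM$, so $\sigma_{\pm i}(e_1)\in\cN'\cap\cM_1$ and the second condition of Theorem~\ref{thm:equilibrium1} becomes vacuous, leaving only the first. Your care about the unboundedness of $\Delta_\rho$ is in fact more explicit than the paper's own one-line justification.
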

\begin{proof}
If $\rho$ is a hyper-trace, then $\Delta_\rho$ is affiliated with $\cN'\cap \cM$.
By Theorem \ref{thm:equilibrium1}, we see that $\Phi$ is equilibrium with respect to $\rho$ if and only if $\bE_{\cM}(e_2 e_1\widehat{\Delta}_\rho \overline{\widehat{\Phi}}e_1e_2)=\lambda^{5/2}$.    
\end{proof}

\subsection{Bimodule GNS Symmetry}
Suppose $\rho$ is a faithful normal state on $\cM$.
A quantum channel $\Phi$ satisfies the $\rho$-detailed balance condition if 
\begin{align}\label{eq:balance}
    \rho(y^*\Phi(x))=\rho(\Phi(y)^*x), \quad x, y\in \cM.
\end{align}
In this case, we also say that the quantum channel $\Phi$ is GNS symmetry with respect to $\rho$.
It is a well-known result that if $\Phi$ is GNS symmetric with respect to $\rho$, then $\Phi$ commutes with the modular group of $\rho$. 
We give a proof of this fact here for completeness.

\begin{lemma}\label{lem:modularinvariant}
Suppose that $\Phi$ is a bimodule quantum channel which is GNS symmetric with respect to a normal faithful state $\rho$.
Then for any $x\in \cM$, 
\begin{align*}
\Phi(\sigma_t(x)) = \sigma_t( \Phi(x) ), \quad t\in \bR. 
\end{align*}
\end{lemma}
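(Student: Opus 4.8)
The plan is to reduce the statement to the classical principle that a GNS-symmetric channel commutes with the modular flow of its invariant state, and to realize this concretely through the density of $\rho$.

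First I would reformulate the detailed balance condition \eqref{eq:balance} as an identity between $\Phi$ and its $\tau$-adjoint $\Phi^*$. Writing $\Delta_\rho$ for the density of $\rho$ relative to $\tau$ (so that $\rho(a)=\tau(\Delta_\rho a)$ and $\sigma_t(x)=\Delta_\rho^{it}x\Delta_\rho^{-it}$), one has $\rho(y^*\Phi(x))=\tau(\Phi^*(\Delta_\rho y^*)x)$ and $\rho(\Phi(y)^*x)=\tau(\Delta_\rho\Phi(y)^*x)$, so \eqref{eq:balance} amounts to $\Phi^*(\Delta_\rho y^*)=\Delta_\rho\Phi(y)^*$ for all $y$. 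Since $\Phi$ and $\Phi^*$ are $*$-preserving (both being completely positive) and $\Delta_\rho$ is self-adjoint, taking adjoints and relabelling turns this into the pair
\begin{align*}
\Phi^*(\Delta_\rho z)=\Delta_\rho\Phi(z),\qquad \Phi^*(z\Delta_\rho)=\Phi(z)\Delta_\rho,\qquad z\in\cM.
\end{align*}
Putting $z=1$ in the first identity and using $\Phi(1)=1$ also recovers $\rho\circ\Phi=\rho$.

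Next I would read these identities on $L^2(\cM,\tau)$. With $\widehat\Phi,\widehat{\Phi^*}$ denoting the operators $x\Omega\mapsto\Phi(x)\Omega$, $x\Omega\mapsto\Phi^*(x)\Omega$ and $L_{\Delta_\rho},\,R_{\Delta_\rho}=J\Delta_\rho J$ left and right multiplication by $\Delta_\rho$, the pair above says exactly $\widehat{\Phi^*}L_{\Delta_\rho}=L_{\Delta_\rho}\widehat\Phi$ and $\widehat{\Phi^*}R_{\Delta_\rho}=R_{\Delta_\rho}\widehat\Phi$; moreover $\widehat{\Phi^*}=\widehat\Phi^{\,*}$ is the Hilbert-space adjoint. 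Eliminating $\widehat\Phi^{\,*}$ gives $L_{\Delta_\rho}\widehat\Phi L_{\Delta_\rho}^{-1}=R_{\Delta_\rho}\widehat\Phi R_{\Delta_\rho}^{-1}$, and since $L_{\Delta_\rho}$ and $R_{\Delta_\rho}$ commute this forces $\widehat\Phi$ to commute with the positive operator $A:=L_{\Delta_\rho}^{-1}R_{\Delta_\rho}$, hence with $A^{it}=L_{\Delta_\rho^{-it}}R_{\Delta_\rho^{it}}$. As $A^{it}(x\Omega)=\Delta_\rho^{-it}x\Delta_\rho^{it}\Omega=\sigma_{-t}(x)\Omega$, commutation of $\widehat\Phi$ with $A^{it}$ yields $\Phi(\sigma_{-t}(x))\Omega=\sigma_{-t}(\Phi(x))\Omega$, and since $\Omega$ is separating, $\Phi\circ\sigma_t=\sigma_t\circ\Phi$ for all $t\in\bR$.

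The step I expect to be the real obstacle is this last manipulation when $\Delta_\rho$ is unbounded, which is the generic case: then $L_{\Delta_\rho},R_{\Delta_\rho}$ are unbounded, $\widehat\Phi$ need not be $L^2(\cM,\tau)$-bounded, and both "eliminating $\widehat\Phi^{\,*}$'' and "pass from $A$ to $A^{it}$'' require care with domains and cores. The robust way around this is to run the argument on the $\sigma$-weakly dense $*$-algebra $\cM_0$ of $\sigma$-analytic elements: for $z\in\cM_0$ one has $\Delta_\rho z=\sigma_{-i}(z)\Delta_\rho$ with $\sigma_{-i}(z)\in\cM_0$, and combining the two identities gives $\Phi(\sigma_{-i}(z))\Delta_\rho=\Phi^*(\Delta_\rho z)=\Delta_\rho\Phi(z)$, i.e.\ $\Phi\circ\sigma_{-i}=\sigma_{-i}\circ\Phi$ on $\cM_0$; iterating controls $\Delta_\rho^{\,n}\Phi(z)\Delta_\rho^{-n}$ for all $n$, and a Phragmén--Lindel\"of / three-lines argument applied to the entire function $\zeta\mapsto\omega(\Phi(\sigma_\zeta(z)))$ ($z\in\cM_0$, $\omega\in\cM_*$) upgrades the intertwining at imaginary integers to all real times; normality of $\Phi$ and $\sigma_t$ then extends the identity to all of $\cM$. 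Alternatively one may simply invoke the classical theorem that GNS-symmetry of a quantum channel forces commutation with the modular automorphism group of the invariant state; note that the $\cN$-bimodule property of $\Phi$ plays no role here beyond being part of the ambient hypotheses.
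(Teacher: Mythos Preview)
Your proposal is correct, and the rigorous route you settle on—working on $\sigma$-analytic elements, deriving $\Phi\circ\sigma_{-i}=\sigma_{-i}\circ\Phi$ from the detailed balance identity, iterating to $\sigma_{-im}$, and then extending by analytic continuation and density—is exactly the paper's proof. The paper writes the detailed balance condition as $\langle \Delta_\rho^{1/2}\Phi(x)^*\Omega,\Delta_\rho^{1/2}y^*\Omega\rangle=\langle \Delta_\rho^{1/2}x^*\Omega,\Delta_\rho^{1/2}\Phi(y)^*\Omega\rangle$, substitutes $\sigma_{-i}(x)$ for $x$, and manipulates to obtain $\Phi(\sigma_{-i}(x))=\sigma_{-i}(\Phi(x))$; this is your identity $\Phi^*(\Delta_\rho z)=\Delta_\rho\Phi(z)$ combined with $\Phi^*(z\Delta_\rho)=\Phi(z)\Delta_\rho$, just written in inner-product form.

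Your initial $L^2$ presentation via $L_{\Delta_\rho}$, $R_{\Delta_\rho}$ and the commutation with $A=L_{\Delta_\rho}^{-1}R_{\Delta_\rho}$ is a cleaner conceptual packaging than the paper gives, and your discussion of the unboundedness issue (and the need for a three-lines/Phragm\'en--Lindel\"of step to pass from imaginary integers to real $t$) is more explicit than the paper, which simply asserts that $\Phi(\sigma_{-im}(x))=\sigma_{-im}\Phi(x)$ for $m\in\bZ$ implies $\Phi(\sigma_t(x))=\sigma_t\Phi(x)$ for $t\in\bR$. You are also right that the bimodule hypothesis plays no role here.
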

\begin{proof}
Suppose that $x\in \Dom(\sigma_{-i})$.
By Equation \eqref{eq:balance}, we have that for any $\cM$,
\begin{align*}
  \langle S_{\rho, \tau} \Phi (x)^*\Omega, S_{\rho, \tau} y^*\Omega\rangle  
  = \langle S_{\rho, \tau} x^* \Omega, S_{\rho, \tau} \Phi (y)^* \Omega\rangle,
\end{align*}
and then
\begin{align*}
  \langle \Delta_\rho^{1/2} \Phi (x)^*\Omega, \Delta_\rho^{1/2} y^*\Omega\rangle  
  = \langle \Delta_\rho^{1/2} x^* \Omega, \Delta_\rho^{1/2}\Phi (y)^* \Omega\rangle.
\end{align*}
Replacing $x$ by $\sigma_{-i}(x)$, we obtain that
\begin{align*}
    \rho(\Phi(\sigma_{-i}(x))^*y)
=& \langle \Delta_\rho^{1/2} \Phi(\Delta_\rho x\Delta_\rho^{-1})^*\Omega, \Delta_\rho^{1/2} y^*\Omega\rangle \\
= & \langle \Delta_\rho^{-1/2} x^* \Delta_\rho\Omega,  \Delta_\rho^{1/2} \Phi (y)^* \Omega\rangle\\
= &  \langle \Delta_\rho ^{1/2} \Phi(y)  \Omega, \Delta_\rho^{1/2} x \Omega \rangle \\
= &   \langle \Delta_\rho^{1/2} y\Omega,  \Delta_\rho ^{1/2}\Phi(x) \Omega\rangle \\
= &  \langle \Phi (x)^* \Delta_\rho ^{1/2} \Omega, y^* \Delta_\rho^{1/2}  \Omega\rangle\\
=& \rho(y\Phi(x)^*).
\end{align*}
This implies that $\Phi(x)\in \Dom(\sigma_{-i})$ and $\Phi(\sigma_{-i}(x)) = \sigma_{-i} \Phi(x). $
Suppose that $x$ is analytic with respect to $\sigma$.
We have that $\Phi(\sigma_{-im}(x)) = \sigma_{-im} \Phi(x)$ for any $m\in \bZ$, where $\bZ$ is the set of all integers.
Therefore, $\Phi(\sigma_{t}(x)) = \sigma_{t} \Phi(x)$ for any $t\in \bR$.
Note that the set of analytic elements is weakly dense in $\cM$.
We have that $\Phi(\sigma_{t}(x)) = \sigma_{t} \Phi(x)$ for any $t\in \bR$ and $x\in \cM$.
This completes the computation.
\end{proof}


In the following, we shall characterize the detailed balance condition in terms of Fourier multiplier.

\begin{theorem}\label{thm:equivbalance}
Suppose $\Phi$ is a bimodule quantum channel and $\rho$ is a normal faithful state on $\cM$ such that $e_1\in \Dom(\sigma_{-i})$.
Then $\Phi$ satisfies the $\rho$-detailed balance condition if and only if $\overline{\widehat{\Phi}}= \widehat{\Phi} \overline{\widehat{\Delta}_\rho}$ and 
\begin{align}\label{eq:condition1}
    \cR(\widehat{\Phi}) \Delta_\rho ^{-1}e_1 \Delta_\rho \Omega_1 = \cR(\widehat{\Phi}) \overline{\widehat{ \Delta}_\rho}   e_1 \Omega_1.
\end{align}
\end{theorem}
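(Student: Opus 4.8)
The plan is to follow the scheme of the proof of Theorem~\ref{thm:equilibrium1}, adapted from a linear to a bilinear identity. Since $\tau_1$ is a trace and $\Delta_\rho$ is the density of $\rho\circ\bE_\cM$ relative to $\tau_1$, one has $\rho(a)=\langle a\,\Delta_\rho^{1/2}\Omega_1,\,\Delta_\rho^{1/2}\Omega_1\rangle$ for every $a\in\cM$, so the $\rho$-detailed balance condition $\rho(y^*\Phi(x))=\rho(\Phi(y)^*x)$ is equivalent to the Hermitian symmetry
\[
\langle \Phi(x)\Delta_\rho^{1/2}\Omega_1,\; y\,\Delta_\rho^{1/2}\Omega_1\rangle \;=\; \overline{\langle \Phi(y)\Delta_\rho^{1/2}\Omega_1,\; x\,\Delta_\rho^{1/2}\Omega_1\rangle},\qquad x,y\in\cM .
\]
Substituting $\Phi(x)e_2=\lambda^{-3/2}e_2e_1\widehat{\Phi}\,x\,e_1e_2$ (from Equation~\eqref{eq:multiplierphi}) and using $e_2\,\Delta_\rho^{1/2}\Omega_1=\Delta_\rho^{1/2}\Omega_1$ and $e_2\,y\,\Delta_\rho^{1/2}\Omega_1=y\,\Delta_\rho^{1/2}\Omega_1$ (both because $\Delta_\rho^{1/2}$ is affiliated with $\cM$), the left-hand side collapses to $\lambda^{-3/2}\langle e_1\widehat{\Phi}\,x\,e_1\Delta_\rho^{1/2}\Omega_1,\; y\,\Delta_\rho^{1/2}\Omega_1\rangle$, so the whole condition is controlled by the single vector-valued expression $e_1\widehat{\Phi}\,x\,e_1\Delta_\rho^{1/2}\Omega_1$.

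Next I would move the modular operator and pass to $\cN'\cap\cM_1$. Since $\Delta_\rho$ is affiliated with $\cM$ it commutes with $x\in\cM$ and with $\widehat{\Phi}\in\cM'\cap\cM_2$, and since $e_1\in\Dom(\sigma_{-i})$ and $e_1=e_1^*$ one has $e_1\in\Dom(\sigma_{\pm i/2})$, so the identity $e_1\Delta_\rho^{1/2}=\Delta_\rho^{1/2}\sigma_{i/2}(e_1)$ lets me collect all factors of $\Delta_\rho^{1/2}$ against $\Omega_1$ (two half-powers combining, via the trace property, into a single $\Delta_\rho$, which lands us in $L^1$) and leave behind factors of $\sigma_{\pm i/2}(e_1)$; averaging these to $\bE_{\cN'}(\sigma_{\pm i}(e_1))$ introduces $\widehat{\Delta}_\rho$ and its contragredient $\overline{\widehat{\Delta}_\rho}$ through the definition $\widehat{\Delta}_\rho=\lambda^{-1/2}\mathfrak{F}(\bE_{\cN'}(\sigma_{-i}(e_1)))$. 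Applying the Fourier transform via Lemma~\ref{lem:fourier} (the relations $\mathfrak{F}(a)xe_1y\Omega_1=\lambda^{1/2}xay\Omega_1$ and $\mathfrak{F}(a)e_1e_2=\lambda^{1/2}ae_2$) rewrites everything in terms of $\mathfrak{F}^{-1}(\overline{\widehat{\Phi}})\in\cN'\cap\cM_1$ and, exactly as in Theorem~\ref{thm:equilibrium1}, as an equality of the trace-like functionals $\int\,\cdot\,d\tau_1'$ on $L^1(\cM_1)$ tested against $x,y$. Letting $x,y$ range over $L^1(\cM)$ and applying $\bE_\cM$ extracts the operator identity $\overline{\widehat{\Phi}}=\widehat{\Phi}\,\overline{\widehat{\Delta}_\rho}$, where I also use $\overline{\widehat{\Phi}}=\widehat{\Phi^*}$ for completely positive $\Phi$. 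The part of the $L^1(\cM_1)$-identity that is lost under the restriction to $L^1(\cM)$ and under $\bE_\cM$ is precisely the requirement that $\widehat{\Phi}$ annihilate the vector $\Delta_\rho^{-1}e_1\Delta_\rho\Omega_1-\overline{\widehat{\Delta}_\rho}\,e_1\Omega_1$; as $\widehat{\Phi}\ge 0$, this is equivalent to $\cR(\widehat{\Phi})\big(\Delta_\rho^{-1}e_1\Delta_\rho\Omega_1-\overline{\widehat{\Delta}_\rho}\,e_1\Omega_1\big)=0$, which is Equation~\eqref{eq:condition1}.

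For the converse I would run this chain backwards: assuming $\overline{\widehat{\Phi}}=\widehat{\Phi}\,\overline{\widehat{\Delta}_\rho}$ and \eqref{eq:condition1}, reassemble the sesquilinear form $\langle\Phi(x)\Delta_\rho^{1/2}\Omega_1,\,y\,\Delta_\rho^{1/2}\Omega_1\rangle$, using $e_2\widehat{\Delta}_\rho=\widehat{\Delta}_\rho e_2=e_2$ (Lemma~\ref{lem:modular}) and the invertibility of $\widehat{\Delta}_\rho$ (Lemma~\ref{lem:invertible}) to insert and cancel $\widehat{\Delta}_\rho$'s legitimately, and verify the Hermitian symmetry; by density of $\sigma$-analytic elements and normality this suffices. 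The step I expect to be the main obstacle is precisely this bookkeeping: keeping the two directions of $\mathfrak{F}$ and the several modular conjugations (the bars on $\widehat{\Phi}$ and $\widehat{\Delta}_\rho$) mutually consistent, and — more substantively — showing that the bilinear content of detailed balance splits without loss into the single operator identity $\overline{\widehat{\Phi}}=\widehat{\Phi}\,\overline{\widehat{\Delta}_\rho}$ plus the single support-level identity \eqref{eq:condition1}; this separation has no counterpart in the purely linear equilibrium computation and is the point requiring the most care. Unboundedness of $\Delta_\rho$ is harmless under the standing hypothesis $e_1\in\Dom(\sigma_{-i})$, which puts all of $\cN'\cap\cM_1$ in $\Dom(\sigma_{-i})$ and makes $\widehat{\Delta}_\rho$ bounded and invertible.
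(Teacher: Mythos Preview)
Your overall scheme matches the paper's: rewrite the detailed balance condition as a sesquilinear identity in $L^2(\cM_1)$, substitute the Fourier multiplier via Equation~\eqref{eq:multiplierphi}, pass to $\cN'\cap\cM_1$ using Lemma~\ref{lem:fourier}, introduce $\widehat{\Delta}_\rho$ through $\bE_{\cN'}(\sigma_{-i}(e_1))$, and split the result into an operator equation plus a support condition. The paper accomplishes the split by comparing the raw identity (its Equation~\eqref{eq:bal1}) with its $\bE_{\cN'}$-averaged version (Equation~\eqref{eq:bal2}); the difference between them is precisely Equation~\eqref{eq:condition1}, and stripping $e_1e_2$ via the Pimsner--Popa basis yields $\overline{\widehat{\Phi}}=\widehat{\Phi}\,\overline{\widehat{\Delta}_\rho}$.

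There is, however, a genuine error in your second paragraph: you assert that since $\Delta_\rho$ is affiliated with $\cM$, it commutes with $x\in\cM$. This is false---affiliation with $\cM$ means $\Delta_\rho$ commutes with $\cM'$ (so your claim about $\widehat{\Phi}\in\cM'\cap\cM_2$ is correct), but certainly not with $\cM$ itself unless $\rho$ is tracial. Your plan to ``collect all factors of $\Delta_\rho^{1/2}$ against $\Omega_1$'' by commuting them through $x$ therefore breaks down. The paper never attempts this: after reaching $\langle \mathfrak{F}^{-1}(\widehat{\Phi})^* x\, e_1\, \Delta_\rho^{1/2}\Omega_1,\; y\,\Delta_\rho^{1/2}\Omega_1\rangle = \langle \Delta_\rho^{1/2}\Omega_1,\; x^* e_1 y\,\mathfrak{F}^{-1}(\widehat{\Phi})\,\Delta_\rho^{1/2}\Omega_1\rangle$, it moves the $e_1$ on the left across the inner product using the trace property of $\tau_1$, converting it into $\sigma_{-i}(e_1)$ adjacent to $y$, while leaving $x\,\Delta_\rho^{1/2}\Omega_1$ untouched. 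The modular shift is applied to $e_1$ via cyclic permutation in the trace, not via commutation through $x$. Once you replace your commutation step with this cyclic shift (and hence work with $\sigma_{-i}(e_1)$ directly rather than two factors of $\sigma_{\pm i/2}(e_1)$), the rest of your outline goes through essentially as in the paper.
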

\begin{proof}
The $\rho$-detailed balance condition \eqref{eq:balance} implies that 
\begin{align*}
    \langle S_{\rho, \tau} \Phi(x)^*\Omega, S_{\rho, \tau} y^*\Omega\rangle  
  = \langle S_{\rho, \tau} x^* \Omega, S_{\rho, \tau} \Phi(y)^* \Omega\rangle.  
\end{align*}
Reformulating it, we obtain that 
\begin{align*}
  \langle  \Phi (x) \Delta_\rho^{1/2} \Omega, y \Delta_\rho^{1/2}\Omega\rangle  
  = \langle  x \Delta_\rho^{1/2} \Omega, \Phi(y) \Delta_\rho^{1/2}\Omega\rangle.  
\end{align*}
In terms of the Fourier multiplier of $\Phi$, we see that 
\begin{align*}
  \langle  \bE_{\cM}(e_2e_1\widehat{\Phi} x e_1e_2) \Delta_\rho^{1/2} \Omega, y \Delta_\rho^{1/2}\Omega\rangle  
  = \langle  x \Delta_\rho^{1/2} \Omega, \bE_{\cM}(e_2e_1\widehat{\Phi} y e_1e_2) \Delta_\rho^{1/2} \Omega\rangle.  \end{align*}
Now by removing the conditional expectation $\bE_{\cM}$, we have that 
\begin{align*}
\langle  e_2e_1\widehat{\Phi} x e_1e_2 \Delta_\rho^{1/2} \Omega_1, y \Delta_\rho^{1/2} \Omega_1\rangle  
  = \langle  x \Delta_\rho^{1/2} \Omega_1, e_2e_1\widehat{\Phi} y e_1e_2 \Delta_\rho^{1/2} \Omega_1\rangle.      
\end{align*}
Note that $e_2\Omega_1=\Omega_1$.
We have that 
\begin{align*}
\langle  e_1\widehat{\Phi} x e_1 \Delta_\rho^{1/2} \Omega_1, y \Delta_\rho^{1/2}\Omega_1\rangle  
  = \langle   \Delta_\rho^{1/2} \Omega_1, x^* e_1 y \widehat{\Phi} e_1 \Delta_\rho^{1/2} \Omega_1\rangle.      
\end{align*}
Rewriting it in $\cM_1$, we obtain that 
\begin{align*}
\langle \mathfrak{F}^{-1}(\widehat{\Phi})^* x e_1 \Delta_\rho^{1/2} \Omega_1, y \Delta_\rho^{1/2}\Omega_1\rangle  
  = \langle   \Delta_\rho^{1/2} \Omega_1, x^* e_1 y \mathfrak{F}^{-1}(\widehat{\Phi}) \Delta_\rho^{1/2} \Omega_1\rangle.      
\end{align*}
By shifting $e_1$, we have that 
\begin{align*}
\langle \mathfrak{F}^{-1}(\widehat{\Phi})^* x  \Delta_\rho^{1/2} \Omega_1, y \sigma_{-i}(e_1)\Delta_\rho^{1/2}\Omega_1\rangle   
  = \langle   \Delta_\rho^{1/2} \Omega_1, x^* e_1 y \mathfrak{F}^{-1}(\widehat{\Phi}) \Delta_\rho^{1/2} \Omega_1\rangle.      
\end{align*}
By the fact that $\widehat{\Phi} e_1 e_2 =\lambda^{1/2} \mathfrak{F}^{-1}(\widehat{\Phi})e_2$, i.e. Lemma \ref{lem:fourier}, we see that 
\begin{align}\label{eq:bal1}
\langle  \overline{\widehat{\Phi}} e_1 x \Delta_\rho^{1/2} \Omega_1, y\sigma_{-i}(e_1) \Delta_\rho^{1/2} \Omega_1\rangle  
  = \langle  \Delta_\rho^{1/2} \Omega_1, x^* e_1 y \widehat{\Phi}e_1  \Delta_\rho^{1/2} \Omega_1\rangle.      
\end{align}
By taking the conditional expectation $\bE_{\cN'}$, we have that 
\begin{align}\label{eq:bal2}
\langle  \overline{\widehat{\Phi}} y^* e_1 x \Delta_\rho^{1/2} \Omega_1,  \bE_{\cN'}(\sigma_{-i}(e_1)) \Delta_\rho^{1/2} \Omega_1\rangle  
  = \langle  \Delta_\rho^{1/2} \Omega_1, x^* e_1 y \widehat{\Phi}e_1  \Delta_\rho^{1/2} \Omega_1\rangle.      
\end{align}
Combining Equations \eqref{eq:bal1} and \eqref{eq:bal2}, we see that 
\begin{align*}
   \overline{ \widehat{\Phi}} \sigma_{-i}( e_1 )\Delta_\rho^{1/2} \Omega_1
    =\overline{ \widehat{\Phi}}  \bE_{\cN'}(\sigma_{-i}( e_1 )) \Delta_\rho^{1/2} \Omega_1
    = \overline{\widehat{\Phi}}\widehat{\Delta}_\rho e_1 \Delta_\rho^{1/2} \Omega_1
    =\widehat{\Phi}e_1\Omega_1.
\end{align*}
This implies that 
\begin{align*}
\overline{ \widehat{\Phi}} \sigma_{-i}( e_1 )e_2
= \overline{ \widehat{\Phi}}  \bE_{\cN'}(\sigma_{-i}( e_1 ))e_2 =\overline{\widehat{\Phi}}\widehat{\Delta}_\rho e_1e_2=\widehat{\Phi}e_1e_2.
\end{align*}
This indicates that the condition \eqref{eq:condition1} holds.
Note that $\overline{\widehat{\Phi}}\widehat{\Delta}_\rho e_1 e_2=   \widehat{\Phi} e_1e_2$ implies that $\overline{\widehat{\Phi}}\widehat{\Delta}_\rho e_1=   \widehat{\Phi} e_1$
Applying the Pimsner-Popa basis, we see that $\overline{\widehat{\Phi}}\widehat{\Delta}_\rho=   \widehat{\Phi}$.
By taking contragredient, we obtain that 
\begin{align*}
    \overline{\widehat{\Phi}} =  \widehat{\Phi}\overline{\widehat{\Delta}_\rho}.
\end{align*}

Suppose $\overline{\widehat{\Phi}} = \widehat{\Phi} \overline{\widehat{\Delta}_\rho}$.
By the previous computation, we see that 
\begin{align*}
\langle  \overline{\widehat{\Phi}} y^* e_1 x \Delta_\rho^{1/2} \Omega_1,  \bE_{\cN'}(\sigma_{-i}(e_1)) \Delta_\rho^{1/2} \Omega_1\rangle  
  = \langle  \Delta_\rho^{1/2} \Omega_1, x^* e_1 y \widehat{\Phi}e_1  \Delta_\rho^{1/2} \Omega_1\rangle.      
\end{align*}
By the fact that  Equation \eqref{eq:condition1} holds, we see that $\Phi$ satisfies the $\rho$-detailed balance condition.
\end{proof}

\begin{remark}
    Suppose that $\Phi$ is GNS symmetry with respect to a normal faithful state $\rho$ with $e_1\in \Dom(\sigma_{-i})$.
    If $\widehat{\Phi}$ is invertible, by Theorem \ref{thm:equivbalance}, we have that $\sigma_{i}(e_1)=\mathfrak{F}^{-1}(\widehat{\Delta}_\rho)^*$.
\end{remark}

\begin{proposition}
Suppose $\Phi$ is a bimodule quantum channel and $\rho$ is a hyper-trace on $\cM$ over $\cN$ such that $e_1\in \Dom(\sigma_{-i})$.
Then $\Phi$ satisfies $\rho$-detailed balance condition if and only if $\overline{\widehat{\Phi}}= \widehat{\Phi} \overline{\widehat{\Delta}_\rho}$.
\end{proposition}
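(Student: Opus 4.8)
The plan is to deduce the Proposition from Theorem~\ref{thm:equivbalance}, in the same spirit in which Corollary~\ref{cor:equilibrium1} was deduced from Theorem~\ref{thm:equilibrium1}. Theorem~\ref{thm:equivbalance} characterizes the $\rho$-detailed balance condition by the conjunction of $\overline{\widehat{\Phi}}=\widehat{\Phi}\,\overline{\widehat{\Delta}_\rho}$ and the auxiliary identity \eqref{eq:condition1}, namely
\[
\cR(\widehat{\Phi})\,\Delta_\rho^{-1}e_1\Delta_\rho\Omega_1=\cR(\widehat{\Phi})\,\overline{\widehat{\Delta}_\rho}\,e_1\Omega_1 .
\]
So it suffices to show that, when $\rho$ is a hyper-trace, the identity \eqref{eq:condition1} holds automatically; the asserted equivalence is then immediate.

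Since $\rho$ is a hyper-trace, $\rho|_{\cN}$ is tracial, so $\Delta_\rho^{it}\in\cN'$ for all $t$ and $\Delta_\rho$ is affiliated with $\cN'\cap\cM$. Together with the standing hypothesis $e_1\in\Dom(\sigma_{-i})$ this forces $\sigma_{-i}(e_1)=\Delta_\rho e_1\Delta_\rho^{-1}$ and $\sigma_i(e_1)=\Delta_\rho^{-1}e_1\Delta_\rho$ to be bounded and to lie in $\cN'\cap\cM_1$; in particular $\bE_{\cN'}(\sigma_{-i}(e_1))=\sigma_{-i}(e_1)$, whence $\widehat{\Delta}_\rho=\lambda^{-1/2}\mathfrak{F}(\sigma_{-i}(e_1))$. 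Passing to contragredients and using that $\mathfrak{F}$ intertwines the modular conjugations on $\cN'\cap\cM_1$ and $\cM'\cap\cM_2$ (so $\overline{\mathfrak{F}(a)}=\mathfrak{F}(\overline{a})$), one gets $\overline{\widehat{\Delta}_\rho}=\lambda^{-1/2}\mathfrak{F}(\overline{\sigma_{-i}(e_1)})$, and I claim $\overline{\sigma_{-i}(e_1)}=\sigma_i(e_1)$. Granting this, Lemma~\ref{lem:fourier} evaluated at $x=y=1$ gives
\[
\overline{\widehat{\Delta}_\rho}\,e_1\Omega_1=\lambda^{-1/2}\mathfrak{F}(\sigma_i(e_1))\,e_1\Omega_1=\sigma_i(e_1)\Omega_1=\Delta_\rho^{-1}e_1\Delta_\rho\Omega_1,
\]
so both sides of \eqref{eq:condition1} equal $\cR(\widehat{\Phi})\,\Delta_\rho^{-1}e_1\Delta_\rho\Omega_1$, i.e. \eqref{eq:condition1} is a tautology. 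Combined with Theorem~\ref{thm:equivbalance}, this yields that $\Phi$ satisfies the $\rho$-detailed balance condition if and only if $\overline{\widehat{\Phi}}=\widehat{\Phi}\,\overline{\widehat{\Delta}_\rho}$.

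The one point that genuinely needs care is the identity $\overline{\sigma_{-i}(e_1)}=\sigma_i(e_1)$, and this is exactly where the hyper-trace hypothesis enters decisively. Writing the density of $\rho$ as $h$, affiliated with $\cN'\cap\cM$, the modular conjugation carries the left multiplication by $h^{s}$ to the right multiplication by $h^{s}$, so that $\overline{\sigma_{-i}(e_1)}$ is realized as $R_{h^{-1}}e_1R_h$ while $\sigma_i(e_1)=L_{h^{-1}}e_1L_h$; these coincide because $h\in\cN'$ together with traciality of $\tau$ give $\bE_{\cN}(hy)=\bE_{\cN}(yh)$ for $y\in\cM$, and let $h^{\pm1}\in\cN'$ commute with $\bE_{\cN}(\,\cdot\,)\in\cN$. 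To avoid manipulating $\Delta_\rho$ itself when it is unbounded, one phrases this via analytic elements, working throughout with the bounded operators $\sigma_{\pm i}(e_1)\in\cN'\cap\cM_1$; everything else is the short computation above and the appeal to Theorem~\ref{thm:equivbalance}.
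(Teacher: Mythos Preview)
Your proof is correct and follows essentially the same approach as the paper: both rest on the observation that for a hyper-trace $\rho$ one has $\Delta_\rho$ affiliated with $\cN'\cap\cM$, hence $\sigma_{-i}(e_1)=\bE_{\cN'}(\sigma_{-i}(e_1))\in\cN'\cap\cM_1$, which renders the auxiliary condition \eqref{eq:condition1} in Theorem~\ref{thm:equivbalance} automatic. The paper states only this key identity and then appeals to (the proof of) Theorem~\ref{thm:equivbalance}, whereas you spell out the verification of \eqref{eq:condition1} explicitly via $\overline{\sigma_{-i}(e_1)}=\sigma_i(e_1)$ and Lemma~\ref{lem:fourier}; this is a bit more work but has the virtue of using the theorem as a black box rather than reopening its proof.
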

\begin{proof}
    Note that $\rho$ is hyper-trace.
    We have that $\Delta_\rho e_1 \Delta_\rho^{-1}=\bE_{\cN'}(\Delta_\rho e_1 \Delta_\rho^{-1})$.
    By Theorem \ref{thm:equivbalance}, we see that the proposition holds.
\end{proof}

\begin{remark}
Suppose $\Phi$ is a bimodule quantum channel and $\rho$ is a hyper-trace on $\cM$ with $e_1\in \Dom(\sigma_{-i})$.
Then the $\rho$-detailed balance condition $\overline{\widehat{\Phi}}= \widehat{\Phi}\overline{\widehat{\Delta}_\rho}$ is depicted as
\begin{align*}
      \vcenter{\hbox{\begin{tikzpicture}[scale=1.2]
        \draw [blue] (0, -0.5)--(0, 0.5) (0.5, -0.5)--(0.5, 0.5);        
        \begin{scope}[shift={(0.25, 0)}]
         \draw [fill=white] (-0.45, -0.2) rectangle (0.45, 0.2);
        \node at (0, 0) {\tiny $\overline{\widehat{\Phi}}$};   
        \end{scope}
    \end{tikzpicture}}}=
    \vcenter{\hbox{\begin{tikzpicture}[scale=1.2]
        \draw [blue] (0, -1)--(0, 0.5) (0.5, -1)--(0.5, 0.5);
        \draw [fill=white] (-0.2, -0.2) rectangle (0.2, 0.2);
        \node at (0, 0) {\tiny $\overline{\Delta_\rho^{-1}}$};
        \begin{scope}[shift={(0.5, 0)}]
         \draw [fill=white] (-0.2, -0.2) rectangle (0.2, 0.2);
        \node at (0, 0) {\tiny $\Delta_\rho$};   
        \end{scope}
        \begin{scope}[shift={(0.25, -0.5)}]
         \draw [fill=white] (-0.45, -0.2) rectangle (0.45, 0.2);
        \node at (0, 0) {\tiny $\widehat{\Phi}$};   
        \end{scope}
    \end{tikzpicture}}}.
\end{align*}
\end{remark}



\begin{proposition}\label{prop:equistate}
Suppose $\Phi$ is a bimodule quantum channel and $\rho$ is a normal faithful state on $\cM$ with $e_1\in \Dom(\sigma_{-i})$.
Then $\Phi$ satisfies the $\rho$-detailed balance condition if and only if $\overline{\widehat{\Phi}}= \widehat{\Phi}\overline{\widehat{\Delta}_\rho}$ and
\begin{align}\label{eq:condition2}
    \Delta_\rho ^{-1}\mathfrak{F}^{-1}(\widehat{\Phi}) \Delta_\rho \Omega_1 
    = \overline{\widehat{ \Delta}_\rho}   \mathfrak{F}^{-1}(\widehat{\Phi}) \Omega_1.
\end{align}    
\end{proposition}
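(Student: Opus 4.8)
The plan is to deduce this directly from Theorem~\ref{thm:equivbalance}. The first condition $\overline{\widehat{\Phi}}=\widehat{\Phi}\,\overline{\widehat{\Delta}_\rho}$ appears verbatim in both statements, so it suffices to show that, once this condition is assumed, the auxiliary condition \eqref{eq:condition1} of Theorem~\ref{thm:equivbalance} is equivalent to \eqref{eq:condition2}.

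First I would strip the projection $\cR(\widehat{\Phi})$ from \eqref{eq:condition1}. Since $\Phi$ is completely positive, $\widehat{\Phi}\geq 0$, so its range projection $\cR(\widehat{\Phi})$ coincides with the support projection of $\widehat{\Phi}$ and in particular has the same kernel as $\widehat{\Phi}$; hence for vectors $\xi,\eta\in L^2(\cM_1)$ one has $\cR(\widehat{\Phi})\xi=\cR(\widehat{\Phi})\eta$ iff $\widehat{\Phi}\xi=\widehat{\Phi}\eta$. Therefore \eqref{eq:condition1} is equivalent to
\[
\widehat{\Phi}\,\Delta_\rho^{-1}e_1\Delta_\rho\Omega_1=\widehat{\Phi}\,\overline{\widehat{\Delta}_\rho}\,e_1\Omega_1 .
\]
Next, because $\widehat{\Phi}$ and $\widehat{\Delta}_\rho$ (hence $\overline{\widehat{\Delta}_\rho}$) are positive, taking adjoints in $\overline{\widehat{\Phi}}=\widehat{\Phi}\,\overline{\widehat{\Delta}_\rho}$ gives $\overline{\widehat{\Phi}}=\overline{\widehat{\Delta}_\rho}\,\widehat{\Phi}$, so $\widehat{\Phi}$ and $\overline{\widehat{\Delta}_\rho}$ commute in $\cM'\cap\cM_2$. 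Using this together with Lemma~\ref{lem:fourier} in the form $\widehat{\Phi}e_1\Omega_1=\lambda^{1/2}\mathfrak{F}^{-1}(\widehat{\Phi})\Omega_1$, the right-hand side becomes $\widehat{\Phi}\,\overline{\widehat{\Delta}_\rho}\,e_1\Omega_1=\overline{\widehat{\Delta}_\rho}\,\widehat{\Phi}e_1\Omega_1=\lambda^{1/2}\,\overline{\widehat{\Delta}_\rho}\,\mathfrak{F}^{-1}(\widehat{\Phi})\Omega_1$. For the left-hand side, since $\widehat{\Phi}\in\cM'\cap\cM_2$ commutes with the operators $\Delta_\rho^{\pm1}$ affiliated with $\cM$, an application of Lemma~\ref{lem:fourier} (extended from $x,y\in\cM$ to $x=\Delta_\rho^{-1}$, $y=\Delta_\rho$, using that $e_1\in\Dom(\sigma_{-i})$ makes $\Delta_\rho^{-1}e_1\Delta_\rho$ bounded and that $\mathfrak{F}^{-1}(\widehat{\Phi})\in\cN'\cap\cM_1\subseteq\Dom(\sigma_{-i})$) yields $\widehat{\Phi}\,\Delta_\rho^{-1}e_1\Delta_\rho\Omega_1=\lambda^{1/2}\,\Delta_\rho^{-1}\mathfrak{F}^{-1}(\widehat{\Phi})\Delta_\rho\Omega_1$. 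Cancelling $\lambda^{1/2}$ identifies the displayed equation with \eqref{eq:condition2}, and Theorem~\ref{thm:equivbalance} finishes the proof.

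The main obstacle I anticipate is precisely the last point: the relative modular operator $\Delta_\rho$ is unbounded, so extending Lemma~\ref{lem:fourier} (stated for bounded $x,y\in\cM$) to $x=\Delta_\rho^{-1}$, $y=\Delta_\rho$ and verifying that every intermediate vector lies in the appropriate domain needs a careful approximation/analyticity argument; here the hypothesis $e_1\in\Dom(\sigma_{-i})$, and the earlier lemma guaranteeing $\cN'\cap\cM_1\subseteq\Dom(\sigma_{-i})$, are exactly what legitimize the manipulation. Everything else is routine bookkeeping with the Fourier calculus already developed in the excerpt.
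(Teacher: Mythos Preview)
Your proposal is correct and follows essentially the same route as the paper's proof: reduce to Theorem~\ref{thm:equivbalance}, replace $\cR(\widehat{\Phi})$ by $\widehat{\Phi}$ using positivity, and then pass between \eqref{eq:condition1} and \eqref{eq:condition2} via Lemma~\ref{lem:fourier}. The paper's proof is extremely terse (it simply writes $\widehat{\Phi}\,\Delta_\rho^{-1}e_1\Delta_\rho\Omega_1=\widehat{\Phi}\,\overline{\widehat{\Delta}_\rho}\,e_1\Omega_1$ and says ``applying Lemma~\ref{lem:fourier}''), so your explicit use of the commutation $\widehat{\Phi}\,\overline{\widehat{\Delta}_\rho}=\overline{\widehat{\Delta}_\rho}\,\widehat{\Phi}$ and your acknowledgment of the unbounded-operator domain issue are welcome elaborations rather than departures from the intended argument.
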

\begin{proof}
Suppose $\Phi$ satisfies the $\rho$-detailed balance condition.
Then 
 \begin{align*}
 \widehat{\Phi} \Delta_\rho^{-1} e_1 \Delta_\rho \Omega_1
    = \widehat{\Phi} \overline{\widehat{ \Delta}_\rho} e_1 \Omega_1.
 \end{align*}
 Applying Lemma \ref{lem:fourier}, we see that Equation \eqref{eq:condition2} holds.

Suppose that Equation \eqref{eq:condition2} holds.
 We see that $\Phi$ satisfies the $\rho$-detailed balance condition.

\end{proof}

\begin{remark}
By Proposition \ref{prop:equistate}, we see that $\sigma_{i}(\mathfrak{F}^{-1}(\widehat{\Phi}))\in \cN'\cap \cM_1$.
Note that we do not know if $\sigma_{-i}(e_1)\in \cN'\cap \cM_1$.
\end{remark}

\begin{corollary}\label{cor:modular2}
Suppose $\Phi$ is a bimodule quantum channel satisfying the $\rho$-detailed balance condition, where $\rho$ is a normal faithful state on $\cM$ with $e_1\in \Dom(\sigma_{-i})$. 
Then 
\begin{align}
    \widehat{\Phi}\widehat{\Delta}_\rho =\widehat{\Delta}_\rho  \widehat{\Phi}, \quad \widehat{\Phi} \overline{\widehat{\Delta}_\rho} =\overline{\widehat{\Delta}_\rho } \widehat{\Phi}.
\end{align}
\end{corollary}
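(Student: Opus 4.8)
The plan is to reduce everything to the single algebraic identity $\overline{\widehat{\Phi}} = \widehat{\Phi}\,\overline{\widehat{\Delta}_\rho}$ in $\cM'\cap\cM_2$, which Theorem \ref{thm:equivbalance} hands us from the $\rho$-detailed balance condition, and then to play off against each other the two involutions available on $\cM'\cap\cM_2$: the $*$-operation and the modular conjugation $x\mapsto\overline{x}=Jx^{*}J$. The structural facts I would invoke are that $\widehat{\Phi}\ge 0$ since $\Phi$ is completely positive, that $\widehat{\Delta}_\rho\ge e_2$ is positive by Lemma \ref{lem:modular} and invertible by Lemma \ref{lem:invertible}, and that $x\mapsto\overline{x}$ is conjugate-linear with $\overline{xy}=\overline{y}\,\overline{x}$, $\overline{\overline{x}}=x$ and $(\overline{x})^{*}=\overline{x^{*}}$; consequently $\overline{\widehat{\Phi}}$ and $\overline{\widehat{\Delta}_\rho}$ are again self-adjoint, and $\overline{\widehat{\Delta}_\rho}$ is again invertible.

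First I would take the $*$-adjoint of $\overline{\widehat{\Phi}} = \widehat{\Phi}\,\overline{\widehat{\Delta}_\rho}$; since $\widehat{\Phi}$ and $\overline{\widehat{\Delta}_\rho}$ are self-adjoint and $*$ reverses products, this produces $\overline{\widehat{\Phi}} = \overline{\widehat{\Delta}_\rho}\,\widehat{\Phi}$. Comparing the two expressions for $\overline{\widehat{\Phi}}$ gives at once $\widehat{\Phi}\,\overline{\widehat{\Delta}_\rho} = \overline{\widehat{\Delta}_\rho}\,\widehat{\Phi}$, the second asserted commutation. For the first, I would apply the modular conjugation to $\overline{\widehat{\Phi}} = \overline{\widehat{\Delta}_\rho}\,\widehat{\Phi}$; since $\overline{\cdot}$ is involutive and reverses products, this becomes $\widehat{\Phi} = \overline{\widehat{\Phi}}\,\widehat{\Delta}_\rho$, that is, $\overline{\widehat{\Phi}} = \widehat{\Phi}\,\widehat{\Delta}_\rho^{-1}$ by invertibility of $\widehat{\Delta}_\rho$. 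Taking $*$-adjoints once more (now using that $\widehat{\Phi}$ and $\widehat{\Delta}_\rho^{-1}$ are self-adjoint) yields $\overline{\widehat{\Phi}} = \widehat{\Delta}_\rho^{-1}\,\widehat{\Phi}$, whence $\widehat{\Phi}\,\widehat{\Delta}_\rho^{-1} = \widehat{\Delta}_\rho^{-1}\,\widehat{\Phi}$, which is equivalent to $\widehat{\Phi}\,\widehat{\Delta}_\rho = \widehat{\Delta}_\rho\,\widehat{\Phi}$.

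I do not expect a genuine obstacle here: this is a short bookkeeping argument once the detailed balance identity is available. The one thing to be careful about is not to conflate $\widehat{\Delta}_\rho$ with $\overline{\widehat{\Delta}_\rho}$ (they coincide only when $\rho|_{\cN}$ is tracial), and to keep straight that both $*$ and $\overline{\cdot}$ reverse the order of products while fixing the relevant self-adjoint operators; it is precisely the interplay of these two order-reversing involutions, together with invertibility of $\widehat{\Delta}_\rho$, that converts the asymmetric identity $\overline{\widehat{\Phi}}=\widehat{\Phi}\,\overline{\widehat{\Delta}_\rho}$ into genuine commutation. The same manipulations also yield, if wanted, $\widehat{\Delta}_\rho\,\overline{\widehat{\Phi}}=\overline{\widehat{\Phi}}\,\widehat{\Delta}_\rho$ and $\overline{\widehat{\Delta}_\rho}\,\overline{\widehat{\Phi}}=\overline{\widehat{\Phi}}\,\overline{\widehat{\Delta}_\rho}$, though these are not part of the statement.
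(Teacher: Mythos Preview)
Your proposal is correct and follows essentially the same route as the paper's proof: start from $\overline{\widehat{\Phi}} = \widehat{\Phi}\,\overline{\widehat{\Delta}_\rho}$ (from Theorem~\ref{thm:equivbalance}), take adjoints to obtain $\widehat{\Phi}\,\overline{\widehat{\Delta}_\rho} = \overline{\widehat{\Delta}_\rho}\,\widehat{\Phi}$, then apply the modular conjugation and invertibility of $\widehat{\Delta}_\rho$ to deduce $\widehat{\Phi}\,\widehat{\Delta}_\rho^{-1} = \widehat{\Delta}_\rho^{-1}\,\widehat{\Phi}$. The paper's version is simply terser, while you spell out explicitly which involution is applied at each step.
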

\begin{proof}
 By the fact that $\overline{\widehat{\Phi}} =  \widehat{\Phi} \overline{\widehat{\Delta}_\rho}$, we see that $\widehat{\Phi} \overline{\widehat{\Delta}_\rho} =\overline{\widehat{\Delta}_\rho } \widehat{\Phi}$
 and $\overline{\widehat{\Phi}} =  \widehat{\Delta}_\rho^{-1} \widehat{\Phi}$.
    This implies that $\widehat{\Delta}_\rho^{-1} \widehat{\Phi}= \widehat{\Phi} \widehat{\Delta}_\rho^{-1} $.
    Hence $\widehat{\Phi}\widehat{\Delta}_\rho =\widehat{\Delta}_\rho  \widehat{\Phi}$.
This completes the proof of the corollary.
\end{proof}


Now we suggest the following definition of detailed balanced condition for bimodule quantum channels.
\begin{definition}[Bimodule Detailed Balance Condition]
    Suppose $\Phi$ is a bimodule quantum channel, $\widehat{\Delta} \in \cM'\cap \cM_2$ is strictly positive with $\widehat{\Delta} e_2= e_2$.
    We say $\Phi$ satisfies the bimodule detailed balance condition with respect to $\widehat{\Delta}$ if 
    \begin{align}  
    \overline{\widehat{\Phi}}
    = \widehat{\Phi} \overline{\widehat{\Delta}}.
    \end{align}
    We also say that $\Phi$ satisfies bimodule GNS symmetry with respect to $\widehat{\Delta}$.
\end{definition}

\begin{remark}
Suppose that $\Phi$ is a bimodule quantum channel.
If $\cR(\widehat{\Phi})\neq \cR(\overline{\widehat{\Phi}})$, we see that $\Phi$ is not bimodule GNS symmetry to any strictly positive $\widehat{\Delta}\in \cM'\cap \cM_2$ with $\widehat{\Delta}e_2=e_2$. 
\end{remark}

\begin{remark}
    Suppose that $\cN\subset \cM$ is irreducible and $p$ is a projection in $\cM'\cap \cM_2$ such that $p\overline{p}=0$.
    Let $\Phi$ be a bimodule quantum channel such that $\widehat{\Phi}=\kappa p+\kappa^{-1} \overline{p}$, and $\displaystyle \kappa +\kappa^{-1}=\frac{\lambda^{1/2}}{\tau_2(p)}$.
 We have that $\Phi$ is bimodule GNS symmetry with respect to $\widehat{\Delta}=e_2+\kappa^2 p+\kappa^{-2}\overline{p}+s(1-e_2-p-\overline{p})$, $ s> 0$.
\end{remark}

\begin{example}
Suppose that $\bC\subset \bC^4$ is the finite inclusion.
Let $\Phi: \bC^4\to \bC^4$ be a quantum channel such that  
\begin{align*}
\mathfrak{F}(\widehat{\Phi})=
\begin{pmatrix}
0 & \frac{1}{3} &  \frac{1}{3} &  \frac{1}{3} \\
\frac{1}{2} & 0 & \frac{1}{4} & \frac{1}{4}  \\
\frac{1}{4} & \frac{1}{2} & 0 & \frac{1}{4} \\
 \frac{1}{6} &  \frac{1}{2} &  \frac{1}{3} & 0  
\end{pmatrix}.
\end{align*}
Then
\begin{align*}
\mathfrak{F}(\overline{\widehat{\Phi}})=
\begin{pmatrix}
0 & \frac{1}{2} &  \frac{1}{4} &  \frac{1}{6} \\
\frac{1}{3} & 0 & \frac{1}{2} & \frac{1}{2}  \\
\frac{1}{3} & \frac{1}{4} & 0 & \frac{1}{3} \\
 \frac{1}{3} &  \frac{1}{4} &  \frac{1}{4} & 0  
\end{pmatrix}.
\end{align*}
By taking $\widehat{\Delta}$ as follows
\begin{align*}
\mathfrak{F}(\widehat{\Delta})=\frac{1}{2}
\begin{pmatrix}
1 & \frac{2}{3} &  \frac{4}{3} & 2 \\
\frac{3}{2} & 1 & \frac{1}{2} & \frac{1}{2}  \\
\frac{3}{4} & 2 & 1 & \frac{3}{4} \\
 \frac{1}{2} &  2 &  \frac{4}{3} & 1 
\end{pmatrix}
\end{align*}
If there exists a faithful state $\rho$ on $\cM$ with density $(t_1, \ldots, t_n)^{\mathsf{T}}$ such that $\Phi$ is GNS symmetry with respect to $\rho$, then $\widehat{\Delta}=\widehat{\Delta}_\rho$.
This implies that 
\begin{align*}
t_1t_2^{-1}=\frac{2}{3}, t_1t_3^{-1}=\frac{4}{3}, t_1t_4^{-1}=2, \\
t_2t_3^{-1}=\frac{1}{2}, t_2t_4^{-1}=\frac{1}{2}, 
t_3t_4^{-1}=\frac{1}{4}.
\end{align*}
This implies $t_4=4t_3$ and $\displaystyle t_4=\frac{2}{3}t_3$ which leads contradiction.
Hence $\Phi$ is not GNS symmetric with respect to any faithful state.
\end{example}

\begin{proposition}\label{prop:gnsprop}
Suppose $\Phi$ is a bimodule quantum channel bimodule GNS symmetry with respect to $\widehat{\Delta} \in \cM'\cap \cM_2$.  
Then  
\begin{enumerate}[(1)]
    \item $\widehat{\Phi} \overline{\widehat{\Delta} } =\overline{\widehat{\Delta}  } \widehat{\Phi}.$
    \item $\widehat{\Phi}\widehat{\Delta}  =\widehat{\Delta}   \widehat{\Phi}$.
    \item $\widehat{\Phi}\overline{\widehat{\Phi}}=\overline{\widehat{\Phi}}\widehat{\Phi}$. 
    \item $\mathcal{R}(\widehat{\Phi})\widehat{\Delta}\overline{\widehat{\Delta}} = \mathcal{R}(\widehat{\Phi})\overline{\widehat{\Delta}}\widehat{\Delta} = \mathcal{R}(\widehat{\Phi})$. 
\end{enumerate}
\end{proposition}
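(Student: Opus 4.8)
The plan is to extract everything from the defining identity $\overline{\widehat\Phi}=\widehat\Phi\,\overline{\widehat\Delta}$ by repeatedly applying the adjoint and the contragredient $\overline{\,\cdot\,}$, then reading off commutators and support-projection relations. At the start I would record two structural facts. First, since $\Phi$ and hence $\Phi^*$ are completely positive, $\widehat\Phi$ and $\overline{\widehat\Phi}=\widehat{\Phi^*}$ are positive, in particular self-adjoint, while $\widehat\Delta$ and $\overline{\widehat\Delta}$ are strictly positive, hence self-adjoint and invertible. Second, the contragredient on $\cM'\cap\cM_2$ is a linear, involutive, $*$-preserving \emph{anti}-automorphism carrying positive elements to positive elements, so $\overline{AB}=\overline{B}\,\overline{A}$ and $\overline{\overline{A}}=A$.

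For (1)--(3): applying $*$ to $\overline{\widehat\Phi}=\widehat\Phi\,\overline{\widehat\Delta}$ gives $\overline{\widehat\Phi}=\overline{\widehat\Delta}\,\widehat\Phi$; comparing with the original yields $\widehat\Phi\,\overline{\widehat\Delta}=\overline{\widehat\Delta}\,\widehat\Phi$, which is (1). Applying the contragredient to the original identity gives $\widehat\Phi=\widehat\Delta\,\overline{\widehat\Phi}$, and applying $*$ to that gives $\widehat\Phi=\overline{\widehat\Phi}\,\widehat\Delta$; comparing these two gives $\widehat\Delta\,\overline{\widehat\Phi}=\overline{\widehat\Phi}\,\widehat\Delta$. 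Then $\widehat\Phi\,\widehat\Delta=(\overline{\widehat\Phi}\,\widehat\Delta)\widehat\Delta=\overline{\widehat\Phi}\,\widehat\Delta^{2}$ and $\widehat\Delta\,\widehat\Phi=\widehat\Delta(\widehat\Delta\,\overline{\widehat\Phi})=\widehat\Delta^{2}\,\overline{\widehat\Phi}=\overline{\widehat\Phi}\,\widehat\Delta^{2}$ using the last commutator, which is (2). For (3), $\widehat\Phi\,\overline{\widehat\Phi}=\widehat\Phi(\widehat\Phi\,\overline{\widehat\Delta})=\widehat\Phi^{2}\,\overline{\widehat\Delta}$ and $\overline{\widehat\Phi}\,\widehat\Phi=(\overline{\widehat\Delta}\,\widehat\Phi)\widehat\Phi=\overline{\widehat\Delta}\,\widehat\Phi^{2}$, and these agree by (1), which makes $\widehat\Phi^{2}$ commute with $\overline{\widehat\Delta}$.

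For (4): substituting $\overline{\widehat\Phi}=\widehat\Phi\,\overline{\widehat\Delta}$ into $\widehat\Phi=\overline{\widehat\Phi}\,\widehat\Delta$ gives $\widehat\Phi=\widehat\Phi\,\overline{\widehat\Delta}\,\widehat\Delta$, and substituting $\overline{\widehat\Phi}=\overline{\widehat\Delta}\,\widehat\Phi$ into $\widehat\Phi=\widehat\Delta\,\overline{\widehat\Phi}$ gives $\widehat\Phi=\widehat\Delta\,\overline{\widehat\Delta}\,\widehat\Phi$. Write $r=\cR(\widehat\Phi)$. From $\widehat\Phi(\overline{\widehat\Delta}\,\widehat\Delta-1)=0$ and $\ker\widehat\Phi=\ker r$ one gets $r\,\overline{\widehat\Delta}\,\widehat\Delta=r$; from $(\widehat\Delta\,\overline{\widehat\Delta}-1)\widehat\Phi=0$, since the bounded operator $\widehat\Delta\,\overline{\widehat\Delta}-1$ vanishes on $\overline{\mathrm{Ran}\,\widehat\Phi}=\mathrm{Ran}\,r$, one gets $\widehat\Delta\,\overline{\widehat\Delta}\,r=r$. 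Finally, $r$ is a spectral projection of $\widehat\Phi$, so by (1) and (2) it commutes with $\overline{\widehat\Delta}$ and $\widehat\Delta$; hence $r\,\widehat\Delta\,\overline{\widehat\Delta}=\widehat\Delta\,\overline{\widehat\Delta}\,r=r$, and together with $r\,\overline{\widehat\Delta}\,\widehat\Delta=r$ this is (4).

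The only subtle point -- the ``main obstacle,'' such as it is -- is in (4): because $\widehat\Delta\,\overline{\widehat\Delta}$ and $\overline{\widehat\Delta}\,\widehat\Delta$ are not self-adjoint, one must be careful about which side the support projection appears on when passing from $\widehat\Phi A=0$ or $A\widehat\Phi=0$ to the relation for $r$, and one genuinely needs the commutation of $r$ with $\widehat\Delta$ and $\overline{\widehat\Delta}$ from (1)--(2) to reconcile the two sides. Everything else is a short formal manipulation, so I would not expect any real difficulty.
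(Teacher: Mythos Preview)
Your proposal is correct and follows essentially the same route as the paper: both extract (1)--(3) by applying the adjoint and the contragredient to the defining identity $\overline{\widehat\Phi}=\widehat\Phi\,\overline{\widehat\Delta}$, and obtain (4) by substituting to get $\widehat\Phi$ times $\widehat\Delta\overline{\widehat\Delta}$ (or $\overline{\widehat\Delta}\widehat\Delta$) equal to $\widehat\Phi$, then passing to the range projection. The only cosmetic difference is in (4): the paper derives both $\widehat\Phi\,\widehat\Delta\overline{\widehat\Delta}=\widehat\Phi$ and $\widehat\Phi\,\overline{\widehat\Delta}\widehat\Delta=\widehat\Phi$ on the \emph{same} side (using (1)--(2) to move $\widehat\Delta$ past $\widehat\Phi$), so the separate appeal to commutation of $r=\cR(\widehat\Phi)$ with $\widehat\Delta$ that you flag as the ``main obstacle'' is not actually needed.
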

\begin{proof}
(1) follows from $\overline{\widehat{\Delta}  } \widehat{\Phi} = \overline{\widehat{\Phi}}^* = \overline{\widehat{\Phi}} = \widehat{\Phi} \overline{\widehat{\Delta} }$. (2) follows the same argument applied to $\widehat{\Phi}\widehat{\Delta}^{-1} = \overline{\widehat{\Phi}}$. 
(3) follows from $\widehat{\Phi}\overline{\widehat{\Phi}} = \widehat{\Phi}\widehat{\Phi}\overline{\widehat{\Delta}} = \widehat{\Phi}\overline{\widehat{\Delta}}\widehat{\Phi} = \overline{\widehat{\Phi}}\widehat{\Phi}$. 
For (4), note that we have 
\begin{align*}
    \widehat{\Phi}\widehat{\Delta}\overline{\widehat{\Delta}} = \widehat{\Delta}\widehat{\Phi}\overline{\widehat{\Delta}} = \widehat{\Delta}\overline{\widehat{\Phi}} = \widehat{\Phi}.
\end{align*}
Also, $\widehat{\Phi}\overline{\widehat{\Delta}}\widehat{\Delta} = \overline{\widehat{\Phi}}\widehat{\Delta} = \widehat{\Phi}$. 
Thus the conclusion follows. 
\end{proof}

\begin{proposition}
Suppose that $\Phi$ is a bimodule quantum channel and $\rho$ is a normal faithful state on $\cM$ such that $e_1\in \Dom(\sigma_{-i})$. Then we have that $\Phi$ is bimodule GNS symmetry with respect to $\widehat{\Delta}_\rho$.   
\end{proposition}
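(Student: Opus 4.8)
The plan is to obtain this as an immediate consequence of Theorem \ref{thm:equivbalance} together with the structural properties of $\widehat{\Delta}_\rho$ already established in Lemmas \ref{lem:modular} and \ref{lem:invertible}; no new analytic input is needed, since under the (contextually understood) hypothesis that $\Phi$ satisfies the $\rho$-detailed balance condition \eqref{eq:balance}, the algebraic identity defining bimodule GNS symmetry is exactly one of the two equivalent conditions produced by Theorem \ref{thm:equivbalance}. Recall that, by the definition of bimodule detailed balance, $\Phi$ is bimodule GNS symmetric with respect to an element $\widehat{\Delta}\in\cM'\cap\cM_2$ precisely when $\widehat{\Delta}$ is strictly positive, $\widehat{\Delta}e_2=e_2$, and $\overline{\widehat{\Phi}}=\widehat{\Phi}\,\overline{\widehat{\Delta}}$. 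So the proof splits into verifying that $\widehat{\Delta}_\rho$ meets the two structural requirements, and then invoking the algebraic identity.

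First I would record the structural facts about $\widehat{\Delta}_\rho$. By its definition and the hypothesis $e_1\in\Dom(\sigma_{-i})$, we have $\widehat{\Delta}_\rho\in\cM'\cap\cM_2$ and it is bounded, as noted right after the definition of $\widehat{\Delta}_\rho$. Lemma \ref{lem:modular} gives $\widehat{\Delta}_\rho\geq e_2$ together with $e_2\widehat{\Delta}_\rho=\widehat{\Delta}_\rho e_2=e_2$; in particular $\widehat{\Delta}_\rho$ is positive and satisfies the required compatibility with $e_2$. Lemma \ref{lem:invertible} then shows $\widehat{\Delta}_\rho$ is invertible, hence strictly positive. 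Thus $\widehat{\Delta}_\rho$ is a legitimate candidate for the modular data of a bimodule GNS symmetry, and the contragredient $\overline{\widehat{\Delta}_\rho}$ is likewise strictly positive with $\overline{\widehat{\Delta}_\rho}e_2=e_2$.

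Next, since $\Phi$ is GNS symmetric with respect to $\rho$ — i.e. satisfies \eqref{eq:balance} — Theorem \ref{thm:equivbalance} applies and yields, as the first of its two equivalent conditions, the identity $\overline{\widehat{\Phi}}=\widehat{\Phi}\,\overline{\widehat{\Delta}_\rho}$. This is exactly the defining relation of bimodule GNS symmetry with respect to $\widehat{\Delta}_\rho$. Combining this identity with the structural facts from the previous step shows that $\Phi$ satisfies the bimodule detailed balance condition with respect to $\widehat{\Delta}_\rho$, which is the assertion.

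The step that requires the most care is not a genuine obstacle but a bookkeeping point: one must make sure that the element $\widehat{\Delta}_\rho$ entering the definition of bimodule GNS symmetry is the same one appearing throughout Theorem \ref{thm:equivbalance} and Proposition \ref{prop:equistate}, and that ``strictly positive'' in the definition is indeed guaranteed — this is precisely invertibility (Lemma \ref{lem:invertible}) together with positivity (Lemma \ref{lem:modular}). Once these identifications are in place, the statement is a direct corollary of the already-proven equivalence, and in view of the remark following the definition of the bimodule detailed balance condition it also recovers, as a special case, that a $\rho$-GNS-symmetric bimodule quantum channel is bimodule equilibrium with respect to $\widehat{\Delta}_\rho$.
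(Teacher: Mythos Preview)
Your proposal is correct and takes essentially the same approach as the paper, whose proof is the single line ``It follows from Theorem \ref{thm:equivbalance}.'' You have simply unpacked that line: invoking Theorem \ref{thm:equivbalance} for the identity $\overline{\widehat{\Phi}}=\widehat{\Phi}\,\overline{\widehat{\Delta}_\rho}$, and Lemmas \ref{lem:modular} and \ref{lem:invertible} for the structural requirements on $\widehat{\Delta}_\rho$. You are also right to flag that the statement, as written, omits the hypothesis that $\Phi$ satisfies the $\rho$-detailed balance condition \eqref{eq:balance}; without it the conclusion is false, and the paper's one-line proof confirms this is the intended reading.
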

\begin{proof}
    It follows from Theorem \ref{thm:equivbalance}.
\end{proof}

\subsection{Bimodule KMS Symmetry} 

Suppose that $\rho$ is a faithful normal state on $\cM$ and $\Phi$ is a bimodule quantum channel.
The bimodule quantum channel $\Phi$ is KMS symmetry with respect to $\rho$ if 
\begin{align*}
    \langle J\Phi(x)^*J\Delta_\rho^{1/2}\Omega, y\Delta_\rho^{1/2}\Omega\rangle = \langle Jx^*J\Delta_\rho^{1/2}\Omega, \Phi(y)\Delta_\rho^{1/2}\Omega\rangle,
\end{align*}
whenever $x, y\in \cM$.
In this section, we shall introduce bimodule KMS symmetry for bimodule quantum channels.
Suppose that $e_1\in \Dom(\sigma_{-i/2}).$
Let 
\begin{align*}
\widehat{\Delta}_{\rho,1/2}
=\lambda^{-1/2}\mathfrak{F}( \bE_{\cN'} (\Delta_{\rho}^{1/2} e_1 \Delta_{\rho}^{-1/2} ) ). 
\end{align*}
By a similar argument in Lemmas \ref{lem:modular} and 
 \ref{lem:invertible} , we have that 
\begin{enumerate}[(1)]
    \item $\widehat{\Delta}_{\rho,1/2}\in \cM'\cap \cM_2$ is positive and $\widehat{\Delta}_{\rho,1/2} e_2=e_2$.
    \item $\widehat{\Delta}_{\rho,1/2}$ is invertible.
\end{enumerate}

\begin{theorem}\label{thm:kms}
Suppose that $\Phi$ is a bimodule quantum channel and $\rho$ is a normal faithful state on $\cM$ with $e_1\in \Dom(\sigma_{-i/2})$.
    Then $\Phi$ is KMS symmetric with respect to $\rho$ if and only if $\widehat{\Phi}\overline{\widehat{\Delta}_{\rho,1/2}}=\widehat{\Delta}_{\rho,1/2} \overline{\widehat{\Phi}}$, $\cR(\widehat{\Phi})\widehat{\Delta}_{\rho, 1/2}^{-1}=\cR(\widehat{\Phi})\overline{\widehat{\Delta}_{\rho, 1/2}}$ and 
    \begin{align*}
     \cR(\widehat{\Phi}) \Delta_\rho ^{-1/2}e_1 \Delta_\rho^{1/2} \Omega_1 = &   \cR(\widehat{\Phi}) \overline{\widehat{ \Delta}_{\rho, 1/2}}   e_1 \Omega_1.
    \end{align*}
    Equivalently,
        \begin{align*}
    \Delta_\rho ^{-1/2} \mathfrak{F}^{-1}(\widehat{\Phi}) \Delta_\rho^{1/2} \Omega_1 
    = & \overline{\widehat{\Delta}_{\rho, 1/2} }  
 \mathfrak{F}^{-1}(\widehat{\Phi})^* \Omega_1.
    \end{align*}
\end{theorem}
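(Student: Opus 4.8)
The plan is to run the argument of Theorem~\ref{thm:equivbalance} with the modular operator replaced by its square root. First I would strip the modular conjugation out of the definition of KMS symmetry: since $\Delta_\rho^{1/2}\Omega$ lies in the natural cone it is $J$-invariant, and combined with $S_{\rho,\tau}=J\Delta_\rho^{1/2}$ and $S_{\rho,\tau}z\Omega=z^*\Omega_\rho=z^*\Delta_\rho^{1/2}\Omega$ this gives $Jz^*J\Delta_\rho^{1/2}\Omega=\Delta_\rho^{1/2}z\Omega$ for $z\in\cM$. Applying this with $z=\Phi(x)$ and with $z=x$ rewrites KMS symmetry as
\begin{align*}
\langle\Delta_\rho^{1/2}\Phi(x)\Omega,\,y\Delta_\rho^{1/2}\Omega\rangle=\langle\Delta_\rho^{1/2}x\Omega,\,\Phi(y)\Delta_\rho^{1/2}\Omega\rangle,\qquad x,y\in\cM.
\end{align*}
I would then transport this to $L^2(\cM_1,\tau_1)$, replacing $\Omega$ by $\Omega_1$ (legitimate since $\Delta_\rho$ is affiliated with $\cM$, so $\Delta_\rho^{1/2}\Omega_1$ and $y\Delta_\rho^{1/2}\Omega_1$ lie in $e_2L^2(\cM_1)$), and substitute the operator identity $\Phi(z)e_2=\lambda^{-3/2}e_2e_1\widehat\Phi z e_1e_2$, using $e_2\Omega_1=\Omega_1$ and $e_2\Delta_\rho^{1/2}\Omega_1=\Delta_\rho^{1/2}\Omega_1$ to cancel the outer Jones projections. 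This turns the KMS identity into an equality of pairings involving $\Delta_\rho^{1/2}e_2e_1\widehat\Phi xe_1\Omega_1$ and $e_2e_1\widehat\Phi ye_1\Delta_\rho^{1/2}\Omega_1$, i.e.\ an identity purely inside the Jones tower.

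Next I would move everything into $\cM_1$ via Lemma~\ref{lem:fourier} and commute $e_1$ past $\Delta_\rho^{1/2}$: because $e_1\in\Dom(\sigma_{-i/2})$ one has $\Delta_\rho^{1/2}e_1\Delta_\rho^{-1/2}=\sigma_{-i/2}(e_1)$ bounded, and projecting with $\bE_{\cN'}$ brings in $\widehat\Delta_{\rho,1/2}$ exactly as in its definition. The translation from the $\cM_2$-picture to the $\cM_1$-picture, together with the $180^\circ$ rotation $\mathfrak{F}^2(\cdot)=\overline{(\cdot)}$ built into Lemma~\ref{lem:fourier} and its counterpart, replaces $\widehat\Phi$ and $\widehat\Delta_{\rho,1/2}$ by the contragredients $\overline{\widehat\Phi}$ and $\overline{\widehat\Delta_{\rho,1/2}}$ where they occur. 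Applying $\bE_{\cN'}$ to the resulting vector identity (as in the passage \eqref{eq:bal1}$\to$\eqref{eq:bal2}) and stripping the Pimsner--Popa basis yields the algebraic relation $\widehat\Phi\,\overline{\widehat\Delta_{\rho,1/2}}=\widehat\Delta_{\rho,1/2}\,\overline{\widehat\Phi}$ and the support relation $\cR(\widehat\Phi)\widehat\Delta_{\rho,1/2}^{-1}=\cR(\widehat\Phi)\,\overline{\widehat\Delta_{\rho,1/2}}$. The new feature compared to the GNS case is that the KMS identity is symmetric under $x\leftrightarrow y$ composed with the $J$-conjugation, so carrying out the same computation on the other half of the pairing produces the second, ``barred'' support relation (the $\overline{\cR(\widehat\Phi)}$ one), while the two pieces of the pairing that have not yet been projected give the two displayed vector identities.

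For the converse I would reverse each of these implications in turn, reassembling the pairing from the four Fourier-multiplier conditions and using $Jz^*J\Delta_\rho^{1/2}\Omega=\Delta_\rho^{1/2}z\Omega$ to land back on the defining KMS identity. The final ``Equivalently'' reformulation is then a bookkeeping step: using Lemma~\ref{lem:fourier} in the forms $\widehat\Phi e_1e_2=\lambda^{1/2}\mathfrak{F}^{-1}(\widehat\Phi)e_2$ and $\mathfrak{F}(a)xe_1y\Omega_1=\lambda^{1/2}xay\Omega_1$ to pass between the $\cM_2$- and $\cM_1$-pictures, applying the $180^\circ$ rotation to interchange the two displayed forms, and invoking a support identity of the type $\widehat\Delta_{\rho,1/2}\overline{\widehat\Delta_{\rho,1/2}}\cR(\widehat\Phi)=\cR(\widehat\Phi)$ (the analog of Proposition~\ref{prop:gnsprop}(4)) so that the factor $\cR(\widehat\Phi)$ and the inverse $\widehat\Delta_{\rho,1/2}^{-1}$ get absorbed, collapsing the four conditions into the two displayed equations.

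The main obstacle I anticipate is the bookkeeping in the middle step: keeping straight which half-modular shift $\sigma_{+i/2}$ versus $\sigma_{-i/2}$ attaches to which side of the pairing, which copies of $\widehat\Phi$ and $\widehat\Delta_{\rho,1/2}$ become contragredients under the rotations, and confirming that the support projections $\cR(\widehat\Phi)$ and $\overline{\cR(\widehat\Phi)}$ (forced on us because $\widehat\Phi$ is not invertible, so multiplication by it only constrains ranges) sit in exactly the positions claimed. A secondary delicate point is checking that the compact ``Equivalently'' form really is equivalent to the four-part characterization, i.e.\ that absorbing $\cR(\widehat\Phi)$ loses no information.
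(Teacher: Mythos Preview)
Your proposal is correct and follows essentially the same approach as the paper, which opens its proof with ``We will follow a similar argument in Theorem~\ref{thm:equivbalance}.'' The one minor presentational difference is that the paper extracts both vector identities (its equations \eqref{eq:kms12} and \eqref{eq:kms13}) by continuing the same chain of manipulations on the single pairing rather than by appealing to the $x\leftrightarrow y$ symmetry of the KMS form as you suggest, and it derives the intermediate relation $\widehat{\Delta}_{\rho,1/2}\overline{\widehat{\Phi}}\widehat{\Delta}_{\rho,1/2}=\widehat{\Phi}$ (equation \eqref{eq:kms21}) along the way before combining it with \eqref{eq:kms22} to obtain the support condition; but these are equivalent routes to the same endpoints.
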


\begin{proof}
We will follow a similar argument in Theorem \ref{thm:equivbalance}.
For any $x, y\in \cM$, we have
\begin{align}\label{eq:kms0}
    \langle \Delta_\rho^{1/2} \Phi(x)\Omega_1, y\Delta_\rho^{1/2}\Omega_1\rangle = \langle \Delta_\rho^{1/2}x\Omega_1, \Phi(y)\Delta_\rho^{1/2}\Omega_1\rangle, 
\end{align}
Reformulating it in terms of the Fourier multiplier, we obtain that 
\begin{align*}
    \langle \Delta_\rho^{1/2} e_1\widehat{\Phi}xe_1\Omega_1, y\Delta_\rho^{1/2}\Omega_1\rangle = \langle \Delta_\rho^{1/2}x\Omega_1, e_1\widehat{\Phi}ye_1\Delta_\rho^{1/2}\Omega_1\rangle.
\end{align*}
Assuming that $x\in \Dom(\sigma_{-i/2})$, we obtain that 
\begin{align*}
    \langle y^*  \sigma_{-i/2}(e_1) \sigma_{-i/2}(x) \Delta_\rho^{1/2} \Omega_1, \Delta_\rho^{1/2} \mathfrak{F}^{-1}(\widehat{\Phi})^*\Omega_1\rangle 
= \langle y^* e_1\sigma_{-i/2}(x) \Delta_\rho^{1/2}\Omega_1, \widehat{\Phi}e_1\Delta_\rho^{1/2}\Omega_1\rangle.
\end{align*}
Hence
\begin{align*}
\langle y^*  \sigma_{-i/2}(e_1)  x \Delta_\rho^{1/2} \Omega_1, \Delta_\rho^{1/2} \overline{\widehat{\Phi}}e_1\Omega_1\rangle 
= \langle y^* e_1 x\Delta_\rho^{1/2}\Omega_1, \widehat{\Phi}e_1\Delta_\rho^{1/2}\Omega_1\rangle.
\end{align*}
By taking the conditional expectation $\bE_{\cN'}$ and the Fourier transform, we have that 
\begin{align}\label{eq:kms11}
\langle y^*  \widehat{\Delta}_{\rho, 1/2}e_1  x \Delta_\rho^{1/2} \Omega_1, \Delta_\rho^{1/2} \overline{\widehat{\Phi}}e_1\Omega_1\rangle 
= \langle y^* e_1 x\Delta_\rho^{1/2}\Omega_1, \widehat{\Phi}e_1\Delta_\rho^{1/2}\Omega_1\rangle.    
\end{align}
Meanwhile, we have that 
\begin{align}
   \overline{\widehat{\Phi}}\sigma_{-i/2}(e_1)e_2= \overline{\widehat{\Phi}} \widehat{\Delta}_{\rho, 1/2}e_1e_2.
\end{align}
This is equivalent to 
\begin{align*}
\overline{\widehat{\Phi}}\sigma_{-i/2}(e_1)\Omega_1= \overline{\widehat{\Phi}} \widehat{\Delta}_{\rho, 1/2}e_1\Omega_1,
\end{align*}
or
\begin{align}\label{eq:kms12}
\widehat{\Phi}\sigma_{i/2}(e_1)\Omega_1= \widehat{\Phi} \overline{\widehat{\Delta}_{\rho, 1/2}}e_1\Omega_1,
\end{align}
by taking the conjugation $J_1$ on both sides.
Similarly, we have that
\begin{align*}
\langle y^*  \widehat{\Delta}_{\rho, 1/2}e_1  x \Delta_\rho^{1/2} \Omega_1,  \overline{\widehat{\Phi}}\widehat{\Delta}_{\rho, 1/2}e_1 \Delta_\rho^{1/2}\Omega_1\rangle 
= \langle y^* e_1 x\Delta_\rho^{1/2}\Omega_1, \widehat{\Phi}e_1\Delta_\rho^{1/2}\Omega_1\rangle.    
\end{align*}
By removing the inner product, we see that 
\begin{align*}
  \widehat{\Delta}_{\rho, 1/2} \overline{\widehat{\Phi}}\widehat{\Delta}_{\rho, 1/2}e_1e_2 
  =\widehat{\Phi}e_1e_2.
\end{align*}
By removing $e_1, e_2$, we obtain that 
\begin{align}\label{eq:kms21}
  \widehat{\Delta}_{\rho, 1/2} \overline{\widehat{\Phi}}\widehat{\Delta}_{\rho, 1/2} 
  =\widehat{\Phi}.  
\end{align}
Reformulating Equation \eqref{eq:kms11}, we see that 
\begin{align*}
 \langle y^*  \widehat{\Delta}_{\rho, 1/2}e_1  x \Delta_\rho^{1/2} \Omega_1,  \overline{\widehat{\Phi}} \Delta_\rho^{1/2} e_1  \Omega_1\rangle 
= \langle y^* e_1 x\Delta_\rho^{1/2}\Omega_1, \widehat{\Phi}\Delta_\rho^{1/2}\overline{\widehat{\Delta}_{\rho, 1/2}}e_1\Omega_1\rangle.   
\end{align*}
This implies that
\begin{align}\label{eq:kms22}
  \widehat{\Delta}_{\rho, 1/2} \overline{\widehat{\Phi}} 
  =\widehat{\Phi}\overline{\widehat{\Delta}_{\rho, 1/2}}.    
\end{align}
Moreover, we have that 
\begin{align}\label{eq:kms13}
    \widehat{\Phi}\sigma_{i/2}(e_1)e_2 =\widehat{\Phi}\overline{\widehat{\Delta}_{\rho,1/2}}e_1e_2.
\end{align}
Combining Equations \eqref{eq:kms21} and \eqref{eq:kms22}, we obtain that 
\begin{align}\label{eq:kms23}
\cR(\widehat{\Phi})\widehat{\Delta}_{\rho, 1/2}^{-1}=\cR(\widehat{\Phi})\overline{\widehat{\Delta}_{\rho, 1/2}}.
\end{align}

If Equations \eqref{eq:kms12}, \eqref{eq:kms21}, \eqref{eq:kms22}, \eqref{eq:kms23} hold, we see that Equation \eqref{eq:kms0} holds.
This completes the proof of the theorem.
\end{proof}


Theorem \ref{thm:kms} inspires us to introduce bimodule KMS symmetry as follows.

\begin{definition}[Bimodule KMS Symmetry]
Suppose $\Phi$ is a bimodule quantum channel and $\widehat{\Delta}\in \cM'\cap \cM_2$ is strictly positive such that $\widehat{\Delta} e_2=e_2$, $\cR(\widehat{\Phi})\overline{\widehat{\Delta}}=\cR(\widehat{\Phi})\widehat{\Delta}^{-1}$.
We say $\Phi$ is bimodule KMS symmetry with respect to $\widehat{\Delta}$ if 
\begin{align}
\overline{\widehat{\Phi}} =\overline{\widehat{\Delta}} \widehat{\Phi}\overline{\widehat{\Delta}}.
\end{align}
\end{definition}

\begin{proposition}
Suppose that $\Phi$ is a bimodule quantum channel and $\rho$ is a normal faithful state on $\cM$ such that $e_1\in \Dom(\sigma_{-i/2})$. Then $\Phi$ is bimodule KMS symmetry with respect to $\widehat{\Delta}_{\rho, 1/2}$.   
\end{proposition}
\begin{proof}
    It follows from Theorem \ref{thm:kms}.
\end{proof}

\section{Bimodule Quantum Markov Semigroups}\label{sec:: Bimodule Quantum Markov Semigroups}

In this section, we study the quantum Markov semigroups for finite inclusion and the associated first-order differential structure. 
In the end of this section, a bimodule version of Poincar\'{e} inequality is proved.

\subsection{Bimodule Quantum Markov Semigroup}

We begin by the definition of bimodule quantum Markov semigroups as follows.
\begin{definition}[Bimodule Quantum Markov Semigroups]
    Suppose $\cN\subset \cM$ is a finite inclusion of finite von Neumann algebras and $\{\Phi_t:\cM\to \cM\}_{t\geq 0}$ is a quantum Markov semigroup.
    We say $\{\Phi_t\}_{t\geq 0}$ is a bimodule quantum Markov semigroup if $\Phi_t$ is a bimodule quantum channel for $t\geq 0$.
\end{definition}

\begin{proposition}\label{prop:uniform}
  Suppose $\{\Phi_t\}_{t\geq 0}$ is a bimodule quantum Markov semigroup for a finite inclusion $\cN\subset \cM$ of finite von Neumann algebras.
  Then $\{\Phi_t\}_{t\geq 0}$ is uniformly continuous.
\end{proposition}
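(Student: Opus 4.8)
The plan is to observe that the bimodule structure traps the whole semigroup inside a \emph{finite-dimensional} space of operators, after which uniform continuity is essentially automatic. Recall from Equations \eqref{eq:fouriermultiple} and \eqref{eq:multiplierphi} that $\Phi\mapsto\widehat{\Phi}$ is a linear bijection from the space $\mathcal{B}_{\cN}(\cM)$ of bounded $\cN$-bimodule maps on $\cM$ onto the $2$-box space $\cM'\cap\cM_2$, which is finite-dimensional. This bijection is bi-bounded: from \eqref{eq:multiplierphi} one gets $\|(\Phi-\Psi)(x)\|\le\lambda^{-5/2}\|\widehat{\Phi}-\widehat{\Psi}\|\,\|x\|$ since $\bE_{\cM}$ is contractive and $e_1,e_2$ are projections, and conversely Proposition \ref{prop:fouriermultiple2} exhibits $\widehat{\Phi}$ as $\lambda^{-3/2}$ times the image of $\sum_j\Phi(\eta_j^*)e_1\eta_j$ under conditional expectations and multiplications by Jones projections, so $\|\widehat{\Phi}\|\le\lambda^{-3/2}\big(\sum_j\|\eta_j\|^2\big)\|\Phi\|$. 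Hence $\mathcal{B}_{\cN}(\cM)$, with the operator (or completely bounded) norm, is a fixed finite-dimensional normed space.

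Then I would argue as follows. The family $\{\Phi_t\}_{t\ge 0}$ is a one-parameter semigroup (under composition) taking values in $\mathcal{B}_{\cN}(\cM)$, with $\Phi_0=\id$, and since each $\Phi_t$ is unital and completely positive, $\|\Phi_t\|_{\mathrm{cb}}=\|\Phi_t(1)\|=1$. By the defining continuity of a quantum Markov semigroup, $t\mapsto\Phi_t(x)$ is $\sigma$-weakly continuous for each $x\in\cM$; equivalently, by \eqref{eqn:: bilinear form induced by Fourier multiplier}, $t\mapsto\langle\widehat{\Phi_t}(x_1e_1y_1\Omega_1),x_2e_1y_2\Omega_1\rangle=\lambda^{3/2}\tau(y_2^*\Phi_t(x_2^*x_1)y_1)$ is continuous for all $x_i,y_i\in\cM$. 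The linear functionals $\Phi\mapsto\tau(y\Phi(x))$, $x,y\in\cM$, separate points of $\mathcal{B}_{\cN}(\cM)$, hence span its (finite-dimensional) dual, so $t\mapsto\Phi_t$ is continuous into $\mathcal{B}_{\cN}(\cM)$ for its weak topology; a finite-dimensional space carrying a unique vector-space topology, $t\mapsto\Phi_t$ is norm-continuous. In particular $\|\Phi_t-\id\|\to 0$ as $t\to 0^+$, and then $\|\Phi_{t+h}-\Phi_t\|=\|(\Phi_h-\id)\Phi_t\|\le\|\Phi_h-\id\|\to 0$, so $\{\Phi_t\}_{t\ge 0}$ is uniformly continuous.

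The only substantive input is the finite-dimensionality of $\cM'\cap\cM_2$: this is precisely what, through the Fourier multiplier, confines the entire semigroup to a finite-dimensional cage, and it is part of the planar-algebra framework of the inclusion (and visible in all the examples of Section 2). Everything else is soft — weak-to-norm continuity in finite dimensions, the elementary norm comparison between $\Phi$ and $\widehat{\Phi}$, and the uniform bound $\|\Phi_t\|_{\mathrm{cb}}=1$ from unitality and complete positivity. One may package the sharper conclusion too: $\{\widehat{\Phi_t}\}_{t\ge 0}$ is then a norm-continuous one-parameter semigroup for the convolution product $*$ on $\cM'\cap\cM_2$, so $\widehat{\Phi_t}=\exp_*(t\,\widehat{\mathcal{L}})$ for a unique $\widehat{\mathcal{L}}\in\cM'\cap\cM_2$, and the generator of $\{\Phi_t\}$ is the bounded $\cN$-bimodule map with Fourier multiplier $\widehat{\mathcal{L}}$ — boundedness of the generator being another phrasing of uniform continuity.
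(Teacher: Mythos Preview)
Your proof is correct and rests on the same idea as the paper's: the bimodule constraint forces the semigroup to live in a finite-dimensional relative commutant, so pointwise $\sigma$-weak continuity upgrades to norm continuity. The paper works with the implementing operator $y(t)\in\cN'\cap\cM_1$ defined by $y(t)x\Omega=\Phi_t(x)\Omega$ (which by Proposition \ref{prop:fouriermultiple2} is exactly $\mathfrak{F}(\widehat{\Phi_t})$), and finishes with an explicit operator-norm estimate $\|\Phi_t(x)-x\|\le\lambda^{-1}\|\bE_{\cM}(e_1x(y(t)-1)^*)\|$; you instead stay on the $\cM'\cap\cM_2$ side with $\widehat{\Phi_t}$ and invoke the soft ``weak $=$ norm in finite dimensions'' argument together with the bi-bounded correspondence $\Phi\leftrightarrow\widehat{\Phi}$. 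The two routes are Fourier-equivalent and equally valid.
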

\begin{proof}
    Suppose $\tau$ is a normal faithful tracial state on $\cM$ and $\Omega$ is the cyclic and separating vector in the GNS Hilbert space $L^2(\cM, \tau)$.
    We define $y(t)$ on $L^2(\cM)$ as follows
    \begin{align}\label{eq:tracephi}
        y(t)x\Omega=\Phi_t(x)\Omega, \quad x\in \cM.
    \end{align}
    Then we see that $y(t)\in \cN'\cap \cM_1$.
    Note that $\cN\subset \cM$ is a finite inclusion.
    We find that $\cN'\cap \cM_1$ is finite dimensional.
    Hence $\{y(t)\}_{t\geq 0}$ is a uniformly continuous one-parameter semigroup.
    We denote by $h$ the generator of this semigroup.
    Then $h\in \cN'\cap \cM_1$ and $y(t)=e^{ht}$.
    We have that 
    \begin{align*}
        \|\Phi_t(x)-x\|=& \sup_{\tau(a^*a)=1} \|(\Phi_t(x)-x)a\Omega\| \\
        =& \sup_{\tau(a^*a)=1} \|Ja^* (\Phi_t(x^*)-x^*) \Omega\| \\
        =& \sup_{\tau(a^*a)=1} \|Ja^* (y(t)-1)x^* \Omega\| \\
        =& \lambda^{-1}\sup_{\tau(a^*a)=1} \|Ja^* \bE_\cM((y(t)-1)x^*e_1) \Omega\| \\
        =& \lambda^{-1}\sup_{\tau(a^*a)=1} \| \bE_\cM(e_1 x (y(t)-1)^*) a \Omega\| \\
        \leq & \lambda^{-1} \|\bE_\cM(e_1 x (y(t)-1)^*)\|.
    \end{align*}
    This implies that $\Phi_t$ is uniformly continuous.
\end{proof}

Let $\cL$ be the generator of a bimodule quantum Markov semigroup $\{\Phi_t\}_{t\geq 0}$ such that $e^{-t\cL}=\Phi_t$, which is also called Lindbladian.
By Proposition \ref{prop:uniform}, we see that $\{\Phi_t\}_{t\geq 0}$ is uniformly continuous.
So the generator $\cL$ is a bounded map on $\cM$.
The Fourier multiplier $\widehat{\cL}$ of $\cL$ is as follows
\begin{align*}
    \widehat{\cL}  x e_1 y \Omega_1 = \lambda^{1/2}\sum_{j=1}^m x \eta_j^* e_1\cL(\eta_j)y\Omega_1, \quad \text{ for all } x, y\in\mathcal{M}. 
\end{align*}
We have that $\widehat{\cL}\in \cM'\cap \cM_2$ and $\widehat{\cL}$ is related to the Fourier multipliers $\{\widehat{\Phi}_t\}_{t\geq 0}$ as $\displaystyle \widehat{\cL}=\lim_{t\to 0} \frac{\lambda^{-1/2} e_2- \widehat{\Phi}_t}{t}$.

\begin{proposition}\label{prop:generator}
    Suppose that $\{\Phi_t\}_{t\geq 0}$ is a bimodule quantum Markov semigroup for $\lambda$-extension $\cN\subset \cM$ and $\cL$ is its generator.
    Then
    \begin{enumerate}[(1)]
        \item $\bE_{\cM}(e_2e_1\widehat{\cL}e_1e_2)=0$. Pictorially, $\vcenter{\hbox{\begin{tikzpicture}[scale=0.6]
        \draw [blue] (0.2, -0.8) --(0.2, 0.8);
            \draw [fill=white] (-0.5, -0.4) rectangle (0.5, 0.4);
            \node at (0, 0) {\tiny $\widehat{\mathcal{L}}$};
            \draw [blue] (-0.2, 0.4).. controls +(0, 0.35) and +(0, 0.35)..(-0.8, 0.4)--(-0.8, -0.4);
            \draw [blue] (-0.2, -0.4).. controls +(0, -0.35) and +(0, -0.35)..(-0.8, -0.4);
        \end{tikzpicture}}}=0$.
        \item $\widehat{\cL}^*=\widehat{\cL}$.
        \item $ -(1-e_2)\widehat{\cL}(1-e_2)\geq 0$.
    \end{enumerate}
\end{proposition}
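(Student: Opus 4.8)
The plan is to deduce all three statements by differentiating at $t=0$ the corresponding properties of the Fourier multipliers $\widehat{\Phi}_t$, using three facts already in place: by Proposition \ref{prop:uniform} the semigroup is uniformly continuous and $\cL$ is bounded, so on the finite-dimensional space $\cM'\cap\cM_2$ the curve $t\mapsto\widehat{\Phi}_t$ is norm-differentiable and the limit $\widehat{\cL}=\lim_{t\to 0}\frac{\lambda^{-1/2}e_2-\widehat{\Phi}_t}{t}$ recorded just before the statement exists; each $\Phi_t$ is a bimodule quantum channel, so by the equivalence between complete positivity and positivity of the Fourier multiplier one has $\widehat{\Phi}_t\geq 0$; and each $\Phi_t$ is unital with $\Phi_0=\mathrm{id}$, whence $\widehat{\Phi}_0=\widehat{\mathrm{id}}=\lambda^{-1/2}e_2$.

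For (1) I would substitute $x=1$ in \eqref{eq:multiplierphi}: unitality of $\Phi_t$ gives $\bE_{\cM}(e_2e_1\widehat{\Phi}_te_1e_2)=\lambda^{5/2}$ for every $t\geq 0$, and the same identity at $t=0$ follows from $\Phi_0(1)=1$. Since $y\mapsto\bE_{\cM}(e_2e_1ye_1e_2)$ is linear and bounded on the finite-dimensional space $\cM'\cap\cM_2$, it commutes with the limit defining $\widehat{\cL}$, so
\begin{align*}
\bE_{\cM}(e_2e_1\widehat{\cL}e_1e_2)
=\lim_{t\to 0}\frac1t\Bigl(\bE_{\cM}(e_2e_1\widehat{\Phi}_0e_1e_2)-\bE_{\cM}(e_2e_1\widehat{\Phi}_te_1e_2)\Bigr)
=\lim_{t\to 0}\frac{\lambda^{5/2}-\lambda^{5/2}}{t}=0.
\end{align*}

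For (2), each positive element $\widehat{\Phi}_t$ is self-adjoint and $e_2^{*}=e_2$, so $\tfrac1t(\lambda^{-1/2}e_2-\widehat{\Phi}_t)$ is self-adjoint for all $t>0$; as the self-adjoint part of $\cM'\cap\cM_2$ is norm-closed, the limit $\widehat{\cL}$ is self-adjoint. For (3) I would fix $\xi\in L^2(\cM_2,\tau_2)$, put $\eta=(1-e_2)\xi$ so that $e_2\eta=0$, and compute
\begin{align*}
\bigl\langle -(1-e_2)\widehat{\cL}(1-e_2)\,\xi,\,\xi\bigr\rangle
=-\langle\widehat{\cL}\eta,\eta\rangle
=\lim_{t\to 0^{+}}\frac1t\bigl(\langle\widehat{\Phi}_t\eta,\eta\rangle-\lambda^{-1/2}\langle e_2\eta,\eta\rangle\bigr)
=\lim_{t\to 0^{+}}\frac1t\langle\widehat{\Phi}_t\eta,\eta\rangle .
\end{align*}
Here $\frac1t\langle\widehat{\Phi}_t\eta,\eta\rangle\geq 0$ for each $t>0$ because $\widehat{\Phi}_t\geq 0$, so the limit is $\geq 0$; since $\xi$ was arbitrary, $-(1-e_2)\widehat{\cL}(1-e_2)\geq 0$.

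I do not anticipate a genuine obstacle: the only care needed is the routine justification that the bounded linear operations above commute with the $t\to 0$ limit (immediate on the finite-dimensional $\cM'\cap\cM_2$ once $t\mapsto\widehat{\Phi}_t$ is known differentiable) together with sign bookkeeping in $\widehat{\cL}=\lim_{t\to 0}(\lambda^{-1/2}e_2-\widehat{\Phi}_t)/t$. Conceptually, (1) is the infinitesimal form of unitality of $\Phi_t$, (2) the infinitesimal form of the positivity (hence self-adjointness) of $\widehat{\Phi}_t$, and (3) the infinitesimal form of the fact that $\widehat{\Phi}_t\geq 0$ while its value $\widehat{\Phi}_0=\lambda^{-1/2}e_2$ at $t=0$ annihilates the range of $1-e_2$.
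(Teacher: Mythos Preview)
Your proof is correct. Parts (1) and (2) are essentially the paper's argument: the paper also differentiates $\Phi_t(1)=1$ (routed through the auxiliary operator $h$ of Proposition~\ref{prop:uniform}) to get $\cL(1)=0$ and hence $\bE_{\cM}(e_2e_1\widehat{\cL}e_1e_2)=0$, and obtains self-adjointness of $\widehat{\cL}$ from the $*$-preservation $\cL(x)^*=\cL(x^*)$, which is equivalent to your ``limit of self-adjoints'' statement.

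For (3), your route is genuinely different and more direct. The paper first establishes \emph{conditional negativity} of $\cL$ on $\cM$ (i.e.\ $-\sum_{j,k} y_j^*\cL(x_j^*x_k)y_k\geq 0$ whenever $\sum_j x_jy_j=0$), then translates this into the multiplier inequality $-\bE_{\cM}(e_2\tilde x^*\widehat{\cL}\tilde x e_2)\geq 0$ for $\tilde x\in\cM_1$ with $e_2\tilde xe_2=0$, and finally uses a Pimsner--Popa basis for $\cM\subset\cM_1$ to upgrade this to positivity of $-(1-e_2)\widehat{\cL}(1-e_2)$ as an element of $\cM_2$. You bypass all of this by working directly on $L^2(\cM_2,\tau_2)$: since $\widehat{\Phi}_0=\lambda^{-1/2}e_2$ annihilates $(1-e_2)L^2(\cM_2)$ and $\widehat{\Phi}_t\geq 0$, the difference quotient is manifestly nonnegative on that subspace. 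Your argument is shorter and avoids the basis manipulation; the paper's detour, on the other hand, makes the conditional negativity of $\cL$ explicit, which is the form reused in the equivalences of Proposition~\ref{prop:positvity1}.
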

\begin{proof}
$(1)$: 
Let $h$ be the generator of $\{y(t)\}_{t\geq 0}$ defined in Equation \eqref{eq:tracephi}.
We see that $h\in \cN'\cap \cM$ and $y(t)=e^{th}$, $t\geq 0$.
Note that $\Phi_t=e^{-t\cL}$.
We have that $hx\Omega=-\cL(x)\Omega$ for any $x\in \cM$.
By the fact that $\Phi_t(1)=1$, we have that $e^{-t\cL(1)}=1$.
This implies that $\cL(1)=0$ and $h\Omega=0$.
Hence $he_1=0$.
Reformulating it in terms of the Fourier multiplier, we obtain that $\bE_{\cM}(e_2e_1\widehat{\cL}e_1e_2)=0$.

$(2)$:
By the fact that $\Phi_t(x)^*=\Phi_t(x^*)$, we have that $\cL(x)^*=\cL(x^*)$ for all $x\in \cM$.
This implies that $\widehat{\cL}^*=\widehat{\cL}$.

$(3)$: Suppose that for any $n\in\bN$ and $x_j, y_k\in \cM$ with $\displaystyle \sum_{j=1}^n x_jy_j=0$.
Then 
\begin{align*}
  -\sum_{j,k=1}^n y_j^*\cL(x_j^*x_k)y_k =&
  \sum_{j,k=1}^n \lim_{t\to 0} \frac{y_j^*\Phi_t(x_j^*x_k)y_k-y_j^*x_j^*x_ky_k}{t}  \\
  = & \lim_{t\to 0} \sum_{j,k=1}^n 
 \frac{y_j^*\Phi_t(x_j^*x_k)y_k}{t}  \geq 0.
\end{align*}
Reformulating it in terms of the Fourier multiplier, we have that 
\begin{align*}
   0\leq & -\sum_{j,k=1}^n y_j^*\bE_{\cM}(e_2e_1x_j^*\widehat{\cL}x_ke_1e_2)y_k \\
= & -\bE_{\cM}\left(e_2\sum_{j=1}^ny_j^*e_1x_j^*\widehat{\cL}\sum_{k=1}^nx_ke_1y_ke_2\right).
\end{align*}
This implies that for any $\tilde{x}\in \cM_1$ with $\bE_{\cM}(\tilde{x})=0$,  
\begin{align}\label{eq:qm1}
-\bE_{\cM}(e_2 \tilde{x}^* \widehat{\cL} \tilde{x} e_2) \geq 0.
\end{align}
Since $\bE_{\cM}(\tilde{x})=0$ is equivalent to $e_2\tilde{x}e_2=0$, Equation \eqref{eq:qm1} can be reformulated as
\begin{align}\label{eq:qm2}
 -\bE_{\cM}(e_2 \tilde{x}^* (1-e_2) \widehat{\cL} (1-e_2)\tilde{x} e_2) \geq 0.
\end{align}
For any $\tilde{y}\in \cM_1$, we have that $\bE_{\cM}(\tilde{y}-\bE_{\cM}(\tilde{y}))=0$.
Replacing $\tilde{x}$ by $\tilde{y}-\bE_{\cM}(\tilde{y})$ in Equation \eqref{eq:qm2}, we have that
\begin{align}\label{eq:qm3}
     -\bE_{\cM}(e_2 \tilde{y}^* (1-e_2) \widehat{\cL} (1-e_2)\tilde{y} e_2) \geq 0.
\end{align}
Then
\begin{align*}
    0\leq -\langle \bE_{\cM}(e_2 \tilde{y}^* (1-e_2) \widehat{\cL} (1-e_2)\tilde{y} e_2)\Omega_1, \Omega_1\rangle. 
\end{align*}
This implies that 
\begin{align}\label{eq:qm4}
-\tau_2(\tilde{y}^*e_2\tilde{y}(1-e_2) \widehat{\cL} (1-e_2))\geq 0.
\end{align}
Let $\tilde{\eta}_1, \ldots, \tilde{\eta}_m$ be a Pimsner-Popa basis for $\cM\subset \cM_1$.
Then for any $\tilde{x}\in \cM_2$, we have
\begin{align*}
    \tilde{x}=\sum_{k=1}^m x_ke_2 \tilde{\eta}_k, \quad x_{k}\in \cM_1.
\end{align*}
This implies that 
\begin{align*}
  \tilde{x}\tilde{x}^* =  \sum_{k=1}^m x_ke_2 \tilde{\eta}_k\tilde{\eta}_k^* e_2 x_k^*.
\end{align*}
Finally, we see that Equation \eqref{eq:qm4} implies that $-(1-e_2) \widehat{\cL} (1-e_2)\geq 0$.
\end{proof}

Suppose $\cL:\cM\to \cM$ is bounded bimodule map.
We say $\cL$ is conditional negative if  $\displaystyle -\sum_{j,k=1}^n y_j^*\cL(x_j^*x_k)y_k \geq 0$ for any $\displaystyle \sum_{j=1}^n x_j y_j=0$, where $x_1, \ldots, x_n$, $y_1, \ldots, y_n\in \cM$.
This is equivalent to the following condition:
\begin{align*}
    -y^*\cL_{n}(x^*x)y \geq 0, \quad x, y\in \cM\otimes M_n(\bC) \text{ with } xy=0,
\end{align*}
where $\cL_n=\cL\otimes I_n$ on $\cM\otimes M_n(\bC)$ for $n \in \bN$. 
The gradient form $\Gamma$ of $\{\Phi_t=e^{-t\mathcal{L}}\}_{t\geq 0}$ is defined as 
\begin{align*}
    \Gamma(x, y)=\frac{1}{2} (y^*\cL(x)+\cL(y)^*x-\cL(y^*x)), \quad x, y\in \cM.
\end{align*}
Let
\begin{align*}
    \Gamma_n(x, y)=\frac{1}{2} (y^*\cL_n(x)+\cL_n(y)^*x-\cL_n(y^*x)), \quad x, y\in \cM\otimes M_n(\bC).
\end{align*}
We say $\cL$ is completely dissipative if for any $n \in \bN$ and $x\in \cM\otimes M_n(\bC)$, $\Gamma_n(x, x) \geq 0$, i.e.
\begin{align*}
    x^*\cL_n(x)+\cL_n(x)^*x -\cL_n(x^*x) \geq 0.
\end{align*}

\begin{proposition}\label{prop:positvity1}
Suppose $\cL:\cM\to \cM$ is a bounded bimodule map with $\cL(1)=0$ and $\widehat{\cL}^*=\widehat{\cL}$. 
Then the following are equivalent:
\begin{enumerate}[(1)]
\item $\Phi_t = e^{-t\mathcal{L}}$ is completely positive for $t \geq 0$ and $\Phi_t(1)=1$;
\item $\cL$ is completely dissipative;
\item $\cL$ is conditionally negative;
\item $-(1-e_2)\widehat{\cL}(1-e_2)\geq 0$.
\end{enumerate}
\end{proposition}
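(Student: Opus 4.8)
The plan is to treat $(1)\Leftrightarrow(2)\Leftrightarrow(3)$ --- essentially Lindblad's characterisation of generators of uniformly continuous completely positive unital semigroups --- by elementary arguments, except for the one deep implication $(3)\Rightarrow(1)$, and then to prove the genuinely new equivalence $(3)\Leftrightarrow(4)$, which is the only place the planar-algebra structure enters. Note at the outset that $\cL$ is bounded by Proposition~\ref{prop:uniform} and that $\cL(1)=0$ gives $\Phi_t(1)=e^{-t\cL}(1)=1$, so the unitality clause in $(1)$ is automatic.

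For $(1)\Rightarrow(2)$ I would differentiate the complete Kadison--Schwarz inequality: each amplification $\Phi_t\otimes\mathrm{id}_n$ is completely positive and unital, so $f(t):=(\Phi_t\otimes\mathrm{id}_n)(x^*x)-\bigl((\Phi_t\otimes\mathrm{id}_n)(x)\bigr)^*(\Phi_t\otimes\mathrm{id}_n)(x)$ is a positive-operator-valued function with $f(0)=0$; its right derivative at $0$ equals $x^*\cL_n(x)+\cL_n(x)^*x-\cL_n(x^*x)=2\Gamma_n(x,x)$, which is therefore $\ge 0$. For $(2)\Rightarrow(3)$: given $n$ and $x,y\in\cM\otimes M_n(\bC)$ with $xy=0$, apply complete dissipativity to $a=x$, obtaining $\cL_n(x^*x)\le x^*\cL_n(x)+\cL_n(x)^*x$, and conjugate by $y$; since $xy=0$ forces $y^*x^*=0$ as well, the right-hand side vanishes, leaving $-y^*\cL_n(x^*x)y\ge 0$, i.e.\ conditional negativity. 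For $(3)\Rightarrow(2)$ I would run the $2\times2$ block trick in $M_{2n}(\cM)$: with $x=\left(\begin{smallmatrix}1&a\\ 0&0\end{smallmatrix}\right)$ and $y=\left(\begin{smallmatrix}a&0\\ -1&0\end{smallmatrix}\right)$ one has $xy=0$, and --- using $\cL(1)=0$ and $\cL(a^*)=\cL(a)^*$, the symmetry encoded by $\widehat\cL^*=\widehat\cL$ --- a direct computation gives $-y^*\cL_{2n}(x^*x)y=\left(\begin{smallmatrix}2\Gamma_n(a,a)&0\\ 0&0\end{smallmatrix}\right)$, so conditional negativity at level $2n$ yields $\Gamma_n(a,a)\ge 0$. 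Finally $(3)\Rightarrow(1)$ is the substantive half: conditional negativity of the bounded, symmetry-preserving generator $\cL$ with $\cL(1)=0$ is precisely the hypothesis of Lindblad's theorem \cite{Lin76}, which produces the completely positive unital semigroup $e^{-t\cL}$; the $\cN$-bimodule property is inherited from $\cL$ term by term in the exponential series.

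The heart of the proposition is $(3)\Leftrightarrow(4)$, and my plan here is to reopen the computation in the proof of Proposition~\ref{prop:generator}, part (3), and observe that each step is in fact an equivalence. Conditional negativity of $\cL$ is the same as $-\bE_{\cM}(e_2\tilde x^*\widehat\cL\tilde x e_2)\ge 0$ for all $\tilde x\in\cM_1$ with $\bE_{\cM}(\tilde x)=0$. Since $e_2\tilde x e_2=\bE_{\cM}(\tilde x)e_2$, the constraint $\bE_{\cM}(\tilde x)=0$ is exactly $e_2\tilde x e_2=0$, whereupon the $e_2$-corners of $\widehat\cL$ are annihilated and the inequality becomes $-\bE_{\cM}(e_2\tilde x^*(1-e_2)\widehat\cL(1-e_2)\tilde x e_2)\ge 0$; substituting $\tilde y-\bE_{\cM}(\tilde y)$ for $\tilde x$ --- which changes nothing because $(1-e_2)\bE_{\cM}(\tilde y)e_2=0$ --- removes the constraint; pairing against $\Omega_1$ and using the trace converts it to $-\tau_2(\tilde y^*e_2\tilde y(1-e_2)\widehat\cL(1-e_2))\ge 0$ for all $\tilde y\in\cM_1$; and expanding an arbitrary positive element of $\cM_2$ through a Pimsner--Popa basis for $\cM\subset\cM_1$ upgrades this to the operator inequality $-(1-e_2)\widehat\cL(1-e_2)\ge 0$, which is $(4)$. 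Reading the same chain backwards gives $(4)\Rightarrow(3)$.

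I expect $(3)\Rightarrow(1)$ to be the only real obstacle: reconstructing complete positivity of the whole semigroup from conditional negativity of its generator is the genuinely non-elementary (GKSL/Lindblad) step, while everything else is a one-line differentiation, a two-line matrix manipulation, or careful bookkeeping of the $e_2$-corners. My default will be to cite Lindblad for that implication rather than reprove the dilation theorem, keeping the rest of the argument self-contained.
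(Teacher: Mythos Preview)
Your proposal is correct and tracks the paper closely for $(1)\Rightarrow(2)$, $(2)\Leftrightarrow(3)$, and $(3)\Leftrightarrow(4)$; the $2\times2$ block trick and the $e_2$-corner bookkeeping are exactly what the paper does.

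The one genuine difference is how the loop is closed back to $(1)$. You cite Lindblad for $(3)\Rightarrow(1)$, whereas the paper proves $(2)\Rightarrow(1)$ directly and elementarily: by Russo--Dye the norm of $I-t\cL_n$ is attained on unitaries, and for a unitary $u$ dissipativity gives $(u-t\cL_n(u))^*(u-t\cL_n(u))=1-2t\,\Gamma_n(u,u)+t^2\cL_n(u)^*\cL_n(u)\le 1+t^2\|\cL\|^2$, so $\|I-t\cL_n\|\le 1+O(t^2)$; hence $e^{-t\cL_n}$ is contractive, and being unital it is positive for every $n$, i.e.\ $\Phi_t$ is completely positive. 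This keeps the entire proof self-contained and avoids importing the GKSL structure theorem as a black box. Your route is quicker if one is willing to quote the literature; the paper's buys independence from it.
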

\begin{proof}
By Proposition \ref{prop:generator}, we see that (1)$\Rightarrow$(3)$\Rightarrow$(4) is true.

(3)$\Rightarrow$(2):
Suppose that $x\in \cM\otimes M_n(\bC)$.
Let $\tilde{x}=\begin{pmatrix}
    x & 1 \\0 & 0
\end{pmatrix}$
and $\tilde{y}=\begin{pmatrix}
    -1 & 0 \\ x & 0
\end{pmatrix}$.
Then we have that $\tilde{y}^*\cL_{2n}(\tilde{x}^*\tilde{x})\tilde{y} \geq 0$.
This implies that $\tilde{y}^*\cL_n(\tilde{x})+\cL_n(\tilde{y})^*\tilde{x}-\cL_n(\tilde{y}^*\tilde{x}))\geq 0$.
Hence we see that $\cL$ is completely dissipative.

(2)$\Rightarrow$(3):
Suppose $x_1, \ldots, x_n$, $y_1, \ldots, y_n\in \cM$ with $\displaystyle \sum_{j=1}^nx_jy_j=0$. 
Let $x=(x_1, \ldots, x_n)$ and $y=(y_1, \ldots, y_n)$.
Then $x^*\cL_n(x)+\cL_n(x)^*x-\cL_n(x^*x) \geq 0$ and
\begin{align*}
    y^*x^*\cL_n(x)y+y^*\cL_n(x)^*xy-y^*\cL_n(x^*x)y \geq 0. 
\end{align*}
This implies that $\cL$ is conditional negative.

(2)$\Rightarrow$(1): 
Note that $\displaystyle \|I+t\cL\|=\sup_{u\in \mathscr{U}(\cM)}\|u+t\cL(u)\|$, where $\mathscr{U}(\cM)$ is the set of all unitaries in $\cM$.
Then by the fact that $\cL$ is completely dissipative, we have that $\|I+t\cL\|\leq 1+t^2\|\cL\|^2$.
This implies that $\displaystyle \lim_{t\to 0 }\frac{\|I+t\cL\|-1}{t}\leq 0$.
Hence $\Phi_t$ is contractive.
By the fact that $\cL(1)=0$, we see that $\Phi_t(1)=1$.
This implies that $\Phi_t$ is positive.

(4)$\Rightarrow$(3):
The assumption implies that Equation \eqref{eq:qm2} is true.
The Equation \eqref{eq:qm1} is true. 
Hence $\cL$ is conditionally negative.
\end{proof}

\begin{remark}
The equivalence of (1), (2) is discussed in \cite{Lin76}.  
\end{remark}

\begin{theorem}\label{thm:generatorpos}
Suppose $u\in \cM'\cap \cM_2$ is self-adjoint and satisfies the following two conditions: 
\begin{enumerate}[(1)]
        \item $\bE_{\cM}(e_2e_1ue_1e_2)=0$.
        \item $ -(1-e_2)u(1-e_2)\geq 0$.
\end{enumerate}
Then there exists a bimodule quantum Markov semigroup $\{\Phi_t\}_{t\geq 0}$ with generator $\cL$ such that $u=\widehat{\cL}$.
\end{theorem}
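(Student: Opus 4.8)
The plan is to invert the dictionary between bimodule maps and their Fourier multipliers set up in Section 3: from $u$ I manufacture a candidate Lindbladian $\cL$ via the multiplier formula, and then appeal to Proposition \ref{prop:positvity1} to conclude that $e^{-t\cL}$ is a bimodule quantum Markov semigroup.

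First I would define $\cL\colon\cM\to\cM$ by
\[
\cL(x)=\lambda^{-5/2}\,\bE_{\cM}(e_2e_1\,u\,x\,e_1e_2),\qquad x\in\cM,
\]
mimicking Equation \eqref{eq:multiplierphi}. Since $u\in\cM'\cap\cM_2$, the argument of $\bE_{\cM}$ lies in $\cM_2$, so $\cL(x)\in\cM$; the map is bounded and normal because $\bE_{\cM}$ and one-sided multiplications by fixed elements of $\cM_2$ are, and it is an $\cN$-bimodule map because $\cN$ commutes with $e_1$ and $e_2$ while $\bE_{\cM}$ is an $\cM$-bimodule map. Next I would check that $\widehat{\cL}=u$: by the uniqueness of the element of $\cM'\cap\cM_2$ representing a bounded bimodule map through the bilinear form \eqref{eqn:: bilinear form induced by Fourier multiplier}, combined with \eqref{eq:multiplierphi}, the Fourier multiplier of this $\cL$ is exactly $u$. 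From $u=u^*$ this gives $\widehat{\cL}^*=\widehat{\cL}$; hypothesis (1) on $u$ gives $\cL(1)=\lambda^{-5/2}\bE_{\cM}(e_2e_1ue_1e_2)=0$; and hypothesis (2) reads precisely $-(1-e_2)\widehat{\cL}(1-e_2)\geq 0$.

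With these facts in hand I would invoke Proposition \ref{prop:positvity1}: $\cL$ is a bounded bimodule map with $\cL(1)=0$ and $\widehat{\cL}=\widehat{\cL}^*$, and its condition (4) holds, so $\Phi_t:=e^{-t\cL}$ is completely positive with $\Phi_t(1)=1$ for all $t\geq 0$. Because $\cL$ is bounded, $\Phi_t=\sum_{n\geq 0}\tfrac{(-t)^n}{n!}\cL^n$ converges in operator norm; each $\cL^n$ is normal and $\cN$-bimodule (the bimodule property is stable under composition), hence so is each $\Phi_t$, and $\{\Phi_t\}$ satisfies $\Phi_s\Phi_t=\Phi_{s+t}$, $\Phi_0=\id$, and is uniformly continuous. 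Thus $\{\Phi_t\}_{t\geq 0}$ is a bimodule quantum Markov semigroup whose generator is $\cL$, with $\widehat{\cL}=u$, as required.

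The step I expect to require the most care is confirming that the $\cL$ built from the multiplier formula genuinely has $u$ as its Fourier multiplier and genuinely is the infinitesimal generator of $e^{-t\cL}$ in the sign/normalization convention $\widehat{\cL}=\lim_{t\to 0}(\lambda^{-1/2}e_2-\widehat{\Phi}_t)/t$ used in Section 5; this is essentially bookkeeping resting on the bijectivity of $\Phi\mapsto\widehat\Phi$ (nondegeneracy of the bilinear form) and linearity of the multiplier assignment, but it is where a sign or constant slip would hide. Everything concerning positivity and unitality is then a direct citation of Proposition \ref{prop:positvity1}.
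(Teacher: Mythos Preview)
Your proposal is correct and follows exactly the same approach as the paper's proof: define $\cL(x)=\lambda^{-5/2}\bE_{\cM}(e_2e_1 u x e_1e_2)$, identify $\widehat{\cL}=u$, and invoke Proposition \ref{prop:positvity1} (specifically the implication $(4)\Rightarrow(1)$) to conclude that $\Phi_t=e^{-t\cL}$ is a bimodule quantum Markov semigroup. The paper's proof is considerably terser—essentially just the definition and the citation—so your additional bookkeeping (verifying $\widehat{\cL}=u$, normality, the $\cN$-bimodule property, and the sign/normalization check) only makes the argument more complete rather than different.
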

\begin{proof}
We define a map $\cL:\cM\to \cM$ such that 
\begin{align*}
    \cL(x)=\lambda^{-5/2} \bE_{\cM}(e_2e_1 u x e_1e_2), \quad x\in \cM.
\end{align*}
Then $\Phi_t=e^{-t\cL}$ is a bimodule semigroup.
By Proposition \ref{prop:positvity1}, we see that $\Phi_t$ is completely positive.
Then $\{\Phi_t\}_{t\geq 0}$ is a bimodule quantum Markov semigroup.
\end{proof}

We define the two components of $\widehat{\cL}$ as follows:
\begin{align*}
    \widehat{\cL}_0= & -(1-e_2)\widehat{\cL}(1-e_2). \\
    \widehat{\cL}_1 =& e_2\widehat{\cL}(1-e_2).
\end{align*}
By the fact that $\lambda^{-3/2}\bE_{\cM}(e_2e_1\widehat{\cL}e_1e_2)=\lambda\vcenter{\hbox{\begin{tikzpicture}[scale=0.6]
        \draw [blue] (0.2, -0.8) --(0.2, 0.8);
            \draw [fill=white] (-0.5, -0.4) rectangle (0.5, 0.4);
            \node at (0, 0) {\tiny $\widehat{\mathcal{L}}$};
            \draw [blue] (-0.2, 0.4).. controls +(0, 0.35) and +(0, 0.35)..(-0.8, 0.4)--(-0.8, -0.4);
            \draw [blue] (-0.2, -0.4).. controls +(0, -0.35) and +(0, -0.35)..(-0.8, -0.4);
        \end{tikzpicture}}}=0$, we have that 
\begin{align*}
 \lambda^2 \vcenter{\hbox{\begin{tikzpicture}[scale=0.6]
        \draw [blue] (0.7, -0.8) --(0.7, 0.8);
            \draw [fill=white] (-0.5, -0.4) rectangle (0.5, 0.4);
            \node at (0, 0) {\tiny $\widehat{\mathcal{L}}$};
            \draw [blue] (-0.2, 0.4).. controls +(0, 0.35) and +(0, 0.35)..(0.2, 0.4);
            \draw [blue] (-0.2, -0.4).. controls +(0, -0.35) and +(0, -0.35)..(0.2, -0.4);
        \end{tikzpicture}}} 
                + \lambda \vcenter{\hbox{\begin{tikzpicture}[scale=0.6]
            \draw [blue] (-0.2, 0.4).. controls +(0, 0.35) and +(0, 0.35)..(-0.8, 0.4)--(-0.8, -0.8);
            \draw [blue] (0.2, 0.8)--(0.2, -0.4).. controls +(0, -0.35) and +(0, -0.35)..(-0.2, -0.4);
            \draw [fill=white] (-0.5, -0.4) rectangle (0.5, 0.4);
            \node at (0, 0) {\tiny $\widehat{\mathcal{L}}_1$};
        \end{tikzpicture}}}
                +  \lambda \vcenter{\hbox{\begin{tikzpicture}[scale=0.6] 
            \draw [blue] (-0.2, -0.4).. controls +(0, -0.35) and +(0, -0.35)..(-0.8, -0.4)--(-0.8, 0.8);
            \draw [blue] (0.2, -0.8)--(0.2, 0.4).. controls +(0, 0.35) and +(0, 0.35)..(-0.2, 0.4);
            \draw [fill=white] (-0.5, -0.4) rectangle (0.5, 0.4);
            \node at (0, 0) {\tiny $\widehat{\mathcal{L}}_1^*$};
        \end{tikzpicture}}}
  - \vcenter{\hbox{\begin{tikzpicture}[scale=0.6]
        \draw [blue] (0.2, -0.8) --(0.2, 0.8);
            \draw [fill=white] (-0.5, -0.4) rectangle (0.5, 0.4);
            \node at (0, 0) {\tiny $\widehat{\mathcal{L}}_0$};
            \draw [blue] (-0.2, 0.4).. controls +(0, 0.35) and +(0, 0.35)..(-0.8, 0.4)--(-0.8, -0.4);
            \draw [blue] (-0.2, -0.4).. controls +(0, -0.35) and +(0, -0.35)..(-0.8, -0.4);
        \end{tikzpicture}}}=0, 
\end{align*}
i.e.
\begin{align*}
\lambda\vcenter{\hbox{\begin{tikzpicture}[scale=0.6]
        \draw [blue] (0.2, -0.8) --(0.2, 0.8);
            \draw [fill=white] (-0.5, -0.4) rectangle (0.5, 0.4);
            \node at (0, 0) {\tiny $\widehat{\mathcal{L}}_0$};
            \draw [blue] (-0.2, 0.4).. controls +(0, 0.35) and +(0, 0.35)..(-0.8, 0.4)--(-0.8, -0.4);
            \draw [blue] (-0.2, -0.4).. controls +(0, -0.35) and +(0, -0.35)..(-0.8, -0.4);
            \draw [blue] (0.8, -0.8) --(0.8, 0.8);
        \end{tikzpicture}}}
        =&\lambda^{1/2}\bE_{\cM}(e_2 \widehat{\cL} e_2)+\lambda^{-1/2} \bE_{\cM}(e_2e_1\widehat{\cL}_1^*)+ \lambda^{-1/2} \bE_{\cM}(\widehat{\cL}_1e_1e_2),  \\
        =&\lambda^{1/2}\bE_{\cM}(e_2 \widehat{\cL} e_2) + \lambda \bE_{\cM}(\mathfrak{F}^{-1}(\widehat{\cL}_1))+ \lambda \bE_{\cM}(\mathfrak{F}(\widehat{\cL}_1^*)).
\end{align*}
where $\bE_{\cM}(\widehat{\cL}_1e_1e_2)\in \cN'\cap \cM$.

In what follows we shall impose that $Z(\cM) = \mathbb{C}$, i.e. $\cM$ is a factor. 
This condition is equivalent to $e_2$ being a minimal projection in $\cM'\cap \cM_2$. 
Let $\mathscr{S}=z_{e_2}\cM'\cap \cM_2$, where $z_{e_2}$ is the central carrier of the Jones projection $e_2$ in $\cM'\cap \cM_2$.
We have that $\mathscr{S}$ is isomorphic to a matrix algebra.
When $e_2\widehat{\cL}=\widehat{\cL}e_2$, we have that $\widehat{\cL}_1=0$.
In particular, when $\mathcal{N}\subset \cM$ is irreducible, i.e. $\cN'\cap \cM=\bC$, we have that $\widehat{\cL}_1=0$.
In general, we have that $\widehat{\cL}_1=e_2 \widehat{\cL} z_{e_2}$. 

The spectral decomposition of $\widehat{\cL}_0$ is given as follows
\begin{align*}
  \widehat{\cL}_0=\sum_{j\in \sI_0\cup\sI_1}\omega_{j} p_j ,
\end{align*}
where $\{p_j\}_{j\in \sI_0}$ is an orthogonal family of minimal projections in $(1-e_2)z_{e_2} \cM'\cap \cM_2$ and $\{p_j\}_{j\in \sI_1}$ is an orthogonal family of minimal projections in $(1-z_{e_2} )\cM'\cap \cM_2$.
We denote by $\sI=\sI_0\cup \sI_1$ for convenience.
For each $j\in \sI_0$, there exists $v_j\in \cN'\cap \cM$ such that 
\begin{align*}
 \vcenter{\hbox{\begin{tikzpicture}[scale=0.6]
        \draw [blue] (0.2, -0.8) --(0.2, 0.8);
        \draw [blue] (-0.2, -0.8) --(-0.2, 0.8);
        \draw [fill=white] (-0.5, -0.4) rectangle (0.5, 0.4);
    \node at (0, 0) {\tiny $p_j$};
        \end{tikzpicture}}}  
    =\vcenter{\hbox{\begin{tikzpicture}[scale=0.65]
    \begin{scope}[shift={(0,1.5)}]
    \draw [blue] (-0.5, 0.8)--(-0.5, 0) .. controls +(0, -0.6) and +(0,-0.6).. (0.5, 0)--(0.5, 0.8);    
\begin{scope}[shift={(0.5, 0.3)}]
\draw [fill=white] (-0.3, -0.3) rectangle (0.3, 0.3);
\node at (0, 0) {\tiny $v_j$};
\end{scope}
    \end{scope}
\draw [blue] (-0.5, -0.8)--(-0.5, 0) .. controls +(0, 0.6) and +(0,0.6).. (0.5, 0)--(0.5, -0.8);
\begin{scope}[shift={(0.5, -0.3)}]
\draw [fill=white] (-0.3, -0.3) rectangle (0.3, 0.3);
\node at (0, 0) {\tiny $v_j*$};
\end{scope}
\end{tikzpicture}}},
\end{align*}
where the algebraic expression of the right hand side is 
\begin{align*}
 \lambda^{-1/2}  \Theta(v_j^*) e_2 \Theta(v_j)
  =\lambda^{-13/2}\bE_{\cM_1}(e_2e_1\bE_{\cM'}(v_j^*e_2e_1))e_2\bE_{\cM_1}(e_2e_1\bE_{\cM'}(v_je_2e_1)).
\end{align*} 
Note that we also denote $\Theta(v_j)$ by $\overline{v_j}\in \cM'\cap \cM_1$.
Then by the fact that $p_j e_2=0$ and $p_jp_{j'}=\delta_{j,j'}p_j$, we have that 
\begin{align*}
    \tau(v_j)=0, \quad \tau(v_jv_{j'}^*)= \lambda^{1/2}\delta_{j,j'}.
\end{align*}
Moreover, each $p_j$ implements a completely positive bimodule map on $\cM$ as $x*p_j = v^*_jxv_j$ for $x\in\cM$. 
\begin{proposition}\label{prop:generatorformula}
Suppose $\{\Phi_t\}_{t\geq 0}$ is a bimodule quantum Markov semigroup.
Then for $x\in \cM$,
\begin{align}
 \cL(x)=\lambda^{-1/2}\bE_{\cM}(e_2 \widehat{\cL} e_2)x+\lambda^{-3/2} \bE_{\cM}(e_2e_1\widehat{\cL}_1^*) x+ \lambda^{-3/2} x \bE_{\cM}(\widehat{\cL}_1 e_1 e_2)-  x*\widehat{\cL}_0.   
\end{align}
\end{proposition}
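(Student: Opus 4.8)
The plan is to reduce everything to the reconstruction formula \eqref{eq:multiplierphi}. Since $\{\Phi_t\}_{t\ge 0}$ is uniformly continuous by Proposition \ref{prop:uniform}, the generator $\cL$ is a bounded $\cN$-bimodule map, so \eqref{eq:multiplierphi} applies verbatim to $\cL$ and gives
\begin{align*}
\cL(x)=x*\widehat{\cL}=\lambda^{-5/2}\bE_{\cM}(e_2e_1\widehat{\cL}\,x\,e_1e_2),\qquad x\in\cM.
\end{align*}
First I would split $\widehat{\cL}$ into the four blocks cut out by $e_2$ and $1-e_2$. Using $\widehat{\cL}^{*}=\widehat{\cL}$ from Proposition \ref{prop:generator}(2), the two off-diagonal blocks are $\widehat{\cL}_1=e_2\widehat{\cL}(1-e_2)$ and its adjoint $\widehat{\cL}_1^{*}=(1-e_2)\widehat{\cL}e_2$, while the lower-right block is $-\widehat{\cL}_0$, so that
\begin{align*}
\widehat{\cL}=e_2\widehat{\cL}e_2+\widehat{\cL}_1+\widehat{\cL}_1^{*}-\widehat{\cL}_0 .
\end{align*}
Substituting this into the formula for $\cL(x)$ produces four terms, the last of which is by definition $-(x*\widehat{\cL}_0)$.

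For the other three terms the only ingredients needed are: $e_2e_1e_2=\lambda e_2$; the identity $e_2ye_2=\bE_{\cM}(y)e_2$ for $y\in\cM_1$, which together with $\bE_{\cM}(e_1)=\lambda$ gives $e_2xe_1e_2=\lambda xe_2$ for $x\in\cM$; the fact that $e_2\widehat{\cL}e_2,\widehat{\cL}_1,\widehat{\cL}_1^{*}\in\cM'\cap\cM_2$ commute with $x\in\cM$; and the $\cM$-bimodularity of $\bE_{\cM}$. For the $e_2\widehat{\cL}e_2$ term, write $w=e_2\widehat{\cL}e_2=e_2we_2$ and regroup
$e_2e_1\,w\,xe_1e_2=e_2e_1(e_2we_2)xe_1e_2=(e_2e_1e_2)\,w\,(e_2xe_1e_2)=\lambda^{2}\,xw$;
applying $\bE_{\cM}$ and the prefactor $\lambda^{-5/2}$ then yields $\lambda^{-1/2}\bE_{\cM}(e_2\widehat{\cL}e_2)x$. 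For the $\widehat{\cL}_1$ term, use $\widehat{\cL}_1=e_2\widehat{\cL}_1$ so that $e_2e_1\widehat{\cL}_1=(e_2e_1e_2)\widehat{\cL}_1=\lambda\widehat{\cL}_1$, then commute $\widehat{\cL}_1$ past $x$ and pull $x$ out of $\bE_{\cM}$ on the left, obtaining $\lambda^{-3/2}x\,\bE_{\cM}(\widehat{\cL}_1e_1e_2)$. For the $\widehat{\cL}_1^{*}$ term, commute $\widehat{\cL}_1^{*}$ past $x$ first, then use $\widehat{\cL}_1^{*}=\widehat{\cL}_1^{*}e_2$ so that $\widehat{\cL}_1^{*}e_1e_2=\widehat{\cL}_1^{*}(e_2e_1e_2)=\lambda\widehat{\cL}_1^{*}$, commute back, and pull $x$ out of $\bE_{\cM}$ on the right, obtaining $\lambda^{-3/2}\bE_{\cM}(e_2e_1\widehat{\cL}_1^{*})x$. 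Summing the four contributions gives precisely the asserted formula.

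There is no genuine obstacle here; it is a bookkeeping computation, and the only points to watch are which of $e_1,e_2,x,\widehat{\cL}_1,\widehat{\cL}_1^{*}$ commute with which, and the tally of $\lambda$-powers, where $e_2e_1e_2=\lambda e_2$, $e_2xe_1e_2=\lambda xe_2$ and $\bE_{\cM}(e_2)=\lambda$ each contribute one factor. One may note in passing that the standing assumption $Z(\cM)=\bC$ is not actually needed for this identity itself; it only makes $e_2\widehat{\cL}e_2$ (and the analogous boundary coefficients) scalar, and $\widehat{\cL}_1=e_2\widehat{\cL}z_{e_2}$, in the analysis that follows.
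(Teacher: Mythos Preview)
Your proof is correct and follows essentially the same approach as the paper: start from $\cL(x)=\lambda^{-5/2}\bE_{\cM}(e_2e_1\widehat{\cL}xe_1e_2)$, split $\widehat{\cL}$ into the four $e_2$-blocks, and simplify each using $e_2e_1e_2=\lambda e_2$, the commutation of $x\in\cM$ with $e_2$ and with $\widehat{\cL}\in\cM'\cap\cM_2$, and $\cM$-bimodularity of $\bE_{\cM}$. Your write-up is in fact somewhat more explicit than the paper's about which identities are being used at each step, and your closing remark that the assumption $Z(\cM)=\bC$ is not needed for this identity is accurate.
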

\begin{proof}
We have that 
\begin{align*}
    \cL(x)=&\lambda^{-5/2}\bE_{\cM}(e_2 e_1 \widehat{\cL} x e_1 e_2) \\
    =& \lambda^{-5/2}\bE_{\cM}(e_2 e_1 e_2\widehat{\cL} e_2 x e_1 e_2) + \lambda^{-5/2}\bE_{\cM}(e_2 e_1 e_2\widehat{\cL} (1-e_2)x e_1 e_2) +\\
    & + \lambda^{-5/2}\bE_{\cM}(e_2 e_1 (1-e_2)\widehat{\cL} x e_2e_1 e_2) 
    +\lambda^{-5/2}\bE_{\cM}(e_2 e_1 (1-e_2)\widehat{\cL} (1-e_2)x e_1 e_2)\\
=& \lambda^{-1/2}\bE_{\cM}(e_2  \widehat{\cL} e_2) x + \lambda^{-3/2} x \bE_{\cM}(\widehat{\cL}_1 e_1 e_2)  + \lambda^{-3/2} \bE_{\cM}(e_2e_1\widehat{\cL}_1)^*x-x*\widehat{\cL}_0. 
\end{align*}
This completes the computation.
\end{proof}

\begin{remark}
Suppose that $e_2\widehat{\cL}=\widehat{\cL}e_2$.
We have that $\widehat{\cL}_1=0$ and 
\begin{align*}
  \cL(x)=\lambda^{-1/2}\bE_{\cM}(e_2 \widehat{\cL} e_2)x  -  x*\widehat{\cL}_0 = (1*\widehat{\cL}_0) x - x*\widehat{\cL}_0.
\end{align*}
If the inclusion $\cN\subset \cM$ is irreducible, we see that $e_2\widehat{\cL}=\widehat{\cL}e_2$.
\end{remark}

\begin{remark}
    This is a bimodule version of Proposition 5 in \cite{Lin76}.
\end{remark}

\begin{definition}[Laplacian]
Suppose $\{\Phi_t\}_{t\geq 0}$ is a bimodule quantum Markov semigroup.
The bimodule map $\cL_a$ defined by
\begin{align}
    \cL_a(x)= 
        \frac{1}{2}(1*\widehat{\cL}_0) x
        + \frac{1}{2} x  (1*\widehat{\cL}_0)
        -x* \widehat{\cL}_0, \quad x\in \cM,
\end{align}
is called the Laplacian of $\{\Phi_t\}_{t\geq 0}$.
\end{definition}
\begin{remark}
    Define 
    \begin{align*}
\cL_w (x)
=&  i[x, \Im\bE_{\cM}(\mathfrak{F}^{-1}(\widehat{\cL}_1))], \\
\Im\bE_{\cM}(\mathfrak{F}^{-1}(\widehat{\cL}_1))
= & \frac{i}{2}\left(\bE_{\cM}(\mathfrak{F}^{-1}(\widehat{\cL}_1))^*- \bE_{\cM}(\mathfrak{F}^{-1}(\widehat{\cL}_1))\right).
    \end{align*}
    Then $\mathcal{L}$ is decomposed as 
    \begin{align*}
        \cL=\cL_a+ \cL_w,
    \end{align*}  
The planar algebraic presentation of $\widehat{\mathcal{L}}_a$ is as follows: 
\begin{align*}
    \widehat{\cL}_a=\vcenter{\hbox{\begin{tikzpicture}[scale=0.65]
    \begin{scope}[shift={(0,1.5)}]
    \draw [blue] (-0.5, 0.8)--(-0.5, 0) .. controls +(0, -0.6) and +(0,-0.6).. (0.5, 0)--(0.5, 0.8);    
\begin{scope}[shift={(0.5, 0.3)}]
\end{scope}
    \end{scope}
\draw [blue] (-0.5, -0.8)--(-0.5, 0) .. controls +(0, 0.6) and +(0,0.6).. (0.5, 0)--(0.5, -0.8);
\begin{scope}[shift={(0.5, -0.3)}]
\draw [fill=white] (-0.3, -0.3) rectangle (0.3, 0.3);
\node at (0, 0) {\tiny $\mathbf{y}$};
\end{scope}
\end{tikzpicture}}}
+
\vcenter{\hbox{\begin{tikzpicture}[scale=0.65]
    \begin{scope}[shift={(0,1.5)}]
    \draw [blue] (-0.5, 0.8)--(-0.5, 0) .. controls +(0, -0.6) and +(0,-0.6).. (0.5, 0)--(0.5, 0.8);    
\begin{scope}[shift={(0.5, 0.3)}]
\draw [fill=white] (-0.3, -0.3) rectangle (0.3, 0.3);
\node at (0, 0) {\tiny $\mathbf{y}$};
\end{scope}
    \end{scope}
\draw [blue] (-0.5, -0.8)--(-0.5, 0) .. controls +(0, 0.6) and +(0,0.6).. (0.5, 0)--(0.5, -0.8);
\begin{scope}[shift={(0.5, -0.3)}]
\end{scope}
\end{tikzpicture}}}
-  \vcenter{\hbox{\begin{tikzpicture}[scale=0.6]
        \draw [blue] (0.2, -0.8) --(0.2, 0.8);
        \draw [blue] (-0.2, -0.8) --(-0.2, 0.8);
        \draw [fill=white] (-0.5, -0.4) rectangle (0.5, 0.4);
    \node at (0, 0) {\tiny $\widehat{\cL}_0$};
        \end{tikzpicture}}} ,
\end{align*}
where $\displaystyle \mathbf{y}=\frac{1}{2} (1*\widehat{\cL}_0)$. 
The graphical representation of $\widehat{\mathcal{L}}_w$ is 
\begin{align*}
    \widehat{\cL}_w=i\vcenter{\hbox{\begin{tikzpicture}[scale=0.65]
    \begin{scope}[shift={(0,1.5)}]
    \draw [blue] (-0.5, 0.8)--(-0.5, 0) .. controls +(0, -0.6) and +(0,-0.6).. (0.5, 0)--(0.5, 0.8);    
\begin{scope}[shift={(0.5, 0.3)}]
\end{scope}
    \end{scope}
\draw [blue] (-0.5, -0.8)--(-0.5, 0) .. controls +(0, 0.6) and +(0,0.6).. (0.5, 0)--(0.5, -0.8);
\begin{scope}[shift={(0.5, -0.3)}]
\draw [fill=white] (-0.3, -0.3) rectangle (0.3, 0.3);
\node at (0, 0) {\tiny $\mathbf{x}$};
\end{scope}
\end{tikzpicture}}}
-
i \vcenter{\hbox{\begin{tikzpicture}[scale=0.65]
    \begin{scope}[shift={(0,1.5)}]
    \draw [blue] (-0.5, 0.8)--(-0.5, 0) .. controls +(0, -0.6) and +(0,-0.6).. (0.5, 0)--(0.5, 0.8);    
\begin{scope}[shift={(0.5, 0.3)}]
\draw [fill=white] (-0.3, -0.3) rectangle (0.3, 0.3);
\node at (0, 0) {\tiny $\mathbf{x}$};
\end{scope}
    \end{scope}
\draw [blue] (-0.5, -0.8)--(-0.5, 0) .. controls +(0, 0.6) and +(0,0.6).. (0.5, 0)--(0.5, -0.8);
\begin{scope}[shift={(0.5, -0.3)}]
\end{scope}
\end{tikzpicture}}},    
\end{align*}
where $\mathbf{x}=\Im\bE_{\cM}(\mathfrak{F}^{-1}(\widehat{\cL}_1))$.
In particular, the Laplacian is related to $\widehat{\cL}_0$ by
    \begin{align*}
        (1-e_2)\widehat{\cL_a}(1-e_2) = (1-e_2)\widehat{\cL}(1-e_2) = -\widehat{\cL}_0. 
    \end{align*}
\end{remark}

\begin{proposition}
Suppose that $\bE_{\cM}(\mathfrak{F}^{-1}(\widehat{\cL}_1))^*= \bE_{\cM}(\mathfrak{F}^{-1}(\widehat{\cL}_1))$.
Then $\cL_w=0$.
If $\cM$ is a factor, then we have $\cL_w=0$ if and only if $\bE_{\cM}(\mathfrak{F}^{-1}(\widehat{\cL}_1))^*= \bE_{\cM}(\mathfrak{F}^{-1}(\widehat{\cL}_1))$.
\end{proposition}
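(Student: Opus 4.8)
The plan is to split the equivalence into the (formal) forward direction and the (substantive) reverse direction, the latter using the factor hypothesis only to collapse a central element to a scalar and then killing that scalar with a trace computation.

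\medskip

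First I would dispose of the easy implication. Writing $A:=\bE_{\cM}(\mathfrak{F}^{-1}(\widehat{\cL}_1))\in\cM$, if $A^*=A$ then $\Im\bE_{\cM}(\mathfrak{F}^{-1}(\widehat{\cL}_1))=\tfrac{i}{2}(A^*-A)=0$, so $\cL_w(x)=i[x,0]=0$ for all $x\in\cM$. This needs nothing beyond the definition of $\cL_w$ and holds whether or not $\cM$ is a factor. For the converse, assume $\cM$ is a factor and $\cL_w=0$, and set $B:=\Im A=\tfrac{i}{2}(A^*-A)$; this is a self-adjoint element of $\cM$. The hypothesis $\cL_w=0$ says exactly that $[x,B]=0$ for every $x\in\cM$, i.e. $B\in\cM'\cap\cM=Z(\cM)$, and since $\cM$ is a factor this forces $B=c\cdot 1$ with $c\in\bR$. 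The whole problem is thus reduced to showing $c=0$, and for this it is enough to prove $\tau(A)=0$: indeed then $\tau(B)=\tfrac{i}{2}(\overline{\tau(A)}-\tau(A))=0$, while $\tau(c\cdot 1)=c$, so $c=0$ and $A=A^*$.

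\medskip

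To evaluate $\tau(A)$ I would use that $\bE_{\cM}$ is trace preserving and $\mathfrak{F}^{-1}(\widehat{\cL}_1)\in\cN'\cap\cM_1$, so $\tau(A)=\tau_1(\mathfrak{F}^{-1}(\widehat{\cL}_1))$. By the Plancherel identity, $\mathfrak{F}$ is a unitary from $L^2(\cN'\cap\cM_1,\tau_1)$ onto $L^2(\cM'\cap\cM_2,\tau_2)$ whose inverse coincides with its Hilbert-space adjoint (it is an isometry, verified in the excerpt to be invertible), and $\mathfrak{F}(1)=\lambda^{-3/2}\bE_{\cM'}(e_2e_1)=\lambda^{-1/2}e_2$ because $e_2\in\cM'\cap\cM_2$ and $\bE_{\cM'}(e_1)=\lambda$. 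Pairing $\mathfrak{F}^{-1}(\widehat{\cL}_1)$ against $1\in\cN'\cap\cM_1$ therefore gives
\[
\tau_1(\mathfrak{F}^{-1}(\widehat{\cL}_1))=\langle\mathfrak{F}^{-1}(\widehat{\cL}_1),1\rangle_{\tau_1}=\langle\widehat{\cL}_1,\mathfrak{F}(1)\rangle_{\tau_2}=\lambda^{-1/2}\tau_2(e_2\widehat{\cL}_1).
\]
Since $\widehat{\cL}_1=e_2\widehat{\cL}(1-e_2)$ we have $e_2\widehat{\cL}_1=\widehat{\cL}_1$, and by the trace property together with $(1-e_2)e_2=0$, $\tau_2(e_2\widehat{\cL}(1-e_2))=\tau_2((1-e_2)e_2\widehat{\cL})=0$. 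Hence $\tau(A)=0$, and combined with the reduction above this finishes the converse.

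\medskip

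The argument involves no real obstacle; the only steps that deserve a moment's care are the identification $\mathfrak{F}^{-1}=\mathfrak{F}^{*}$ (immediate from Plancherel plus bijectivity) and the computation $\mathfrak{F}(1)=\lambda^{-1/2}e_2$, after which everything is bookkeeping with the conditional expectations $\bE_{\cM}$, $\bE_{\cM_1}$, the trace property, and the orthogonality $e_2(1-e_2)=0$.
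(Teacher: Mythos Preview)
Your proof is correct and follows essentially the same route as the paper's: both arguments observe that $\cL_w=0$ forces $B:=\Im\bE_{\cM}(\mathfrak{F}^{-1}(\widehat{\cL}_1))$ into the center of $\cM$, hence to a scalar when $\cM$ is a factor, and then kill that scalar by showing $\tau(B)=0$. The paper simply asserts $\tau(\Im\bE_{\cM}(\mathfrak{F}^{-1}(\widehat{\cL}_1)))=0$ without justification, whereas you supply the computation via Plancherel and $\mathfrak{F}(1)=\lambda^{-1/2}e_2$, which is a clean way to see it; your added detail is accurate and the overall structure matches.
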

\begin{proof}
If $\bE_{\cM}(\mathfrak{F}^{-1}(\widehat{\cL}_1))^*= \bE_{\cM}(\mathfrak{F}^{-1}(\widehat{\cL}_1))$, we have that $\Im\bE_{\cM}(\mathfrak{F}^{-1}(\widehat{\cL}_1))=0$.
Hence $\cL_w=0$.

For the converse, note $\cL_w=0$ if and only if $\Im\bE_{\cM}(\mathfrak{F}^{-1}(\widehat{\cL}_1))$ is in the center of $\cM$.   
If $\cM$ is a factor, then $\Im\bE_{\cM}(\mathfrak{F}^{-1}(\widehat{\cL}_1))$ is in the center of $\cM$ if and only if $\Im\bE_{\cM}(\mathfrak{F}^{-1}(\widehat{\cL}_1))$ is a multiple of scalars.
Since $\Im\bE_{\cM}(\mathfrak{F}^{-1}(\widehat{\cL}_1))\in \cN'\cap \cM$ and $\tau(\Im\bE_{\cM}(\mathfrak{F}^{-1}(\widehat{\cL}_1)))=0$, we have that $\cL_w=0$ if and only if $\bE_{\cM}(\mathfrak{F}^{-1}(\widehat{\cL}_1))^*= \bE_{\cM}(\mathfrak{F}^{-1}(\widehat{\cL}_1))$.
\end{proof}


\begin{corollary}
    Suppose $\cN=\bC$ and $\cM=M_n(\bC)$ and $\{\Phi_t\}_{t\geq 0}$ is a quantum Markov semigroup on $M_n(\bC)$.
Then there exist $v_j$, $w$ in $M_n(\bC)$ with hermitian $w$ such that 
    \begin{align*}
        \cL(x)=&\sum_{j\in \sI_0} \frac{1}{2} \omega_j\{v_j^*v_j, x\} - \omega_j v_j^* x v_j +i [w, x]\\
        =& \frac{1}{2}\sum_{j \in \sI_0} \omega_j v_j^*[x, v_j] +\frac{1}{2} \sum_{j \in \sI_0} \omega_j [v_j^*, x] v_j +i [w, x], 
    \end{align*}
subject to $\tau(v_jv_k^*)=\lambda^{1/2} \delta_{j,k}$, $\tau(v_j)=0$,  and $\omega_j \geq 0$.
\end{corollary}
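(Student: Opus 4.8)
The plan is to specialize Proposition~\ref{prop:generatorformula}, together with the spectral decomposition of $\widehat{\cL}_0$ and the decomposition $\cL=\cL_a+\cL_w$ recorded above, to the inclusion $\bC\subset M_n(\bC)$, and then to rewrite each ingredient as a matrix in $M_n(\bC)$.

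First I would verify that the standing hypotheses hold. Since $\cN=\bC$, every linear map on $M_n(\bC)$ is $\cN$-bimodule, so $\{\Phi_t\}_{t\geq 0}$ is a bimodule quantum Markov semigroup, uniformly continuous with bounded generator $\cL$ by Proposition~\ref{prop:uniform}; moreover $\cM=M_n(\bC)$ is a factor, so $Z(\cM)=\bC$ and $e_2$ is minimal in $\cM'\cap\cM_2$. The load-bearing structural point is that $\cM'\cap\cM_2\cong M_{n^2}(\bC)$ is itself a factor (Section~2.4), so the central carrier $z_{e_2}$ of $e_2$ equals $1$; consequently the index set $\sI_1$ appearing in the spectral decomposition of $\widehat{\cL}_0$ is empty and $\widehat{\cL}_0=\sum_{j\in\sI_0}\omega_j p_j$, with each $p_j$ a minimal projection in $(1-e_2)\cM'\cap\cM_2$. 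Positivity $\omega_j\geq 0$ is immediate from Proposition~\ref{prop:generator}(3), which gives $\widehat{\cL}_0=-(1-e_2)\widehat{\cL}(1-e_2)\geq 0$.

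Next I would translate the convolutions. By the discussion preceding Proposition~\ref{prop:generatorformula}, for each minimal $p_j$ there is $v_j\in\cN'\cap\cM=M_n(\bC)$ with $\tau(v_j)=0$, $\tau(v_jv_k^*)=\lambda^{1/2}\delta_{j,k}$ and $x*p_j=v_j^*xv_j$; by linearity of the convolution, $x*\widehat{\cL}_0=\sum_{j\in\sI_0}\omega_j v_j^*xv_j$ and $1*\widehat{\cL}_0=\sum_{j\in\sI_0}\omega_j v_j^*v_j$. Substituting these into the definition of the Laplacian gives
\[
\cL_a(x)=\tfrac12(1*\widehat{\cL}_0)x+\tfrac12 x(1*\widehat{\cL}_0)-x*\widehat{\cL}_0=\sum_{j\in\sI_0}\omega_j\Big(\tfrac12\{v_j^*v_j,x\}-v_j^*xv_j\Big).
\]
For the drift part I would set $w=-\Im\,\bE_{\cM}(\mathfrak{F}^{-1}(\widehat{\cL}_1))$; this lies in $\cN'\cap\cM=M_n(\bC)$ and is self-adjoint, being an imaginary part, and $\cL_w(x)=i[x,\Im\bE_{\cM}(\mathfrak{F}^{-1}(\widehat{\cL}_1))]=i[w,x]$. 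Adding $\cL=\cL_a+\cL_w$ yields the first displayed identity of the statement, with the stated constraints $\tau(v_j)=0$, $\tau(v_jv_k^*)=\lambda^{1/2}\delta_{j,k}$, $\omega_j\geq 0$, $w=w^*$. The second form follows from the elementary rewriting of each summand, $\tfrac12\{v_j^*v_j,x\}-v_j^*xv_j=\tfrac12 v_j^*[v_j,x]+\tfrac12[x,v_j^*]v_j$, obtained by expanding the anticommutator and regrouping the terms into commutators.

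I expect the bulk of the argument to be routine bookkeeping; the only genuinely substantive steps are the identification $z_{e_2}=1$ (so that $\sI_1=\emptyset$ and the general formula collapses to a single sum over $\sI_0$) and the observation that the off-diagonal block $\widehat{\cL}_1$ contributes exactly the Hamiltonian commutator $i[w,x]$ and nothing else. The main obstacle, if any, is simply matching sign and scaling conventions in the identity $x*p_j=v_j^*xv_j$ and in the commutator rearrangement of the Lindblad summands.
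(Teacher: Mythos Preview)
Your proposal is correct and follows essentially the same route as the paper's proof: both decompose $\cL=\cL_a+\cL_w$, identify the Hamiltonian $w$ from $\widehat{\cL}_1$, use the spectral decomposition $\widehat{\cL}_0=\sum_{j\in\sI_0}\omega_jp_j$ together with $x*p_j=v_j^*xv_j$ to rewrite $\cL_a$, and finish with the commutator rearrangement. Your explicit observation that $\cM'\cap\cM_2\cong M_{n^2}(\bC)$ is simple (hence $z_{e_2}=1$ and $\sI_1=\emptyset$) makes precise a point the paper leaves implicit.
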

\begin{proof}
Note that $\cL=\cL_a+\cL_w$. 
Let $\displaystyle w=\frac{1}{2i}\left(\lambda^{-3/2} \bE_{\cM}(e_2e_1\widehat{\cL}_1^*) - \lambda^{-3/2} \bE_{\cM}(\widehat{\cL}_1e_1e_2)\right)\in M_n(\bC)$.
Then $\cL_w(x)=i[x, w]$ for any $x\in \cM$.
Writing
$\vcenter{\hbox{\begin{tikzpicture}[scale=0.6]
        \draw [blue] (0.2, -0.8) --(0.2, 0.8);
        \draw [blue] (-0.2, -0.8) --(-0.2, 0.8);
        \draw [fill=white] (-0.5, -0.4) rectangle (0.5, 0.4);
    \node at (0, 0) {\tiny $p_j$};
        \end{tikzpicture}}}  
    =\vcenter{\hbox{\begin{tikzpicture}[scale=0.65]
    \begin{scope}[shift={(0,1.5)}]
    \draw [blue] (-0.5, 0.8)--(-0.5, 0) .. controls +(0, -0.6) and +(0,-0.6).. (0.5, 0)--(0.5, 0.8);    
\begin{scope}[shift={(0.5, 0.3)}]
\draw [fill=white] (-0.3, -0.3) rectangle (0.3, 0.3);
\node at (0, 0) {\tiny $v_j$};
\end{scope}
    \end{scope}
\draw [blue] (-0.5, -0.8)--(-0.5, 0) .. controls +(0, 0.6) and +(0,0.6).. (0.5, 0)--(0.5, -0.8);
\begin{scope}[shift={(0.5, -0.3)}]
\draw [fill=white] (-0.3, -0.3) rectangle (0.3, 0.3);
\node at (0, 0) {\tiny $v_j^*$};
\end{scope}
\end{tikzpicture}}},$
and $\displaystyle \widehat{\cL}_0=\sum_{j \in \sI_0}\omega_j p_j$ is the spectral decomposition with $p_j$ minimal projection, we have that 
\begin{align*}
    x*\widehat{\cL}_0= 
\sum_{j \in \sI_0} \omega_j\vcenter{\hbox{\begin{tikzpicture}[scale=0.65]
\begin{scope}[shift={(0,1.5)}]
\draw [blue] (-0.7, -0.5)--(-0.7, 0.4)..controls +(0,0.4) and +(0, 0.4)..(-0.3, 0.4)--(-0.3, 0) .. controls +(0, -0.4) and +(0,-0.4).. (0.3, 0)--(0.3, 0.8);    
\begin{scope}[shift={(0.3, 0.3)}]
\draw [fill=white] (-0.3, -0.3) rectangle (0.3, 0.3);
\node at (0, 0) {\tiny $v_j$};
\end{scope}
    \end{scope}
\draw [blue] (-0.7, 0.5)--(-0.7, -0.4)..controls +(0,-0.4) and +(0, -0.4)..(-0.3, -0.4)--(-0.3, 0) .. controls +(0, 0.4) and +(0,0.4).. (0.3, 0)--(0.3, -0.8);
\begin{scope}[shift={(0.3, -0.3)}]
\draw [fill=white] (-0.3, -0.3) rectangle (0.3, 0.3);
\node at (0, 0) {\tiny $v_j^*$};
\end{scope}
\begin{scope}[shift={(-0.7, 0.75)}]
\draw [fill=white] (-0.3, -0.3) rectangle (0.3, 0.3);
\node at (0, 0) {\tiny $x$};
\end{scope}
\end{tikzpicture}}}
=\sum_{j \in \sI_0}\omega_j v_j^*x v_j.
\end{align*}
Hence
\begin{align*}
\cL(x)
=\frac{1}{2} \sum_{j \in \sI_0} \omega_j v_j^* v_j x+\frac{1}{2}\sum_{j \in \sI_0} \omega_j x v_j^* v_j -\sum_{j \in \sI_0}  \omega_j v_j^* xv_j +i [w, x].    
\end{align*}
This completes the computation.
\end{proof}

\begin{remark}
For the inclusion $\bC\subset M_n(\bC)$, the  
\begin{align*}
    \mathfrak{F}^{-1}(\widehat{\cL}_0)=\sum_{j \in \sI_0}\omega_j \vcenter{\hbox{\begin{tikzpicture}[scale=1.2]
        \draw [blue] (0, -0.5)--(0, 0.5) (0.5, -0.5)--(0.5, 0.5);
        \draw [fill=white] (-0.2, -0.2) rectangle (0.2, 0.2);
        \node at (0, 0) {\tiny $v_j$};
        \begin{scope}[shift={(0.5, 0)}]
         \draw [fill=white] (-0.2, -0.2) rectangle (0.2, 0.2);
        \node at (0, 0) {\tiny $\overline{v_j^*}$};   
        \end{scope}
    \end{tikzpicture}}}.
\end{align*}
\end{remark}

In the following, we give a completely characterization of the Lindbladian for bimodule quantum Markov semigroups.
\begin{theorem}\label{thm:generatorform}
Suppose $0\leq L_0\in \cM'\cap \cM_2$ and $L_1^*=L_1\in \cN'\cap \cM$.
Define a bounded bimodule map $\cL:\cM\to \cM$ as follows: 
\begin{align}
\cL(x)=   
\frac{1}{2}(1* L_0)   x + \frac{1}{2} x (1* L_0)   
        + iL_1 x-ix L_1-x* L_0.
\end{align}
Then $\cL$ is the generator of a bimodule quantum Markov semigroup.
\end{theorem}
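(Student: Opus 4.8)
The plan is to realize the semigroup as $\Phi_t=e^{-t\cL}$ and verify the hypotheses of Proposition~\ref{prop:positvity1}, using conditional negativity (which here is elementary and bypasses any explicit computation of $\widehat\cL$). First I would record that $\Phi_{L_0}\colon x\mapsto x*L_0$ is a normal, completely positive, $\cN$-bimodule map: it is $\cN$-bimodule because its Fourier multiplier $L_0$ lies in $\cM'\cap\cM_2$, and it is completely positive precisely because $L_0\ge 0$ (the positive-Fourier-multiplier criterion of Section~3); normality is clear from $\Phi_{L_0}(x)=\lambda^{-5/2}\bE_{\cM}(e_2e_1L_0xe_1e_2)$. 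Hence $b:=1*L_0=\Phi_{L_0}(\mathbf 1)$ is a positive element of $\cN'\cap\cM$ (a bimodule map carries $\mathbf 1$ into $\cN'\cap\cM$, since $y\Phi(\mathbf 1)=\Phi(y)=\Phi(\mathbf 1)y$ for $y\in\cN$, and a positive map sends $\mathbf 1$ to a positive element). Then $\cL(x)=\tfrac12\{b,x\}+i[L_1,x]-\Phi_{L_0}(x)$ is a bounded, normal, $\cN$-bimodule map. A one-line computation gives $\cL(\mathbf 1)=\tfrac12 b+\tfrac12 b+iL_1-iL_1-b=0$, and since $\Phi_{L_0}$ preserves adjoints and $b^*=b$, $L_1^*=L_1$, one checks $\cL(x)^*=\cL(x^*)$, so $\widehat\cL^*=\widehat\cL$ as in the proof of Proposition~\ref{prop:generator}.

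The heart of the argument is conditional negativity of $\cL$. Fix $x_1,\dots,x_n,y_1,\dots,y_n\in\cM$ with $\sum_j x_jy_j=0$ and expand
\begin{align*}
-\sum_{j,k}y_j^*\cL(x_j^*x_k)y_k
=&-\tfrac12\sum_{j,k}y_j^*b\,x_j^*x_ky_k-\tfrac12\sum_{j,k}y_j^*x_j^*x_k\,b\,y_k\\
&-i\sum_{j,k}y_j^*[L_1,x_j^*x_k]y_k+\sum_{j,k}y_j^*\Phi_{L_0}(x_j^*x_k)y_k.
\end{align*}
In each of the first three sums one factors out either $\sum_k x_ky_k=0$ on the right or $\sum_j y_j^*x_j^*=0$ on the left, so all three vanish. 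The surviving term $\sum_{j,k}y_j^*\Phi_{L_0}(x_j^*x_k)y_k$ is a compression of $\Phi_{L_0}^{(n)}\big([x_j^*x_k]_{j,k}\big)$, and $[x_j^*x_k]_{j,k}\ge 0$ in $M_n(\cM)$; complete positivity of $\Phi_{L_0}$ then makes it $\ge 0$. Thus $-\sum_{j,k}y_j^*\cL(x_j^*x_k)y_k\ge 0$, i.e.\ $\cL$ is conditionally negative.

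Having verified that $\cL$ is a bounded, normal, $\cN$-bimodule map with $\cL(\mathbf 1)=0$, $\widehat\cL^*=\widehat\cL$, and conditional negativity, I would invoke Proposition~\ref{prop:positvity1} ((3)$\Rightarrow$(1)) to conclude that $\Phi_t=e^{-t\cL}$ is completely positive with $\Phi_t(\mathbf 1)=\mathbf 1$ for every $t\ge 0$. Each $\Phi_t$ is normal (the exponential series converges in operator norm and normal maps are norm-closed among bounded maps) and $\cN$-bimodule (the bimodule property passes from $\cL$ to $\cL^n$ and hence to $\Phi_t$), so each $\Phi_t$ is a bimodule quantum channel; the family is uniformly continuous since $\cL$ is bounded (cf.\ Proposition~\ref{prop:uniform}). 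Therefore $\{\Phi_t\}_{t\ge 0}$ is a bimodule quantum Markov semigroup, and by construction its generator is $\cL$.

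No step is genuinely hard: the one place demanding care is the conditional-negativity computation, where one must use \emph{complete} positivity of $\Phi_{L_0}$ (not merely positivity) together with the three algebraic cancellations coming from $\sum_j x_jy_j=0$; the remaining obligation is simply not to forget to check $\widehat\cL^*=\widehat\cL$ and $\cL(\mathbf 1)=0$ before applying Proposition~\ref{prop:positvity1}.
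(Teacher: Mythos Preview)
Your argument is correct. You verify $\cL(\mathbf 1)=0$, $\cL(x)^*=\cL(x^*)$ (hence $\widehat{\cL}^*=\widehat{\cL}$), and then establish conditional negativity directly: with $\sum_j x_jy_j=0$ the anticommutator and commutator pieces factor through $\sum_k x_ky_k$ or $\sum_j y_j^*x_j^*$ and vanish, leaving the compression of $(\Phi_{L_0}\otimes\mathrm{id}_n)\bigl([x_j^*x_k]\bigr)\ge 0$. Proposition~\ref{prop:positvity1} $(3)\Rightarrow(1)$ then gives the semigroup.

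The paper takes a different route: it computes the Fourier multiplier $\widehat{\cL}$ explicitly (via the pictorial calculus) and reads off that $-(1-e_2)\widehat{\cL}(1-e_2)=(1-e_2)L_0(1-e_2)\ge 0$, then invokes Theorem~\ref{thm:generatorpos} (equivalently condition~(4) of Proposition~\ref{prop:positvity1}). So both proofs ultimately feed into Proposition~\ref{prop:positvity1}, but through different equivalent conditions: the paper through~(4), you through~(3). Your approach is more elementary in that it avoids any planar-algebraic computation of $\widehat{\cL}$ and works purely with the algebraic decomposition $\cL(x)=\tfrac12\{b,x\}+i[L_1,x]-\Phi_{L_0}(x)$; the paper's approach, on the other hand, produces the explicit form of $\widehat{\cL}$ as a byproduct, which is consonant with the Fourier-analytic framework used elsewhere in the paper (e.g.\ the decomposition into $\widehat{\cL}_0,\widehat{\cL}_1$).
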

\begin{proof}
By a direct computation, we have that 
\begin{align*}
    \widehat{\cL}=   \vcenter{\hbox{\begin{tikzpicture}[scale=0.65]
    \begin{scope}[shift={(0,1.5)}]
    \draw [blue] (-0.5, 0.8)--(-0.5, 0) .. controls +(0, -0.6) and +(0,-0.6).. (0.5, 0)--(0.5, 0.8);    
    \end{scope}
\draw [blue] (-0.5, -0.8)--(-0.5, 0) .. controls +(0, 0.6) and +(0,0.6).. (0.5, 0)--(0.5, -0.8);
\begin{scope}[shift={(0.5, -0.3)}]
\draw [fill=white] (-0.3, -0.3) rectangle (0.3, 0.3);
\node at (0, 0) {\tiny $L_2^*$};
\end{scope}
        \end{tikzpicture}}}
+\vcenter{\hbox{\begin{tikzpicture}[scale=0.6]
    \begin{scope}[shift={(0,1.5)}]
    \draw [blue] (-0.5, 0.8)--(-0.5, 0) .. controls +(0, -0.6) and +(0,-0.6).. (0.5, 0)--(0.5, 0.8);  
    \begin{scope}[shift={(0.5, 0.3)}]
\draw [fill=white] (-0.3, -0.3) rectangle (0.3, 0.3);
\node at (0, 0) {\tiny $L_2$};
\end{scope}
    \end{scope}
\draw [blue] (-0.5, -0.8)--(-0.5, 0) .. controls +(0, 0.6) and +(0,0.6).. (0.5, 0)--(0.5, -0.8);
        \end{tikzpicture}}}
        -\vcenter{\hbox{\begin{tikzpicture}[scale=0.6]
        \draw [blue] (0.2, -0.8) --(0.2, 0.8);
        \draw [blue] (-0.2, -0.8) --(-0.2, 0.8);
        \draw [fill=white] (-0.5, -0.4) rectangle (0.5, 0.4);
    \node at (0, 0) {\tiny $L_0$};
        \end{tikzpicture}}},
\end{align*}
where $L_2=\frac{1}{2}\vcenter{\hbox{\begin{tikzpicture}[scale=0.6]
        \draw [blue] (0.2, -0.8) --(0.2, 0.8);
            \draw [fill=white] (-0.5, -0.4) rectangle (0.5, 0.4);
            \node at (0, 0) {\tiny $L_0$};
            \draw [blue] (-0.2, 0.4).. controls +(0, 0.35) and +(0, 0.35)..(-0.8, 0.4)--(-0.8, -0.4);
            \draw [blue] (-0.2, -0.4).. controls +(0, -0.35) and +(0, -0.35)..(-0.8, -0.4);
        \end{tikzpicture}}} +iL_1$.
Now it is clear that $-(1-e_2)\widehat{\cL}(1-e_2)=(1-e_2)L_0(1-e_2)\geq 0$.
By Theorem \ref{thm:generatorpos}, we see that $\cL$ is the generator of a bimodule quantum Markov semigroup.
\end{proof}

\subsection{Derivation}

In \cite{CiprianiSauvageot2003}, the Cipriani and Sauvageot considered the first-order differential structure of trace-symmetric quantum Markov semigroups. 
Here we define the derivation $\partial: \cM\to \cM_1$ for the bimodule quantum Markov semigroup $\{\Phi_t\}_{t \geq 0}$ as
\begin{align}
    \partial x=[x, \mathfrak{F}^{-1}(\widehat{\cL}_0^{1/2})], \quad x\in \cM.
\end{align}
The conjugated derivation $\overline{\partial}:\cM \to \cM_1$ is defined as 
\begin{align}
    \overline{\partial} x=[x, \mathfrak{F}^{-1}(\widehat{\cL}_0^{1/2})^*], \quad x\in \cM.
\end{align}
\begin{remark}
    The Fourier multiplier of $\partial$ is in the 3-box space of the corresponding planar algebra and depicted as follows:
    \begin{align*}
        \vcenter{\hbox{\scalebox{0.8}{
        \begin{tikzpicture}[scale=1.2]
           \draw [blue] (-0.2, 0.4)--(-0.2, 0.9) (0.2, 0.4)--(0.2, 0.9);
           \draw [blue] (-0.2, -0.4)--(0.7, -1.2);
           \draw [blue] (0.2, -0.4).. controls +(0, -0.3) and +(0, -0.3) .. (0.7, -0.4)--(0.7, 0.9);
           \draw [blue] (-0.2, -1.2).. controls +(0, 0.4) and +(0, 0.4).. (0.3, -1.2);
           \draw [fill=white] (-0.5, -0.4) rectangle (0.5, 0.4);
           \node at (0, 0) {\tiny $\mathfrak{F}^{-1}(\widehat{\cL}_0^{1/2})$};
        \end{tikzpicture}
        }}}
        -
              \vcenter{\hbox{\scalebox{0.8}{
        \begin{tikzpicture}[scale=1.2]
           \draw [blue] (-0.2, -0.4)--(-0.2, -0.9) (0.2, -0.4)--(0.2, -0.9);
           \draw [blue] (-0.2, 0.4)--(0.7, 1.2);
           \draw [blue] (0.2, 0.4).. controls +(0, 0.3) and +(0, 0.3) .. (0.7, 0.4)--(0.7, -0.9);
           \draw [blue] (-0.3, 1.2).. controls +(0, -0.4) and +(0, -0.4).. (0.3, 1.2);
           \draw [fill=white] (-0.5, -0.4) rectangle (0.5, 0.4);
           \node at (0, 0) {\tiny $\mathfrak{F}^{-1}(\overline{\widehat{\cL}_0^{1/2}})$};
        \end{tikzpicture}
        }}}
        =
        \vcenter{\hbox{\scalebox{0.8}{
        \begin{tikzpicture}[scale=1.2]
           \draw [blue] (-0.2, 0.4)--(-0.2, 0.9) (0.2, 0.9)--(0.2, -1.2);
           \draw [blue] (-0.2, -0.4).. controls +(0, -0.3) and +(0, -0.3) .. (-0.7, -0.4)--(-0.7, 0.9);
           \draw [blue] (-0.2, -1.2).. controls +(0, 0.4) and +(0, 0.4).. (-0.7, -1.2);
           \draw [fill=white] (-0.5, -0.4) rectangle (0.5, 0.4);
           \node at (0, 0) {\tiny $\overline{\widehat{\cL}_0^{1/2}}$};
        \end{tikzpicture}}}}
        -
           \vcenter{\hbox{\scalebox{0.8}{
        \begin{tikzpicture}[scale=1.2]
           \draw [blue] (-0.2, -0.4)--(-0.2, -0.9) (0.2, -0.9)--(0.2, 1.2);
           \draw [blue] (-0.2, 0.4).. controls +(0, 0.3) and +(0, 0.3) .. (-0.7, 0.4)--(-0.7, -0.9);
           \draw [blue] (-0.2, 1.2).. controls +(0, -0.4) and +(0, -0.4).. (-0.7, 1.2);
           \draw [fill=white] (-0.5, -0.4) rectangle (0.5, 0.4);
           \node at (0, 0) {\tiny $\widehat{\cL}_0^{1/2}$};
        \end{tikzpicture}}}}.
    \end{align*}
\end{remark}

We see that the adjoint $\partial^*:\cM_1\to \cM$ of $\partial$ is 
\begin{align*}
  \partial ^* x= \bE_{\cM}( [x, \mathfrak{F}^{-1}(\widehat{\cL}_0^{1/2})^*]), \quad x\in \cM_1.
\end{align*}
and the adjoint $\overline{\partial}^*:\cM_1\to \cM$ of $\overline{\partial}$ is 
\begin{align*}
  \overline{\partial} ^* x= \bE_{\cM}( [x, \mathfrak{F}^{-1}(\widehat{\cL}_0^{1/2})]), \quad x\in \cM_1.
\end{align*}
We have that for any $x\in \cM$ and $y\in \cM_1$,
\begin{align*}
    \tau_1(y^* \partial x)= & \tau((\partial^* y)^* x), \\
    \tau_1(y^* \overline{\partial} x)= & \tau((\overline{\partial}^* y)^* x).
\end{align*}
In particular, we have that $\tau(y^*\partial^*\partial x)=\tau_1((\partial y)^*(\partial x))$ for any $x, y\in \cM$. 
Now we define the directional derivation $\partial_j$ as follows
\begin{align*}
  \partial_j x =&  \omega_j^{1/2} [x, \mathfrak{F}^{-1}(p_j)],\quad j \in \sI.
\end{align*}
It then follows that
\begin{align*}
    \overline{\partial_j} x =&  \omega_j^{1/2} [x, \mathfrak{F}^{-1}(p_j)^*]
    = \omega_j^{1/2} [x, \mathfrak{F}^{-1}(\overline{p_j})],\quad j \in \sI.
\end{align*}
By considering the adjoint, we obtain
\begin{align*}
    \partial_j^* x =& \omega_j^{1/2} \bE_{\cM}([x, \mathfrak{F}^{-1}(p_j)^*])
    =\omega_j^{1/2} \bE_{\cM}([x, \mathfrak{F}^{-1}(\overline{p_j})]),\quad j \in \sI.
\end{align*}
and 
\begin{align*}
 \overline{\partial_j}^* x =&  \omega_j^{1/2}\bE_{\cM}( [x, \mathfrak{F}^{-1}(p_j)]),\quad j \in \sI.
\end{align*}
By the construction, we have $\displaystyle \partial =\sum_{j\in \sI_0\cup\sI_1} \partial_j$. 

\begin{remark}
Suppose that $\cN=\bC$ and $\cM=M_n(\bC)$.
We have that 
\begin{align*}
    \partial x =  \sum_{j\in \sI_0} \omega_j^{1/2} [x, \mathfrak{F}^{-1}(p_j)] 
    = & \sum_{j\in \sI_0} \omega_j^{1/2} [x, v_j]\otimes \overline{v_j^*}, \\
    \overline{\partial} x =  \sum_{j\in \sI_0} \omega_j^{1/2} [x, \mathfrak{F}^{-1}(p_j)^*] 
    = & \sum_{j\in \sI_0} \omega_j^{1/2} [x, v_j^*]\otimes \overline{v_j}.
\end{align*}
This indicates $\overline{\partial}$ can be characterized by $\partial_j^*$ completely.
\end{remark}

\begin{lemma}
    We have that 
    \begin{align}
        (\partial x )e_2=& \lambda^{-1/2}\widehat{\cL}_0^{1/2}[x, e_1]e_2, \nonumber\\
        e_2(\partial x )=& \lambda^{-1/2}e_2[x, e_1] \overline{\widehat{\cL}_0^{1/2}}.
    \end{align}
\end{lemma}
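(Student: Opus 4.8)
The goal is to verify two identities relating the derivation $\partial x = [x, \mathfrak{F}^{-1}(\widehat{\cL}_0^{1/2})]$ multiplied by the Jones projection $e_2$. The key tools are Lemma \ref{lem:fourier} and the basic properties $\widehat{\cL}_0 e_2 = e_2 \widehat{\cL}_0 = 0$ (since $\widehat{\cL}_0 = -(1-e_2)\widehat{\cL}(1-e_2)$ lives in $(1-e_2)\cM'\cap\cM_2 (1-e_2)$), together with $\widehat{\cL}_0^{1/2} e_2 = e_2 \widehat{\cL}_0^{1/2} = 0$.

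First I would establish the "one-sided Fourier" relations $a e_1 e_2 = \lambda^{1/2}\mathfrak{F}^{-1}(a) e_2$ for $a \in \cM'\cap\cM_2$, which is precisely the content of the second assertion of Lemma \ref{lem:fourier} (stated there as $\mathfrak{F}(a)e_1e_2 = \lambda^{1/2} a e_2$; applied with the roles of $\mathfrak{F}$ and $\mathfrak{F}^{-1}$ appropriately swapped, or equivalently written for $a = \mathfrak{F}(b)$). Taking adjoints gives the mirrored relation $e_2 e_1 a = \lambda^{1/2} e_2 \mathfrak{F}^{-1}(a)^*$ when $a = a^*$, or more precisely $e_2 e_1 \widehat{\cL}_0^{1/2} = \lambda^{1/2} e_2 \overline{\widehat{\cL}_0^{1/2}}$ using $\mathfrak{F}^{-1}(a)^* = \mathfrak{F}^{-1}(\bar a)$ for appropriate $a$, via $\mathfrak{F}^2(y) = \bar y$. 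The precise bookkeeping of which Fourier convention appears — $\mathfrak{F}$ versus $\mathfrak{F}^{-1}$, and the $180^\circ$-rotation/contragredient identities — is where I expect the only real friction.

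Then I would compute directly. For the first identity, write $(\partial x)e_2 = x\,\mathfrak{F}^{-1}(\widehat{\cL}_0^{1/2})e_2 - \mathfrak{F}^{-1}(\widehat{\cL}_0^{1/2})\,x\,e_2$. In the second term, since $x \in \cM$ commutes past nothing helpful directly, instead I would multiply the whole expression on the left by $\widehat{\cL}_0^{1/2}e_1$ is not available; rather, the cleaner route is: using Lemma \ref{lem:fourier}, $\widehat{\cL}_0^{1/2}\, x\, e_1 y\,\Omega_1 = \lambda^{1/2} x\, (\text{something})$ is not quite it either. The honest approach is to note $\mathfrak{F}^{-1}(\widehat{\cL}_0^{1/2}) e_2 = \lambda^{-1/2}\widehat{\cL}_0^{1/2} e_1 e_2$ (one-sided Fourier), hence
\begin{align*}
(\partial x)e_2 &= x\,\mathfrak{F}^{-1}(\widehat{\cL}_0^{1/2})e_2 - \mathfrak{F}^{-1}(\widehat{\cL}_0^{1/2})x e_2 \\
&= \lambda^{-1/2}\big(x\,\widehat{\cL}_0^{1/2}e_1 e_2 - \widehat{\cL}_0^{1/2}\,\mathfrak{F}^{-1}(\text{...})\big),
\end{align*}
and here I'd instead use that $\mathfrak{F}^{-1}(\widehat{\cL}_0^{1/2})x e_2 = \widehat{\cL}_0^{1/2}\cdot(\text{term})$ by pulling $x$ through via $x e_1 e_2$-type manipulations. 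Concretely, $\mathfrak{F}^{-1}(\widehat{\cL}_0^{1/2})\, x\, e_1 e_2 \cdot(\text{nothing})$: the identity $\mathfrak{F}(a) x e_1 y \Omega_1 = \lambda^{1/2} x a y \Omega_1$ from Lemma \ref{lem:fourier} is the engine — applied to $a = \widehat{\cL}_0^{1/2}\in\cN'\cap\cM_1$ reinterpreted after a Fourier transform — giving $\mathfrak{F}^{-1}(\widehat{\cL}_0^{1/2}) x e_1 e_2 = \lambda^{1/2}\cdot(\ast)$. The upshot of assembling these is $(\partial x)e_2 = \lambda^{-1/2}\widehat{\cL}_0^{1/2}\,(xe_1 - e_1 x)\,e_2 = \lambda^{-1/2}\widehat{\cL}_0^{1/2}[x,e_1]e_2$, using at the end that $\widehat{\cL}_0^{1/2}e_2 = 0$ kills the spurious pieces and that $\mathfrak{F}^{-1}(\widehat{\cL}_0^{1/2})e_2 = \lambda^{-1/2}\widehat{\cL}_0^{1/2}e_1e_2$ while $\widehat{\cL}_0^{1/2}e_1e_2 e_1 \cdots$ reduces correctly.

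The second identity $e_2(\partial x) = \lambda^{-1/2}e_2[x,e_1]\overline{\widehat{\cL}_0^{1/2}}$ follows by taking the adjoint of the first applied to $x^*$: indeed $(\partial x)^* = [\mathfrak{F}^{-1}(\widehat{\cL}_0^{1/2})^*, x^*] = -[x^*, \mathfrak{F}^{-1}(\overline{\widehat{\cL}_0^{1/2}})] = -\overline{\partial}\,x^*$ is not literally $\partial$, so I would instead just repeat the one-sided-Fourier computation on the left, using $e_2 e_1 \widehat{\cL}_0^{1/2} = \lambda^{1/2} e_2 \overline{\widehat{\cL}_0^{1/2}}$ and $e_2 \overline{\widehat{\cL}_0^{1/2}} \neq$ zero but $e_2\widehat{\cL}_0^{1/2}=0$. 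The main obstacle throughout is keeping straight the four Fourier-type maps (two directions, two for each of $\cN'\cap\cM_1 \leftrightarrow \cM'\cap\cM_2$) and the modular-conjugation/contragredient identity $\mathfrak{F}^2 = J(\cdot)^*J$; once the correct instance of Lemma \ref{lem:fourier} is pinned down for each side, both equalities drop out in two or three lines each. I would double-check the $\lambda$-powers by testing on the scalar case $\cN = \bC \subset \cM = M_n(\bC)$ where everything is an explicit matrix computation.
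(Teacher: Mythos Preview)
Your approach is the same as the paper's --- invoke Lemma~\ref{lem:fourier} in the form $\mathfrak{F}^{-1}(\widehat{\cL}_0^{1/2})\,e_2 = \lambda^{-1/2}\widehat{\cL}_0^{1/2}e_1e_2$ --- and the final answer you state is correct. But you are making the middle of the argument far harder than it needs to be, and the source of the difficulty is that you never invoke the one elementary fact that collapses everything: both $e_2$ and $\widehat{\cL}_0^{1/2}$ lie in $\cM'\cap\cM_2$, hence commute with $x\in\cM$.

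With that in hand the computation is three lines. Using $xe_2=e_2x$ and then Lemma~\ref{lem:fourier} on \emph{both} terms,
\[
(\partial x)e_2 = x\,\mathfrak{F}^{-1}(\widehat{\cL}_0^{1/2})e_2 - \mathfrak{F}^{-1}(\widehat{\cL}_0^{1/2})e_2\,x
= \lambda^{-1/2}\bigl(x\,\widehat{\cL}_0^{1/2}e_1e_2 - \widehat{\cL}_0^{1/2}e_1e_2\,x\bigr),
\]
and then $x\widehat{\cL}_0^{1/2}=\widehat{\cL}_0^{1/2}x$ and $e_2x=xe_2$ give $\lambda^{-1/2}\widehat{\cL}_0^{1/2}(xe_1-e_1x)e_2$. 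There is no need to reinvoke Lemma~\ref{lem:fourier} with more elaborate inputs, no need for $\widehat{\cL}_0^{1/2}e_2=0$ (true but irrelevant here), and no spurious pieces to kill. The second identity is the mirror image: use $e_2x=xe_2$ on the left, then the adjoint form $e_2e_1\widehat{\cL}_0^{1/2}=\lambda^{1/2}e_2\,\mathfrak{F}^{-1}(\widehat{\cL}_0^{1/2})^*$ together with $\mathfrak{F}^{-1}(a)^*=\mathfrak{F}^{-1}(\overline{a})$ for self-adjoint $a\in\cM'\cap\cM_2$.
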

\begin{proof}
    By Lemma \ref{lem:fourier}, we have that 
    \begin{align*}
         (\partial x )e_2=\lambda^{-1/2}x\widehat{\cL}_0^{1/2}e_1e_2- \lambda^{-1/2}\widehat{\cL}_0^{1/2}e_1e_2x=\lambda^{-1/2}\widehat{\cL}_0^{1/2}[x, e_1]e_2.
    \end{align*}
    The rest follows from a similar computation.
\end{proof}

\begin{remark}
    We have that $\cM\subset \Ker\partial_j^*$.
    In fact, for any $x\in \cM$, we see that 
    \begin{align*}
        (\partial_j^* x )e_2=e_2[x, \mathfrak{F}^{-1}(p_j)^*] e_2=\lambda^{-1/2}e_2\overline{p_j}[x, e_1] e_2=0,
    \end{align*}
    by the fact that $p_je_2=0$.
\end{remark}

\begin{lemma}
    For $j\neq k$, 
\begin{align}
\partial_j^* \partial_k =0, \quad \overline{\partial_j}^* \overline{\partial_k}=0. 
\end{align}
and
\begin{align*}
    \partial_j^* \partial_j x=
    \lambda^{1/2}\omega_j (1* p_j)x
        +  \lambda^{1/2} \omega_j x (1* \overline{p_j})  
        -\lambda^{1/2}\omega_j x* (\overline{p_j}+p_j).
\end{align*}
\end{lemma}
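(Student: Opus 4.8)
The plan is to reduce everything to an explicit description of $\mathfrak{F}^{-1}(p_j)$ that splits it into commuting pieces from $\cM$ and from $\cM'$. First I would record that
\begin{align*}
\mathfrak{F}^{-1}(p_j)=v_j\,J v_j J,\qquad \mathfrak{F}^{-1}(\overline{p_j})=\mathfrak{F}^{-1}(p_j)^*=v_j^*\,J v_j^* J,
\end{align*}
where $v_j\in\cN'\cap\cM\subseteq\cM$ and $J v_j J\in J(\cN'\cap\cM)J=\cM'\cap\cM_1$. This follows from $\Phi(x)\Omega=\mathfrak{F}(\widehat{\Phi})x\Omega$ (Proposition \ref{prop:fouriermultiple2}) applied to $x*p_j=v_j^*xv_j$, which identifies $\mathfrak{F}(p_j)$ with the operator $x\Omega\mapsto v_j^*xv_j\Omega=v_j^*(J v_j^* J)x\Omega$, together with the relation $\mathfrak{F}^2(y)=Jy^*J$ on $\cN'\cap\cM_1$. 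Since any element of $\cM$ commutes with $J(\cN'\cap\cM)J\subseteq\cM'$, this gives at once the factorisations
\begin{align*}
\partial_k x=\omega_k^{1/2}[x,v_k]\,J v_k J,\qquad \overline{\partial_k}x=\omega_k^{1/2}[x,v_k^*]\,J v_k^* J,\qquad x\in\cM.
\end{align*}

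Second I would isolate the conditional-expectation identity $\bE_{\cM}(m\,J n J)=\tau(n)\,m$ for $m\in\cM$, $n\in\cN'\cap\cM$. Indeed $\bE_{\cM}(J n J)$ commutes with all of $\cM$ (as $J n J\in\cM'$), hence lies in $Z(\cM)=\bC$ by the standing factoriality assumption; trace-preservation of $\bE_{\cM}$ forces $\bE_{\cM}(J n J)=\tau_1(J n J)\,1$, and extremality of the $\lambda$-extension gives $\tau_1(J n J)=\tau(n)$. (Graphically this step, and the previous one, are transparent: $\mathfrak{F}^{-1}(p_j)$ is the "lower half" of the cap–cup picture defining $p_j$, so the commutation with $\cM$ and the capping-off by $\bE_{\cM}$ are immediate.)

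The rest is bookkeeping. Expanding the double commutator, using that the $\cM$-part commutes with the $\cM'$-part and that $a\mapsto J a J$ is multiplicative,
\begin{align*}
\big[[x,v_k]\,J v_k J,\ v_j^*\,J v_j^* J\big]=[x,v_k]v_j^*\,J(v_kv_j^*)J-v_j^*[x,v_k]\,J(v_j^*v_k)J,
\end{align*}
and applying $\bE_{\cM}$ together with the identity above (note $v_kv_j^*,\,v_j^*v_k\in\cN'\cap\cM$) yields
\begin{align*}
\partial_j^*\partial_k x=\omega_j^{1/2}\omega_k^{1/2}\big(\tau(v_kv_j^*)\,[x,v_k]v_j^*-\tau(v_j^*v_k)\,v_j^*[x,v_k]\big),
\end{align*}
with the analogous formula for $\overline{\partial_j}^*\overline{\partial_k}$ obtained by the same computation with $v\leftrightarrow v^*$. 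Since $\tau(v_kv_j^*)=\tau(v_j^*v_k)=\lambda^{1/2}\delta_{jk}$ (the orthogonality relations recorded before the statement, plus traciality), the case $j\neq k$ gives $\partial_j^*\partial_k=\overline{\partial_j}^*\overline{\partial_k}=0$. For $j=k$ both traces equal $\lambda^{1/2}$, so
\begin{align*}
\partial_j^*\partial_j x=\lambda^{1/2}\omega_j\big([x,v_j]v_j^*-v_j^*[x,v_j]\big)=\lambda^{1/2}\omega_j\big(v_j^*v_j\,x+x\,v_jv_j^*-v_j^*xv_j-v_jxv_j^*\big),
\end{align*}
and substituting $v_j^*v_j=1*p_j$, $v_jv_j^*=1*\overline{p_j}$, $v_j^*xv_j=x*p_j$, $v_jxv_j^*=x*\overline{p_j}$ gives exactly the asserted identity; here $x*\overline{p_j}=v_jxv_j^*$ is the $\tau$-adjoint of $x*p_j=v_j^*xv_j$, using $\widehat{\Phi^*}=\overline{\widehat{\Phi}}$ for completely positive $\Phi$.

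I expect the only genuinely delicate point to be the first step, namely pinning down $\mathfrak{F}^{-1}(p_j)=v_j\,J v_j J$ and the clean separation into commuting $\cM$- and $\cM'$-parts; once that is in place the remaining steps are routine commutator manipulations and the two orthogonality relations $\tau(v_jv_k^*)=\lambda^{1/2}\delta_{jk}$, $\tau(v_j)=0$.
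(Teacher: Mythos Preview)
Your argument is correct and takes a genuinely different route from the paper. The paper never unpacks $\mathfrak{F}^{-1}(p_j)$: it sandwiches $\partial_k^*\partial_j(x)$ by $e_2$ and uses the identities $\mathfrak{F}^{-1}(b)e_2=\lambda^{-1/2}be_1e_2$ and $e_2\mathfrak{F}^{-1}(b)=\lambda^{-1/2}e_2e_1\overline{b}$ (coming from Lemma~\ref{lem:fourier}) to push the whole double commutator back into $\cM'\cap\cM_2$, where the algebraic orthogonality $p_jp_k=0=\overline{p_j}\,\overline{p_k}$ kills the $j\neq k$ case immediately and the four surviving terms for $j=k$ are read off as $1*p_j$, $1*\overline{p_j}$, $x*p_j$, $x*\overline{p_j}$. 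You instead split $\mathfrak{F}^{-1}(p_j)=v_j\,Jv_jJ$ into commuting $\cM$- and $\cM'$-pieces and reduce to the scalar relation $\tau(v_jv_k^*)=\lambda^{1/2}\delta_{jk}$, which is more hands-on and makes the link to the single-Kraus Lindblad picture explicit. The one caveat is generality: the factorisation $p_j=\lambda^{-1/2}\overline{v_j^*}\,e_2\,\overline{v_j}$ lands in the ideal $z_{e_2}(\cM'\cap\cM_2)$, so it is literally available only for $j\in\sI_0$; the paper's Fourier-multiplier computation needs no such hypothesis and handles $j\in\sI_1$ on the same footing. When $\sI_1=\emptyset$ (for instance $\bC\subset M_n(\bC)$, where $\cM'\cap\cM_2$ is simple) the two approaches are equivalent.
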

\begin{proof}
For any $x\in \cM$, we have that 
\begin{align*}
   e_2\partial_k^* \partial_j(x)
   =&e_2\omega_k^{1/2} \omega_{j}^{1/2}\bE_{\cM}([[x, \mathfrak{F}^{-1}(p_j)], \mathfrak{F}^{-1}(\overline{p_k}) ])\\
   =&e_2\omega_k^{1/2} \omega_{j}^{1/2}[[x, \mathfrak{F}^{-1}(p_j)], \mathfrak{F}^{-1}(\overline{p_k}) ] e_2\\
   =& \lambda^{-1}  \omega_k^{1/2} \omega_{j}^{1/2} (xe_2e_1\overline{p_j}\overline{p_k}e_1e_2 - e_2e_1\overline{p_j}x\overline{p_k}e_1e_2 \\
   & - e_2e_1 p_k x p_j e_1e_2+ e_2e_1 p_k  p_je_1e_2x)\\
   =& 0,
\end{align*}
as $p_kp_j = 0$ for $j\neq k$.
Similarly, we have that $\overline{\partial_j}^* \overline{\partial_k}=0$.
Moreover, we have that 
\begin{align*}
   e_2\partial_j^* \partial_j(x)  
 = & \lambda^{-1}   \omega_{j}(xe_2e_1\overline{p_j}e_1e_2 - e_2e_1\overline{p_j}x e_1e_2 \\
   & - e_2e_1  x p_j e_1e_2+ e_2e_1 p_j e_1e_2x).
\end{align*}
Applying the conditional expectaion $\bE_{\cM}$, we obtain that 
\begin{align*}
    \partial_j^* \partial_j(x)
    =& \lambda^{-2}\omega_{j}(\bE_{\cM}(xe_2e_1\overline{p_j}e_1e_2) - \bE_{\cM}(e_2e_1  x\overline{p_j}e_1e_2)\\
    &-\bE_{\cM}(e_2e_1  x p_j e_1e_2) +\bE_{\cM}( e_2e_1 p_je_1e_2x))\\
    =& \lambda^{1/2}\omega_j (x(1*\overline{p_j})-x*\overline{p_j}-x* p_j +(1*p_j)x). 
\end{align*}
This completes the proof of the Lemma.
\end{proof}
\begin{corollary}
    We have 
    \begin{align*}
        \partial^* \partial x=\sum_{j\in \sI_0\cup \sI_1} \partial_j^* \partial_j x=  \lambda^{1/2}((1* \widehat{\cL}_0)x+x(1* \overline{\widehat{\cL}_0})- x* (\widehat{\cL}_0+\overline{\widehat{\cL}_0})).
    \end{align*}
\end{corollary}
\begin{proof}
It follows from previous lemma.
\end{proof}
\begin{remark}
    
Suppose that $\mathcal{A}$ is a *-subalgebra of $\mathcal{B}(\cH)$ and $D$ is an (unbounded) self-adjoint operator on $\cH$.
The triple $(\mathcal{A}, \cH, D)$ is a spectral triple \cite{Con94} if the commutator $[D, x]$ is bounded for all $x\in \mathcal{A}$.
Note that $\mathfrak{F}^{-1}(\widehat{\cL}_0^{1/2})$ is not self-adjoint in general.
Hence the triple $(\cM, L^2(\cM, \tau), \partial)$ is not a spectral triple in general.

In \cite{LiJungeLaRacuente2024}, the derivation triple was introduced as a modification of the spectral triple, which has played a crucial role in deriving abstract Ricci curvature lower bound and functional inequalities\cite{BrannanGaoJunge2022,BrannanGaoJunge2022riccicurvatureII}.

Suppose that $\cM\subset \widetilde{\cM}$ is an inclusion of finite von Neumann algebra with a normal faithful tracial state $\tau$ and $\delta$ is a closable *-preserving derivation on $\cM$ valued in $L^2(\widetilde{\cM}, \tau)$.
The triple $(\cM\subset \widetilde{\cM}, \tau, \delta)$ is a derivation triple if there exists a weakly dense *-subalgebra $\mathcal{A}\subset \cM$ such that $\mathcal{A}$ is in the domain of $\delta$.

For a finite inclusion $\cN\subset\cM$ and a  $\cN$-bimodule Markov semigroup $e^{-t\mathcal{L}}$ on $\cM$, we see that $(\cM\subset\cM_1,\tau_1,\partial)$ is a derivation triple. 
In our setting, the derivation triple captures the dual structure via the Fourier transform of $\mathcal{L}$. 
\end{remark}

\begin{lemma}\label{lem:laplacian1}
 For any $x\in \cM$,
 \begin{align}
 \cL_a(x)= -\frac{\lambda^{-1/2}}{2} \bE_{\cM}(\mathfrak{F}^{-1}(\overline{\widehat{\cL}_0^{1/2}}) (\partial x )) + \frac{\lambda^{-1/2}}{2} \bE_{\cM}(  (\overline{\partial} x) \mathfrak{F}^{-1}(\widehat{\cL}_0^{1/2} ) ).   
 \end{align}
\end{lemma}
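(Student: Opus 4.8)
The plan is to unfold the definitions of $\partial$ and $\overline{\partial}$ in the right-hand side and reduce everything to two identities about $a:=\mathfrak{F}^{-1}(\widehat{\cL}_0^{1/2})\in\cN'\cap\cM_1$, namely $\bE_{\cM}(a^*a)=\lambda^{1/2}(1*\widehat{\cL}_0)$ and $\bE_{\cM}(a^*xa)=\lambda^{1/2}(x*\widehat{\cL}_0)$.

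First I would record that $\mathfrak{F}^{-1}(\overline{\widehat{\cL}_0^{1/2}})=\mathfrak{F}^{-1}(\widehat{\cL}_0^{1/2})^*=a^*$: writing the spectral decomposition $\widehat{\cL}_0^{1/2}=\sum_{j\in\sI}\omega_j^{1/2}p_j$ and using that $\mathfrak{F}^{-1}$ and the modular conjugation are linear, together with the relation $\mathfrak{F}^{-1}(\overline{p_j})=\mathfrak{F}^{-1}(p_j)^*$ already used above for the directional derivations, this is immediate. Hence $\partial x=[x,a]$, $\overline{\partial}x=[x,a^*]$, and the right-hand side of the asserted identity equals
\begin{align*}
-\frac{\lambda^{-1/2}}{2}\bE_{\cM}(a^*[x,a])+\frac{\lambda^{-1/2}}{2}\bE_{\cM}([x,a^*]a)
=\frac{\lambda^{-1/2}}{2}\big(\bE_{\cM}(a^*ax)+\bE_{\cM}(xa^*a)-2\bE_{\cM}(a^*xa)\big),
\end{align*}
the two $a^*xa$-terms produced by the commutators combining.

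Next I would prove the two identities by the sandwiching technique already used in the lemma immediately preceding. By Lemma~\ref{lem:fourier}, $\widehat{\cL}_0^{1/2}e_1e_2=\lambda^{1/2}ae_2$, so $ae_2=\lambda^{-1/2}\widehat{\cL}_0^{1/2}e_1e_2$ and, taking adjoints, $e_2a^*=\lambda^{-1/2}e_2e_1\widehat{\cL}_0^{1/2}$. Since $\widehat{\cL}_0^{1/2}\in\cM'\cap\cM_2$ commutes with $\cM$, this gives
\begin{align*}
e_2a^*xae_2=\lambda^{-1}e_2e_1\widehat{\cL}_0^{1/2}x\widehat{\cL}_0^{1/2}e_1e_2=\lambda^{-1}e_2e_1\widehat{\cL}_0xe_1e_2,
\end{align*}
and likewise $e_2a^*ae_2=\lambda^{-1}e_2e_1\widehat{\cL}_0e_1e_2$. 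As $a^*xa,a^*a\in\cM_1$, the left-hand sides equal $\bE_{\cM}(a^*xa)e_2$ and $\bE_{\cM}(a^*a)e_2$, while by the identity $\Phi(x)e_2=\lambda^{-3/2}e_2e_1\widehat{\Phi}xe_1e_2$ applied with Fourier multiplier $\widehat{\cL}_0$ the right-hand sides equal $\lambda^{1/2}(x*\widehat{\cL}_0)e_2$ and $\lambda^{1/2}(1*\widehat{\cL}_0)e_2$; cancelling $e_2$ (the map $z\mapsto ze_2$ being injective on $\cM$ because $e_2\Omega_1=\Omega_1$ and $\Omega_1$ is separating) yields the two identities. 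Finally $\bE_{\cM}$ is $\cM$-bimodular, so $\bE_{\cM}(a^*ax)=\bE_{\cM}(a^*a)x=\lambda^{1/2}(1*\widehat{\cL}_0)x$ and $\bE_{\cM}(xa^*a)=\lambda^{1/2}x(1*\widehat{\cL}_0)$; substituting into the display above gives $\frac{1}{2}(1*\widehat{\cL}_0)x+\frac{1}{2}x(1*\widehat{\cL}_0)-x*\widehat{\cL}_0=\cL_a(x)$, as claimed.

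The argument is short and essentially computational; the only points that need attention are keeping the powers of $\lambda$ straight, using the convention for $\mathfrak{F}^{-1}$ on $\cM'\cap\cM_2$ for which $\mathfrak{F}(a)=\widehat{\cL}_0^{1/2}$ so that Lemma~\ref{lem:fourier} applies verbatim, and the (routine) identification $\mathfrak{F}^{-1}(\overline{\widehat{\cL}_0^{1/2}})=\mathfrak{F}^{-1}(\widehat{\cL}_0^{1/2})^*$. I do not anticipate a genuinely hard step.
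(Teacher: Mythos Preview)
Your proof is correct and is essentially the same computation as the paper's, carried out in the reverse direction: the paper starts from the definition of $\cL_a$ in terms of $\lambda^{-5/2}\bE_{\cM}(e_2e_1\widehat{\cL}_0\cdots e_1e_2)$, regroups into $[x,e_1]$, and then uses Lemma~\ref{lem:fourier} to produce $\partial x$ and $\overline{\partial}x$, while you expand the right-hand side and reduce to the two identities $\bE_{\cM}(a^*xa)=\lambda^{1/2}(x*\widehat{\cL}_0)$, $\bE_{\cM}(a^*a)=\lambda^{1/2}(1*\widehat{\cL}_0)$, which are exactly the same Lemma~\ref{lem:fourier} relation $ae_2=\lambda^{-1/2}\widehat{\cL}_0^{1/2}e_1e_2$ together with $(x*\widehat{\cL}_0)e_2=\lambda^{-3/2}e_2e_1\widehat{\cL}_0xe_1e_2$. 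Your identification $\mathfrak{F}^{-1}(\overline{\widehat{\cL}_0^{1/2}})=a^*$ is justified (e.g.\ from $\mathfrak{F}^2=\overline{(\cdot)}$ and $\mathfrak{F}^{-1}(y)^*=\mathfrak{F}(y^*)$), and all $\lambda$-powers check out.
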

\begin{proof}
We have that 
    \begin{align*}
    \cL_a(x) 
    =&  \frac{1}{2}(1*\widehat{\cL}_0) x
        + \frac{1}{2} x  (1*\widehat{\cL}_0)
        -x* \widehat{\cL}_0\\
    =& \frac{\lambda^{-5/2}}{2} \left(\bE_{\cM}(e_2e_1 \widehat{\cL}_0 e_1e_2 x) - \bE_{\cM}(e_2e_1 \widehat{\cL}_0 xe_1e_2 ) \right) \\
    &+ \frac{\lambda^{-5/2}}{2} \left(\bE_{\cM}( x e_2e_1 \widehat{\cL}_0 e_1e_2 ) - \bE_{\cM}(e_2e_1 \widehat{\cL}_0 xe_1e_2 ) \right)\\
    =& -\frac{\lambda^{-5/2}}{2} \bE_{\cM}(e_2e_1 \widehat{\cL}_0 [x, e_1]e_2) + \frac{\lambda^{-5/2}}{2} \bE_{\cM}( e_2[x, e_1]   \widehat{\cL}_0 e_1e_2 ) \\
    =& -\frac{\lambda^{-2}}{2} \bE_{\cM}(e_2e_1 \widehat{\cL}_0^{1/2} (\partial x )e_2) + \frac{\lambda^{-2}}{2} \bE_{\cM}( e_2 (\overline{\partial} x) \widehat{\cL}_0^{1/2} e_1e_2 )\\
    =& -\frac{\lambda^{-3/2}}{2} \bE_{\cM}(e_2\mathfrak{F}^{-1}(\overline{\widehat{\cL}_0^{1/2}}) (\partial x )e_2) + \frac{\lambda^{-3/2}}{2} \bE_{\cM}( e_2 (\overline{\partial} x) \mathfrak{F}^{-1}(\widehat{\cL}_0^{1/2} )e_2 )\\
    =& -\frac{\lambda^{-1/2}}{2} \bE_{\cM}(\mathfrak{F}^{-1}(\overline{\widehat{\cL}_0^{1/2}}) (\partial x )) + \frac{\lambda^{-1/2}}{2} \bE_{\cM}(  (\overline{\partial} x) \mathfrak{F}^{-1}(\widehat{\cL}_0^{1/2} ) ).
\end{align*}
This completes the computation.
\end{proof}

\begin{remark}
Suppose that $\cN=\bC$, $\cM=M_n(\bC)$.
We have that 
\begin{align*}
    \cL_a(x)=& \frac{1 }{2}\sum_{j\in \sI_0} \omega_j [x, v_j^*] v_j -\omega_j v_j^* [x, v_j],
\end{align*}
subject to $\tau(v_jv_k^*)=\lambda^{1/2} \delta_{j,k}$, $\tau(v_j)=0$,  and $\omega_j \geq 0$.
\end{remark}

\begin{lemma}\label{lem:laplacian20}
  For any $x\in \cM$,
 \begin{align}
 \cL_a^*(x)= -\frac{\lambda^{-1/2}}{2} \bE_{\cM}\left( [ \mathfrak{F}^{-1}( \widehat{\cL}_0^{1/2})y, \mathfrak{F}^{-1}(\widehat{\cL}_0^{1/2})^*] \right)+ \frac{\lambda^{-1/2}}{2} \bE_{\cM}\left( [y\mathfrak{F}^{-1}(\overline{\widehat{\cL}_0}^{1/2} ), \mathfrak{F}^{-1}(\widehat{\cL}_0^{1/2})] \right).  
 \end{align}   
\end{lemma}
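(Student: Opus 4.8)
The plan is to obtain $\cL_a^*$ directly from the defining identity $\tau(\cL_a^*(y)x)=\tau(y\cL_a(x))$, $x,y\in\cM$, feeding in the representation of $\cL_a$ from Lemma~\ref{lem:laplacian1}. Abbreviate $a=\mathfrak{F}^{-1}(\widehat{\cL}_0^{1/2})\in\cN'\cap\cM_1$, so that $\partial x=[x,a]$, $\overline{\partial}x=[x,a^*]$, and, since $\widehat{\cL}_0\ge 0$ and the modular conjugation on $\cM'\cap\cM_2$ is a positivity-preserving $*$-anti-isomorphism, $\mathfrak{F}^{-1}(\overline{\widehat{\cL}_0^{1/2}})=\mathfrak{F}^{-1}(\overline{\widehat{\cL}_0}^{1/2})=:\overline a$. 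Then Lemma~\ref{lem:laplacian1} reads $\cL_a(x)=-\tfrac{\lambda^{-1/2}}{2}\bE_{\cM}(\overline a[x,a])+\tfrac{\lambda^{-1/2}}{2}\bE_{\cM}([x,a^*]a)$. The first step is to remove $\bE_{\cM}$: since $y\in\cM$ and $\bE_{\cM}$ is the $\tau_1$-preserving conditional expectation of $\cM_1$ onto $\cM$, one has $\tau(y\,\bE_{\cM}(Z))=\tau_1(yZ)$ for $Z\in\cM_1$, so $\tau(y\cL_a(x))=-\tfrac{\lambda^{-1/2}}{2}\tau_1(y\overline a[x,a])+\tfrac{\lambda^{-1/2}}{2}\tau_1(y[x,a^*]a)$.

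The second step is to expand the commutators and use the cyclicity of $\tau_1$ to move $x$ to the right of each monomial. A short manipulation gives $\tau_1(y\overline a[x,a])=\tau_1([a,y\overline a]x)$ and $\tau_1(y[x,a^*]a)=\tau_1([a^*,ay]x)=-\tau_1([ay,a^*]x)$. Reinserting the conditional expectation (legitimate because now $x\in\cM$, so $\tau_1(Zx)=\tau(\bE_{\cM}(Z)x)$), the coefficient of $x$ is forced to be $\cL_a^*(y)=-\tfrac{\lambda^{-1/2}}{2}\bE_{\cM}([ay,a^*])-\tfrac{\lambda^{-1/2}}{2}\bE_{\cM}([a,y\overline a])$, and rewriting $-[a,y\overline a]=[y\overline a,a]$ together with $a=\mathfrak{F}^{-1}(\widehat{\cL}_0^{1/2})$ and $\overline a=\mathfrak{F}^{-1}(\overline{\widehat{\cL}_0}^{1/2})$ yields the asserted formula.

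As an independent sanity check one can instead start from the explicit form $\cL_a(x)=\tfrac12(1*\widehat{\cL}_0)x+\tfrac12 x(1*\widehat{\cL}_0)-x*\widehat{\cL}_0$: the element $1*\widehat{\cL}_0\in\cN'\cap\cM$ is self-adjoint, the adjoint of left (respectively right) multiplication by it is right (respectively left) multiplication by it, and the adjoint of the bimodule map $x\mapsto x*\widehat{\cL}_0$ is $x\mapsto x*\overline{\widehat{\cL}_0}$ by the identity $\widehat{\Phi^*}=\overline{\widehat{\Phi}}$ for completely positive maps; hence $\cL_a^*$ is $\cL_a$ with $\widehat{\cL}_0$ replaced by $\overline{\widehat{\cL}_0}$, and Lemma~\ref{lem:laplacian1} applied to $\overline{\widehat{\cL}_0}$ (with square root $\overline{\widehat{\cL}_0^{1/2}}$) reproduces the claim.

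I expect the only genuinely delicate point to be the bookkeeping around the modular conjugation: namely that $1*\widehat{\cL}_0$ is self-adjoint, that $1*\widehat{\cL}_0=1*\overline{\widehat{\cL}_0}$ (equivalently $\bE_{\cM}(e_2\widehat{\cL}_0e_2)=\bE_{\cM}(e_2\overline{\widehat{\cL}_0}e_2)$), and that $\mathfrak{F}^{-1}$ intertwines the bar operations as used above. Each is a one-line planar-algebra computation using $\mathfrak{F}^2(z)=\overline z$ and $\widehat{\cL}_0\ge0$, but one should record them explicitly before assembling the two approaches; everything else is cyclicity of the trace and pulling elements of $\cM$ through $\bE_{\cM}$.
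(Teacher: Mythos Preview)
Your proof is correct and follows essentially the same route as the paper: start from the duality $\tau(y\,\cL_a(x))=\tau(\cL_a^*(y)\,x)$, substitute Lemma~\ref{lem:laplacian1}, pass from $\tau$ to $\tau_1$ by removing $\bE_{\cM}$, and then rearrange so that $x$ stands alone. The only cosmetic difference is that the paper packages the cyclicity step through the formally defined adjoints $\partial^*,\overline{\partial}^*$ (so that the first term becomes $\partial^*(\mathfrak{F}^{-1}(\widehat{\cL}_0^{1/2})y)$ and the second $\overline{\partial}^*(y\,\mathfrak{F}^{-1}(\overline{\widehat{\cL}_0}^{1/2}))$), whereas you expand the commutators by hand; the underlying identities $\tau_1(y\overline a[x,a])=\tau_1([a,y\overline a]\,x)$ and $\tau_1(y[x,a^*]a)=-\tau_1([ay,a^*]\,x)$ are exactly what $\partial^*$ and $\overline{\partial}^*$ encode. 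Your supplementary sanity check via $\widehat{\Phi^*}=\overline{\widehat{\Phi}}$ is a nice extra confirmation that does not appear in the paper.
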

\begin{proof}
Suppose $x, y\in \cM$.
We have that 
\begin{align*}
    \tau(\cL_a^*(y)^*x)=& \tau(y^*\cL_a(x))\\
    =&  -\frac{\lambda^{-1/2}}{2} \tau(y^* \bE_{\cM}(\mathfrak{F}^{-1}(\overline{\widehat{\cL}_0^{1/2}}) (\partial x )) + \frac{\lambda^{-1/2}}{2}\tau(y^* \bE_{\cM}(  (\overline{\partial} x) \mathfrak{F}^{-1}(\widehat{\cL}_0^{1/2} ) ))\\
    =& -\frac{\lambda^{-1/2}}{2} \tau_1(y^* \mathfrak{F}^{-1}(\overline{\widehat{\cL}_0^{1/2}}) (\partial x )) + \frac{\lambda^{-1/2}}{2}\tau_1(y^* (\overline{\partial} x) \mathfrak{F}^{-1}(\widehat{\cL}_0^{1/2} ) )\\
=& -\frac{\lambda^{-1/2}}{2} \tau_1((\partial^* \mathfrak{F}^{-1}( \widehat{\cL}_0^{1/2})y)^*  x) + \frac{\lambda^{-1/2}}{2}\tau_1(( \overline{\partial}^* y\mathfrak{F}^{-1}(\overline{\widehat{\cL}_0}^{1/2} ))^* x ).
\end{align*}
This implies that 
\begin{align*}
    \cL_a^*(y)=&  -\frac{\lambda^{-1/2}}{2} \partial^* \mathfrak{F}^{-1}( \widehat{\cL}_0^{1/2})y + \frac{\lambda^{-1/2}}{2} \overline{\partial}^* y\mathfrak{F}^{-1}(\overline{\widehat{\cL}_0}^{1/2} ) \\
    =& -\frac{\lambda^{-1/2}}{2} \bE_{\cM}\left( [ \mathfrak{F}^{-1}( \widehat{\cL}_0^{1/2})y, \mathfrak{F}^{-1}(\widehat{\cL}_0^{1/2})^*] \right)+ \frac{\lambda^{-1/2}}{2} \bE_{\cM}\left( [y\mathfrak{F}^{-1}(\overline{\widehat{\cL}_0}^{1/2} ), \mathfrak{F}^{-1}(\widehat{\cL}_0^{1/2})] \right).
\end{align*}
We see that the Lemma is true.
\end{proof}

\begin{remark}
Suppose that $\cN=\bC$ and $\cM=M_n(\bC)$.
For any $x\in \cM$, we have that 
\begin{align*}
    \cL_{a}^*(x)=-\frac{1}{2}\sum_{j \in \sI_0} \omega_j [v_j x, v_j^*]+  \omega_j [x v_j^*, v_j],
\end{align*}
subject to $\tau(v_jv_k^*)=\lambda^{1/2} \delta_{j,k}$, $\tau(v_j)=0$,  and $\omega_j \geq 0$.
\end{remark}

We define the completely bounded map $\cL_{\overline{a}}$ as follows:
\begin{align*}
    \cL_{\overline{a}}(x)=\frac{1}{2}(1* \overline{\widehat{\cL}_0})x + \frac{1}{2} x(1* \overline{\widehat{\cL}_0})   -x* \overline{\widehat{\cL}_0}.
\end{align*}
By Lemma \ref{lem:laplacian1}, we have that
\begin{align*}
 \cL_{\overline{a}}(x)= -\frac{\lambda^{-1/2}}{2} \bE_{\cM}(\mathfrak{F}^{-1}( \widehat{\cL}_0^{1/2}) (\overline{\partial} x )) + \frac{\lambda^{-1/2}}{2} \bE_{\cM}(  (\partial x) \mathfrak{F}^{-1}(\overline{\widehat{\cL}_0 }^{1/2}) ).   
 \end{align*}

 If $\tau(y^*\cL(x))=\tau(\cL(y)^*x)$ for any $x, y\in \cM$, i.e. $\cL$ is symmetric with respect to $\tau$, then $\cL_{\overline{a}}=\cL_a^*$.
 
\begin{proposition}\label{prop:laplacian2}
    We have 
    \begin{align}
        \frac{\lambda^{-1/2}}{2} \left(\partial^* \partial + \overline{\partial}^* \overline{\partial}\right)=\cL_a+\cL_{\overline{a}}. 
    \end{align}
\end{proposition}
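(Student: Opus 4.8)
The plan is to reduce the identity to two explicit formulas that are already available in this subsection: the computation of $\partial^*\partial$ and its mirror image for $\overline{\partial}^*\overline{\partial}$. From the remark immediately following the computation of $\partial_j^*\partial_j$ one has
\begin{align*}
\partial^*\partial x = \lambda^{1/2}\big((1*\widehat{\cL}_0)x + x(1*\overline{\widehat{\cL}_0}) - x*(\widehat{\cL}_0+\overline{\widehat{\cL}_0})\big),\qquad x\in\cM.
\end{align*}
First I would establish the parallel identity for $\overline{\partial}$. Since $\overline{\partial}_j x = \omega_j^{1/2}[x,\mathfrak{F}^{-1}(p_j)^*] = \omega_j^{1/2}[x,\mathfrak{F}^{-1}(\overline{p_j})]$, the directional derivation $\overline{\partial}_j$ is obtained from $\partial_j$ simply by replacing the minimal projection $p_j$ by its modular conjugate $\overline{p_j}$ (and $\overline{\overline{p_j}}=p_j$). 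Every step of the Lemma computing $\partial_j^*\partial_j$ — inserting $e_2$, shifting $e_1,e_2$, applying $\bE_{\cM}$ and Lemma \ref{lem:fourier}, and using $p_je_2=0$ — is symmetric under $p_j\leftrightarrow\overline{p_j}$ (note $\overline{p_j}e_2=0$ as well). Hence the same argument gives
\begin{align*}
\overline{\partial}_j^*\overline{\partial}_j x = \lambda^{1/2}\omega_j\big((1*\overline{p_j})x + x(1*p_j) - x*(p_j+\overline{p_j})\big),
\end{align*}
together with $\overline{\partial}_j^*\overline{\partial}_k=0$ for $j\neq k$, so summing over $j\in\sI_0\cup\sI_1$ and using $\widehat{\cL}_0=\sum_j\omega_j p_j$, $\overline{\widehat{\cL}_0}=\sum_j\omega_j\overline{p_j}$ yields
\begin{align*}
\overline{\partial}^*\overline{\partial} x = \lambda^{1/2}\big((1*\overline{\widehat{\cL}_0})x + x(1*\widehat{\cL}_0) - x*(\widehat{\cL}_0+\overline{\widehat{\cL}_0})\big).
\end{align*}

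Next I would add the two boxed formulas and multiply by $\tfrac{\lambda^{-1/2}}{2}$. Using linearity of $1*(\cdot)$ and of the convolution $x*(\cdot)$ in the second variable, the cross terms combine to give
\begin{align*}
\frac{\lambda^{-1/2}}{2}\big(\partial^*\partial+\overline{\partial}^*\overline{\partial}\big)x = \tfrac12\big(1*(\widehat{\cL}_0+\overline{\widehat{\cL}_0})\big)x + \tfrac12 x\big(1*(\widehat{\cL}_0+\overline{\widehat{\cL}_0})\big) - x*(\widehat{\cL}_0+\overline{\widehat{\cL}_0}).
\end{align*}
On the other hand, adding the defining expressions $\cL_a(x)=\tfrac12(1*\widehat{\cL}_0)x+\tfrac12 x(1*\widehat{\cL}_0)-x*\widehat{\cL}_0$ and $\cL_{\overline{a}}(x)=\tfrac12(1*\overline{\widehat{\cL}_0})x+\tfrac12 x(1*\overline{\widehat{\cL}_0})-x*\overline{\widehat{\cL}_0}$ produces exactly the same right-hand side, again by linearity. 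Therefore the two bimodule maps coincide on every $x\in\cM$, which is the assertion.

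I do not expect a genuine obstacle here: the statement is essentially a bookkeeping identity that consolidates the formulas already derived for $\partial^*\partial$, $\cL_a$, and $\cL_{\overline{a}}$. The only point that deserves an explicit sentence in the write-up is the transfer of the $\partial_j^*\partial_j$ computation to $\overline{\partial}_j^*\overline{\partial}_j$ via the substitution $p_j\mapsto\overline{p_j}$; alternatively one can obtain the same formula directly from Lemma \ref{lem:laplacian1} (the representation of $\cL_a$ through $\partial$ and $\overline{\partial}$) and its $\overline{a}$-counterpart stated just before the proposition, by applying $\partial^*$ and $\overline{\partial}^*$ to $\cL_a$, $\cL_{\overline a}$ and comparing — either route is short.
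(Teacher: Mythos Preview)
Your proposal is correct and follows essentially the same approach as the paper: both add the explicit formula for $\partial_j^*\partial_j$ (from the preceding lemma/remark) to its conjugate analogue, sum over $j$, and compare with the definitions of $\cL_a$ and $\cL_{\overline a}$. Your write-up is in fact slightly more careful in justifying the $\overline{\partial}_j^*\overline{\partial}_j$ formula via the substitution $p_j\mapsto\overline{p_j}$, whereas the paper simply asserts the summed identity.
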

\begin{proof}
For any $x\in \cM$, we have that 
\begin{align*}
   \cL_{\overline{a}}(x)+ \cL_a(x)=&  \frac{1}{2}(1* \widehat{\cL}_0)x + \frac{1}{2} x(1* \widehat{\cL}_0)   -x* \widehat{\cL}_0\\
  &  + \frac{1}{2}(1* \overline{\widehat{\cL}_0})x + \frac{1}{2} x(1* \overline{\widehat{\cL}_0})   -x* \overline{\widehat{\cL}_0}\\
        =&\frac{\lambda^{-1/2}}{2}\sum_{j \in \sI_0\cup \sI_1} \overline{\partial_j}^*\overline{\partial_j}x +  \partial_j^* \partial_j x\\
        =& \frac{\lambda^{-1/2}}{2}\overline{\partial}^* \overline{\partial}x + \frac{\lambda^{-1/2}}{2} \partial^*  \partial x.
\end{align*}
This completes the computation.
\end{proof}

\begin{remark}
    We have that $\Ker(\cL_a+\cL_{\overline{a}})$ is a von Neumann subalgebra.
\end{remark}

We recall the divergences and gradients for a bimodule quantum Markov semigroup $\{\Phi_t\}_{t\geq 0}$.
For any $x\in \cM$, we define the noncommutative gradients $\nabla:\cM \to \bigoplus_j \cM_1$ and $\overline{\nabla}:\cM \to \bigoplus_j \cM_1$ as follows
\begin{align*}
    \nabla x=(\partial_j x)_{j \in \sI_0\cup \sI_1}, \quad \overline{\nabla} x=(\overline{\partial_j} x)_{j \in \sI_0\cup \sI_1}.
\end{align*}
For any $(x_j)_{j\in \sI_0\cup\sI_1}\in \bigoplus_{j}\cM_1$, we define the divergences $\displaystyle \Div:\bigoplus_j \cM_1 \to \cM$ and $\displaystyle \overline{\Div}:\bigoplus_j \cM_1 \to \cM$ as follows
\begin{align*}
    \Div(x_j)_{j\in \sI_0\cup\sI_1}=\sum_{j\in \sI_0\cup \sI_1} \partial_j^* x_j, \quad \overline{\Div}(x_j)_{j\in \sI_0\cup\sI_1}=\sum_{j\in \sI_0\cup \sI_1} \overline{\partial_j}^* x_j.
\end{align*}
By construction, we have $\Div\nabla = \partial^*\partial$ and $\overline{\Div\nabla} = \overline{\partial}^*\overline{\partial}$. 
Thus by Proposition \ref{prop:laplacian2}, 
\begin{align*}
    \cL_a+\cL_{\overline{a}}=\frac{\lambda^{-1/2}}{2}(\overline{\Div}\ \overline{\nabla} + \Div  \nabla). 
\end{align*}

\begin{remark}
    Suppose $\cN=\bC$, $\cM=M_n(\bC)$.
    We have that 
    \begin{align*}
        \cL_a(x)+\cL_{\overline{a}}(x) =\frac{1 }{2}\sum_{j \in \sI_0} \omega_j[ [x, v_j^*], v_j]+\omega_j[[x, v_j], v_j^*]
    \end{align*}
subject to $\tau(v_jv_k^*)=\lambda^{1/2} \delta_{j,k}$, $\tau(v_j)=0$,  and $\omega_j \geq 0$.
\end{remark}
\begin{proposition}\label{prop:gradientform}
For any $x, y\in \cM$, we have that 
    \begin{align}
        2\Gamma(x, y)
        &= y^*(1*\widehat{\cL}_0) x- y^*(x*\widehat{\cL}_0)-(y^**\widehat{\cL}_0)x+(y^*x)*\widehat{\cL}_0\\
        &= \lambda^{-1/2} \sum_{j\in \sI_0\cup \sI_1}\bE_{\cM}((\partial_j y)^*(\partial_j x)).
    \end{align}
\end{proposition}
\begin{proof}
By Proposition \ref{prop:generatorformula}, we have that 
\begin{align*}
   2 \Gamma(x, y)= & y^*\cL(x)+\cL(y)^* x -\cL(y^*x) \\
  =&  \lambda^{-1/2}\bE_{\cM}(e_2 \widehat{\cL} e_2)y^*x + \lambda^{-3/2} y^*\bE_{\cM}(e_2e_1\widehat{\cL}_1^*) x+ \lambda^{-3/2} y^*\bE_{\cM}(\widehat{\cL}_1e_1e_2)x\\
  & -y^*(x*\widehat{\cL}_0)-(y^**\widehat{\cL}_0)x+(y^*x)*\widehat{\cL}_0\\
  =& y^*(1*\widehat{\cL}_0) x- y^*(x*\widehat{\cL}_0)-(y^**\widehat{\cL}_0)x+(y^*x)*\widehat{\cL}_0.
\end{align*}
On the other hand, we have that 
\begin{align*}
& \bE_{\cM}([y, \widehat{\cL}_0^{1/2}e_1e_2]^*[x, \widehat{\cL}_0^{1/2}e_1e_2]) \\
=& \bE_{\cM}(e_2e_1 y^*\widehat{\cL}_0xe_1e_2)-\bE_{\cM}(y^*e_2e_1\widehat{\cL}_0xe_1e_2)\\
& -\bE_{\cM}(e_2e_1 y^*\widehat{\cL}_0e_1e_2x)+\bE_{\cM}(y^*e_2e_1 \widehat{\cL}_0e_1e_2x) \\
=& \lambda^{5/2} (y^*(1*\widehat{\cL}_0) x- y^*(x*\widehat{\cL}_0)-(y^**\widehat{\cL}_0)x+(y^*x)*\widehat{\cL}_0).
\end{align*}
Note that $\widehat{\cL}_0^{1/2}e_1e_2=\lambda^{1/2}\mathfrak{F}^{-1}(\widehat{\cL}_0^{1/2})e_2$.
We obtain that 
\begin{align*}
 \bE_{\cM}([y, \widehat{\cL}_0^{1/2}e_1e_2]^*[x, \widehat{\cL}_0^{1/2}e_1e_2])
 =& \lambda \bE_{\cM}(e_2[y, \mathfrak{F}^{-1}(\widehat{\cL}_0^{1/2})]^*[x, \mathfrak{F}^{-1}(\widehat{\cL}_0^{1/2})]e_2)\\
 =& \lambda^2 \bE_{\cM}([y, \mathfrak{F}^{-1}(\widehat{\cL}_0^{1/2})]^*[x, \mathfrak{F}^{-1}(\widehat{\cL}_0^{1/2})])\\
 =& \lambda^2\bE_{\cM}((\partial y)^*(\partial x)).
\end{align*}
This completes the computation.
\end{proof}

\begin{corollary}\label{cor:convineq1}
Suppose the inclusion $\cN\subset \cM$ is irreducible. 
Then for any positive operator $x\in \cM$ and $y\in \cM'\cap \cM_2$, we have
\begin{align}
    (x^*x)*y+x^*(1*y)x-x^*(x*y)-(x^**y)x\geq \lambda^{1/2} |[x, \mathfrak{F}^{-1}(y^{1/2})]|^2.
\end{align}
In particular, if $x*y=0$, then 
\begin{align*}
    (x^*x)*y+x^*(1*y)x\geq \lambda^{1/2} |[x, \mathfrak{F}^{-1}(y^{1/2})]|^2.
\end{align*}
and 
\begin{align*}
   \|[x, \mathfrak{F}^{-1}(y^{1/2})]\|_2\leq  \sqrt{2}\|y\|^{1/2}\|x\|_2.
\end{align*}
\end{corollary}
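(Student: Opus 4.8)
The plan is to deduce all three assertions from a single identity — essentially the content of Proposition \ref{prop:gradientform} — together with the Pimsner–Popa inequality for the intermediate inclusion $\cM\subset\cM_1$. First I would observe that the computation in the proof of Proposition \ref{prop:gradientform} never uses that $\widehat{\cL}_0$ arises from a semigroup generator; it uses only that $\widehat{\cL}_0$ is a positive element of $\cM'\cap\cM_2$, so that its square root is defined. Hence, for any $0\le z\in\cM'\cap\cM_2$ and any $a,b\in\cM$,
\[
 b^*(1*z)a-b^*(a*z)-(b^**z)a+(b^*a)*z=\lambda^{-1/2}\bE_{\cM}\left([b,\mathfrak{F}^{-1}(z^{1/2})]^*[a,\mathfrak{F}^{-1}(z^{1/2})]\right).
\]
Taking $z=y$ and $a=b=x$ gives
\[
 (x^*x)*y+x^*(1*y)x-x^*(x*y)-(x^**y)x=\lambda^{-1/2}\bE_{\cM}\left(\left|[x,\mathfrak{F}^{-1}(y^{1/2})]\right|^2\right).
\]

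Next I would invoke irreducibility. Since $x$ and $\mathfrak{F}^{-1}(y^{1/2})$ both lie in $\cM_1$, the element $w:=|[x,\mathfrak{F}^{-1}(y^{1/2})]|^2$ is positive in $\cM_1$. Because $\cN\subset\cM$ is irreducible, so is $\cM\subset\cM_1$ — indeed $\cM'\cap\cM_1=J(\cN'\cap\cM)J=\bC$ — so this $\lambda$-extension is irreducible and the Pimsner–Popa inequality reads $\bE_{\cM}(w)\ge\lambda w$. Substituting into the identity above gives the first claimed inequality. For the ``in particular'' statement, note that $\Phi(\cdot):=\cdot*y$ is a completely positive bimodule map (positivity of the Fourier multiplier is equivalent to complete positivity), hence $*$-preserving, so $x*y=0$ forces $x^**y=\Phi(x^*)=\Phi(x)^*=0$; the two subtracted terms then vanish and the second inequality is immediate.

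For the $L^2$-bound I would apply the trace $\tau$ to the displayed identity, assuming $x*y=x^**y=0$. Using $\tau\circ\bE_{\cM}=\tau_1$ on $\cM_1$, the right-hand side becomes $\lambda^{-1/2}\|[x,\mathfrak{F}^{-1}(y^{1/2})]\|_2^2$, while the left-hand side is $\tau((x^*x)*y)+\tau(x^*(1*y)x)$. I would bound each term by $\lambda^{-1/2}\|y\|\,\|x\|_2^2$: for the first, write it as $\tau(\Phi^*(1)\,x^*x)$ with $\widehat{\Phi^*}=\overline{y}\ge0$, and estimate $\|1*\overline{y}\|\le\lambda^{-1/2}\|y\|$ from $e_2e_1\overline{y}e_1e_2\le\|y\|\,e_2e_1e_2=\lambda\|y\|\,e_2$ and $\bE_{\cM}(e_2)=\lambda$; for the second, $\tau(x^*(1*y)x)=\tau((1*y)xx^*)\le\|1*y\|\,\|x\|_2^2$ with the same estimate on $\|1*y\|$. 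The two factors $\lambda^{-1/2}$ cancel, leaving $\|[x,\mathfrak{F}^{-1}(y^{1/2})]\|_2^2\le 2\|y\|\,\|x\|_2^2$, which is the asserted bound.

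The one point that needs care is the first step: legitimizing the use of the gradient identity of Proposition \ref{prop:gradientform} for a general positive $y\in\cM'\cap\cM_2$ in place of $\widehat{\cL}_0$. I would handle this by quoting the relevant algebraic identity directly from the proof of that proposition — it is written there for $\widehat{\cL}_0$ but the derivation applies verbatim to any positive $2$-box element — or, if a black-box citation is preferred, by applying Theorem \ref{thm:generatorform} with $L_0=y$, $L_1=0$ and treating the $e_2$-corner separately. All remaining manipulations are routine operator-algebra bookkeeping, and the irreducibility hypothesis is used exactly once, namely to fix the Pimsner–Popa constant of $\cM\subset\cM_1$ at $\lambda$.
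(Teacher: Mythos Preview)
Your proposal is correct and follows essentially the same route as the paper: both reuse the gradient-form identity of Proposition~\ref{prop:gradientform} with a generic positive $2$-box in place of $\widehat{\cL}_0$, then apply the Pimsner--Popa inequality $\bE_{\cM}(w)\ge\lambda w$ for $\cM\subset\cM_1$, and finally take traces with operator-norm bounds on $1*y$ and $1*\overline{y}$ for the $L^2$-estimate. Your version is slightly more explicit in two places --- justifying $x^**y=0$ via $*$-preservation of the completely positive map $\cdot*y$, and spelling out that irreducibility of $\cN\subset\cM$ transfers to $\cM\subset\cM_1$ via $\cM'\cap\cM_1=J(\cN'\cap\cM)J$ --- and you trace the \emph{identity} rather than the derived inequality, but these are cosmetic differences within the same argument.
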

\begin{proof}
  Note that by Pimsner-Popa inequality: 
  \begin{align*}
      &  (x^*x)*y+x^*(1*y)x-x^*(x*y)-(x^**y)x \\
       =&  \lambda^{-5/2}\bE_{\cM} ([x, y^{1/2}e_1 e_2]^*[x, y^{1/2}e_1 e_2]) \\
       =&  \lambda^{-1/2}\bE_{\cM} ([x, \mathfrak{F}^{-1}(y^{1/2})]^*[x, \mathfrak{F}^{-1}(y^{1/2})])\\
       \geq  &  \lambda^{1/2}[x, \mathfrak{F}^{-1}(y^{1/2})]^*[x, \mathfrak{F}^{-1}(y^{1/2})]= \lambda^{1/2}|[x, \mathfrak{F}^{-1}(y^{1/2})]|^2.
  \end{align*}

  Suppose that $x*y=0$.
  We see that $(x^*x)*y+x^*(1*y)x\geq  \lambda^{1/2} |[x, \mathfrak{F}^{-1}(y^{1/2})]|^2.$
  By taking trace, we obtain that 
  \begin{align*}
  \tau((x^*x)*y)+\tau(x^*(1*y)x)\geq  \lambda^{1/2} \|[x, \mathfrak{F}^{-1}(y^{1/2})]\|_2^2.
  \end{align*}
  Note that 
  \begin{align*}
  \tau((x^*x)*y)\leq \lambda^{1/2}\|x\|_2^2\|y\|, \quad 1*y\leq \lambda^{1/2}\|y\|.
  \end{align*}
  We see that $\|[x, \mathfrak{F}^{-1}(y^{1/2})]\|_2\leq  \sqrt{2}\|y\|^{1/2}\|x\|_2 $.
  If $\cN\subset \cM$ is irreducible, then 
    \begin{align*}
  \tau((x^*x)*y)=\lambda^{1/2}\|x\|_2^2\|y\|_1, \quad 1*y= \lambda^{1/2}\|y\|_1.
  \end{align*}
  This completes the computation.
\end{proof}

\begin{corollary}\label{cor:convineq2}
Suppose that $\cN\subset \cM$ is irreducible and $\Phi$ is bimodule completely positive map.
Then for any $x\in \cM$, we have that 
\begin{align}
    \Phi(x^*x)+ x^*\Phi(1)x-x^*\Phi(x)-\Phi(x^*)x \geq \lambda^{1/2} |[x, \mathfrak{F}^{-1}(\widehat{\Phi}^{1/2})]|^2
\end{align}
In particular, if $\Phi(x)=0$, we have that 
\begin{align}
    \Phi(x^*x)+ x^*\Phi(1)x \geq \lambda^{1/2} |[x, \mathfrak{F}^{-1}(\widehat{\Phi}^{1/2})]|^2
\end{align}
\end{corollary}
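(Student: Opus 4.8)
The plan is to obtain this corollary as an immediate specialization of Corollary~\ref{cor:convineq1} with $y=\widehat{\Phi}$. First I would recall two facts established earlier in the excerpt. Since $\Phi$ is a bounded $\cN$-bimodule completely positive map, its Fourier multiplier $\widehat{\Phi}\in\cM'\cap\cM_2$ is positive, so $\widehat{\Phi}^{1/2}$ is well defined; and $\Phi$ acts by convolution, i.e.\ $\Phi(z)=z*\widehat{\Phi}$ for every $z\in\cM$, which is exactly Equation~\eqref{eq:multiplierphi} together with the convention $\Phi(x)=x*\widehat{\Phi}$ introduced right after it. Applying this to $z=x^*x$, $z=1$, $z=x$, and $z=x^*$ rewrites the left-hand side
\[
\Phi(x^*x)+x^*\Phi(1)x-x^*\Phi(x)-\Phi(x^*)x
\]
as precisely
\[
(x^*x)*\widehat{\Phi}+x^*(1*\widehat{\Phi})x-x^*(x*\widehat{\Phi})-(x^**\widehat{\Phi})x ,
\]
which is the quantity bounded below in Corollary~\ref{cor:convineq1}. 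Since $\cN\subset\cM$ is irreducible and $\widehat{\Phi}\geq 0$, that corollary immediately yields the bound $\lambda^{1/2}\,|[x,\mathfrak{F}^{-1}(\widehat{\Phi}^{1/2})]|^2$.

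For the special case $\Phi(x)=0$: by the convolution description this reads $x*\widehat{\Phi}=0$, so $x^*\Phi(x)=0$; and because a completely positive map is $*$-preserving, $\Phi(x^*)=\Phi(x)^*=0$, hence $\Phi(x^*)x=0$ as well. Thus the two middle terms of the general inequality vanish, and I would invoke the second half of Corollary~\ref{cor:convineq1} — the case $x*y=0$, applied with $y=\widehat{\Phi}$ — to conclude $\Phi(x^*x)+x^*\Phi(1)x\geq\lambda^{1/2}\,|[x,\mathfrak{F}^{-1}(\widehat{\Phi}^{1/2})]|^2$.

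There is essentially no genuine obstacle here; the only points needing care are bookkeeping. One should confirm that no stray powers of $\lambda$ appear: the normalization in Equation~\eqref{eq:multiplierphi} is already absorbed into the identity $\Phi(z)=z*\widehat{\Phi}$, so the substitution is term-by-term. One should also note that it is complete positivity of $\Phi$ (not mere positivity) that licenses both $\widehat{\Phi}\geq 0$ and the $*$-preservation used in the special case. If a self-contained argument were preferred, one could bypass Corollary~\ref{cor:convineq1} and instead expand $\bE_{\cM}\big([x,\widehat{\Phi}^{1/2}e_1e_2]^*[x,\widehat{\Phi}^{1/2}e_1e_2]\big)\geq 0$, applying the Pimsner--Popa inequality together with $\widehat{\Phi}^{1/2}e_1e_2=\lambda^{1/2}\mathfrak{F}^{-1}(\widehat{\Phi}^{1/2})e_2$ exactly as in the proof of Corollary~\ref{cor:convineq1}; but reusing that corollary is the cleaner route.
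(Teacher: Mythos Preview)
Your proposal is correct and matches the paper's approach: the paper simply states that the corollary ``follows from the same argument as Corollary~\ref{cor:convineq1},'' and your direct application of Corollary~\ref{cor:convineq1} with $y=\widehat{\Phi}$ (using $\Phi(z)=z*\widehat{\Phi}$ and $\widehat{\Phi}\geq 0$) is exactly this. The only cosmetic difference is that you invoke the statement of Corollary~\ref{cor:convineq1} rather than re-running its proof, which is if anything cleaner.
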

\begin{proof}
It follows from the same argument as Corollary \ref{cor:convineq1} by taking $y = \widehat{\Phi}$. 
\end{proof}

\subsection{Invariant Subalgebras} 

In this section, we investigate the invariant subspace of a bimodule quantum Markov semigroup $\{\Phi_t\}_{t\geq 0}$.

\begin{proposition}\label{prop:gradientform2}
Suppose that $\{\Phi_t\}_{t\geq 0}$ is a bimodule quantum Markov semigroup.
Then 
\begin{align*}
   \Ker\Gamma
   :=& \{x\in \cM: \Gamma(x, x)=0\}, \\
    =& \{x\in \cM: [x, \mathfrak{F}^{-1}(\widehat{\cL}_0^{1/2})]=0\},\\
     =& \{x\in \cM:  \widehat{\cL}_0 xe_1e_2= \widehat{\cL}_0e_1 x e_2\},\\
    =& \{x\in \cM:  \CS_0(\widehat{\cL}_0) xe_1e_2= \CS_0(\widehat{\cL}_0)e_1 x e_2\},\\
    =& \{x\in \cM: \partial_j x=0, j\in \sI_0\cup \sI_1\}.
\end{align*}
\end{proposition}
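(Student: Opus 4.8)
The plan is to read off all five descriptions from the equivalent expressions for the gradient form collected in Proposition~\ref{prop:gradientform}, combined with faithfulness of the traces $\tau_1,\tau_2$ (equivalently, of the conditional expectations $\bE_{\cM}$ and $\bE_{\cM_1}$ restricted to $\cM_2$), together with Lemma~\ref{lem:fourier}. Throughout I write $b := xe_1e_2 - e_1xe_2 \in \cM_2$, and I use repeatedly that $x \in \cM$ commutes with every element of $\cM'\cap\cM_2$, in particular with $e_2$, $\widehat{\cL}_0$ and $\widehat{\cL}_0^{1/2}$.

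First I would record what Proposition~\ref{prop:gradientform} already yields. From $2\Gamma(x,x) = \lambda^{-5/2}\bE_{\cM}([x,\widehat{\cL}_0^{1/2}e_1e_2]^*[x,\widehat{\cL}_0^{1/2}e_1e_2])$ and faithfulness of $\bE_{\cM}$ we get $\Gamma(x,x) = 0 \iff [x,\widehat{\cL}_0^{1/2}e_1e_2] = 0$; from $2\Gamma(x,x) = \lambda^{-1/2}\bE_{\cM}((\partial x)^*(\partial x))$ we get $\Gamma(x,x) = 0 \iff \partial x = 0$, which is the second set since $\partial x = [x,\mathfrak{F}^{-1}(\widehat{\cL}_0^{1/2})]$ by definition; and from $2\Gamma(x,x) = \lambda^{-1/2}\sum_{j\in\sI_0\cup\sI_1}\bE_{\cM}((\partial_j x)^*(\partial_j x))$, a sum of $\bE_{\cM}$'s of positive elements, we get $\Gamma(x,x) = 0 \iff \partial_j x = 0$ for all $j$, which is the fifth set. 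For the third set, Lemma~\ref{lem:fourier} gives $\widehat{\cL}_0^{1/2}e_1e_2 = \lambda^{1/2}\mathfrak{F}^{-1}(\widehat{\cL}_0^{1/2})e_2$, and the commutations above yield $[x,\widehat{\cL}_0^{1/2}e_1e_2] = \widehat{\cL}_0^{1/2}b = \lambda^{1/2}(\partial x)e_2$; since $\bE_{\cM_1}(e_2) = \lambda$ we have $\|(\partial x)e_2\|_2^2 = \lambda\|\partial x\|_2^2$, hence $(\partial x)e_2 = 0 \iff \partial x = 0$, and since $\widehat{\cL}_0^{1/2}b = 0 \iff \widehat{\cL}_0 b = 0$ (multiply the second equation on the left by $b^*$ and use positivity of $\widehat{\cL}_0$), the condition $\Gamma(x,x) = 0$ is equivalent to $\widehat{\cL}_0(xe_1e_2 - e_1xe_2) = 0$, which is the third set.

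It remains to identify the fourth set with the third. If $\CS_0(\widehat{\cL}_0)b = 0$, then $\cR(\widehat{\cL}_0)b = 0$ and hence $\widehat{\cL}_0 b = 0$, because $\CS_0(\widehat{\cL}_0)\geq\cR(\widehat{\cL}_0)$ and $\widehat{\cL}_0 = \cR(\widehat{\cL}_0)\widehat{\cL}_0$. Conversely, from $\widehat{\cL}_0 b = 0$ one argues as in the previous paragraph (using $\widehat{\cL}_0 e_1e_2 = \lambda^{1/2}\mathfrak{F}^{-1}(\widehat{\cL}_0)e_2$, the commutations, and cancelling the trailing $e_2$ via $\bE_{\cM_1}$) that $[x,\mathfrak{F}^{-1}(\widehat{\cL}_0)] = 0$; therefore $[x,\mathfrak{F}^{-1}(\widehat{\cL}_0)^k] = 0$ for every $k\geq 1$, and translating this through Lemma~\ref{lem:fourier} exactly as in the proof of Proposition~\ref{prop:subalgebra2} gives $\widehat{\cL}_0^{(*k)}(xe_1e_2 - e_1xe_2) = 0$ for all $k$, so $\cR(\widehat{\cL}_0^{(*k)})b = 0$ for all $k$ and hence $\CS_0(\widehat{\cL}_0)b = 0$. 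The delicate step is this last one — propagating a single commutation relation to all convolution powers, keeping track of which Fourier transform intervenes so that it is the plain powers $\widehat{\cL}_0^{(*k)}$ (with no modular conjugates) that appear, matching the definition of $\CS_0$; everything else is a routine combination of faithfulness of $\tau_2$ with repeated use of Lemma~\ref{lem:fourier}.
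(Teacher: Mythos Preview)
Your argument is correct and is precisely the unpacking that the paper's one-line proof (``It follows from Proposition~\ref{prop:gradientform}'') leaves to the reader. The only part not literally contained in Proposition~\ref{prop:gradientform} is the $\CS_0$ characterization, and your treatment of it is right: the key identity $\mathfrak{F}^{-1}\bigl(\widehat{\cL}_0^{(*k)}\bigr)=\mathfrak{F}^{-1}(\widehat{\cL}_0)^k$ holds because $\widehat{\cL}_0$ and all its convolution powers are positive (hence self-adjoint), so that the two maps $\cM'\cap\cM_2\to\cN'\cap\cM_1$ denoted $\mathfrak{F}$ and $\mathfrak{F}^{-1}$ in the paper are adjoints of one another on these elements, and the convolution--to--product rule $\mathfrak{F}(x*y)=\mathfrak{F}(y)\mathfrak{F}(x)$ transfers cleanly; this is exactly the ``delicate step'' you flagged, and it does go through without needing the adjoints/conjugates that appear in Proposition~\ref{prop:subalgebra2}, which is why $\CS_0$ (rather than $\CS$) is what comes out.
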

\begin{proof}
It follows from Proposition \ref{prop:gradientform}.    
\end{proof}

\begin{definition}
Suppose that $\{\Phi_t\}_{t\geq 0}$ is a bimodule quantum Markov semigroup.
The fixed point subspace $\fix(\{\Phi_t\}_{t\geq 0})$ is defined to be
\begin{align*}
    \fix(\{\Phi_t\}_{t\geq 0})=\{x\in \cM: \Phi_t(x)=x, t\geq 0\}
\end{align*}
The multiplicative domain $\nfix(\{\Phi_t\}_{t\geq 0})$ of $\{\Phi_t\}_{t\geq 0}$ is defined to be
\begin{align*}
    \nfix(\{\Phi_t\}_{t\geq 0})=\{x\in \cM: \Phi_t(x^*)\Phi_t(x)=\Phi_t(x^*x) , \Phi_t(x)\Phi_t(x^*)=\Phi_t(xx^*) , t\geq 0\}.
\end{align*}
\end{definition}

\begin{remark}
By the definition of Lindbladian, we see that $\fix(\{\Phi_t\}_{t\geq 0})=\Ker(\cL)$.   
\end{remark}

\begin{remark}
    By differentiating $\Phi_t(x^*)\Phi_t(x)=\Phi_t(x^*x)$ with respect to $t$ at $t=0$, we have that $\cL(x^*x)=\cL(x^*)x+x^*\cL(x)$.
    This implies that $\nfix(\{\Phi_t\}_{t\geq 0})\subset \Ker\Gamma$.  
\end{remark}

\begin{proposition}
Suppose that $\{\Phi_t\}_{t\geq 0}$ is a bimodule quantum Markov semigroup such that for each $t\geq0$, $\Phi_t$ admits equilibrium state $\rho_t$. 
Then $\fix(\{\Phi_t\}_{t\geq 0})$ is a von Neumann algebra.
\end{proposition}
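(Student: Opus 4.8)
The plan is to reduce this to the single-channel statement of Theorem \ref{prop:algebra}. First I would note that directly from the definition
\[
\fix(\{\Phi_t\}_{t\geq 0}) \;=\; \bigcap_{t\geq 0}\fix(\Phi_t) \;=\; \bigcap_{t>0}\fix(\Phi_t),
\]
the last equality holding because $\Phi_0=\mathrm{id}_{\cM}$ and hence $\fix(\Phi_0)=\cM$. So it suffices to prove that each $\fix(\Phi_t)$ with $t>0$ is a von Neumann subalgebra of $\cM$, and that an arbitrary intersection of von Neumann subalgebras of $\cM$ is again a von Neumann subalgebra.

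For the first point, I would invoke the hypothesis: for every $t>0$ the map $\Phi_t$ is a bimodule quantum channel which is bimodule equilibrium with respect to some $\widehat{\Delta}_t\in\cM'\cap\cM_2$. Theorem \ref{prop:algebra} then applies to $\Phi_t$ and gives
\[
\fix(\Phi_t) = \{\,x\in\cM : x\,\widehat{\Phi}_t^{1/2}e_1e_2 = \widehat{\Phi}_t^{1/2}e_1e_2\,x\,\},
\]
which is a von Neumann subalgebra of $\cM$; in particular it contains $\cN$, hence the unit.

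For the second point, each $\fix(\Phi_t)$ is a unital $*$-subalgebra of $\cM$ that is closed in the weak operator topology, so the intersection $\bigcap_{t>0}\fix(\Phi_t)$ is again a weakly closed unital $*$-subalgebra of $\cM$, i.e. a von Neumann subalgebra. Combining the two points yields the claim. As an alternative route, since $\{\Phi_t\}_{t\geq 0}$ is uniformly continuous with bounded generator $\cL$ by Proposition \ref{prop:uniform}, one has $\fix(\{\Phi_t\}_{t\geq 0})=\Ker\cL$; but the intersection argument is the cleanest and requires nothing beyond Theorem \ref{prop:algebra}.

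I do not expect a genuine obstacle here. The only point that deserves a word of care is that we are not given a single $\widehat{\Delta}$ valid for all $t$ simultaneously, only a possibly $t$-dependent family $\widehat{\Delta}_t$; this is harmless, since Theorem \ref{prop:algebra} is applied one value of $t$ at a time, before the intersection is taken.
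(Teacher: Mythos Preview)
Your argument is correct and essentially coincides with the paper's: the paper's proof is the one-line ``It follows from Proposition \ref{prop:bimodulealgebra2}'', which, like your use of Theorem \ref{prop:algebra}, shows each $\fix(\Phi_t)$ is a von Neumann subalgebra, after which the intersection argument is implicit. The only cosmetic difference is that the paper cites the $\CS(\widehat{\Phi})$-characterization (Proposition \ref{prop:bimodulealgebra2}) rather than Theorem \ref{prop:algebra}, but both rest on the same bimodule-equilibrium mechanism.
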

\begin{proof}
    It follows from Lemma \ref{lem:subalgebra}.
\end{proof}

\begin{remark}
Suppose that $\{\Phi_t\}_{t\geq 0}$ is a bimodule quantum Markov semigroup such that $\Phi_t$ is equilibrium with respect to $\rho_t$.
We have that $\fix(\{\Phi_t\}_{t\geq 0})\subset \nfix(\{\Phi_t\}_{t\geq 0})$.
\end{remark}


\begin{theorem}\label{thm:bilimit}
 Suppose that $\{\Phi_t\}_{t\geq 0}$ is a bimodule quantum Markov semigroup and $\Phi_t$ is equilibrium with respect to a normal faithful state $\rho_t$.
 Then $\fix(\{\Phi_t\}_{t \geq 0})= \nfix(\{\Phi_t\}_{t \geq 0})$ if and only if 
 \begin{align}
     \lim_{t\to\infty} \widehat{\Phi}_t=\lim_{t\to\infty} \frac{1}{t}\int_0^t \widehat{\Phi}_s ds.
 \end{align}
\end{theorem}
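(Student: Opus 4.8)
The plan is to move everything to the finite-dimensional algebra $\cM'\cap\cM_2$ and read off the peripheral (purely oscillatory) part of the semigroup there. Since $\cN\subset\cM$ is a finite inclusion, $\cM'\cap\cM_2$ is finite-dimensional, and $\Psi\mapsto\widehat\Psi$ is a linear isomorphism from bounded $\cN$-bimodule maps on $\cM$ onto $\cM'\cap\cM_2$, continuous together with its inverse by \eqref{eq:fouriermultiple} and \eqref{eq:multiplierphi}, carrying composition to convolution. Hence $t\mapsto\widehat\Phi_t$ solves $\frac{d}{dt}\widehat\Phi_t=-\widehat\Phi_t*\widehat\cL=:-B(\widehat\Phi_t)$ for the bounded operator $B(Y)=Y*\widehat\cL$ on the finite-dimensional space $\cM'\cap\cM_2$, so $\widehat\Phi_t=e^{-tB}(\widehat{\mathrm{id}})$, and $\{\widehat\Phi_t\}$ is uniformly bounded (the $\Phi_t$ are unital completely positive, so $\|\Phi_t\|_{cb}=1$, and all norms on the finite-dimensional space of bimodule maps are equivalent). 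Consequently $\lim_{T\to\infty}\frac1T\int_0^T\widehat\Phi_s\,ds$ always exists and equals $\widehat{\bE}_\Phi$, the Fourier multiplier of the ergodic mean $\bE_\Phi$, a bimodule conditional expectation onto $\fix(\{\Phi_t\})=\bigcap_{t}\fix(\Phi_t)$ --- a von Neumann algebra by Theorem \ref{prop:algebra}. Thus the asserted identity is equivalent to the statement that $\lim_{t\to\infty}\widehat\Phi_t$ exists; and, recovering $\Phi_t$ from $\widehat\Phi_t$ by \eqref{eq:multiplierphi} and conversely by \eqref{eq:fouriermultiple}, this is equivalent to the existence of $\lim_{t\to\infty}\Phi_t$ in the point-weak topology, in which case the limit is automatically $\bE_\Phi$. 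It remains to identify this with $\fix(\{\Phi_t\})=\nfix(\{\Phi_t\})$.

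For the ``if'' direction, suppose $\lim_{t\to\infty}\Phi_t$ exists and let $x\in\nfix(\{\Phi_t\})$. By the remark that $\nfix(\{\Phi_t\})\subseteq\Ker\Gamma$ together with Proposition \ref{prop:gradientform2}, $[x,\mathfrak{F}^{-1}(\widehat\cL_0^{1/2})]=0$; applying the same to $x^*\in\nfix(\{\Phi_t\})$ and taking adjoints gives $[x,\mathfrak{F}^{-1}(\widehat\cL_0^{1/2})^*]=0$ as well, so $\cL_a(x)=0$ by Lemma \ref{lem:laplacian1} and therefore $\cL(x)=\cL_w(x)=i[x,w]$ with $w=\Im\,\bE_\cM(\mathfrak{F}^{-1}(\widehat\cL_1))$. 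Now $w$ is a self-adjoint element of the finite-dimensional operator system $\bE_\cM(\cN'\cap\cM_1)$, so it has finite spectrum; hence the operator $[\,\cdot\,,w]=-i\,\cL|_{\nfix(\{\Phi_t\})}$ is diagonalizable with finitely many real eigenvalues, its spectral projections are polynomials in it and so preserve the $\cL$-invariant subalgebra $\nfix(\{\Phi_t\})$, and we may decompose $\nfix(\{\Phi_t\})=\bigoplus_\nu\nfix_\nu$ with $\cL(x_\nu)=i\nu x_\nu$ on $\nfix_\nu$. If some $\nu\ne0$ had $\nfix_\nu\ne0$, then $\Phi_t(x_\nu)=e^{-i\nu t}x_\nu$ would not converge, contradicting the hypothesis; hence $\nfix(\{\Phi_t\})=\nfix_0=\Ker\big(\cL|_{\nfix(\{\Phi_t\})}\big)=\fix(\{\Phi_t\})$, the last equality because $\fix(\{\Phi_t\})\subseteq\nfix(\{\Phi_t\})$.

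For the ``only if'' direction, assume $\fix(\{\Phi_t\})=\nfix(\{\Phi_t\})$ and suppose for contradiction that $\lim_{t\to\infty}\widehat\Phi_t$ does not exist. Since $\widehat\Phi_t=e^{-tB}(\widehat{\mathrm{id}})$ is a bounded orbit of a uniformly continuous one-parameter semigroup on a finite-dimensional space, non-convergence forces the existence of $\theta\in\bR\setminus\{0\}$ and $0\ne X\in\cM'\cap\cM_2$ with $X*\widehat\cL=i\theta X$ (namely the component of $\widehat{\mathrm{id}}$ in the $i\theta$-eigenspace of $B$, which is nonzero and on which $B$ acts semisimply since the orbit is bounded). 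Writing $X=\widehat\Psi$, the relation $\widehat{\cL\circ\Psi}=\widehat\Psi*\widehat\cL=i\theta\widehat\Psi$ gives $\cL(\Psi(a))=i\theta\Psi(a)$ for all $a$, so there is $0\ne y$ with $\cL(y)=i\theta y$, i.e. $\Phi_t(y)=e^{-i\theta t}y$. Set $t_0=2\pi/\theta$. Then $\Phi_{t_0}(y)=y$, so $y\in\fix(\Phi_{t_0})$; since $\Phi_{t_0}$ is bimodule equilibrium with respect to $\widehat\Delta_{t_0}$, Theorem \ref{prop:algebra} and its proof give $\Phi_{t_0}(y^*y)=y^*y$ and $\Phi_{t_0}(yy^*)=yy^*$. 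On the other hand, the Kadison--Schwarz inequality gives $\Phi_t(y^*y)\ge\Phi_t(y)^*\Phi_t(y)=y^*y$ for all $t\ge0$, and applying the positive map $\Phi_t$ to $\Phi_s(y^*y)\ge y^*y$ shows $t\mapsto\Phi_t(y^*y)$ is non-decreasing and bounded above by $\|y\|^2$; being equal to $y^*y$ along $t_0\bN$, it is identically $y^*y$, and likewise $\Phi_t(yy^*)\equiv yy^*$. Hence $y\in\nfix(\{\Phi_t\})$, whereas $\Phi_{t_0/2}(y)=-y\ne y$ shows $y\notin\fix(\{\Phi_t\})$ --- contradicting $\fix(\{\Phi_t\})=\nfix(\{\Phi_t\})$. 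So $\lim_{t\to\infty}\widehat\Phi_t$ exists, and by the first paragraph it equals $\lim_{T\to\infty}\frac1T\int_0^T\widehat\Phi_s\,ds$.

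The delicate point is the ``only if'' direction: extracting, from a peripheral eigenvalue of the generator, an element of the multiplicative domain $\nfix(\{\Phi_t\})$ that fails to be fixed. A faithful $\Phi_t$-invariant state would make this immediate --- one would apply the faithful ergodic mean to the nonnegative Kadison--Schwarz defect $\Phi_t(y^*y)-y^*y$ --- but in the present generality $\bE_\Phi$ need not be faithful. The device above circumvents this: evaluating at $t_0=2\pi/\theta$ lands $y$ in $\fix(\Phi_{t_0})$, where Theorem \ref{prop:algebra} supplies the needed equality $\Phi_{t_0}(y^*y)=y^*y$, and monotonicity of $t\mapsto\Phi_t(y^*y)$ then propagates it to all $t$. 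The finite-dimensional bookkeeping (uniform boundedness of $e^{-tB}$, automatic semisimplicity on the imaginary axis, and the fact that $w$ lives in a finite-dimensional operator system) is routine, and I would only need to record it carefully.
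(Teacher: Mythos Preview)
Your proof is correct, and both directions take a genuinely different route from the paper.

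For the ``if'' direction, the paper simply passes to the limit in $\Phi_t(x^*x)=\Phi_t(x)^*\Phi_t(x)$ to obtain $\bE_\Phi(x^*x)=\bE_\Phi(x)^*\bE_\Phi(x)$, expands $\bE_\Phi\big((x-\bE_\Phi(x))^*(x-\bE_\Phi(x))\big)=0$, and concludes $x=\bE_\Phi(x)$. Your argument instead exploits the structural decomposition $\cL=\cL_a+\cL_w$: on $\nfix$ the dissipative part $\cL_a$ vanishes (via $\nfix\subseteq\Ker\Gamma$ and Lemma~\ref{lem:laplacian1}), leaving the inner derivation $i[\,\cdot\,,w]$ with $w\in\cN'\cap\cM$ of finite spectrum, whose nonzero eigenvalues would produce non-convergent oscillations. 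The virtue of your route is that it sidesteps any appeal to faithfulness of $\bE_\Phi$, which the paper's one-line conclusion tacitly uses; the price is that you invoke the later machinery of \S5 (the Laplacian splitting and Proposition~\ref{prop:gradientform2}).

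For the ``only if'' direction, the paper argues by weak-$*$ compactness: any subnet limit $x_0$ of $\Phi_s(x)$ is shown, through the Fourier-multiplier identity
\[
\langle (\Phi_t(y^*y)-\Phi_t(y)^*\Phi_t(y))\Omega_1,\overline{\widehat\Delta_t}e_1\Omega_1\rangle
=\lambda\langle y\Omega_1,(1-\mathfrak F^{-1}(\widehat\Phi_t)^*\mathfrak F^{-1}(\widehat\Phi_t))y\Omega_1\rangle,
\]
to lie in $\nfix=\fix$, whence $x_0=\bE_\Phi(x)$ independently of the subnet. You instead read off a purely imaginary eigenvalue $i\theta$ of the convolution generator on the finite-dimensional space $\cM'\cap\cM_2$, produce $y\ne0$ with $\Phi_t(y)=e^{-i\theta t}y$, and then use the period $t_0=2\pi/\theta$ together with Theorem~\ref{prop:algebra} and monotonicity of $t\mapsto\Phi_t(y^*y)$ to force $y\in\nfix\setminus\fix$. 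Your argument is more elementary (pure finite-dimensional spectral bookkeeping plus Kadison--Schwarz monotonicity) and makes the role of the bimodule-equilibrium hypothesis very transparent: it is used exactly once, at $t_0$, to convert $y\in\fix(\Phi_{t_0})$ into $\Phi_{t_0}(y^*y)=y^*y$. The paper's approach, by contrast, keeps the argument at the level of arbitrary elements of $\cM$ and avoids singling out a peripheral eigenvector.
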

\begin{proof}
Suppose that $\displaystyle \lim_{t\to\infty} \Phi_t(x)=\lim_{t\to\infty} \frac{1}{t}\int_0^t \Phi_s(x)ds$ for any $x\in \cM$.
Note that the limit $\displaystyle \lim_{t\to\infty} \frac{1}{t}\int_0^t \Phi_s(x)ds$ exists. 
We define the map $\bE_{\Phi}: \cM \to \cM$ to be
 \begin{align*}
     \bE_{\Phi}(x)= \lim_{t\to\infty} \frac{1}{t}\int_0^t \Phi_s(x)ds, \quad x\in \cM.
 \end{align*}
 Then $\bE_{\Phi}^2=\bE_{\Phi}$ is a conditional expectation onto the fixed point space of $\{\Phi_t\}_{t \geq 0}$, which is a von Neumann subalgebra.
By the assumption, we see that $\displaystyle \lim_{t\to\infty} \Phi_t(x)=\bE_{\Phi}(x)$ for any $x\in \cM$.
 
 Suppose $x\in \nfix(\Phi)$.
 Then $\Phi_t(x^*x)=\Phi_t(x^*)\Phi_t(x)$.
 By taking limit as $t\to \infty$, we obtain that $\bE_{\Phi}(x^*x)=\bE_{\Phi}(x^*)\bE_{\Phi}(x)$.
 Hence
 \begin{align*}
 \bE_{\Phi}((x-\bE_\Phi(x))^* (x-\bE_\Phi(x)))=\bE_{\Phi}(x^*x)-\bE_{\Phi}(x^*)\bE_{\Phi}(x)=0.
 \end{align*}
 This implies that $x=\bE_\Phi(x)$, i.e. $x\in \fix(\{\Phi_t\}_{t \geq 0})$.

Suppose that $\fix(\{\Phi_t\}_{t \geq 0})= \nfix(\{\Phi_t\}_{t \geq 0})$.
Let $x\in \cM$ with $\|x\|=1$.
Then $\{\Phi_t(x)\}_{t\geq 0}$ is bounded. 
By the Banach-Alaoglu theorem, there exists a subnet $\mathcal{I}$ and $x_0\in\cM$ such that the $\displaystyle \text{w}^*\text{-} \lim_{s\in \mathcal{I}} \Phi_s(x) = x_0$.
Suppose that $\rho_t(\cdot)=\langle \cdot\xi_t, \xi_t\rangle$.
We have that 
\begin{align*}
    \langle (\Phi_t(y_1^*y_2)-\Phi_t(y_1)^*\Phi_t(y_2))\xi_t,  \xi_t\rangle 
    =\langle y_2\xi_t, y_1\xi_t\rangle- \langle \Phi_t(y_2)\xi_t, \Phi_t(y_1)\xi_t\rangle,
\end{align*}
and
\begin{align*}
 &\langle (\Phi_t(\Phi_s(x)^*\Phi_s(x))-\Phi_t(\Phi_s(x))^*\Phi_t(\Phi_s(x)))\xi_t,  \xi_t\rangle   \\
 =& \| \Phi_s(x)\xi_t \|^2- \|\Phi_{t+s}(x)\xi_t\|^2\geq 0.  
\end{align*}
By taking limit as $s\in \mathcal{I}$, we have that 
\begin{align*}
\rho_t( (\Phi_t(x_0^*x_0)-\Phi_t(x_0)^*\Phi_t(x_0))) =0.
\end{align*}
Hence $x_0\in \nfix(\{\Phi_t\}_{t \geq 0})$.
Note that $\fix(\{\Phi_t\}_{t \geq 0})= \nfix(\{\Phi_t\}_{t \geq 0})$.
We see that $x_0\in \fix(\{\Phi_t\}_{t \geq 0})$.
Taking the conditional expectation $\bE_{\Phi}$, we obtain that $x_0=\bE_{\Phi}(x)$.
Therefore $\displaystyle \text{w}^*\text{-}\lim_{s\in \mathscr{I}}\Phi_s(x)=\bE_{\Phi}(x)$, which is independent of the choice of the subnet, i.e. 
\begin{align*}
\text{w}^*\text{-}\lim_{t\to\infty }\Phi_t(x)=\bE_{\Phi}(x).
\end{align*}
By the fact that the inclusion is finite, we see that $\displaystyle \lim_{t\to\infty }\Phi_t(x)=\bE_{\Phi}(x)$ for any $x\in \cM$.
\end{proof}

\begin{remark}
Theorem \ref{thm:bilimit} generalizes Theorem 3.1 in \cite{Fri78} and is a bimodule version of Theorem 3.4 in \cite{FriVer82}.
\end{remark}

\begin{definition}[Relative Irreducibility]
Suppose that $\{\Phi_t\}_{t\geq 0}$ is a bimodule quantum Markov semigroup.
We say $\{\Phi_t\}_{t\geq 0}$ is relatively irreducible if for any projection $p\in \cM$ satisfying $\Phi_t(p)\leq c_t p$ for some $c_t>0$ and for all $t\geq 0$, we have that $p\in \cN$.
\end{definition}


\begin{remark}
Suppose that $\{\Phi_t\}_{t\geq 0}$ is relative irreducibilty and $\cN$ is a factor.
Then $\Phi_t$ is equilibrium for all $t\geq 0$ follows from Lemma 5.9 in \cite{HJLW24}.
\end{remark}


\begin{definition}[Relative Ergodicity]
    Suppose $\{\Phi_t\}_{t\geq 0}$ is a bimodule quantum Markov semigroup.
    We say $\{\Phi_t\}_{t\geq 0}$ is relatively ergodic if $\Ker\cL=\cN$ or equivalently $\fix(\{\Phi_t\}_{t\geq 0})=\cN$.
\end{definition}

\begin{proposition}
 Suppose $\{\Phi_t\}_{t\geq 0}$ is a bimodule quantum Markov semigroup.
Then $\{\Phi_t\}_{t\geq 0}$ is relatively ergodic if and only if $\mathcal{R}(\mathfrak{F}^{-1}(\widehat{\cL}))=1-e_1$.
\end{proposition}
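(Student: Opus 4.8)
The plan is to reformulate relative ergodicity — which by definition (and the Remark above) means $\Ker\cL=\cN$, equivalently $\fix(\{\Phi_t\}_{t\ge 0})=\cN$ — as a statement about the kernel of a single operator in $\cN'\cap\cM_1$ acting on $L^2(\cM)$, and then to read that kernel off from its intersection with the dense subspace $\cM\Omega$.

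First I would fix notation: set $c:=\mathfrak{F}^{-1}(\widehat{\cL})\in\cN'\cap\cM_1$ and $b:=\mathfrak{F}^{-1}(\overline{\widehat{\cL}})\in\cN'\cap\cM_1$. Since the modular conjugation on $\cM'\cap\cM_2$ is $\mathfrak{F}^2$ and $\mathfrak{F}^4=\mathrm{id}$, one has $b=\mathfrak{F}(\widehat{\cL})=\overline{c}=Jc^*J$, hence $b^*=JcJ$ and $\mathcal{R}(b^*)=J\mathcal{R}(c)J$; since $Je_1J=e_1$ this yields the equivalence $\mathcal{R}(c)=1-e_1\iff\mathcal{R}(b^*)=1-e_1$. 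By Proposition~\ref{prop:invariantker}, $x\in\Ker\cL$ iff $bx^*\Omega=0$; as $\cL$ is $*$-preserving, $\Ker\cL$ is $*$-closed, so replacing $x$ by $x^*$ gives $\Ker\cL=\{x\in\cM:\ bx\Omega=0\}$, and therefore $\Ker\cL\cdot\Omega=\cM\Omega\cap\ker b$, where $\ker b=(1-\mathcal{R}(b^*))L^2(\cM)$.

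The main point — and the step I expect to carry the real content — is that $\ker b$ equals the closure of $\ker b\cap\cM\Omega$. Here I would invoke the pull-down identity from the preliminaries: for $a\in\cN'\cap\cM_1\subseteq\cM_1$ and $x\in\cM$ one has $axe_1=x'e_1$ for some $x'\in\cM$, so $ax\Omega=axe_1\Omega=x'\Omega\in\cM\Omega$; thus $a\,\cM\Omega\subseteq\cM\Omega$. Applying this to the projection $P:=1-\mathcal{R}(b^*)\in\cN'\cap\cM_1$ gives $P\cM\Omega\subseteq\cM\Omega\cap PL^2(\cM)=\cM\Omega\cap\ker b$, while $\overline{P\cM\Omega}=PL^2(\cM)=\ker b$ because $\cM\Omega$ is dense. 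Hence $\overline{\Ker\cL\cdot\Omega}=\overline{\cM\Omega\cap\ker b}=\ker b$.

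Finally I would assemble the equivalence. Each $\Phi_t$ is unital and $\cN$-bimodule, so $\cN\subseteq\fix(\Phi_t)$ for all $t$, hence $\cN\subseteq\Ker\cL$ always. If $\{\Phi_t\}_{t\ge 0}$ is relatively ergodic, then $\Ker\cL\cdot\Omega=\cN\Omega$, and taking closures yields $\ker b=\overline{\cN\Omega}=L^2(\cN)=e_1L^2(\cM)$, i.e. $P=e_1$, i.e. $\mathcal{R}(b^*)=1-e_1$, i.e. $\mathcal{R}(\mathfrak{F}^{-1}(\widehat{\cL}))=1-e_1$. Conversely, if $\mathcal{R}(\mathfrak{F}^{-1}(\widehat{\cL}))=1-e_1$ then $\mathcal{R}(b^*)=1-e_1$, so $\ker b=e_1L^2(\cM)=L^2(\cN)$; then $\Ker\cL\cdot\Omega=\cM\Omega\cap L^2(\cN)=\cN\Omega$, and since $\Omega$ is separating for $\cM$ this forces $\Ker\cL=\cN$, i.e. $\{\Phi_t\}_{t\ge 0}$ is relatively ergodic. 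The genuinely substantive steps are the $\mathfrak{F}$/modular-conjugation bookkeeping identifying $b$ with $\overline{\mathfrak{F}^{-1}(\widehat{\cL})}$, and the pull-down observation $a\,\cM\Omega\subseteq\cM\Omega$, which is exactly what licenses passing from $\ker b\cap\cM\Omega=\cN\Omega$ to $\ker b=L^2(\cN)$.
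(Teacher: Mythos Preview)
Your argument is correct and follows the same route as the paper, which simply says the result ``follows from Proposition~\ref{prop:invariantker}''; you have spelled out exactly how it follows. The two nontrivial steps you isolate---the modular-conjugation identity $\mathfrak{F}^{-1}(\overline{\widehat{\cL}})=\overline{\mathfrak{F}^{-1}(\widehat{\cL})}$ giving $\mathcal{R}(b^*)=J\mathcal{R}(c)J$, and the pull-down observation $a\,\cM\Omega\subseteq\cM\Omega$ for $a\in\cM_1$ (hence for the projection $P=1-\mathcal{R}(b^*)$)---are precisely what is needed to pass from ``$\ker b\cap\cM\Omega=\cN\Omega$'' to ``$\ker b=L^2(\cN)$'', and the paper leaves both implicit.
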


\begin{proof}
Note that  $x\in \Ker\cL$ if and only if $\cL(x)\Omega=0$, i.e. $\mathfrak{F}(\widehat{\cL})x\Omega=0$.
This shows that $\Ker\cL=\cN$ if and only if $\Ker(\mathfrak{F}(\widehat{\cL}))=\overline{\cN\Omega}^{\|\cdot\|}$.
Hence $\{\Phi_t\}_{t\geq 0}$ is relatively ergodic if and only if $\mathcal{R}(\mathfrak{F}^{-1}(\widehat{\cL}))=1-e_1$.
\end{proof}




\subsection{Poincar\'{e} Inequality}

In this section, we shall obtain the Poincar\'{e} inequalities for bimodule quantum Markov semigroups.

\begin{theorem}[Poincar\'{e} Inequality]\label{thm:poincare0}
    Suppose $\{\Phi_t\}_{t\geq 0}$ is a bimodule quantum Markov semigroup, $\cN$ is a factor and $\CS_0(\widehat{\cL}_0)=1$.
    Let $2\beta$ be the second maximal eigenvalue of $\mathfrak{F}^{-1}(\widehat{\cL}_0+\overline{\widehat{\cL}_0})$.
    Then for any $x\in \cM$ with $\bE_{\cN}(x)=0$,
    \begin{align}
        \tau(\Gamma(x,x)) \geq  (\widehat{\beta}-\beta)\tau(x^*x),
    \end{align}
    where $0\neq  \widehat{\beta}$ is the minimal eigenvalue of $\lambda^{-1/2} |\mathfrak{F}^{-1}(\widehat{\cL}_0^{1/2})|^2$.
\end{theorem}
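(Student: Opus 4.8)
The plan is to express $\tau(\Gamma(x,x))$ via the derivation formula from Proposition~\ref{prop:gradientform}, namely $2\tau(\Gamma(x,x)) = \lambda^{-1/2}\tau(\bE_{\cM}((\partial x)^*(\partial x))) = \lambda^{-1/2}\tau_1((\partial x)^*(\partial x)) = \lambda^{-1/2}\|\partial x\|_2^2$, and then bound this below in terms of $\|x\|_2^2$ using the spectral gap of $\widehat{\cL}_0$. First I would use Proposition~\ref{prop:laplacian2} and the surrounding identities to rewrite $\partial^*\partial + \overline{\partial}^*\overline{\partial} = 2\lambda^{1/2}(\cL_a + \cL_{\overline a})$, which upon taking the inner product with $x$ gives $\lambda^{-1/2}(\|\partial x\|_2^2 + \|\overline\partial x\|_2^2) = 2\tau(x^*(\cL_a+\cL_{\overline a})(x))$. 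Since $\cL_a + \cL_{\overline a}$ has Fourier multiplier essentially $1*(\widehat{\cL}_0+\overline{\widehat{\cL}_0})$ on the diagonal part minus the convolution by $\widehat{\cL}_0 + \overline{\widehat{\cL}_0}$, I would pass to $L^2(\cM)$ and use that $\Phi(x)\Omega = \mathfrak{F}(\widehat{\Phi})x\Omega$ (Proposition~\ref{prop:fouriermultiple2}) to realize the quadratic form as $\langle (\widehat\beta_{\mathrm{op}} - \mathfrak{F}^{-1}(\widehat{\cL}_0+\overline{\widehat{\cL}_0}))x\Omega, x\Omega\rangle$ acting on $L^2(\cM)\ominus L^2(\cN)$, where $\widehat\beta_{\mathrm{op}}$ is the bounded operator of left multiplication by $1*(\widehat{\cL}_0+\overline{\widehat{\cL}_0})$ — whose minimal eigenvalue on the complement of $L^2(\cN)$ is $2\widehat\beta$ by the definition of $\widehat\beta$ (note $\lambda^{-1/2}\bE_{\cM}(|\mathfrak{F}^{-1}(\widehat{\cL}_0^{1/2})|^2) = \lambda^{-1/2}\bE_{\cM}(|\mathfrak{F}^{-1}(\overline{\widehat{\cL}_0^{1/2}})|^2)$ averages the two).

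Next I would invoke the hypothesis $\CS_0(\widehat{\cL}_0)=1$ together with $\cN$ a factor. By Proposition~\ref{prop:gradientform2}, $\Ker\Gamma = \{x : \CS_0(\widehat{\cL}_0)xe_1e_2 = \CS_0(\widehat{\cL}_0)e_1xe_2\} = \{x : xe_1e_2 = e_1xe_2\} = \cN$, so the derivation $\partial$ is injective on $\cM\ominus\cN$; equivalently, $\mathfrak{F}^{-1}(\widehat{\cL}_0)$ acting by left multiplication, restricted to $L^2(\cM)\ominus L^2(\cN)$ together with the commutation structure, has trivial kernel there. The remaining point is the comparison of eigenvalues: $\mathfrak{F}^{-1}(\widehat{\cL}_0 + \overline{\widehat{\cL}_0})$ has maximal eigenvalue, attained on $L^2(\cN)$ with eigenvalue coming from the $e_1$-component; since $\CS_0(\widehat{\cL}_0)=1$, the Jones projection $e_1$ is the unique top eigenvector, and the second maximal eigenvalue is $2\beta$ by hypothesis. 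Therefore on $L^2(\cM)\ominus L^2(\cN)$ we have the two-sided bound $\mathfrak{F}^{-1}(\widehat{\cL}_0+\overline{\widehat{\cL}_0}) \leq 2\beta$ as an operator, while $1*(\widehat{\cL}_0+\overline{\widehat{\cL}_0})$ as a left-multiplication operator is bounded below by $2\widehat\beta$ on that subspace.

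Combining, for $x$ with $\bE_{\cN}(x)=0$ (so $x\Omega \perp L^2(\cN)$),
\begin{align*}
2\tau(\Gamma(x,x)) &= \lambda^{-1/2}\|\partial x\|_2^2 \\
&\geq \tfrac12\lambda^{-1/2}\big(\|\partial x\|_2^2 + \|\overline\partial x\|_2^2\big) \\
&= \langle (\widehat\beta_{\mathrm{op}} - \mathfrak{F}^{-1}(\widehat{\cL}_0+\overline{\widehat{\cL}_0}))x\Omega, x\Omega\rangle \\
&\geq (2\widehat\beta - 2\beta)\,\tau(x^*x),
\end{align*}
which after dividing by $2$ is exactly the claimed $\tau(\Gamma(x,x)) \geq (\widehat\beta - \beta)\tau(x^*x)$; here I used $\|\partial x\|_2 = \|\overline\partial x\|_2$, which follows since $\overline\partial x = [x, \mathfrak{F}^{-1}(\widehat{\cL}_0^{1/2})^*]$ is the adjoint-twisted version and $\bE_{\cM}$ is trace-preserving. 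The main obstacle I anticipate is the careful bookkeeping that identifies the operator $1*(\widehat{\cL}_0+\overline{\widehat{\cL}_0})$ acting by left multiplication with a positive operator whose spectrum on $L^2(\cM)\ominus L^2(\cN)$ is genuinely bounded below by $2\widehat\beta$ — i.e. checking that the element $\lambda^{-1/2}\bE_{\cM}(|\mathfrak{F}^{-1}(\widehat{\cL}_0^{1/2})|^2) = \tfrac12(1*(\widehat{\cL}_0+\overline{\widehat{\cL}_0}))$ lies in $\cN'\cap\cM$ (it does, by the structure in Section~5), is positive, and that its smallest nonzero eigenvalue controls the form on the orthocomplement of the $\cN$-fixed vectors; and symmetrically, that $\CS_0(\widehat{\cL}_0)=1$ forces the spectral gap $2\beta$ to be the correct second eigenvalue of $\mathfrak{F}^{-1}(\widehat{\cL}_0+\overline{\widehat{\cL}_0})$ with $e_1$ removed. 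Once these identifications are in place, the inequality is a one-line Rayleigh-quotient estimate.
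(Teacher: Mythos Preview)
Your overall strategy — split $\tau(\Gamma(x,x))$ into a ``multiplication'' piece bounded below by $\widehat\beta$ and a ``convolution'' piece controlled by the spectral gap $2\beta$ of $\mathfrak{F}^{-1}(\widehat{\cL}_0+\overline{\widehat{\cL}_0})$ — is exactly the paper's. But the detour through $\cL_a+\cL_{\overline a}$ introduces a real error.

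The step $\lambda^{-1/2}\|\partial x\|_2^2 \geq \tfrac12\lambda^{-1/2}(\|\partial x\|_2^2+\|\overline\partial x\|_2^2)$ needs $\|\partial x\|_2=\|\overline\partial x\|_2$, and your justification does not hold. Since $(\overline\partial x)^*=-\partial(x^*)$, one has $\|\overline\partial x\|_2=\|\partial(x^*)\|_2$; using the remark after Lemma~\ref{lem:laplacian1} (the formula for $\partial^*\partial$) one computes
\[
\|\partial x\|_2^2-\|\overline\partial x\|_2^2=\lambda^{1/2}\,\tau\big((xx^*-x^*x)\,((1*\widehat{\cL}_0)-(1*\overline{\widehat{\cL}_0}))\big),
\]
which vanishes only for normal $x$ or when $1*\widehat{\cL}_0=1*\overline{\widehat{\cL}_0}$ (a trace-symmetry condition not assumed). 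So your inequality can fail, and with it the whole chain. A related slip: you identify $\lambda^{-1/2}\bE_{\cM}(|\mathfrak{F}^{-1}(\widehat{\cL}_0^{1/2})|^2)$ with $\tfrac12(1*(\widehat{\cL}_0+\overline{\widehat{\cL}_0}))$, but it equals $1*\widehat{\cL}_0$; the two differ in general, so your $\widehat\beta$ is not the one in the statement.

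The paper avoids all this by expanding $\|\partial x\|_2^2$ directly. Using $(\partial x)e_2=\lambda^{-1/2}\widehat{\cL}_0^{1/2}[x,e_1]e_2$ one gets
\[
\tau(\Gamma(x,x))=\tfrac{\lambda^{-5/2}}{2}\tau_2\big(e_2[x,e_1]^*\widehat{\cL}_0[x,e_1]e_2\big)
=\lambda^{-5/2}\tau_2\big(x^*x\,\bE_{\cM}(e_2e_1\widehat{\cL}_0e_1e_2)\big)-\tfrac{\lambda^{-1}}{2}\tau_2\big(xe_1x^*\,\mathfrak{F}^{-1}(\widehat{\cL}_0+\overline{\widehat{\cL}_0})\big),
\]
so the symmetrization $\widehat{\cL}_0+\overline{\widehat{\cL}_0}$ appears only in the cross term, while the diagonal term is exactly $1*\widehat{\cL}_0$. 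From there your spectral-gap reasoning is correct: $\cN$ a factor makes $e_1$ rank one in $\cN'\cap\cM_1$; $\CS_0(\widehat{\cL}_0)=1$ plus Perron--Frobenius for $\mathfrak{F}$-positive elements forces $\mathfrak{F}^{-1}(\widehat{\cL}_0+\overline{\widehat{\cL}_0})\le 2\|\mathfrak{F}^{-1}(\widehat{\cL}_0)\|\,e_1+2\beta(1-e_1)$, and $\bE_{\cN}(x)=0$ kills the $e_1$-part of $xe_1x^*$. One thing you omit entirely: the paper also proves $\widehat\beta>0$, by noting that if a projection $p\in\cN'\cap\cM$ annihilated $1*\widehat{\cL}_0$ then, via the Schur product theorem, $p\,\widehat{\cL}_0^{(*k)}=0$ for all $k\geq 2$, contradicting $\CS_0(\widehat{\cL}_0)=1$.
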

\begin{proof}
We have that 
   \begin{align*}
 \tau((\Gamma(x,x))
 =& \frac{\lambda^{-1/2}}{2}\tau_1(( \partial x)^*( \partial x)) \\
 =& \frac{\lambda^{-3/2}}{2}\tau_2(e_2([x, \mathfrak{F}^{-1}(\widehat{\cL}_0^{1/2})])^*([x, \mathfrak{F}^{-1}(\widehat{\cL}_0^{1/2})]e_2)) \\
  =& \frac{\lambda^{-5/2}}{2}\tau_2(e_2[x, e_1]^*  \widehat{\cL}_0 [x, e_1]e_2)) \\
  =& \lambda^{-5/2}\frac{1}{2}(\tau_2(x^*x\bE_{\cM}(e_2e_1 \overline{\widehat{\cL}_0}  e_1e_2)) +\tau_2(xx^*\bE_{\cM}(e_2e_1 \widehat{\cL}_0  e_1e_2)))\\
  -& \frac{\lambda^{-1}}{2}\tau_2(xe_1x^* \mathfrak{F}^{-1}(\widehat{\cL}_0+\overline{\widehat{\cL}_0})).
   \end{align*}   
Now by the fact that $\cN$ is factor, we see that $e_1$ is a rank-one projection in $\cN'\cap \cM_1$.
By the fact that $\widehat{\cL}_0$ is connected and the Perron-Frobenius theorem for $\mathfrak{F}$-positive elements, there is a unique positive eigenvector for $\mathfrak{F}^{-1}(\widehat{\cL}_0+\overline{\widehat{\cL}_0})$ and
\begin{align*}
\mathfrak{F}^{-1}(\widehat{\cL}_0+\overline{\widehat{\cL}_0})\leq 2\|\mathfrak{F}^{-1}(\widehat{\cL}_0)\|e_1 +2\beta(1-e_1).     
\end{align*}
This implies that for any $x\in \cM$ with $\bE_{\cN}(x)=0$,
\begin{align*}
\frac{1}{2}\tau_2(xe_1x^* \mathfrak{F}^{-1}(\widehat{\cL}_0+\overline{\widehat{\cL}_0}))\leq \beta\tau_2(xe_1 x^*(1-e_1))\leq \lambda \beta \tau(x^*x).
\end{align*}
On the other hand, we have that 
\begin{align*}
\lambda^{-5/2}\tau_2(xx^*\bE_{\cM}(e_2e_1 \widehat{\cL}_0  e_1e_2))
= & \lambda^{-1/2}\tau_2(xx^*|\mathfrak{F}^{-1}(\widehat{\cL}_0^{1/2})|^2) \geq  \widehat{\beta}\tau_2(xx^*), \\
\lambda^{-5/2}\tau_2(x^*x\bE_{\cM}(e_2e_1 \widehat{\cL}_0  e_1e_2))
= & \lambda^{-1/2}\tau_2(x^*x|\mathfrak{F}^{-1}(\widehat{\cL}_0^{1/2})^*|^2) \geq  \widehat{\beta}\tau_2(x^*x).
\end{align*}
Now we shall show that $\widehat{\beta}>0$.
Suppose that $p$ is a projection in $\cN'\cap \cM$ such that $p\bE_{\cM}(e_2e_1 \widehat{\cL}_0  e_1e_2)=0$.
Then we have that 
\begin{align*}
p(\widehat{\cL}_0*\widehat{\cL}_0) p\leq \lambda^{-5/2} p \bE_{\cM}(e_2e_1 \widehat{\cL}_0  e_1e_2) p=0
\end{align*}
by the Schur product theorem.
Repeating the process, we see that $p \widehat{\cL}_0^{(*k)}=0$ for any $k\geq 2$.
Moreover, we have that $p\CS_0(\widehat{\cL}_0)=0$, i.e. $p=0$.
\end{proof}

If the inclusion is irreducible, we have much better estimation for $\bE_{\cM}(e_2e_1 \widehat{\cL}_0  e_1e_2)$.
\begin{theorem}[Poincar\'{e} Inequality]\label{thm:poincare1}
    Suppose $\{\Phi_t\}_{t\geq 0}$ is a bimodule quantum Markov semigroup.
    Assume that $\cN\subset \cM$ is irreducible.
    Let $B$ be spectral projection of $\mathfrak{F}^{-1}(\widehat{\cL}_0+\overline{\widehat{\cL}_0})$ corresponding to its maximal eigenvalue and $2\beta$ is the second maximal eigenvalue.
    Then for any $x\in \cM$ with $Bxe_1=0$,
    \begin{align}
        \tau(\Gamma(x,x)) \geq  (\lambda^{-1/2}\tau_2(\widehat{\cL}_0))-\beta)\tau(x^*x),
    \end{align}
\end{theorem}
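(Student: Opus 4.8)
The plan is to reduce the statement to an algebraic identity for $\tau(\Gamma(x,x))$ which is essentially already established in the proof of Theorem~\ref{thm:poincare0}. By Proposition~\ref{prop:gradientform} the gradient form depends only on $\widehat{\cL}_0$, and the chain of equalities there shows, for every $x\in\cM$,
\begin{align*}
  \tau(\Gamma(x,x)) = \lambda^{-5/2}\tau_2\bigl(x^*x\,\bE_{\cM}(e_2e_1\widehat{\cL}_0e_1e_2)\bigr) - \frac{\lambda^{-1}}{2}\,\tau_2\bigl(xe_1x^*\,\mathfrak{F}^{-1}(\widehat{\cL}_0+\overline{\widehat{\cL}_0})\bigr).
\end{align*}
This identity uses only $\widehat{\cL}_0\in\cM'\cap\cM_2$, Lemma~\ref{lem:fourier}, and the Temperley--Lieb relations $e_1e_2e_1=\lambda e_1$, $e_2e_1e_2=\lambda e_2$; it makes no use of ergodicity, of $\cN$ being a factor, or of $\CS_0(\widehat{\cL}_0)=1$, so it is available verbatim. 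It then remains to estimate the two terms using irreducibility.

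For the first term, observe that $\lambda^{-5/2}\bE_{\cM}(e_2e_1\widehat{\cL}_0e_1e_2)=1*\widehat{\cL}_0$ is the value at $1$ of the $\cN$-bimodule map with Fourier multiplier $\widehat{\cL}_0$, hence lies in $\cN'\cap\cM$; since $\cN\subset\cM$ is irreducible this is a scalar. Using $e_1e_2e_1=\lambda e_1$ together with $\bE_{\cM'}(e_1)=\lambda$ one gets $\tau_2(e_2e_1\widehat{\cL}_0e_1e_2)=\lambda^2\tau_2(\widehat{\cL}_0)$, so $1*\widehat{\cL}_0=\lambda^{-1/2}\tau_2(\widehat{\cL}_0)\,1$ and the first term equals $\lambda^{-1/2}\tau_2(\widehat{\cL}_0)\,\tau(x^*x)$.

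For the second term, put $A:=\mathfrak{F}^{-1}(\widehat{\cL}_0+\overline{\widehat{\cL}_0})\in\cN'\cap\cM_1$, which is self-adjoint because $\overline{\widehat{\cL}_0+\overline{\widehat{\cL}_0}}=\widehat{\cL}_0+\overline{\widehat{\cL}_0}$, and let $B$ be its spectral projection for the maximal eigenvalue, so that $A(1-B)\le 2\beta(1-B)$. The hypothesis $Bxe_1=0$ also gives $e_1x^*B=0$, so $xe_1x^*=(1-B)xe_1x^*(1-B)$; since moreover $xe_1x^*\ge0$,
\begin{align*}
  \tau_2(xe_1x^*A)=\tau_2\bigl(xe_1x^*\,A(1-B)\bigr)\le 2\beta\,\tau_2(xe_1x^*),
\end{align*}
and $\tau_2(xe_1x^*)=\tau_1(x^*xe_1)=\lambda\tau(x^*x)$ since $\bE_{\cM}$ is trace preserving and $\bE_{\cM}(e_1)=\lambda$. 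Substituting the two estimates into the identity,
\begin{align*}
  \tau(\Gamma(x,x))\ge \lambda^{-1/2}\tau_2(\widehat{\cL}_0)\,\tau(x^*x)-\frac{\lambda^{-1}}{2}\cdot 2\beta\cdot\lambda\,\tau(x^*x)=\bigl(\lambda^{-1/2}\tau_2(\widehat{\cL}_0)-\beta\bigr)\tau(x^*x),
\end{align*}
which is the claim.

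The step I expect to need the most care is the spectral estimate: since $A$ need not be positive and $2\beta$ may have either sign, the inequality $\tau_2(xe_1x^*A)\le 2\beta\tau_2(xe_1x^*)$ is legitimate only after observing that $Bxe_1=0$ confines $xe_1x^*$ to the corner $(1-B)\cM_1(1-B)$, where $A$ may be replaced by $A(1-B)\le 2\beta(1-B)$. Beyond that the argument is routine Jones-tower bookkeeping; the one remaining thing worth double-checking is that the identity quoted from the proof of Theorem~\ref{thm:poincare0} is genuinely hypothesis-free, which it is precisely because $\Gamma$ is built from $\widehat{\cL}_0$ alone.
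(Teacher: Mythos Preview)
Your proof is correct and follows essentially the same route as the paper: the identity for $\tau(\Gamma(x,x))$ taken from the proof of Theorem~\ref{thm:poincare0}, the irreducibility computation $1*\widehat{\cL}_0=\lambda^{-1/2}\tau_2(\widehat{\cL}_0)$ for the first term, and the spectral bound on the second term via $Bxe_1=0$. The paper's proof differs only in packaging: it invokes an external result (Corollary~6.12 of \cite{JLW16}) to note that $B$ is a biprojection and uses the Hausdorff--Young inequality to record that the constant $\lambda^{-1/2}\tau_2(\widehat{\cL}_0)-\beta$ is strictly positive; neither of these is needed for the bare inequality as stated, and your direct corner argument $(1-B)xe_1x^*(1-B)$ handles the spectral step without them.
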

\begin{proof}
By Corollary 6.12 in \cite{JLW16} and the Hausdorff-Young inequality, we have that $B$ is a biprojection and
\begin{align*}
  & \tau_2(xe_1x^* \mathfrak{F}^{-1}(\widehat{\cL}_0+\overline{\widehat{\cL}_0})) \\
  \leq &\tau_2(xe_1x^*( 2\lambda^{-1/2}\tau_2(\widehat{\cL}_0)B+2\beta(1-B)))\\
  =& 2\beta \tau_2(xe_1x^*)=2\lambda\beta\tau(x^*x).
\end{align*}
This implies that 
\begin{align*}
  \tau(\Gamma(x,x)) \geq& \lambda^{-1/2}\tau_2(\widehat{\cL}_0) \tau(x^*x) -   \beta\tau(x^*x)\\
  =&  (\lambda^{-1/2}\tau_2(\widehat{\cL}_0)-\beta)\tau(x^*x).
\end{align*}
Note that
\begin{align*}
   2\beta <  \|\mathfrak{F}^{-1}(\widehat{\cL}_0+\overline{\widehat{\cL}_0})\|=\lambda^{-1/2} \|\widehat{\cL}_0+\overline{\widehat{\cL}_0}\|_1=2\lambda^{-1/2} \tau(\widehat{\cL}_0).
\end{align*}
This completes the proof of the theorem.
\end{proof}

\section{Gradient Flow}\label{sec:: Gradient Flow}

In this section, we investigate bimodule quantum Markov semigroups that satisfy the bimodule detailed balance condition and establish the corresponding gradient flow equation.

\begin{definition}[Bimodule GNS Symmetric Semigroups]
Suppose $\{\Phi_t\}_{t\geq 0}$ is a quantum bimodule Markov semigroup and $\widehat{\Delta} \in \cM'\cap \cM_2$ is strictly positive and $\widehat{\Delta} e_2= e_2$.
We say $\{\Phi_t\}_{t\geq 0}$ is bimodule GNS symmetric with respect to $\widehat{\Delta}$ if $\overline{\widehat{\cL}}
= \widehat{\cL} \overline{\widehat{\Delta}}$ and $\Phi_t$ is equilibrium for all $t\geq 0$. 
\end{definition}

\begin{remark}
By Theorem \ref{thm:equivbalance}, we have that if $\{\Phi_t\}_{t\geq 0}$ satisfies $\rho$-detailed balance condition for some normal faithful state $\rho$ and $e_1\in \Dom(\sigma_{-i})$, then $\{\Phi_t\}_{t\geq 0}$ is bimodule GNS symmetry with respect to $\widehat{\Delta}_\rho$.
\end{remark}

\begin{remark}
Suppose that $\Phi_t$ is equilibrium and bimodule GNS symmetry with respect to $\widehat{\Delta}$ for all $t\geq 0$.
Then $\{\Phi_t\}_{t\geq 0}$ is bimodule GNS symmetry with respect to $\widehat{\Delta}$.
\end{remark}


\begin{proposition}\label{prop:bimoduleder}
    Suppose $\{\Phi_t\}_{t\geq 0}$ is bimodule GNS symmetric with respect to $\widehat{\Delta}$.
    Then
    \begin{align*}
        \widehat{\cL}\widehat{\Delta} =\widehat{\Delta} \widehat{\cL}, \quad 
    \widehat{\cL} \overline{\widehat{\Delta}} =\overline{\widehat{\Delta} } \widehat{\cL}, \quad 
    \cR(\widehat{\cL})\overline{\widehat{\Delta} }\widehat{\Delta} = \cR(\widehat{\cL}).
    \end{align*}
    Moreover,
    \begin{align*}
   \overline{\widehat{\cL}_0} =  \widehat{\cL}_0 \overline{\widehat{\Delta}}= 
   \overline{\widehat{\Delta}}\widehat{\cL}_0, \quad 
\overline{\widehat{\cL}_0\overline{\widehat{\Delta}^{1/2}}} =\overline{\widehat{\cL}_0\widehat{\Delta}^{-1/2}}
=\widehat{\cL}_0\widehat{\Delta}^{-1/2} = \widehat{\cL}_0\overline{\widehat{\Delta}^{1/2}}  
\end{align*}
\end{proposition}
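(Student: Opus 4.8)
The plan is to repeat the argument of Proposition~\ref{prop:gnsprop} with the channel replaced by the generator, the point being that $\widehat{\cL}=\widehat{\cL}^{*}$ by Proposition~\ref{prop:generator}(2), that $\widehat{\Delta}$ is strictly positive (hence invertible), and that the modular conjugation $x\mapsto\overline{x}$ on $\cM'\cap\cM_2$ is a $*$-preserving, anti-multiplicative, period-two map (so $\overline{\overline{x}}=x$, $\overline{x}^{*}=\overline{x^{*}}$, $\overline{xy}=\overline{y}\,\overline{x}$) that fixes $e_2$. From $\overline{e_2}=e_2$ and $\widehat{\Delta}e_2=e_2$ one gets $\overline{\widehat{\Delta}}e_2=e_2\overline{\widehat{\Delta}}=e_2$ by applying the modular conjugation and an adjoint; consequently $\widehat{\Delta}$, $\overline{\widehat{\Delta}}$ and their (inverse) square roots all commute with $e_2$ and with $1-e_2$, and $\overline{\widehat{\Delta}^{1/2}}=\overline{\widehat{\Delta}}^{1/2}$ by uniqueness of positive square roots. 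Also $\overline{\widehat{\cL}}$ is self-adjoint since $\overline{\widehat{\cL}}^{*}=\overline{\widehat{\cL}^{*}}=\overline{\widehat{\cL}}$.

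For the first group of identities I would argue exactly as in Proposition~\ref{prop:gnsprop}. Starting from the hypothesis $\overline{\widehat{\cL}}=\widehat{\cL}\,\overline{\widehat{\Delta}}$, an adjoint gives $\overline{\widehat{\cL}}=\overline{\widehat{\Delta}}\,\widehat{\cL}$, so $\widehat{\cL}\,\overline{\widehat{\Delta}}=\overline{\widehat{\Delta}}\,\widehat{\cL}$; applying the modular conjugation gives $\widehat{\cL}=\widehat{\Delta}\,\overline{\widehat{\cL}}$, i.e.\ $\overline{\widehat{\cL}}=\widehat{\Delta}^{-1}\widehat{\cL}$, and one more adjoint gives $\overline{\widehat{\cL}}=\widehat{\cL}\widehat{\Delta}^{-1}$, hence $\widehat{\cL}\widehat{\Delta}=\widehat{\Delta}\widehat{\cL}$. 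Then $\widehat{\cL}\,\overline{\widehat{\Delta}}\,\widehat{\Delta}=\overline{\widehat{\cL}}\,\widehat{\Delta}=\widehat{\Delta}^{-1}\widehat{\cL}\widehat{\Delta}=\widehat{\cL}$; using that $\widehat{\cL}a=0$ forces $\cR(\widehat{\cL})a=0$ for self-adjoint $\widehat{\cL}$ (take $a=\overline{\widehat{\Delta}}\widehat{\Delta}-1$) yields $\cR(\widehat{\cL})\,\overline{\widehat{\Delta}}\,\widehat{\Delta}=\cR(\widehat{\cL})$.

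Next the statements for $\widehat{\cL}_0=-(1-e_2)\widehat{\cL}(1-e_2)$, which is positive by Proposition~\ref{prop:positvity1}(4). Since $1-e_2$ commutes with $\widehat{\Delta},\overline{\widehat{\Delta}}$ and $\widehat{\cL}$ commutes with both, $\widehat{\cL}_0$ commutes with $\widehat{\Delta}$ and $\overline{\widehat{\Delta}}$; and compressing the identities of the previous paragraph by $1-e_2$ (using $\overline{(1-e_2)\widehat{\cL}(1-e_2)}=(1-e_2)\overline{\widehat{\cL}}(1-e_2)$ and pulling the commuting $\overline{\widehat{\Delta}}$, $\widehat{\Delta}^{-1}$ past $1-e_2$) gives at once $\overline{\widehat{\cL}_0}=\widehat{\cL}_0\overline{\widehat{\Delta}}=\overline{\widehat{\Delta}}\widehat{\cL}_0$ and $\overline{\widehat{\cL}_0}=\widehat{\Delta}^{-1}\widehat{\cL}_0=\widehat{\cL}_0\widehat{\Delta}^{-1}$. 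Equating the two expressions for $\overline{\widehat{\cL}_0}$ gives $\widehat{\cL}_0\overline{\widehat{\Delta}}=\widehat{\cL}_0\widehat{\Delta}^{-1}$, hence $\cR(\widehat{\cL}_0)\overline{\widehat{\Delta}}=\cR(\widehat{\cL}_0)\widehat{\Delta}^{-1}$; since $\cR(\widehat{\cL}_0)$ commutes with these two positive operators this equality descends to positive square roots in $\cR(\widehat{\cL}_0)(\cM'\cap\cM_2)\cR(\widehat{\cL}_0)$, giving $\cR(\widehat{\cL}_0)\overline{\widehat{\Delta}}^{1/2}=\cR(\widehat{\cL}_0)\widehat{\Delta}^{-1/2}$ and so $\widehat{\cL}_0\overline{\widehat{\Delta}^{1/2}}=\widehat{\cL}_0\widehat{\Delta}^{-1/2}$; similarly $\widehat{\cL}_0\overline{\widehat{\Delta}}^{-1/2}=\widehat{\cL}_0\widehat{\Delta}^{1/2}$. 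Finally, applying the modular conjugation to $\widehat{\cL}_0\widehat{\Delta}^{-1/2}$ and using $\overline{\widehat{\cL}_0}=\widehat{\cL}_0\widehat{\Delta}^{-1}$, $\overline{\widehat{\Delta}^{-1/2}}=\overline{\widehat{\Delta}}^{-1/2}$ and $\widehat{\cL}_0\overline{\widehat{\Delta}}^{-1/2}=\widehat{\cL}_0\widehat{\Delta}^{1/2}$ shows $\overline{\widehat{\cL}_0\widehat{\Delta}^{-1/2}}=\widehat{\cL}_0\widehat{\Delta}^{-1/2}$, which closes the chain $\overline{\widehat{\cL}_0\overline{\widehat{\Delta}^{1/2}}}=\overline{\widehat{\cL}_0\widehat{\Delta}^{-1/2}}=\widehat{\cL}_0\widehat{\Delta}^{-1/2}=\widehat{\cL}_0\overline{\widehat{\Delta}^{1/2}}$.

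The only genuinely delicate point is keeping track of the support projections $\cR(\widehat{\cL})$ and $\cR(\widehat{\cL}_0)$: one has to observe that these are spectral projections of self-adjoint elements of the finite-dimensional algebra $\cM'\cap\cM_2$ and therefore commute with anything that commutes with $\widehat{\cL}$, respectively $\widehat{\cL}_0$ --- in particular with $\widehat{\Delta}$ and $\overline{\widehat{\Delta}}$ --- which is what legitimises compressing to, and extracting square roots inside, the corner algebras. Apart from this everything is the same formal manipulation as in Proposition~\ref{prop:gnsprop}, carried out in the self-adjoint setting of the generator rather than the positive setting of a single channel.
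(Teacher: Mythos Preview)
Your proposal is correct and follows essentially the same approach as the paper. The paper's proof is brief: it invokes Proposition~\ref{prop:gnsprop} verbatim for the first block of identities (the argument carries over since it only uses self-adjointness of the Fourier multiplier, which $\widehat{\cL}$ has by Proposition~\ref{prop:generator}(2)), then computes $\overline{\widehat{\cL}_0}=\widehat{\cL}_0\overline{\widehat{\Delta}}$ by compressing with $1-e_2$ and dismisses the remaining square-root identities with ``similar arguments''; your write-up makes these ``similar arguments'' explicit, in particular the passage from $\cR(\widehat{\cL}_0)\overline{\widehat{\Delta}}=\cR(\widehat{\cL}_0)\widehat{\Delta}^{-1}$ to the half-power version via functional calculus in the corner, which the paper leaves unsaid.
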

\begin{proof}
By Proposition \ref{prop:gnsprop}, we have $\overline{\widehat{\cL}}= \widehat{\cL}\overline{\widehat{\Delta}}$, $\widehat{\cL}\widehat{\Delta} =\widehat{\Delta} \widehat{\cL}$, $
    \widehat{\cL} \overline{\widehat{\Delta}} =\overline{\widehat{\Delta} } \widehat{\cL}$.
Furthermore, we have $\cR(\widehat{\cL})\overline{\widehat{\Delta} }\widehat{\Delta} = \cR(\widehat{\cL})$.

Note that 
\begin{align*}
   \overline{\widehat{\cL}_0} 
   =& -(1-e_2)\overline{\widehat{\cL}}(1-e_2) 
   =  -(1-e_2) \widehat{\cL}\overline{\widehat{\Delta}}(1-e_2) \\
= & -(1-e_2) \widehat{\cL}(1-e_2)\overline{\widehat{\Delta}} 
= \widehat{\cL}_0\overline{\widehat{\Delta}}.
\end{align*}
The rest equalities are true by similar arguments.
\end{proof}

\begin{remark}
Suppose $\{\Phi_t\}_{t\geq 0}$ is a quantum bimodule Markov semigroup and $\widehat{\Delta} \in \cM'\cap \cM_2$ is strictly positive and $\widehat{\Delta} e_2= e_2$.
We say $\{\Phi_t\}_{t\geq 0}$ is bimodule KMS symmetric with respect to $\widehat{\Delta}$ if $\overline{\widehat{\cL}}
= \overline{\widehat{\Delta}}\widehat{\cL} \overline{\widehat{\Delta}}$, $\cR(\widehat{\cL})\overline{\widehat{\Delta}}=\cR(\widehat{\cL})\widehat{\Delta}^{-1}$ and $\Phi_t$ is equilibrium for all $t\geq 0$.

By Theorem \ref{thm:equivbalance}, we have that if $\{\Phi_t\}_{t\geq 0}$ is KMS symmetry with respect to a normal faithful state $\rho$ with $e_1\in \Dom(\sigma_{-i/2})$, then $\{\Phi_t\}_{t\geq 0}$ is bimodule KMS symmetry with respect to $\widehat{\Delta}_{\rho,1/2}$.

Suppose $\Phi_t$ is bimodule KMS symmetric with respect to $\widehat{\Delta}$ for all $t\geq 0$.
Then by differentiating with respect to $t$, we have that
\begin{align*}
\overline{\widehat{\cL}}= \overline{\widehat{\Delta}} \widehat{\cL}\overline{\widehat{\Delta}} ,\quad \widehat{\Delta}\overline{\widehat{\cL}}=  \widehat{\cL}\overline{\widehat{\Delta}} .
\end{align*}
By multiplying $1-e_2$, we have that
    \begin{align*}
   \overline{\widehat{\cL}_0} = \overline{\widehat{\Delta}} \widehat{\cL}_0 \overline{\widehat{\Delta}} , \quad 
\overline{\overline{\widehat{\Delta}^{1/2}} \widehat{\cL}_0\overline{\widehat{\Delta}^{1/2}}} 
=   \widehat{\Delta}^{1/2}\overline{\widehat{\cL}_0} \widehat{\Delta}^{1/2}
= \widehat{\Delta}^{-1/2}\widehat{\cL}_0{\widehat{\Delta}^{-1/2}}.
\end{align*}
If $\widehat{\Delta}\overline{\widehat{\Delta}}=\overline{\widehat{\Delta}} \widehat{\Delta}$, we see that $\overline{\widehat{\Delta}^{1/2}} \widehat{\cL}_0\overline{\widehat{\Delta}^{1/2}}$ will produce a $\tau$-symmetric bimodule quantum Markov semigroup.
\end{remark}

\begin{proposition}\label{prop:keeplap}
  Suppose $\{\Phi_t\}_{t\geq 0}$ is bimodule GNS symmetric with respect to $\widehat{\Delta}$.
  Then $\cL_w=0$.
\end{proposition}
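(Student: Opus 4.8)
The plan is to reduce the statement to the criterion, established earlier in this section, that $\cL_w=0$ as soon as $\bE_{\cM}(\mathfrak{F}^{-1}(\widehat{\cL}_1))$ is self-adjoint, where $\widehat{\cL}_1=e_2\widehat{\cL}(1-e_2)$. Thus the entire task is to prove
\[
\bE_{\cM}(\mathfrak{F}^{-1}(\widehat{\cL}_1))=\bE_{\cM}(\mathfrak{F}^{-1}(\widehat{\cL}_1))^*.
\]

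First I would extract the algebraic consequences of bimodule GNS symmetry. By Proposition \ref{prop:bimoduleder} one has $\overline{\widehat{\cL}}=\widehat{\cL}\,\overline{\widehat{\Delta}}$, and by Proposition \ref{prop:generator} the generator multiplier is self-adjoint, $\widehat{\cL}^*=\widehat{\cL}$, so that $\overline{\widehat{\cL}}^*=\overline{\widehat{\cL}^*}=\overline{\widehat{\cL}}$ as well. Since modular conjugation on $\cM'\cap\cM_2$ is a conjugate-linear anti-homomorphism commuting with $*$ and fixing $1$ and $e_2$, and $\widehat{\Delta}$ is positive, the relation $\widehat{\Delta}e_2=e_2$ gives $e_2\overline{\widehat{\Delta}}=\overline{\widehat{\Delta}e_2}=\overline{e_2}=e_2$ and hence, taking adjoints, $\overline{\widehat{\Delta}}e_2=e_2$. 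Therefore $\overline{\widehat{\cL}}e_2=\widehat{\cL}\,\overline{\widehat{\Delta}}e_2=\widehat{\cL}e_2$, and applying the anti-homomorphism property together with $\widehat{\cL}^*=\widehat{\cL}$,
\[
\overline{\widehat{\cL}_1}=\overline{(1-e_2)}\,\overline{\widehat{\cL}}\,\overline{e_2}=(1-e_2)\overline{\widehat{\cL}}e_2=(1-e_2)\widehat{\cL}e_2=\big(e_2\widehat{\cL}(1-e_2)\big)^*=\widehat{\cL}_1^*.
\]
So the key identity is $\overline{\widehat{\cL}_1}=\widehat{\cL}_1^*$.

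Second, I would use this to conclude. From the definition $\mathfrak{F}^{-1}(\widehat{\cL}_1)=\lambda^{-3/2}\bE_{\cM_1}(\widehat{\cL}_1e_1e_2)$ and $\bE_{\cM}\circ\bE_{\cM_1}=\bE_{\cM}$ one gets $\bE_{\cM}(\mathfrak{F}^{-1}(\widehat{\cL}_1))=\lambda^{-3/2}\bE_{\cM}(\widehat{\cL}_1e_1e_2)$; taking adjoints and using $\widehat{\cL}_1^*=\overline{\widehat{\cL}_1}$,
\[
\bE_{\cM}(\mathfrak{F}^{-1}(\widehat{\cL}_1))^*=\lambda^{-3/2}\bE_{\cM}(e_2e_1\widehat{\cL}_1^*)=\lambda^{-3/2}\bE_{\cM}(e_2e_1\overline{\widehat{\cL}_1}).
\]
I then invoke the auxiliary identity $\bE_{\cM}(e_2e_1\overline{z})=\bE_{\cM}(ze_1e_2)$, valid for all $z\in\cM'\cap\cM_2$, which follows from the relation $\mathfrak{F}^2=(\text{modular conjugation})$ on $\cM'\cap\cM_2$ together with $\bE_{\cM_1}(e_2)=\lambda$ (equivalently, from Lemma \ref{lem:fourier} applied to $z=\mathfrak{F}(\mathfrak{F}^{-1}(z))$ and to $\overline{z}$). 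Applying it with $z=\widehat{\cL}_1$ gives $\bE_{\cM}(\mathfrak{F}^{-1}(\widehat{\cL}_1))^*=\lambda^{-3/2}\bE_{\cM}(\widehat{\cL}_1e_1e_2)=\bE_{\cM}(\mathfrak{F}^{-1}(\widehat{\cL}_1))$, and the criterion recalled above yields $\cL_w=0$.

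I do not expect a serious obstacle here: the computations are all of the routine planar-algebra type used throughout Section 2 and Section 5, and no new analytic input is required. The one place to be careful is the modular-conjugation bookkeeping on the $2$-box space $\cM'\cap\cM_2$ — that it is a conjugate-linear anti-homomorphism commuting with $*$ and fixing $e_2$ — and the clean statement of the auxiliary identity $\bE_{\cM}(e_2e_1\overline{z})=\bE_{\cM}(ze_1e_2)$, which should be isolated first (either by a short direct argument from Lemma \ref{lem:fourier} or by citing the $\cM'\cap\cM_2$-analogue of the relation $\mathfrak{F}^2(x)=\overline{x}$).
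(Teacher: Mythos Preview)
Your proof is correct and follows essentially the same route as the paper: both first establish the key identity $\overline{\widehat{\cL}_1}=\widehat{\cL}_1^*$ (equivalently $\overline{\widehat{\cL}_1^*}=\widehat{\cL}_1$) from bimodule GNS symmetry, and then deduce that $\bE_{\cM}(\mathfrak{F}^{-1}(\widehat{\cL}_1))$ is self-adjoint, whence $\cL_w=0$. The only difference is in the second step: the paper observes directly that $\mathfrak{F}^{-1}(y)^*=\mathfrak{F}^{-1}(\overline{y^*})$ for $y\in\cM'\cap\cM_2$ (an immediate consequence of $\mathfrak{F}^2=\overline{\,\cdot\,}$), so $\mathfrak{F}^{-1}(\widehat{\cL}_1)$ itself is self-adjoint, which is slightly stronger and avoids your auxiliary identity $\bE_{\cM}(e_2e_1\overline{z})=\bE_{\cM}(ze_1e_2)$; your detour through that identity is valid but unnecessary.
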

\begin{proof}  
By Proposition \ref{prop:bimoduleder}, we have that $e_2\overline{\widehat{\cL}}(1-e_2)= e_2\widehat{\cL}(1-e_2)$, i.e. $\overline{\widehat{\cL}_1^*}=\widehat{\cL}_1$.
This implies that $\mathfrak{F}^{-1}(\widehat{\cL}_1)^*=\mathfrak{F}^{-1}(\widehat{\cL}_1)$.
Hence $\cL_w=0$ and $\cL=\cL_a$.    
\end{proof}

\begin{remark}
Suppose $\{\Phi_t\}_{t\geq 0}$ is bimodule KMS symmetry with respect to $\widehat{\Delta}$. 
Note that $e_2\overline{\widehat{\cL}}(1-e_2)=e_2\widehat{\cL}(1-e_2)\overline{\widehat{\Delta}}$.
This implies that $\overline{\widehat{\cL}_1^*}=\widehat{\cL}_1\overline{\widehat{\Delta}}$.
\end{remark}

\begin{proposition}
Suppose that $\cN\subseteq \cM$ is irreducible and $\{\Phi_t\}_{t \geq 0}$ is a bimodule quantum Markov semigroup.
Suppose that there exists a strictly positive element $\widehat{\Delta}\in \cM'\cap \cM_2$ with $\widehat{\Delta}e_2=e_2$ such that $\overline{\widehat{\cL}_0}=\overline{\widehat{\Delta}}\widehat{\cL}_0$.
Then $\{\Phi_t\}_{t\geq 0}$ is bimodule GNS symmetry with respect to $\widehat{\Delta}$.
 If $\cM'\cap \cM_2$ is commutative, then $\Phi_t$ is bimodule GNS symmetry.
\end{proposition}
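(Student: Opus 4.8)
The plan is to first read off the explicit normal form of the generator coming from irreducibility, then push the hypothesis through the modular conjugation to obtain the semigroup statement, and finally bootstrap from the generator to the individual channels.

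First I would record the shape of $\widehat{\cL}$. Since $\cN\subset\cM$ is irreducible, $Z(\cM)\subseteq\cN'\cap\cM=\bC$, so $\cM$ is a factor, $e_2$ is minimal in $\cM'\cap\cM_2$, and $e_2\widehat{\cL}=\widehat{\cL}e_2$; hence $\widehat{\cL}_1=e_2\widehat{\cL}(1-e_2)=0$ and, writing $\widehat{\cL}_0=-(1-e_2)\widehat{\cL}(1-e_2)\ge 0$ by Proposition \ref{prop:generator},
\[
\widehat{\cL}=e_2\widehat{\cL}e_2-\widehat{\cL}_0=\mu e_2-\widehat{\cL}_0,\qquad \mu\in\bR,
\]
the reality of $\mu$ coming from minimality of $e_2$ and $\widehat{\cL}^*=\widehat{\cL}$.

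Next I would deduce the semigroup statement. Recall that $\overline{x}=Jx^*J=\mathfrak{F}^2(x)$ is complex-linear, is an anti-homomorphism, and fixes $e_2$. From $\widehat{\Delta}e_2=e_2$ and $e_2^*=e_2$ one gets $e_2\widehat{\Delta}=e_2$; applying $\overline{\ \cdot\ }$ and then taking adjoints (using $\overline{\widehat{\Delta}}\ge 0$) gives $e_2\overline{\widehat{\Delta}}=\overline{\widehat{\Delta}}e_2=e_2$. Also $\overline{\widehat{\cL}_0}=J\widehat{\cL}_0J$ is self-adjoint because $\widehat{\cL}_0$ is, so taking adjoints in the hypothesis $\overline{\widehat{\cL}_0}=\overline{\widehat{\Delta}}\widehat{\cL}_0$ yields $\overline{\widehat{\cL}_0}=\widehat{\cL}_0\overline{\widehat{\Delta}}$ as well; in particular $\widehat{\cL}_0$ and $\overline{\widehat{\Delta}}$ commute. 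Combining these with the normal form,
\[
\overline{\widehat{\cL}}=\mu e_2-\overline{\widehat{\cL}_0}=\mu e_2-\widehat{\cL}_0\overline{\widehat{\Delta}}=(\mu e_2-\widehat{\cL}_0)\overline{\widehat{\Delta}}=\widehat{\cL}\,\overline{\widehat{\Delta}},
\]
which is exactly bimodule GNS symmetry of $\{\Phi_t\}_{t\ge0}$ with respect to $\widehat{\Delta}$. From here Proposition \ref{prop:bimoduleder} applies and furnishes $\widehat{\cL}_0\widehat{\Delta}=\widehat{\Delta}\widehat{\cL}_0$, $\cR(\widehat{\cL}_0)\overline{\widehat{\Delta}}\widehat{\Delta}=\cR(\widehat{\cL}_0)$, and $\cL_w=0$.

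Finally, for the individual channels: the composition–convolution identity $\widehat{\Phi\Psi}=\widehat{\Psi}*\widehat{\Phi}$ together with $\widehat{\mathrm{id}}=\lambda^{-1/2}e_2$ gives $\widehat{\Phi}_t=\sum_{n\ge0}\tfrac{(-t)^n}{n!}\widehat{\cL}^{(*n)}$, which by the normal form and $e_2*y=\lambda^{1/2}y$ collapses to $\widehat{\Phi}_t=e^{-tc}\sum_{k\ge0}\tfrac{t^k}{k!}\widehat{\cL}_0^{(*k)}$ with $c=1*\widehat{\cL}_0$ and $\widehat{\cL}_0^{(*0)}=\lambda^{-1/2}e_2$. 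Since $\overline{\ \cdot\ }$ is an anti-homomorphism for $*$ it preserves convolution powers, so $\overline{\widehat{\cL}_0^{(*k)}}=(\widehat{\cL}_0\overline{\widehat{\Delta}})^{(*k)}$, and in view of $e_2\overline{\widehat{\Delta}}=e_2$ the goal $\overline{\widehat{\Phi}_t}=\widehat{\Phi}_t\overline{\widehat{\Delta}}$ (the channel-level bimodule GNS symmetry) reduces to the identities
\[
(\widehat{\cL}_0\overline{\widehat{\Delta}})^{(*k)}=\widehat{\cL}_0^{(*k)}\,\overline{\widehat{\Delta}}\qquad(k\ge1),
\]
the case $k=1$ being the previous step and the general case proceeding by induction, the inductive step sliding one factor $\overline{\widehat{\Delta}}$ through a single convolution using the commutation relations and $\widehat{\Delta}e_2=e_2$. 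This last point is the crux and the main obstacle: as the authors observe in Remark \ref{rem:gnscompos}, multiplication by $\overline{\widehat{\Delta}}$ does not commute with convolution for a general $\widehat{\Delta}$, so the argument must exploit the specific structure forced by irreducibility together with $\overline{\widehat{\cL}_0}=\overline{\widehat{\Delta}}\widehat{\cL}_0$ — I would try to show that, irreducibility making $\CS_0(\widehat{\cL}_0)$ a biprojection, the restriction of $\widehat{\Delta}$ to its corner has the product form $\overline{\Delta}\,$--$\,\Delta^{-1}$ to which Remark \ref{rem:gnscompos} applies, and pinning down this planar/biprojection input is where I expect the real work to lie.
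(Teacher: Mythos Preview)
Your first two steps — the normal form $\widehat{\cL}=\mu e_2-\widehat{\cL}_0$ from irreducibility and the derivation of $\overline{\widehat{\cL}}=\widehat{\cL}\,\overline{\widehat{\Delta}}$ — are correct and coincide with the paper's argument (the paper identifies $\mu=\tau_2(\widehat{\cL}_0)$, but this is immaterial).

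The trouble is in the third step. You read ``$\Phi_t$ is bimodule GNS symmetric'' as symmetry with respect to the \emph{same} $\widehat{\Delta}$, i.e.\ $\overline{\widehat{\Phi}_t}=\widehat{\Phi}_t\overline{\widehat{\Delta}}$, and then try to establish $(\widehat{\cL}_0\overline{\widehat{\Delta}})^{(*k)}=\widehat{\cL}_0^{(*k)}\overline{\widehat{\Delta}}$ by induction. As you yourself note (and as Remark~\ref{rem:gnscompos} warns), multiplication by $\overline{\widehat{\Delta}}$ does not commute with convolution unless $\widehat{\Delta}$ has the special product form, and nothing in the hypotheses forces that structure on the corner $\cR(\widehat{\cL}_0)$; your proposed biprojection route is speculative and not needed.

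The paper's intended conclusion is weaker: only that each $\Phi_t$ is bimodule GNS symmetric with respect to \emph{some} element (possibly $t$-dependent). For this it suffices, by the remark preceding Proposition~\ref{prop:gnsprop}, to show $\cR(\widehat{\Phi}_t)=\cR(\overline{\widehat{\Phi}_t})$ in the irreducible case. That follows directly from your own positive expansion $\widehat{\Phi}_t=e^{-tc}\sum_{k\ge0}\tfrac{t^k}{k!}\widehat{\cL}_0^{(*k)}$: since each term is positive (Schur product theorem), $\cR(\widehat{\Phi}_t)=\bigvee_{k\ge0}\cR(\widehat{\cL}_0^{(*k)})$; from $\overline{\widehat{\cL}_0}=\overline{\widehat{\Delta}}\widehat{\cL}_0$ with $\overline{\widehat{\Delta}}$ invertible one gets $\cR(\widehat{\cL}_0)=\cR(\overline{\widehat{\cL}_0})$, and monotonicity of convolution on positives then gives $\cR(\widehat{\cL}_0^{(*k)})=\cR(\overline{\widehat{\cL}_0}^{(*k)})$ for all $k$, hence $\cR(\widehat{\Phi}_t)=\cR(\overline{\widehat{\Phi}_t})$. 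This sidesteps the obstacle you identified entirely.
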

\begin{proof}
By the fact that $\cN\subseteq \cM$ is irreducible, we have that $\cL_w=0$. 
Then we obtain that 
\begin{align*}
    \widehat{\cL}=\lambda^{-1} \tau_2(\widehat{\cL}_0)e_2-\widehat{\cL}_0.
\end{align*}
It is clear that $\widehat{\cL}\overline{\widehat{\Delta}}=\overline{\widehat{\cL}}$, i.e. $\cL$ is bimodule GNS symmetry with respect to $\widehat{\Delta}$.
Note that $\Phi_t$ is symmetry with respect to $\tau$.
We see that $\Phi_t$ is equilibrium for all $t\geq 0$.
Hence $\{\Phi_t\}_{t\geq 0}$ is bimodule GNS symmetry with respect to $\widehat{\Delta}$.
Note that $\cR(\widehat{\cL})=\cR(\overline{\widehat{\cL}})$.
We obtain that for $t\neq 0$, 
\begin{align*}
    \cR(\widehat{\Phi}_t)=\bigvee_{k \geq 0} \cR(\widehat{\cL}_0^{(*k)})= \bigvee_{k \geq 0} \cR(\overline{\widehat{\cL}_0}^{(*k)})=\cR(\overline{\widehat{\Phi}_t}).
\end{align*}
Note that $\widehat{\Phi}e_2=\overline{\widehat{\Phi}}e_2=\lambda^{-1}\tau_2(\widehat{\Phi}e_2)e_2$.
If $\cM'\cap \cM_2$ is commutative, then $\Phi_t$ is bimodule GNS symmetry.
\end{proof}

\begin{remark}\label{rem:haag1}
Suppose that $\cN\subset \cM$ is irreducible and $p$ is projection in $\cM'\cap \cM_2$ such that $p\overline{p}=0$.
Let $\widehat{\cL}_0=\kappa p+\kappa^{-1} \overline{p}$.
Then $\widehat{\cL}=\lambda^{-1}(\kappa+\kappa^{-1})\tau_2(p)e_2 -\kappa p-\kappa^{-1} \overline{p}$.
Then the corresponding bimodule quantum Markov semigroup is bimodule GNS symmetry with respect to $\widehat{\Delta}$, where $\widehat{\Delta}=e_2+\kappa^2 p+\kappa^{-2} \overline{p}+(1-e_2-p-\overline{p})$.

Suppose that $\mathcal{C}$ is a unitary fusion category with a generating object $\zeta$ such that $\zeta^*\neq \zeta$.
For the associated inclusion described in Section \ref{sec:fusion}, we let
\begin{align*}
\widehat{\cL}_{0, \kappa}=\kappa p_{\zeta}+\kappa^{-1} p_{\zeta^*}, \quad \kappa>0.
\end{align*}
Let $\widehat{\Delta}_{\kappa}=e_2+\kappa^2 p_{\zeta}+\kappa^{-2} p_{\zeta^*}+(1-e_2-p_{\zeta}-p_{\zeta^*})$.
We see that the semigroup $\{\Phi_t=e^{-t\cL}\}_{t\geq 0}$ given by $\displaystyle \widehat{\cL}=\tau_2(\widehat{\cL}_0)e_2-\kappa p_{\zeta}-\kappa^{-1} p_{ \zeta^*}$ is bimodule GNS symmetry.
\end{remark}

\begin{proposition}\label{prop:gnsformula}
Suppose that $\{\Phi_t\}_{t \geq 0}$ is a bimodule quantum Markov semigroup, $\Phi_t$ is equilibrium for all $t\geq 0$ and $\widehat{\Delta}\in \cM'\cap \cM_2$ is strictly positive element with $\widehat{\Delta}e_2=e_2$.
Then $\{\Phi_t\}_{t\geq 0}$ is bimodule GNS symmetry with respect to $\widehat{\Delta}$ if and only if $\overline{\widehat{\cL}_0}=\overline{\widehat{\Delta}}\widehat{\cL}_0$  and
\begin{align*}
    \mathfrak{F}^{-1}(\widehat{\Delta})* (1*\widehat{\cL}_0)=1*\widehat{\cL}_0,
\end{align*}
where the first convolution is taken in $\cN'\cap \cM_1$.
Pictorially, $\vcenter{\hbox{\begin{tikzpicture}[scale=0.95]
\draw [blue] (-0.5, -1.5)--(-0.5, 0) .. controls +(0, 0.6) and +(0,0.6).. (0.5, 0)--(0.5, -1.5);
\begin{scope}[shift={(0.5, -0.3)}]
\draw [fill=white] (-0.4, -0.3) rectangle (0.4, 0.3);
\node at (0, 0) {\tiny $1*\widehat{\cL}_0$};
\end{scope}
\begin{scope}[shift={(0, -1)}]
\draw [fill=white] (-0.7, -0.3) rectangle (0.7, 0.3);
\node at (0, 0) {\tiny $\overline{\widehat{\Delta}}$};    
\end{scope}
\end{tikzpicture}}}= \vcenter{\hbox{\begin{tikzpicture}[scale=0.95]
\draw [blue] (-0.5, -0.8)--(-0.5, 0) .. controls +(0, 0.6) and +(0,0.6).. (0.5, 0)--(0.5, -0.8);
\begin{scope}[shift={(0.5, -0.3)}]
\draw [fill=white] (-0.4, -0.3) rectangle (0.4, 0.3);
\node at (0, 0) {\tiny $1*\widehat{\cL}_0$};
\end{scope}
\end{tikzpicture}}}$.

\end{proposition}
\begin{proof}
Note that     
\begin{align*}
    \widehat{\cL}=\vcenter{\hbox{\begin{tikzpicture}[scale=0.65]
    \begin{scope}[shift={(0,1.5)}]
    \draw [blue] (-0.5, 0.8)--(-0.5, 0) .. controls +(0, -0.6) and +(0,-0.6).. (0.5, 0)--(0.5, 0.8);    
\begin{scope}[shift={(0.5, 0.3)}]
\end{scope}
    \end{scope}
\draw [blue] (-0.5, -0.8)--(-0.5, 0) .. controls +(0, 0.6) and +(0,0.6).. (0.5, 0)--(0.5, -0.8);
\begin{scope}[shift={(0.5, -0.3)}]
\draw [fill=white] (-0.3, -0.3) rectangle (0.3, 0.3);
\node at (0, 0) {\tiny $\mathbf{y}$};
\end{scope}
\end{tikzpicture}}}
+
\vcenter{\hbox{\begin{tikzpicture}[scale=0.65]
    \begin{scope}[shift={(0,1.5)}]
    \draw [blue] (-0.5, 0.8)--(-0.5, 0) .. controls +(0, -0.6) and +(0,-0.6).. (0.5, 0)--(0.5, 0.8);    
\begin{scope}[shift={(0.5, 0.3)}]
\draw [fill=white] (-0.3, -0.3) rectangle (0.3, 0.3);
\node at (0, 0) {\tiny $\mathbf{y}$};
\end{scope}
    \end{scope}
\draw [blue] (-0.5, -0.8)--(-0.5, 0) .. controls +(0, 0.6) and +(0,0.6).. (0.5, 0)--(0.5, -0.8);
\begin{scope}[shift={(0.5, -0.3)}]
\end{scope}
\end{tikzpicture}}}
-  \vcenter{\hbox{\begin{tikzpicture}[scale=0.6]
        \draw [blue] (0.2, -0.8) --(0.2, 0.8);
        \draw [blue] (-0.2, -0.8) --(-0.2, 0.8);
        \draw [fill=white] (-0.5, -0.4) rectangle (0.5, 0.4);
    \node at (0, 0) {\tiny $\widehat{\cL}_0$};
        \end{tikzpicture}}} ,
\end{align*}
where $\displaystyle \mathbf{y}=\frac{1}{2} (1*\widehat{\cL}_0)$.
We see that 
\begin{align*}
    \overline{\widehat{\Delta}}\widehat{\cL}
=\vcenter{\hbox{\begin{tikzpicture}[scale=0.65]
    \begin{scope}[shift={(0,1.5)}]
    \draw [blue] (-0.5, 0.8)--(-0.5, 0) .. controls +(0, -0.6) and +(0,-0.6).. (0.5, 0)--(0.5, 0.8);    
\begin{scope}[shift={(0.5, 0.3)}]
\end{scope}
    \end{scope}
\draw [blue] (-0.5, -1.5)--(-0.5, 0) .. controls +(0, 0.6) and +(0,0.6).. (0.5, 0)--(0.5, -1.5);
\begin{scope}[shift={(0.5, -0.3)}]
\draw [fill=white] (-0.3, -0.3) rectangle (0.3, 0.3);
\node at (0, 0) {\tiny $\mathbf{y}$};
\end{scope}
\begin{scope}[shift={(0, -1)}]
\draw [fill=white] (-0.7, -0.3) rectangle (0.7, 0.3);
\node at (0, 0) {\tiny $\overline{\widehat{\Delta}}$};    
\end{scope}
\end{tikzpicture}}}
+
\vcenter{\hbox{\begin{tikzpicture}[scale=0.65]
    \begin{scope}[shift={(0,1.5)}]
    \draw [blue] (-0.5, 0.8)--(-0.5, 0) .. controls +(0, -0.6) and +(0,-0.6).. (0.5, 0)--(0.5, 0.8);    
\begin{scope}[shift={(0.5, 0.3)}]
\draw [fill=white] (-0.3, -0.3) rectangle (0.3, 0.3);
\node at (0, 0) {\tiny $\mathbf{y}$};
\end{scope}
    \end{scope}
\draw [blue] (-0.5, -0.8)--(-0.5, 0) .. controls +(0, 0.6) and +(0,0.6).. (0.5, 0)--(0.5, -0.8);
\begin{scope}[shift={(0.5, -0.3)}]
\end{scope}
\end{tikzpicture}}}
-  \vcenter{\hbox{\begin{tikzpicture}[scale=0.6]
        \draw [blue] (0.2, -1.8) --(0.2, 0.8);
        \draw [blue] (-0.2, -1.8) --(-0.2, 0.8);
        \draw [fill=white] (-0.5, -0.4) rectangle (0.5, 0.4);
    \node at (0, 0) {\tiny $\widehat{\cL}_0$};
    \begin{scope}[shift={(0, -1)}]
  \draw [fill=white] (-0.5, -0.4) rectangle (0.5, 0.4);
    \node at (0, 0) {\tiny $\overline{\widehat{\Delta}}$};  
    \end{scope}
        \end{tikzpicture}}}.
\end{align*}
By the assumption, we have that $\overline{\widehat{\cL} }=\overline{\widehat{\Delta}}\widehat{\cL}$.
Similarly, we have that $\overline{\widehat{\cL}}=\widehat{\cL} \overline{\widehat{\Delta}}$.
We see that $\{\Phi_t\}_{t\geq 0}$ is bimodule GNS symmetry with respect to $\widehat{\Delta}$.

Suppose that $\{\Phi_t\}_{t\geq 0}$ is bimodule GNS symmetry with respect to $\widehat{\Delta}$.
We see that $\overline{\widehat{\cL}_0}=\overline{\widehat{\Delta}}\widehat{\cL}_0$  and $\mathfrak{F}^{-1}(\widehat{\Delta})* (1*\widehat{\cL}_0)=1*\widehat{\cL}_0$ by previous computation.
\end{proof}

\begin{corollary}
Suppose that $\cM'\cap \cM_2$ is commutative,  $\{\Phi_t\}_{t \geq 0}$ is a bimodule quantum Markov semigroup, $\Phi_t$ is equilibrium for all $t\geq 0$ and $\widehat{\Delta}\in \cM'\cap \cM_2$ is strictly positive element with $\widehat{\Delta}e_2=e_2$.
Then $\{\Phi_t\}_{t\geq 0}$ is bimodule GNS symmetry with respect to $\widehat{\Delta}$ if and only if $\overline{\widehat{\cL}_0}=\overline{\widehat{\Delta}}\widehat{\cL}_0$. 
\end{corollary}
\begin{proof}
If $\cM'\cap \cM_2$ is commutative, then 
\begin{align*}
\vcenter{\hbox{\begin{tikzpicture}[scale=0.95]
\draw [blue] (-0.5, -1.5)--(-0.5, 0) .. controls +(0, 0.6) and +(0,0.6).. (0.5, 0)--(0.5, -1.5);
\begin{scope}[shift={(0.5, -0.3)}]
\draw [fill=white] (-0.4, -0.3) rectangle (0.4, 0.3);
\node at (0, 0) {\tiny $1*\widehat{\cL}_0$};
\end{scope}
\begin{scope}[shift={(0, -1)}]
\draw [fill=white] (-0.7, -0.3) rectangle (0.7, 0.3);
\node at (0, 0) {\tiny $\overline{\widehat{\Delta}}$};    
\end{scope}
\end{tikzpicture}}}
=\vcenter{\hbox{\begin{tikzpicture}[scale=0.95]
\draw [blue] (-0.5, -1.5)--(-0.5, 0) .. controls +(0, 0.6) and +(0,0.6).. (0.5, 0)--(0.5, -1.5);
\begin{scope}[shift={(0.5, -1)}]
\draw [fill=white] (-0.4, -0.3) rectangle (0.4, 0.3);
\node at (0, 0) {\tiny $1*\widehat{\cL}_0$};
\end{scope}
\begin{scope}[shift={(0, -0.3)}]
\draw [fill=white] (-0.7, -0.3) rectangle (0.7, 0.3);
\node at (0, 0) {\tiny $\overline{\widehat{\Delta}}$};    
\end{scope}
\end{tikzpicture}}}
= \vcenter{\hbox{\begin{tikzpicture}[scale=0.95]
\draw [blue] (-0.5, -0.8)--(-0.5, 0) .. controls +(0, 0.6) and +(0,0.6).. (0.5, 0)--(0.5, -0.8);
\begin{scope}[shift={(0.5, -0.3)}]
\draw [fill=white] (-0.4, -0.3) rectangle (0.4, 0.3);
\node at (0, 0) {\tiny $1*\widehat{\cL}_0$};
\end{scope}
\end{tikzpicture}}}.
\end{align*}
Now by Proposition \ref{prop:gnsformula}, we see that the corollary is true.
\end{proof}

\begin{remark}
Suppose that $\{\Phi_t\}_{t \geq 0}$ is a bimodule quantum Markov semigroup, $\Phi_t$ is equilibrium for all $t\geq 0$ with $1*\widehat{\cL}_0=1$ and $\widehat{\Delta}\in \cM'\cap \cM_2$ is strictly positive element with $\widehat{\Delta}e_2=e_2$.
Then $\{\Phi_t\}_{t\geq 0}$ is bimodule GNS symmetry with respect to $\widehat{\Delta}$ if and only if $\overline{\widehat{\cL}_0}=\overline{\widehat{\Delta}}\widehat{\cL}_0$.

Suppose that $\bC\subset \bC^4$ is the inclusion.
Let $\widehat{\cL}_0\in \cM'\cap \cM_2$ such that $\mathfrak{F}(\widehat{\cL}_0)=
\begin{pmatrix}
0 & \frac{1}{3} &  \frac{1}{3} &  \frac{1}{3} \\
\frac{1}{2} & 0 & \frac{1}{4} & \frac{1}{4}  \\
\frac{1}{4} & \frac{1}{2} & 0 & \frac{1}{4} \\
 \frac{1}{6} &  \frac{1}{2} &  \frac{1}{3} & 0  
\end{pmatrix}.$
Then the associated semigroup is bimodule GNS symmetry, but not GNS symmetry.
\end{remark}

\begin{lemma}\label{lem:gnssupport}
  Suppose $\{\Phi_t\}_{t\geq 0}$ is bimodule GNS symmetric with respect to $\widehat{\Delta}$.
Then
\begin{align}
    \Ker\Gamma=\Ker\cL=\Ker\cL_a = \nfix(\{\Phi_t\}_{t\geq 0}) =\fix(\{\Phi_t\}_{t\geq 0}).
\end{align}
\end{lemma}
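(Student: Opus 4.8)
The plan is to prove the chain of equalities by closing a circle of inclusions, chasing the description of each set through the Fourier-multiplier formalism already developed. We have $\fix(\{\Phi_t\}_{t\geq0})=\Ker\cL$ by definition of the Lindbladian, and $\nfix(\{\Phi_t\}_{t\geq0})\subseteq\Ker\Gamma$ by the differentiation remark preceding the lemma, so it suffices to show $\Ker\Gamma\subseteq\Ker\cL$, $\Ker\cL\subseteq\Ker\cL_a$, and $\Ker\cL_a\subseteq\nfix(\{\Phi_t\}_{t\geq0})$, together with $\Ker\cL\subseteq\fix(\{\Phi_t\}_{t\geq0})$ which is automatic. First I would use Proposition~\ref{prop:keeplap}: GNS symmetry forces $\cL_w=0$, hence $\cL=\cL_a$. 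This already collapses $\Ker\cL=\Ker\cL_a$, so the real content is $\Ker\Gamma=\Ker\cL_a=\nfix(\{\Phi_t\}_{t\geq0})$.

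**Key steps.** For $\Ker\cL_a\subseteq\nfix$: given $x$ with $\cL_a(x)=0$, I would show $\Gamma(x,x)=0$ using Proposition~\ref{prop:gradientform}, which expresses $2\Gamma(x,x)=\lambda^{-1/2}\bE_{\cM}((\partial x)^*(\partial x))$; the point is that $\cL_a(x)=0$ together with the intertwining relations in Proposition~\ref{prop:bimoduleder} (in particular $\widehat{\cL}_0\widehat{\Delta}^{-1/2}=\widehat{\cL}_0\overline{\widehat{\Delta}^{1/2}}$ and $\overline{\widehat{\cL}_0}=\overline{\widehat{\Delta}}\,\widehat{\cL}_0$) lets me relate $\partial x$ and $\overline\partial x$ to $\cL_a(x)$ and $\cL_{\overline a}(x)$ via Lemma~\ref{lem:laplacian1} and Proposition~\ref{prop:laplacian2}: $\lambda^{-1/2}(\partial^*\partial+\overline\partial^*\overline\partial)/2 = \cL_a+\cL_{\overline a}$, and GNS symmetry identifies $\cL_{\overline a}$ with $\cL_a$ up to the modular twist, so $\cL_a(x)=0$ propagates to $\partial x=0$, hence $x\in\Ker\Gamma$. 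Then $x\in\Ker\Gamma$ means all $\partial_j x=0$ by Proposition~\ref{prop:gradientform2}, equivalently $\widehat{\cL}_0 x e_1e_2=\widehat{\cL}_0 e_1 x e_2$; combined with the intertwining, one shows $\CS_0(\widehat{\cL}_0)$ commutes past $x$ in the same sense, and since (via GNS symmetry and Proposition~\ref{prop:bimoduleder}) $\cR(\widehat{\cL})$ is built from $\widehat{\cL}_0$ and $\widehat{\cL}_1$, one deduces $\widehat{\cL} x e_1 e_2 = \widehat{\cL} e_1 x e_2$, i.e. $\cL(x)=0$ after multiplying by $e_2e_1$ and applying $\bE_{\cM}$. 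Finally, $\cL(x)=0$ with $\cL_w=0$ means $\cL_a(x)=0$, and one differentiates the semigroup identity in the other direction: $x\in\Ker\cL$ gives $\Phi_t(x)=x$ for all $t$, and using the Kadison–Schwarz inequality plus bimodule equilibrium (Theorem~\ref{prop:algebra}, since GNS symmetry implies bimodule equilibrium with respect to $\widehat{\Delta}$) one gets $\Phi_t(x^*x)=x^*x=\Phi_t(x)^*\Phi_t(x)$, so $x\in\nfix(\{\Phi_t\}_{t\geq0})$.

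**Main obstacle.** The delicate step is showing $\Ker\Gamma\subseteq\Ker\cL$, i.e. that vanishing of the gradient form forces vanishing of the full generator rather than merely the Laplacian part $\cL_a$. This is where GNS symmetry is essential: without it one only controls $\widehat{\cL}_0$, and the "rotational" piece $\widehat{\cL}_1$ (equivalently the Hamiltonian term $\cL_w$) could be nonzero on $\Ker\Gamma$. Proposition~\ref{prop:keeplap} kills $\cL_w$, but I still need to verify carefully that $\Ker\Gamma$, characterized by $\CS_0(\widehat{\cL}_0)xe_1e_2=\CS_0(\widehat{\cL}_0)e_1xe_2$, forces the analogous commutation with $\widehat{\cL}=\tau_2(\widehat{\cL}_0)e_2-\widehat{\cL}_0$ (using $\widehat{\cL}_1=0$), which reduces to checking $e_2$ itself commutes in the required sense — and $e_2 x e_1 e_2 = e_2 e_1 x e_2$ is equivalent to $\bE_{\cM}(x)e_1 = x e_1$ pattern, true exactly when the corresponding map is unital; one must confirm this holds on all of $\Ker\Gamma$, not just on $\cN$. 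I expect this bookkeeping with the central carrier $z_{e_2}$ and the two index sets $\sI_0,\sI_1$ to be the only genuinely technical point; everything else is assembling earlier propositions.
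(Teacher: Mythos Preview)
Your overall circle of inclusions is right, and the final step $\Ker\cL\subseteq\nfix$ via Theorem~\ref{prop:algebra} (GNS symmetry $\Rightarrow$ bimodule equilibrium $\Rightarrow$ $\Phi_t(x^*x)=x^*x$) is exactly what the paper does. There is, however, one genuine error and one unnecessary detour.

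\textbf{The error.} You claim $\widehat{\cL}_1=0$ under GNS symmetry and write $\widehat{\cL}=\tau_2(\widehat{\cL}_0)e_2-\widehat{\cL}_0$. That formula holds only in the irreducible case. Proposition~\ref{prop:keeplap} gives $\overline{\widehat{\cL}_1^*}=\widehat{\cL}_1$, equivalently $\mathfrak{F}^{-1}(\widehat{\cL}_1)$ self-adjoint, hence $\cL_w=0$; it does \emph{not} force $\widehat{\cL}_1=0$. The fix is that you never need $\widehat{\cL}$ itself: since $\cL=\cL_a$ and $\cL_a$ involves only $\widehat{\cL}_0$, for $\Ker\Gamma\subseteq\Ker\cL$ one argues as the paper does. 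From $x\in\Ker\Gamma$ one has $\widehat{\cL}_0^{1/2}xe_1e_2=\widehat{\cL}_0^{1/2}e_1xe_2$ (Proposition~\ref{prop:gradientform2}); the crucial extra input is Proposition~\ref{prop:bimoduleder}, $\overline{\widehat{\cL}_0}=\widehat{\cL}_0\overline{\widehat{\Delta}}$, which yields the same commutation for $\overline{\widehat{\cL}_0}^{1/2}$ and hence (taking adjoints) $xe_2e_1\widehat{\cL}_0^{1/2}=e_2e_1\widehat{\cL}_0^{1/2}x$. With commutation on both sides one reads off $\cL_a(x)=\tfrac12(1*\widehat{\cL}_0)x+\tfrac12 x(1*\widehat{\cL}_0)-x*\widehat{\cL}_0=0$ directly. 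No bookkeeping with $e_2$, $\widehat{\cL}_1$, or $z_{e_2}$ is needed.

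\textbf{The detour.} Your proposed route $\Ker\cL_a\to\Ker\Gamma$ via $\cL_a+\cL_{\overline a}=\tfrac{\lambda^{-1/2}}{2}(\partial^*\partial+\overline\partial^*\overline\partial)$ does not go through as stated: you would need $\cL_{\overline a}(x)=0$ from $\cL_a(x)=0$, and ``identifying $\cL_{\overline a}$ with $\cL_a$ up to modular twist'' is not a relation between operators on $\cM$---multiplying $\widehat{\cL}_0$ by $\overline{\widehat{\Delta}}$ does not transport the kernel of $\cL_a$ to that of $\cL_{\overline a}$. The paper instead obtains $\Ker\cL\subseteq\Ker\Gamma$ differently: for $x\in\fix(\Phi_t)$, Theorem~\ref{prop:algebra} gives $x\widehat{\Phi}_t^{1/2}e_1e_2=\widehat{\Phi}_t^{1/2}e_1xe_2$, hence (as in Proposition~\ref{prop:bimodulealgebra2}) $\widehat{\Phi}_txe_1e_2=\widehat{\Phi}_te_1xe_2$; differentiating at $t=0$ yields $\widehat{\cL}xe_1e_2=\widehat{\cL}e_1xe_2$, and projecting by $(1-e_2)$ extracts $\widehat{\cL}_0xe_1e_2=\widehat{\cL}_0e_1xe_2$ (the $e_2\widehat{\cL}e_2$ and $\widehat{\cL}_1^*$ pieces automatically satisfy the relation for any $x\in\cM$), placing $x$ in $\Ker\Gamma$. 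Given that your circle already closes through $\Ker\cL\subseteq\nfix\subseteq\Ker\Gamma$, this detour was not needed in the first place.
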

\begin{proof}
By Proposition \ref{prop:keeplap}, we have that $\Ker\cL_a=\Ker\cL$.
For any $x\in \Ker\Gamma$, we have that $\widehat{\cL}_0^{1/2} x e_1e_2=\widehat{\cL}_0^{1/2} e_1 xe_2$ by Proposition \ref{prop:gradientform2}.
By Proposition \ref{prop:bimoduleder}, we have that $\overline{\widehat{\cL}_0}^{1/2} x e_1e_2=\overline{\widehat{\cL}_0}^{1/2} e_1 xe_2$, i.e. $x \mathfrak{F}^{-1}(\widehat{\cL}_0^{1/2})^*= \mathfrak{F}^{-1}(\widehat{\cL}_0^{1/2})^* x$.
This implies that $x e_2e_1 \widehat{\cL}_0^{1/2}=e_2e_1\widehat{\cL}_0^{1/2} x$.
Now we have that 
\begin{align*}
    \cL(x)=\frac{1}{2}(1*\widehat{\cL}_0)x +\frac{1}{2} x(1*\widehat{\cL}_0) -x* \widehat{\cL}_0=0,
\end{align*}
i.e. $x\in \Ker\cL$.

Suppose that $x\in \Ker\cL$.
Then we have that $x\in \fix(\{\Phi_t\}_{t\geq 0})$.
Note that $\fix(\{\Phi_t\}_{t\geq 0})\subset \nfix(\{\Phi_t\}_{t\geq 0})\subset \Ker\Gamma$.
We see the lemma is true.
\end{proof}

\begin{remark}
Suppose that $\{\Phi_t\}_{t\geq 0}$ is bimodule GNS symmetry with respect to $0<\widehat{\Delta}\in \cM'\cap \cM_2$.
If $\CS(\widehat{\cL}_0)=\CS_0(\widehat{\cL}_0)=1$, then the semigroup is relatively ergodic.
In fact, $\CS_0(\widehat{\cL}_0)=1$ implies that $\ker\Gamma=\cN$.
By Lemma \ref{lem:gnssupport}, we see that the semigroup is relative ergodic.
Note that $\CS(\widehat{\cL}_{0, \kappa})=1$.
We see that the semigroup defined in Remark \ref{rem:haag1} is relatively ergodic.
\end{remark}

\begin{theorem}\label{thm:gnslimit}
 Suppose $\{\Phi_t\}_{t\geq 0}$ is bimodule GNS symmetric with respect to $\widehat{\Delta}$.
 Then $\displaystyle \lim_{t\to \infty} \widehat{\Phi}_t$ exists.
If moreover $\{\Phi_t\}_{t\geq 0}$ is relatively ergodic and $\Phi_t$ is bimodule GNS symmetric with respect to $\widehat{\Delta}$ for all $t\geq 0$, then 
 \begin{align}
 \lim_{t\to \infty} \widehat{\Phi}_t= \bE_{\cM_1}\left( \overline{\widehat{\Delta}} \right)^{-1}.
 \end{align}
\end{theorem}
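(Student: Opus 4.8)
The plan is to obtain the existence statement from the ergodic machinery already in place, and then to compute the Fourier multiplier of the limiting map directly from the structure of a conditional expectation onto $\cN$ together with the bimodule GNS-symmetry relation.

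\emph{Existence of $\lim_{t\to\infty}\widehat\Phi_t$.} Since $\{\Phi_t\}_{t\geq 0}$ is bimodule GNS symmetric with respect to $\widehat\Delta$, each $\Phi_t$ is a bimodule quantum channel which is bimodule equilibrium with respect to the fixed positive element $\widehat\Delta$ (with $\widehat\Delta e_2=e_2$). By Lemma~\ref{lem:gnssupport} we have $\fix(\{\Phi_t\}_{t\geq 0})=\nfix(\{\Phi_t\}_{t\geq 0})$, so Theorem~\ref{thm:bilimit}, applied with $\widehat\Delta_t\equiv\widehat\Delta$, shows that $\lim_{t\to\infty}\widehat\Phi_t$ exists and equals $\lim_{t\to\infty}\tfrac1t\int_0^t\widehat\Phi_s\,ds=\widehat{\bE}_\Phi$, where $\bE_\Phi:=\lim_{t\to\infty}\tfrac1t\int_0^t\Phi_s\,ds$ is the bimodule conditional expectation of $\cM$ onto $\fix(\{\Phi_t\}_{t\geq 0})$. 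This proves the first assertion without using relative ergodicity.

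\emph{Reduction under relative ergodicity.} Assume now $\{\Phi_t\}_{t\geq 0}$ is relatively ergodic, so $\fix(\{\Phi_t\}_{t\geq 0})=\Ker\cL=\cN$ and $\bE_\Phi$ is a unital completely positive idempotent $\cN$-bimodule map of $\cM$ onto $\cN$. First I would record that $\bE_\Phi$ is itself bimodule GNS symmetric with respect to $\widehat\Delta$: it is a conditional expectation onto $\cN$ which is bimodule equilibrium with respect to $\widehat\Delta$ (cf.\ the remark preceding Theorem~\ref{thm:bilimit}), and for a conditional expectation onto $\cN$ this forces $\overline{\widehat{\bE}_\Phi}=\widehat{\bE}_\Phi\,\overline{\widehat\Delta}$; alternatively one checks that the generator relation $\overline{\widehat\cL}=\widehat\cL\,\overline{\widehat\Delta}$ of Proposition~\ref{prop:bimoduleder}, together with the commutations $\widehat\cL\,\widehat\Delta=\widehat\Delta\,\widehat\cL$ and $\widehat\cL\,\overline{\widehat\Delta}=\overline{\widehat\Delta}\,\widehat\cL$ there, passes to $\widehat{\bE}_\Phi=\lim_t\widehat\Phi_t$. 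Next, since $\bE_\Phi$ is a unital idempotent $\cN$-bimodule map onto $\cN$ it has the form $\bE_\Phi(x)=\bE_\cN(\delta^{1/2}x\,\delta^{1/2})$ for a unique positive $\delta\in\cN'\cap\cM$ with $\bE_\cN(\delta)=1$, and, using $\Phi(x)\Omega=\mathfrak{F}(\widehat\Phi)x\Omega$ from Proposition~\ref{prop:fouriermultiple2} together with the Jones relations $e_1e_2e_1=\lambda e_1$ and $\bE_{\cM'}(e_1)=\lambda$, one computes $\widehat{\bE}_\Phi$ explicitly in terms of $e_1$ and $\delta$ (it is essentially a $\lambda$-multiple of $e_1\,\delta^{1/2}\,(J\delta^{1/2}J)$, viewed in $\cM'\cap\cM_2$).

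\emph{Identifying the limit and the main obstacle.} It then remains to turn ``$\bE_\Phi$ bimodule GNS symmetric with respect to $\widehat\Delta$'' into the stated closed form. Substituting the above expression for $\widehat{\bE}_\Phi$ into $\overline{\widehat{\bE}_\Phi}=\widehat{\bE}_\Phi\,\overline{\widehat\Delta}$ yields an identity in $\cM'\cap\cM_2$ relating $\delta$, its contragredient, and $\overline{\widehat\Delta}$; applying the $\tau_2$-preserving conditional expectation $\bE_{\cM_1}$ onto $\cM_1$ to this identity, pulling the $\cM'\cap\cM_1$-part out of $\bE_{\cM_1}$, using that $\bE_{\cM_1}$ fixes $\cM$ and that $\bE_{\cM_1}$ collapses the part of $\overline{\widehat\Delta}$ not carried by $e_2$ (recall $\widehat\Delta e_2=e_2$), one extracts $\bE_{\cM_1}(\overline{\widehat\Delta})$ as the inverse of the element appearing in $\widehat{\bE}_\Phi$, i.e.\ $\lim_{t\to\infty}\widehat\Phi_t=\widehat{\bE}_\Phi=\bE_{\cM_1}(\overline{\widehat\Delta})^{-1}$. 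A route that avoids some of this bookkeeping is to verify directly that the bimodule map whose Fourier multiplier is $\bE_{\cM_1}(\overline{\widehat\Delta})^{-1}$ is a conditional expectation of $\cM$ onto $\cN$ which is bimodule GNS symmetric with respect to $\widehat\Delta$, and to invoke uniqueness: if $\Psi_1,\Psi_2$ are conditional expectations of $\cM$ onto $\cN$, each self-adjoint for the inner product attached to $\widehat\Delta$, then $\Psi_1=\Psi_1\Psi_2=\Psi_2\Psi_1=\Psi_2$, so $\Psi_i=\bE_\Phi$. The hard part in either approach is precisely this last identification: one must keep the two modular conjugations (on $L^2(\cM)$ and on $L^2(\cM_1)$), the Fourier transform $\mathfrak{F}$, the Jones projections $e_1,e_2$, and the three conditional expectations $\bE_\cM$, $\bE_{\cM_1}$, $\bE_\cN$ consistently aligned, with all the accompanying powers of $\lambda$; everything else is formal.
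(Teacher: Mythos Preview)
Your existence argument is exactly the paper's: invoke Lemma~\ref{lem:gnssupport} to get $\fix=\nfix$ and then Theorem~\ref{thm:bilimit} to conclude $\lim_t\widehat\Phi_t=\widehat{\bE}_\Phi$.

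For the identification under relative ergodicity, the paper's route is considerably shorter than either of yours, and the key simplification is one you do not exploit. Since $\bE_\Phi$ has range $\cN$, one has $\bE_\cN\circ\bE_\Phi=\bE_\Phi$; on Fourier multipliers this reads $\widehat{\bE}_\Phi*\widehat{\bE}_\cN=\widehat{\bE}_\Phi$, and convolution with $\widehat{\bE}_\cN$ on that side is precisely $\bE_{\cM_1}$, so $\widehat{\bE}_\Phi=\bE_{\cM_1}(\widehat{\bE}_\Phi)\in\cM'\cap\cM_1$. Once you know $\widehat{\bE}_\Phi$ lies in $\cM'\cap\cM_1$, apply $\bE_{\cM_1}$ directly to the limiting GNS relation $\overline{\widehat{\bE}_\Phi}=\widehat{\bE}_\Phi\,\overline{\widehat\Delta}$: the factor $\widehat{\bE}_\Phi$ pulls out of $\bE_{\cM_1}$, and $\bE_{\cM_1}(\overline{\widehat{\bE}_\Phi})=1$ (pictorially, the contragredient of a left-string $1$-box sits on the right string, and $\bE_{\cM_1}$ caps it off). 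This yields $\widehat{\bE}_\Phi\,\bE_{\cM_1}(\overline{\widehat\Delta})=1$ in one line, with no need for your $\delta$-parametrization or any of the bookkeeping you flag as the hard part.

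Your alternative uniqueness route has a genuine gap. Bimodule GNS symmetry with respect to a general strictly positive $\widehat\Delta\in\cM'\cap\cM_2$ is the relation $\overline{\widehat\Phi}=\widehat\Phi\,\overline{\widehat\Delta}$, and this is \emph{not} in general self-adjointness for an honest inner product on $\cM$; that interpretation is only available when $\widehat\Delta=\widehat\Delta_\rho$ for a faithful normal state (the paper emphasizes precisely that bimodule GNS symmetry is strictly more general). So the chain $\Psi_1=\Psi_1\Psi_2=(\Psi_1\Psi_2)^{*}=\Psi_2\Psi_1=\Psi_2$ is not justified as written. Ironically, the correct proof of uniqueness of a bimodule-GNS-symmetric conditional expectation onto $\cN$ \emph{is} the paper's argument: any such $\Psi$ has $\widehat\Psi\in\cM'\cap\cM_1$ and then $\bE_{\cM_1}$ applied to $\overline{\widehat\Psi}=\widehat\Psi\,\overline{\widehat\Delta}$ forces $\widehat\Psi=\bE_{\cM_1}(\overline{\widehat\Delta})^{-1}$.
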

\begin{proof}
By Proposition \ref{prop:keeplap}, we see that $\cL=\cL_a$.
By Lemma \ref{lem:gnssupport}, we have that  $\Ker\Gamma=\Ker\cL_a=\Ker\cL= \nfix(\{\Phi_t\}_{t\geq 0})$.
By Theorem \ref{thm:bilimit}, we obtain that $\displaystyle \lim_{t\to \infty} \widehat{\Phi}_t=\widehat{\bE}_{\Phi}$.
When $\{\Phi_t\}_{t\geq 0}$ is relatively ergodic, we have $\Ker \cL=\cN$. 
Note that $\displaystyle \lim_{t\to \infty} \bE_{\cN} \Phi_t(x)=\lim_{t\to\infty} \Phi_t(x)$ for any $x\in \cM$.
We see that $ \widehat{\bE}_\Phi * \widehat{\bE}_{\cN}=\widehat{\bE}_\Phi$.
This implies that $\widehat{\bE}_\Phi= \bE_{\cM_1}(\widehat{\bE}_\Phi)$, so $\widehat{\mathbb{E}}_{\Phi}\in \cM'\cap \cM_1$.
Since $\Phi_t$ is bimodule GNS symmetry with respect to $\widehat{\Delta}$ for all $t\geq 0$, we have $\overline{\widehat{\bE}_\Phi}= \widehat{\bE}_\Phi\overline{\widehat{\Delta}}$. 
Pictorially, we have
\begin{align*}
    \widehat{\mathbb{E}}_{\Phi}\overline{\widehat{\Delta}} = \vcenter{\hbox{\begin{tikzpicture}[scale=1.2]
        \draw [blue] (0, -0.5)--(0, 1) (0.5, -0.5)--(0.5, 1);
        \draw [fill=white] (-0.2, -0.2) rectangle (0.2, 0.2);
        \node at (0, 0) {\tiny $\widehat{\mathbb{E}}_{\Phi}$};
        \draw [fill=white] (-0.2, 0.3) rectangle (0.7, 0.7);
        \node at (0.25, 0.5) {\tiny $\overline{\widehat{\Delta}}$};
    \end{tikzpicture}}} = \vcenter{\hbox{\begin{tikzpicture}[scale=1.2]
        \draw [blue] (0, -0.5)--(0, 1) (0.5, -0.5)--(0.5, 1);
        \begin{scope}[xshift=0.5cm]
            \draw [fill=white] (-0.2, -0.2) rectangle (0.2, 0.2);
        \node at (0, 0) {\tiny $\overline{\widehat{\mathbb{E}}_{\Phi}}$};
        \end{scope}
    \end{tikzpicture}}} = \overline{\widehat{\mathbb{E}}_{\Phi}}. 
\end{align*}
Hence $\widehat{\bE}_\Phi\bE_{\cM_1}(\overline{\widehat{\Delta}}) = \mathbb{E}_{\cM_1}(\widehat{\mathbb{E}}_{\Phi}\overline{\widehat{\Delta}}) = \mathbb{E}_{\cM_1}(\overline{\widehat{\mathbb{E}_{\Phi}}}) =1$, and $\widehat{\bE}_\Phi= \bE_{\cM_1}\left( \overline{\widehat{\Delta}} \right)^{-1}$.
\end{proof}

\begin{notation}
We denote $F_{\Phi}=\displaystyle \lim_{t\to \infty} \widehat{\Phi}_t$. 
We have $F_{\Phi}* F_{\Phi} = F_{\Phi}$ when it exists. 
If $\{\Phi_t\}_{t\geq 0}$ is relatively ergodic, we have that $F_\Phi\in \cM'\cap \cM_1$.
\end{notation}


\begin{corollary}\label{cor:gnslimitdual}
 Suppose $\{\Phi_t\}_{t\geq 0}$ is bimodule GNS symmetric with respect to $\widehat{\Delta}$  and relatively ergodic.
 Then for any $D\in \cM_+$, we have that 
 \begin{align*}
 \lim_{t\to \infty}  \Phi_t^*(D)= \bE_{\cN}(D)\overline{F_{\Phi}},
 \end{align*} 
 where the contragrredient is taken in $\cN'\cap \cM_1$.   
 If $\Phi_t$ is bimodule GNS symmetry with respect to $\widehat{\Delta}$ for all $t\in \bR$, then for any $D\in \cM_+$, we have that 
 \begin{align*}
 \lim_{t\to \infty}  \Phi_t^*(D)= \bE_{\cN}(D)\gamma_{1, +}(\overline{\bE_{\cM_1}\left( \overline{\widehat{\Delta}} \right)^{-1}}),
 \end{align*}    
 where $\bE_{\cM_1}\left( \overline{\widehat{\Delta}} \right)^{-1}$ is viewed as an element in $\cM'\cap \cM_2$ and the contragredient is taken in the space $\cM'\cap \cM_2$ and $\gamma_{1, +}: \cM_1'\cap \cM_3 \to \cN'\cap \cM_1$ is the pullback.
\end{corollary}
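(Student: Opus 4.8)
The plan is to obtain the limit of the adjoint semigroup $\{\Phi_t^*\}$ from that of $\{\Phi_t\}$ by passing to Fourier multipliers, and then to identify the resulting map. Since each $\Phi_t$ is a quantum channel, it is completely positive, so $\widehat{\Phi_t^*}=\overline{\widehat{\Phi}_t}$, with the contragredient computed in $\cM'\cap\cM_2$; as this contragredient is a linear isometry of $\cM'\cap\cM_2$, hence norm-continuous, Theorem \ref{thm:gnslimit} gives $\lim_{t\to\infty}\widehat{\Phi_t^*}=\overline{F_\Phi}$. Reconstructing the map from its Fourier multiplier through \eqref{eq:multiplierphi} then shows that $\Phi_\infty^*:=\lim_{t\to\infty}\Phi_t^*$ exists and equals $\bE_\Phi^*$, where $\bE_\Phi=\lim_{t\to\infty}\Phi_t$ exists by Theorem \ref{thm:gnslimit} and, by relative ergodicity (so that its range is $\fix(\{\Phi_t\}_{t\geq0})=\cN$), is the conditional expectation onto $\cN$; in particular $\bE_\Phi(x)\in\cN$ for all $x\in\cM$ and $\bE_\Phi|_\cN=\mathrm{id}$.

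Next I would compute $\bE_\Phi^*$ in two steps. First, since $\bE_\Phi(x)\in\cN$ and $\bE_\cN$ is trace-preserving, for $x,D\in\cM$,
\begin{align*}
\tau\bigl(x\,\bE_\Phi^*(D)\bigr)=\tau\bigl(\bE_\Phi(x)\,D\bigr)=\tau\bigl(\bE_\Phi(x)\,\bE_\cN(D)\bigr)=\tau\bigl(x\,\bE_\Phi^*(\bE_\cN(D))\bigr),
\end{align*}
so $\bE_\Phi^*=\bE_\Phi^*\circ\bE_\cN$, and it remains to evaluate $\bE_\Phi^*$ on $\cN$. For this I would invoke Proposition \ref{prop:fouriermultiple2}: $\bE_\Phi^*(x)\Omega=\mathfrak{F}(\widehat{\bE_\Phi^*})x\Omega=\mathfrak{F}(\overline{F_\Phi})x\Omega$, and since $\mathfrak{F}(\overline{F_\Phi})\in\cN'\cap\cM_1$ commutes with $\cN$, taking $x=n\in\cN$ gives $\bE_\Phi^*(n)\Omega=n\,\mathfrak{F}(\overline{F_\Phi})\Omega=n\,\bE_\Phi^*(1)\Omega$. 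As $\Omega$ is separating for $\cM$, this forces $\bE_\Phi^*(n)=n\,\bE_\Phi^*(1)$, with $\bE_\Phi^*(1)\in\cN'\cap\cM$ because $\bE_\Phi^*$ is $\cN$-bimodule. Combining the two steps, $\bE_\Phi^*(D)=\bE_\cN(D)\,\bE_\Phi^*(1)$ for every $D\in\cM$, in particular for $D\in\cM_+$.

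The remaining point, which I expect to be the main obstacle, is to identify $\bE_\Phi^*(1)$ with the contragredient $\overline{F_\Phi}$ taken in $\cM'\cap\cM_1$: one has the vector identity $\bE_\Phi^*(1)\Omega=\mathfrak{F}(\overline{F_\Phi})\Omega$ in which $\overline{F_\Phi}$ is the $\cM'\cap\cM_2$-contragredient, whereas $F_\Phi=\widehat{\bE}_\Phi$ lives in $\cM'\cap\cM_1$ by Theorem \ref{thm:gnslimit}, so one must pass carefully between the several distinct Fourier transforms and contragredients attached to the $1$-box and $2$-box spaces. I would carry this out by a diagrammatic computation using Lemma \ref{lem:fourier} to move $e_1e_2$ past $F_\Phi$, together with the rotation relations $\mathfrak{F}^2(x)=\overline{x}$ and $\Theta|_{\cN'\cap\cM_1}=\mathfrak{F}^2$ from Section 2, which show that $\mathfrak{F}(\overline{F_\Phi})$ coincides (as an element of $\cN'\cap\cM$) with the contragredient of $F_\Phi$ in $\cM'\cap\cM_1$; hence $\bE_\Phi^*(1)=\overline{F_\Phi}$ and $\lim_{t\to\infty}\Phi_t^*(D)=\bE_\cN(D)\,\overline{F_\Phi}$. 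The second formula is then immediate: if $\Phi_t$ is bimodule GNS symmetric with respect to $\widehat{\Delta}$ for every $t\in\bR$, Theorem \ref{thm:gnslimit} gives $F_\Phi=\bE_{\cM_1}(\overline{\widehat{\Delta}})^{-1}$, and substituting this into the first formula completes the proof.
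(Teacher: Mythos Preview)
Your approach is correct and reaches the same conclusion, but takes a somewhat different route from the paper. The paper works directly with the Fourier-multiplier representation \eqref{eq:multiplierphi} for $\Phi_t^*$: it writes $\Phi_t^*(D)=\lambda^{-5/2}\bE_{\cM}(e_2e_1\overline{\widehat{\Phi}_t}De_1e_2)$, passes to the limit, uses that $F_\Phi\in\cM'\cap\cM_1$ to commute the contragredient $\overline{F_\Phi}$ past $e_1$, and then the identity $e_1De_1=\bE_{\cN}(D)e_1$ produces the factor $\bE_{\cN}(D)$ in one stroke. You instead argue structurally: first establish $\bE_\Phi^*=\bE_\Phi^*\circ\bE_{\cN}$ by trace duality, using only that the range of $\bE_\Phi$ is $\cN$, and then use the $\cN$-bimodule property (which already gives $\bE_\Phi^*(n)=n\,\bE_\Phi^*(1)$ for $n\in\cN$ without the detour through Proposition~\ref{prop:fouriermultiple2}) to reduce everything to identifying $\bE_\Phi^*(1)$. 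Both routes converge on the same final identification $\bE_\Phi^*(1)=\overline{F_\Phi}$ with the contragredient now taken in $\cM'\cap\cM_1$, and the tools you name---Lemma~\ref{lem:fourier} together with the rotation relations $\mathfrak{F}^2=\overline{(\cdot)}$---are exactly what the paper's terse last step $\lambda^{-5/2}\bE_{\cM}(e_2\overline{F_\Phi}e_1e_2)=\overline{F_\Phi}$ amounts to. Your version has the virtue of separating the structural fact $\bE_\Phi^*(D)=\bE_{\cN}(D)\,\bE_\Phi^*(1)$ from the planar-algebraic identification; the paper's version is shorter because it never names $\bE_\Phi^*$ as an operator. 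One minor wording point: $\bE_\Phi$ is \emph{a} conditional expectation onto $\cN$, not necessarily the trace-preserving one $\bE_{\cN}$; but your argument never actually uses trace-preservation of $\bE_\Phi$ itself (only of $\bE_{\cN}$), so this is harmless.
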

\begin{proof}
By Theorem \ref{thm:gnslimit}, we have that
\begin{align*}
 \lim_{t\to\infty}\Phi_t^* (D)=&\lim_{t\to \infty} \lambda^{-5/2} \bE_{\cM}(e_2e_1 \overline{\widehat{\Phi}_t}De_1e_2)    \\
 =& \lambda^{-5/2} \bE_{\cM}(e_2e_1\overline{\bE_{\cM_1}\left( \overline{\widehat{\Delta}} \right)^{-1}} De_1e_2),
\end{align*}
where $\bE_{\cM_1}\left( \overline{\widehat{\Delta}} \right)^{-1}$ is viewed as an element in $\cM'\cap \cM_2$ and the contragredient is taken in $\cM'\cap \cM_2$.
Continuing the computation, we have that 
 \begin{align*}
 \lim_{t\to\infty}\Phi_t^* (D)
 =& \lambda^{-5/2} \bE_{\cM}(e_2\overline{\bE_{\cM_1}\left( \overline{\widehat{\Delta}} \right)^{-1}} e_1 De_1e_2) \\
 =& \lambda^{-5/2} \bE_{\cM}(e_2\overline{\bE_{\cM_1}\left( \overline{\widehat{\Delta}} \right)^{-1}} \bE_{\cN} (D)e_1e_2) \\
 =& \bE_{\cN} (D)\lambda^{-5/2} \bE_{\cM}(e_2\overline{\bE_{\cM_1}\left( \overline{\widehat{\Delta}} \right)^{-1}} e_1e_2) \\
 =& \bE_{\cN}(D)\gamma_{1, +}(\overline{\bE_{\cM_1}\left( \overline{\widehat{\Delta}} \right)^{-1}}).
\end{align*}
This completes the computation.
\end{proof}

\begin{remark}\label{rem:gnslimit}
In Corollary \ref{cor:gnslimitdual}, the pictorial representation of $\overline{\bE_{\cM_1}\left( \overline{\widehat{\Delta}} \right)^{-1}}$ is the following:
\begin{align*}
\lambda^{-1/2}
\vcenter{\hbox{\begin{tikzpicture}[scale=0.8]
 \draw [blue] (0.2, -0.8) --(0.2, 0.8);
 \draw [fill=white] (-0.5, -0.4) rectangle (0.5, 0.4);
 \node at (0, 0) {\tiny $\widehat{\Delta}$};
\draw [blue] (-0.2, 0.4).. controls +(0, 0.35) and +(0, 0.35)..(-0.8, 0.4)--(-0.8, -0.4);
\draw [blue] (-0.2, -0.4).. controls +(0, -0.35) and +(0, -0.35)..(-0.8, -0.4);
        \end{tikzpicture}}} ^{-1}.
\end{align*}
If $\widehat{\Delta}=  \vcenter{\hbox{\begin{tikzpicture}[scale=1.2]
        \draw [blue] (0, -0.5)--(0, 0.5) (0.5, -0.5)--(0.5, 0.5);
        \draw [fill=white] (-0.2, -0.2) rectangle (0.2, 0.2);
        \node at (0, 0) {\tiny $\overline{\Delta}$};
        \begin{scope}[shift={(0.5, 0)}]
         \draw [fill=white] (-0.2, -0.2) rectangle (0.2, 0.2);
        \node at (0, 0) {\tiny $\Delta^{-1}$};   
        \end{scope}
    \end{tikzpicture}}}$, then $\overline{\bE_{\cM_1}\left( \overline{\widehat{\Delta}} \right)^{-1}}=\Delta$.
This indicates that if $\widehat{\Delta}$ comes from a faithful normal state $\rho$, then $\displaystyle \lim_{t\to \infty}\Phi_t^*(D)=\bE_{\cN}(D) \Delta_\rho$.
If $\cN=\bC$, then $\displaystyle \lim_{t\to \infty}\Phi_t^*(D)=\tau(D) \Delta_\rho$.
\end{remark}

\begin{remark}
Suppose that $\cN\subset \cM$ is irreducible and $\{\Phi_t\}_{t\geq 0}$ is bimodule GNS symmetry with respect to $\widehat{\Delta}$.
Then $\displaystyle \lim_{t\to \infty} \widehat{\Phi}_t$ is a multiple of biprojection.
If $\{\Phi_t\}_{t\geq 0}$ is relatively ergodic, then $\displaystyle \lim_{t\to \infty} \widehat{\Phi}_t=\lambda^{1/2}$.

Suppose that $\{\Phi_t\}_{t\geq 0}$ is GNS symmetry with respect to a normal faithful state $\rho$ such that $e_1\in \Dom(\sigma_{-i})$.
We have $\overline{\widehat{\bE}_\Phi}= \widehat{\bE}_\Phi\overline{\widehat{\Delta}_\rho}$.
This implies that $\widehat{\Delta}_\rho=1$. 
However, by considering the semigroup in Remark \ref{rem:haag1}, we see that the semigroup is not GNS symmetry with respect to any faithful normal state $\rho$ with $e_1\in \Dom(\sigma_{-i})$.
\end{remark}

\subsection{Derivation under Bimodule GNS Symmetry}

In the following, we shall consider $\widehat{\cL}_0\widehat{\Delta}^{-1/2}$ and study its spectral decomposition. 
Let $\mathscr{A}$ be the $C^*$-algebra generated by $\widehat{\mathcal{L}}_0, \widehat{\Delta}\mathcal{R}(\widehat{\cL}_0),\overline{\widehat{\Delta}}\mathcal{R}(\widehat{\cL}_0)$ with identity $\mathcal{R}(\widehat{\cL}_0)$.
By Proposition \ref{prop:bimoduleder}, $\mathscr{A}$ is commutative. 
Therefore we can find a complete set of orthogonal minimal projections $p_j$, $j\in \sI_{0}\cup \sI_1$ and non-negative reals $\omega_j,\mu_j$, that has the following properties:
\begin{enumerate}
    \item $\displaystyle \widehat{\cL}_0\widehat{\Delta}^{-1/2}=\sum_{j\in \sI_{0}\cup \sI_1} \omega_{j} p_{j}$;
    \item $\displaystyle \widehat{\Delta}=\sum_{j\in \sI_{0}\cup \sI_1}  \mu_j p_{j} +1-\cR(\widehat{\cL}_0)$;
    \item for each $j$ there exists $j^*\in \sI_{0}\cup \sI_1$ such that $\overline{p_j} = p_{j^*}$. 
\end{enumerate}
Note that (3) follows from $\overline{\widehat{\cL}_0\widehat{\Delta}^{-1/2}} = \widehat{\cL}_0\widehat{\Delta}^{-1/2}$ and that $\mathscr{A}$ is invariant under the contragradient. 
As a consequence, 
\begin{align*}
    \omega_{j}=\omega_{j^*},\quad j\in \sI_{0}\cup \sI_1.
\end{align*}
\begin{remark}
We have that for any $j$ with $\omega_{j}\neq 0$, $\mu_j\mu_{j^*}=1$ which follows from $\cR(\widehat{\cL})\overline{\widehat{\Delta} }\widehat{\Delta} = \cR(\widehat{\cL})$.
\end{remark}

\begin{example}
    Suppose $\cN=\bC$ and $\cM=M_n(\bC)$.
 We have that
 \begin{align*}
 \widehat{\cL}_0\widehat{\Delta}^{-1/2}= \sum_{j\in \sI} \omega_j \vcenter{\hbox{\begin{tikzpicture}[scale=0.6]
        \draw [blue] (0.2, -0.8) --(0.2, 0.8);
        \draw [blue] (-0.2, -0.8) --(-0.2, 0.8);
        \draw [fill=white] (-0.5, -0.4) rectangle (0.5, 0.4);
    \node at (0, 0) {\tiny $p_j$};
        \end{tikzpicture}}}  
    =\sum_{j\in \sI} \omega_j\vcenter{\hbox{\begin{tikzpicture}[scale=0.65]
    \begin{scope}[shift={(0,1.5)}]
    \draw [blue] (-0.5, 0.8)--(-0.5, 0) .. controls +(0, -0.6) and +(0,-0.6).. (0.5, 0)--(0.5, 0.8);    
\begin{scope}[shift={(0.5, 0.3)}]
\draw [fill=white] (-0.3, -0.3) rectangle (0.3, 0.3);
\node at (0, 0) {\tiny $v_j$};
\end{scope}
    \end{scope}
\draw [blue] (-0.5, -0.8)--(-0.5, 0) .. controls +(0, 0.6) and +(0,0.6).. (0.5, 0)--(0.5, -0.8);
\begin{scope}[shift={(0.5, -0.3)}]
\draw [fill=white] (-0.3, -0.3) rectangle (0.3, 0.3);
\node at (0, 0) {\tiny $v_j^*$};
\end{scope}
\end{tikzpicture}}}, \quad 
\widehat{\Delta}=\sum_{j\in \sI} \mu_j\vcenter{\hbox{\begin{tikzpicture}[scale=0.65]
    \begin{scope}[shift={(0,1.5)}]
    \draw [blue] (-0.5, 0.8)--(-0.5, 0) .. controls +(0, -0.6) and +(0,-0.6).. (0.5, 0)--(0.5, 0.8);    
\begin{scope}[shift={(0.5, 0.3)}]
\draw [fill=white] (-0.3, -0.3) rectangle (0.3, 0.3);
\node at (0, 0) {\tiny $v_j$};
\end{scope}
    \end{scope}
\draw [blue] (-0.5, -0.8)--(-0.5, 0) .. controls +(0, 0.6) and +(0,0.6).. (0.5, 0)--(0.5, -0.8);
\begin{scope}[shift={(0.5, -0.3)}]
\draw [fill=white] (-0.3, -0.3) rectangle (0.3, 0.3);
\node at (0, 0) {\tiny $v_j^*$};
\end{scope}
\end{tikzpicture}}}+1-\cR(\widehat{\cL}_0).
\end{align*}
\end{example}

Now we see that 
\begin{align*}
 \widehat{\cL}_0=\sum_{j\in \sI} \mu_j^{1/2}\omega_{j} p_{j}.
\end{align*}
Moreover, we define the balanced derivation $\partial^\Delta:\cM\to \cM_1$ with respect to $\widehat{\Delta}$ as: 
\begin{align*}
    \partial^\Delta x =[x, \mathfrak{F}^{-1}(\widehat{\cL}_0^{1/2}\widehat{\Delta}^{-1/4})].
\end{align*}
Equivalently, 
\begin{align*}
    (\partial^\Delta x )e_2= \lambda^{-1/2}\widehat{\cL}_0^{1/2}\widehat{\Delta}^{-1/4} [x, e_1] e_2, \quad x\in \cM.
\end{align*}
Now, the bimodule GNS symmetry implies $\overline{\partial^\Delta}=\partial^\Delta$. 
The $j$-th directional derivation $\partial_{j}^{\Delta}$ of $\partial^\Delta$ is defined as follows:
\begin{align*}
    \partial_{j}^{\Delta} x = &  \omega_{j}^{1/2}[x, \mathfrak{F}^{-1}(p_{j})],   \quad j \in  \sI.
\end{align*}
We now obtain that $\overline{\partial_j^\Delta (x)}=-\partial_{j^*}^\Delta(\overline{x})$ for $x\in\cM$ and 
\begin{align*}
    \partial_j=\mu_j^{1/4}\partial_{j}^{\Delta}, \quad  j \in \sI_{0}\cup \sI_{1}.
\end{align*}

Let $\displaystyle \Gamma^\Delta=\frac{\lambda^{-1/2}}{2}(\partial^\Delta )^*\partial^\Delta=\frac{\lambda^{-1/2}}{2}\sum_{j\in \sI} \mu_j^{-1/2}\partial_j^* \partial_j$ be the modified gradient form.
\begin{theorem}
 Suppose $\{\Phi_t\}_{t\geq 0}$ is bimodule GNS symmetric with respect to $\widehat{\Delta}$ and $\CS(\widehat{\cL}_0) =1$.    
Let $\beta$ be the second maximal eigenvalue of $\mathfrak{F}^{-1}(\widehat{\cL}_0 \overline{\widehat{\Delta}}^{1/2})$.
    Then for any $x\in \cM$ with $\bE_{\cN}(x)=0$,
    \begin{align}
        \tau(\Gamma^\Delta(x,x)) \geq (\widehat{\beta}-\beta)\tau(x^*x),
    \end{align}
 where $0\neq   \widehat{\beta}$ is the minimal eigenvalue of $\lambda^{-1/2}\bE_{\cM}\left(\left|\mathfrak{F}^{-1}( \widehat{\cL}_0^{1/2}  \overline{\widehat{\Delta}}^{1/4})\right|^2\right)$.
\end{theorem}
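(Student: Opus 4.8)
I would deduce the inequality from Theorem \ref{thm:poincare0} applied to an auxiliary ``balanced'' semigroup. Set $L_0:=\widehat{\cL}_0\overline{\widehat{\Delta}}^{1/2}\in\cM'\cap\cM_2$. By Proposition \ref{prop:bimoduleder} $\widehat{\cL}_0$ commutes with $\overline{\widehat{\Delta}}$, so $L_0\geq 0$, and $L_0e_2=e_2L_0=0$. By Theorem \ref{thm:generatorform} the pair $(L_0,0)$ generates a bimodule quantum Markov semigroup $\{\Psi_t\}_{t\geq0}$ whose component in $(1-e_2)\cM'\cap\cM_2$ is exactly $L_0$, hence whose associated derivation is $\partial^{\Psi}x=[x,\mathfrak{F}^{-1}(L_0^{1/2})]=[x,\mathfrak{F}^{-1}(\widehat{\cL}_0^{1/2}\overline{\widehat{\Delta}}^{1/4})]=[x,\mathfrak{F}^{-1}(\widehat{\cL}_0^{1/2}\widehat{\Delta}^{-1/4})]=\partial^{\Delta}x$, where $\widehat{\cL}_0^{1/2}\overline{\widehat{\Delta}}^{1/4}=\widehat{\cL}_0^{1/2}\widehat{\Delta}^{-1/4}$ follows from the spectral decomposition and the relation $\mu_j\mu_{j^*}=1$ on the support of $\widehat{\cL}_0$. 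Consequently, by Proposition \ref{prop:gradientform}, the gradient form of $\{\Psi_t\}$ is exactly $\Gamma^{\Delta}$.

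Next I would verify that $\{\Psi_t\}$ satisfies the hypotheses of Theorem \ref{thm:poincare0}. First, $L_0$ is self-contragredient, $\overline{L_0}=L_0$, by Proposition \ref{prop:bimoduleder}; therefore $\CS_0(L_0)=\CS(L_0)$, since the $\{1,\overline{\ }\}$-decorations in the definition of $\CS$ become vacuous. Second, $\cR(L_0)=\cR(\widehat{\cL}_0)$, because $\overline{\widehat{\Delta}}^{1/2}$ is invertible and commutes with $\widehat{\cL}_0$; since the convolutive support of a positive element is determined by its range projection, $\CS_0(L_0)=\CS(L_0)=\CS(\widehat{\cL}_0)=1$. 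With the factoriality of $\cN$ in force (as in Theorem \ref{thm:poincare0}), that theorem applies to $\{\Psi_t\}$. Now $\mathfrak{F}^{-1}(L_0+\overline{L_0})=2\mathfrak{F}^{-1}(\widehat{\cL}_0\overline{\widehat{\Delta}}^{1/2})$, so its second maximal eigenvalue is $2\beta$ for $\beta$ as in the statement; and the minimal eigenvalue of $\lambda^{-1/2}\bE_{\cM}(|\mathfrak{F}^{-1}(L_0^{1/2})|^2)=\lambda^{-1/2}\bE_{\cM}(|\mathfrak{F}^{-1}(\widehat{\cL}_0^{1/2}\overline{\widehat{\Delta}}^{1/4})|^2)$ is exactly $\widehat{\beta}$ (in particular $\widehat\beta>0$ is supplied by Theorem \ref{thm:poincare0}). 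This yields $\tau(\Gamma^{\Delta}(x,x))\geq(\widehat{\beta}-\beta)\tau(x^*x)$ for all $x$ with $\bE_{\cN}(x)=0$. Equivalently, one may bypass $\{\Psi_t\}$ and simply reproduce the computation of Theorem \ref{thm:poincare0} line by line with $\partial^{\Delta}$ in place of $\partial$, using $(\partial^{\Delta}x)e_2=\lambda^{-1/2}\widehat{\cL}_0^{1/2}\overline{\widehat{\Delta}}^{1/4}[x,e_1]e_2$ and $(\widehat{\cL}_0^{1/2}\overline{\widehat{\Delta}}^{1/4})^2=\widehat{\cL}_0\overline{\widehat{\Delta}}^{1/2}$; self-contragredience of the latter is precisely what collapses the factor $2$ in ``$2\beta$''.

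\textbf{Main obstacle.} The delicate point is the transfer of the connectedness hypothesis: showing that $\CS(\widehat{\cL}_0)=1$ forces $\CS_0(\widehat{\cL}_0\overline{\widehat{\Delta}}^{1/2})=1$, and hence that $\widehat{\beta}>0$ in the Schur-product argument. This rests on the assertion that the convolutive support $\CS$ of a positive element of $\cM'\cap\cM_2$ depends only on its range projection — transparent when $\cM$ is a factor and the ambient algebra is commutative, but requiring care in general — together with the self-contragredience $\overline{L_0}=L_0$ and the identity $\cR(L_0)=\cR(\widehat{\cL}_0)$. Once these are in hand, the Perron--Frobenius estimate $\mathfrak{F}^{-1}(\widehat{\cL}_0\overline{\widehat{\Delta}}^{1/2})\leq\|\mathfrak{F}^{-1}(\widehat{\cL}_0\overline{\widehat{\Delta}}^{1/2})\|e_1+\beta(1-e_1)$ and the remaining estimates go through verbatim as in Theorem \ref{thm:poincare0}.
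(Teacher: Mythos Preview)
Your proposal is correct and, in its ``alternative'' form (reproducing the computation of Theorem~\ref{thm:poincare0} line by line with $\partial^{\Delta}$ in place of $\partial$), it is exactly what the paper does: expand $\tau(\Gamma^{\Delta}(x,x))$ via $(\partial^{\Delta}x)e_2=\lambda^{-1/2}\widehat{\cL}_0^{1/2}\widehat{\Delta}^{-1/4}[x,e_1]e_2$, split into a diagonal and an off-diagonal term, and apply Perron--Frobenius for $\mathfrak{F}$-positive elements to $\mathfrak{F}^{-1}(\widehat{\cL}_0\widehat{\Delta}^{-1/2})=\mathfrak{F}^{-1}(\widehat{\cL}_0\overline{\widehat{\Delta}}^{1/2})$. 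Your auxiliary-semigroup framing is a clean conceptual wrapper around the same computation.

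Two small sharpenings of your connectedness argument. First, the assertion ``$\CS$ of a positive element depends only on its range projection'' is not what you actually need or use; what makes the transfer work is the two-sided operator inequality $c^{-1}\widehat{\cL}_0\le L_0\le c\,\widehat{\cL}_0$ (from invertibility and commutation of $\overline{\widehat{\Delta}}^{1/2}$ with $\widehat{\cL}_0$), which by the Schur product theorem propagates to all convolution powers and gives $\cR(L_0^{(*k)})=\cR(\widehat{\cL}_0^{(*k)})$ for every $k$. Second, to pass from the hypothesis $\CS(\widehat{\cL}_0)=1$ to $\CS_0(\widehat{\cL}_0)=1$ you also need that GNS symmetry gives $\overline{\widehat{\cL}_0}=\widehat{\cL}_0\overline{\widehat{\Delta}}$, so $\overline{\widehat{\cL}_0}$ is again two-sided comparable to $\widehat{\cL}_0$ and the $\epsilon$-decorations in $\CS$ collapse; self-contragredience of $L_0$ alone does not give this step. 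With these in hand your reduction to Theorem~\ref{thm:poincare0} is complete, including the positivity $\widehat{\beta}>0$.
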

\begin{proof}
We have that 
   \begin{align*}
 \tau((\Gamma^\Delta(x,x))
 =& \frac{\lambda^{-1/2}}{2}\tau_1(( \partial^\Delta x)^*( \partial^\Delta x)) \\
 =& \frac{\lambda^{-3/2}}{2}\tau_2(e_2([x, \mathfrak{F}^{-1}(\widehat{\cL}_0^{1/2}\widehat{\Delta}^{-1/4})])^*([x, \mathfrak{F}^{-1}(\widehat{\cL}_0^{1/2}\widehat{\Delta}^{-1/4})]e_2)) \\
  =& \frac{\lambda^{-5/2}}{2}\tau_2(e_2[x, e_1]^*  \widehat{\cL}_0\widehat{\Delta}^{-1/2} [x, e_1]e_2)) \\
  =& \lambda^{-5/2}\tau_2(x^*x\bE_{\cM}(e_2e_1 \widehat{\cL}_0 \widehat{\Delta}^{-1/2} e_1e_2)) \\
  -& \frac{\lambda^{-1}}{2}\tau_2(xe_1x^* \mathfrak{F}^{-1}(\widehat{\cL}_0\widehat{\Delta}^{-1/2})).
   \end{align*}   
Now by the fact that $\Ker \cL=\Ker \cL_a=\cN$ and the Perron-Frobenius theorem for $\mathfrak{F}$-positive elements \cite[Theorem 3.10]{HJLW23}, there is a unique positive eigenvector for $\mathfrak{F}^{-1}(\widehat{\cL}_0 \overline{\widehat{\Delta}}^{1/2})$ and
\begin{align*}
\mathfrak{F}^{-1}(\widehat{\cL}_0 \overline{\widehat{\Delta}}^{1/2})\leq \|\mathfrak{F}^{-1}(\widehat{\cL}_0)\|e_1 +\beta(1-e_1).     
\end{align*}
This implies that for any $x\in \cM$ with $\bE_{\cN}(x)=0$,
\begin{align*}
\frac{1}{2}\tau_2(xe_1x^* \mathfrak{F}^{-1}(\widehat{\cL}_0 \overline{\widehat{\Delta}}^{1/2}))\leq \beta\tau_2(xe_1 x^*(1-e_1))\leq \lambda \beta \tau(x^*x).
\end{align*}
On the other hand, we have that 
\begin{align*}
& \lambda^{-5/2}\tau_2(x^*x\bE_{\cM}(e_2e_1 \widehat{\cL}_0  \overline{\widehat{\Delta}}^{1/2} e_1e_2))\\
=& \lambda^{-1/2}\tau_2(x^*x\bE_{\cM}\left(\left|\mathfrak{F}^{-1}( \widehat{\cL}_0^{1/2}  \overline{\widehat{\Delta}}^{1/4})\right|^2\right))\geq  \widehat{\beta}\tau_2(x^*x).
\end{align*}
Combining the two inequalities above, we see that the theorem is true.
\end{proof}

In the following, we shall extend the domain of $\partial_j^\Delta$ to $\cM_2$.
\begin{lemma}\label{lem:derivation}
For any $j\in \sI_0\cup \sI_1$ and $x\in \cM$, we have that 
\begin{align*}
\widehat{\Delta}^{1/4}(\partial_j^\Delta x) e_2 = & (\partial_j x )e_2 , \\
\widehat{\Delta}^{1/2}(\partial_j^\Delta x \widehat{\Delta}^{-1}) e_2 = & \omega_j^{1/2}(\mu_j^{-1/2} x\mathfrak{F}^{-1}(p_j)-\mu_j^{1/2}\mathfrak{F}^{-1}(p_j) x) e_2.
\end{align*}
Consequently, $\lambda^{-1}\bE_{\cM_1}(\widehat{\Delta}^{1/4}(\partial_j^\Delta x) e_2)=\partial_j x$, and
\begin{align}\label{eq:derivation2}
\lambda^{-1}\bE_{\cM_1}(\widehat{\Delta}^{1/2}(\partial_j^\Delta x \widehat{\Delta}^{-1}) e_2)
=\omega_j^{1/2}(\mu_j^{-1/2} x\mathfrak{F}^{-1}(p_j)-\mu_j^{1/2}\mathfrak{F}^{-1}(p_j) x).
\end{align}
Furthermore, $\partial x = \lambda^{-1}\bE_{\cM_1}(\widehat{\Delta}^{1/4}(\partial^\Delta x) e_2)$ and 
\begin{align*}
 \sum_{j\in \sI_0\cup \sI_1} \omega_j^{1/2} ( \mu_j^{-1/2} x\mathfrak{F}^{-1}(p_j)-\mu_j^{1/2}\mathfrak{F}^{-1}(p_j) x)
 = \lambda^{-1}\bE_{\cM_1}(\widehat{\Delta}^{1/2}(\partial^\Delta x \widehat{\Delta}^{-1}) e_2).
\end{align*}
\end{lemma}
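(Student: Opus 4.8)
The plan is to prove the four displayed identities by direct computation, exploiting Lemma \ref{lem:fourier}, the spectral decompositions $\widehat{\cL}_0^{1/2}\widehat{\Delta}^{-1/4} = \sum_j \omega_j^{1/2}\mu_j^{-1/4}p_j$ and $\partial_j^\Delta x = \omega_j^{1/2}[x,\mathfrak{F}^{-1}(p_j)]$, and the fact (from Proposition \ref{prop:bimoduleder} and the discussion after it) that the $p_j$ commute with $\widehat{\Delta}$, $\overline{\widehat{\Delta}}$, and $\widehat{\cL}_0$. The structure of the argument mirrors the lemma just before the remark ``$(\partial x)e_2 = \lambda^{-1/2}\widehat{\cL}_0^{1/2}[x,e_1]e_2$'', which already handles the undecorated case; the point here is to track the extra $\widehat{\Delta}$-powers.

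First I would establish the first identity: using $\mathfrak{F}^{-1}(p_j)e_1e_2 = \lambda^{1/2}\mathfrak{F}(p_j)^{-1}\cdots$ — more precisely, invoking Lemma \ref{lem:fourier} in the form $a e_1 e_2$ being related to $\mathfrak{F}(a)$ — one computes $(\partial_j^\Delta x)e_2 = \omega_j^{1/2}[x,\mathfrak{F}^{-1}(p_j)]e_2 = \lambda^{-1/2}\omega_j^{1/2}p_j[x,e_1]e_2$. Since $p_j$ commutes with $\widehat{\Delta}$, multiplying on the left by $\widehat{\Delta}^{1/4}$ and summing against the weights shows $\widehat{\Delta}^{1/4}(\partial_j^\Delta x)e_2 = \lambda^{-1/2}\widehat{\cL}_0^{1/2}\mu_j^{0}\cdots$; the key observation is that $\widehat{\Delta}^{1/4}\cdot\omega_j^{1/2}\mu_j^{-1/4}p_j = \omega_j^{1/2}\mu_j^{1/4}\mu_j^{-1/4}p_j$ restricted to the range of $p_j$ equals the spectral coefficient of $\widehat{\cL}_0^{1/2}$, giving $(\partial_j x)e_2$. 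Then applying $\bE_{\cM_1}$ and using $\bE_{\cM_1}(e_2) = \lambda$ (together with $\partial_j x \in \cM_1$, which makes it pull out of the conditional expectation) yields $\lambda^{-1}\bE_{\cM_1}(\widehat{\Delta}^{1/4}(\partial_j^\Delta x)e_2) = \partial_j x$.

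Next, for the second identity I would compute $\partial_j^\Delta x\,\widehat{\Delta}^{-1} = \omega_j^{1/2}(x\mathfrak{F}^{-1}(p_j)\widehat{\Delta}^{-1} - \mathfrak{F}^{-1}(p_j)x\widehat{\Delta}^{-1})$, push $\widehat{\Delta}^{-1}$ through using the commutation $\mathfrak{F}^{-1}(p_j)\widehat{\Delta}^{-1}$ — here I need to be careful that $\widehat{\Delta}$ acts in $\cM'\cap\cM_2$ while $\mathfrak{F}^{-1}(p_j) \in \cN'\cap\cM_1$, so the relevant commutation is with $\mu_j$ via $p_j\widehat{\Delta} = \mu_j p_j$ after Fourier transform combined with Lemma \ref{lem:fourier}. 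Restricting to $\cdot\,e_2$ and left-multiplying by $\widehat{\Delta}^{1/2}$ converts the $\mu_j$-powers into $\mu_j^{-1/2}$ on one term and $\mu_j^{1/2}$ on the other, which is exactly the claimed right-hand side; equation \eqref{eq:derivation2} then follows by applying $\lambda^{-1}\bE_{\cM_1}(\cdot)$. The last two ``furthermore'' statements are obtained by summing over $j\in\sI_0\cup\sI_1$ and using $\partial^\Delta = \sum_j \partial_j^\Delta$, $\partial = \sum_j \partial_j$, together with linearity of $\bE_{\cM_1}$.

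The main obstacle I anticipate is bookkeeping the interaction between the three algebras $\cN'\cap\cM_1$, $\cM'\cap\cM_2$, and $\cM_1$, $\cM_2$: the elements $\mathfrak{F}^{-1}(p_j)$ and $\widehat{\Delta}$ live in different relative commutants, so the ``commutativity of $\mathscr{A}$'' only gives commutation after one transports everything to a common picture via $e_1 e_2$ using Lemma \ref{lem:fourier}. I would therefore do all manipulations on vectors of the form $\cdot\, e_1 e_2\Omega_2$ (or $\cdot\,e_2$ as operator identities), where Lemma \ref{lem:fourier} turns $\mathfrak{F}^{-1}(p_j)e_1e_2$ into $\lambda^{1/2}p_j e_2$ and all the spectral coefficients become genuine scalars multiplying commuting projections, making the $\mu_j$-power arithmetic transparent. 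Checking that no domain issues arise (all of $\widehat{\Delta}^{\pm 1/2}$, $\widehat{\Delta}^{\pm 1/4}$ are bounded invertible by Lemma \ref{lem:invertible} and the functional calculus on the finite-dimensional $\mathscr{A}$) is routine.
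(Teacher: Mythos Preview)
Your handling of the first identity, the conditional-expectation consequences, and the summation over $j$ is correct and matches the paper. The gap is in the second identity: you read $\partial_j^\Delta x\,\widehat{\Delta}^{-1}$ as the product $(\partial_j^\Delta x)\cdot\widehat{\Delta}^{-1}$, expanding it as $\omega_j^{1/2}\bigl(x\mathfrak{F}^{-1}(p_j)\widehat{\Delta}^{-1} - \mathfrak{F}^{-1}(p_j)x\widehat{\Delta}^{-1}\bigr)$. But the sentence immediately before the lemma (``we shall extend the domain of $\partial_j^\Delta$ to $\cM_2$'') signals the intended meaning is $\partial_j^\Delta(x\widehat{\Delta}^{-1})$, the derivation applied to $x\widehat{\Delta}^{-1}\in\cM_2$. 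With your product reading, once you restrict to $e_2$ the factor $\widehat{\Delta}^{-1}$ sits adjacent to $e_2$ in \emph{both} terms and is absorbed via $\widehat{\Delta}^{-1}e_2=e_2$; left-multiplying by $\widehat{\Delta}^{1/2}$ then gives the \emph{same} factor $\mu_j^{1/2}$ on both terms, not the asymmetric $\mu_j^{\mp 1/2}$ claimed. There is no legitimate ``push $\widehat{\Delta}^{-1}$ through $\mathfrak{F}^{-1}(p_j)$'' step: $\mathfrak{F}^{-1}(p_j)\in\cN'\cap\cM_1$ and $\widehat{\Delta}^{-1}\in\cM'\cap\cM_2$ have no $\mu_j$-commutation relation as operators in $\cM_2$.

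With the correct reading, $\partial_j^\Delta(x\widehat{\Delta}^{-1}) = \omega_j^{1/2}\bigl(x\widehat{\Delta}^{-1}\mathfrak{F}^{-1}(p_j) - \mathfrak{F}^{-1}(p_j)x\widehat{\Delta}^{-1}\bigr)$, and now the two terms behave differently after multiplying by $e_2$: in the first term, $\widehat{\Delta}^{-1}\mathfrak{F}^{-1}(p_j)e_2 = \lambda^{-1/2}\widehat{\Delta}^{-1}p_je_1e_2 = \lambda^{-1/2}\mu_j^{-1}p_je_1e_2$ via Lemma~\ref{lem:fourier} and $\widehat{\Delta}^{-1}p_j=\mu_j^{-1}p_j$; in the second term, $\widehat{\Delta}^{-1}$ meets $e_2$ and disappears. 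Left-multiplying by $\widehat{\Delta}^{1/2}$ contributes a uniform $\mu_j^{1/2}$, yielding $\mu_j^{-1/2}$ on the first term and $\mu_j^{1/2}$ on the second. This is exactly the paper's computation, written there as $p_j[x\widehat{\Delta}^{-1},e_1]e_2 = p_j(\mu_j^{-1}xe_1 - e_1x)e_2$.
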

\begin{proof}
The first equality is true by a routine computation.
For any $x\in \cM$, we have that
\begin{align*}
 \widehat{\Delta}^{1/2}(\partial_j^\Delta x \widehat{\Delta}^{-1}) e_2
=& \lambda^{-1/2}\widehat{\Delta}^{1/2}(\omega_j^{1/2} p_j [x \widehat{\Delta}^{-1}, e_1])e_2\\
=& \lambda^{-1/2}\mu_j^{1/2}(\omega_j^{1/2} p_j (\mu_j^{-1}xe_1-e_1x))e_2\\
=& \lambda^{-1/2}(\omega_j^{1/2} p_j (\mu_j^{-1/2}xe_1- \mu_j^{1/2}e_1x))e_2 \\
=&\omega_j^{1/2}( \mu_j^{-1/2}x\mathfrak{F}^{-1}(p_j) e_2 - \mu_j^{1/2}\mathfrak{F}^{-1}(p_j) x e_2).
\end{align*}
This implies the second equality.
By taking the conditional expectation $\bE_{\cM}$, we have Equation \eqref{eq:derivation2}.
Summing over $j$, we see the last equality is true.
\end{proof}

\begin{remark}
Suppose $\cN=\bC$ and $\cM=M_n(\bC)$. 
We have that 
\begin{align*}
    \widehat{\Delta}^{1/2}(\partial_j^\Delta x \widehat{\Delta}^{-1}) e_2 
    = & \omega_j^{1/2} (\mu_j^{-1/2}xv_j-\mu_j^{1/2} v_j x)\otimes \overline{v_j^*}e_2.
\end{align*}
\end{remark}

There are alternative formulae for the derivations.
\begin{lemma}\label{lem:derivation0}
For any $j\in \sI_0\cup \sI_1$ and $x\in \cM$, we have that 
\begin{align*}
e_2 (\partial_j^\Delta x) \widehat{\Delta}^{-1/4} = & e_2(\partial_j x ) , \\
e_2(\partial_j^\Delta x \widehat{\Delta}^{-1}) \widehat{\Delta}^{1/2} = & \omega_j^{1/2} e_2(\mu_j^{-1/2} x\mathfrak{F}^{-1}(p_j)-\mu_j^{1/2}\mathfrak{F}^{-1}(p_j) x).
\end{align*}
Consequently, $\lambda^{-1}\bE_{\cM_1}(e_2(\partial_j^\Delta x)\widehat{\Delta}^{-1/4} )=\partial_j x$, and
\begin{align}\label{eq:derivation3}
\lambda^{-1}\bE_{\cM_1}(e_2 (\partial_j^\Delta x \widehat{\Delta}^{-1}) \widehat{\Delta}^{1/2})=\omega_j^{1/2} (\mu_j^{-1/2} x\mathfrak{F}^{-1}(\overline{p_j})-\mu_j^{1/2}\mathfrak{F}^{-1}(\overline{p_j}) x).
\end{align}
Furthermore, $\partial x = \lambda^{-1}\bE_{\cM_1}(e_2 (\partial^\Delta x) \widehat{\Delta}^{-1/4})$ and 
\begin{align*}
 \sum_{j\in \sI_0\cup \sI_1} \omega_j^{1/2}\left(  \mu_j^{-1/2} x\mathfrak{F}^{-1}(p_j)-\mu_j^{1/2}\mathfrak{F}^{-1}(p_j) x\right) 
 = \lambda^{-1}\bE_{\cM_1}(e_2 (\partial^\Delta x \widehat{\Delta}^{-1}) \widehat{\Delta}^{1/2}).
\end{align*}
\end{lemma}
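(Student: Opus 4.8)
The plan is to mirror the proof of Lemma~\ref{lem:derivation}, carrying the Jones projection $e_2$ on the left of the derivation rather than on the right. The one extra ingredient is the left-handed form of Lemma~\ref{lem:fourier}: for a self-adjoint $y\in\cM'\cap\cM_2$ one has $e_2\,\mathfrak{F}^{-1}(y)=\lambda^{-1/2}e_2e_1\overline{y}$, obtained by taking adjoints in $\mathfrak{F}^{-1}(y)e_2=\lambda^{-1/2}ye_1e_2$ (a restatement of $\mathfrak{F}(a)e_1e_2=\lambda^{1/2}ae_2$) and using $\mathfrak{F}^{-1}(p)^*=\mathfrak{F}^{-1}(\overline{p})$; this is exactly the identity already used to record $e_2(\partial x)=\lambda^{-1/2}e_2[x,e_1]\overline{\widehat{\cL}_0^{1/2}}$. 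Since $e_2$ and all the minimal projections $p_k$ (hence $\widehat{\Delta}^{\pm s}$) lie in $\cM'\cap\cM_2$ and therefore commute with $\cM$, I would first record, for $x\in\cM$, the clean identity $e_2(\partial_j^\Delta x)=\lambda^{-1/2}\omega_j^{1/2}\,e_2[x,e_1]\,\overline{p_j}$.

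First I would prove the two displayed equalities. For the first, multiply this expression on the right by the relevant quarter-power of $\widehat{\Delta}$; carrying it past the minimal projection $\overline{p_j}=p_{j^*}$ turns it into a scalar indexed by $j^*$, and invoking $\mu_j\mu_{j^*}=1$ on $\cR(\widehat{\Phi}_t)$ together with $\partial_j=\mu_j^{1/4}\partial_j^\Delta$ lets one recognise the outcome as $e_2(\partial_j x)$. For the second, run the same computation with $x\widehat{\Delta}^{-1}$ in place of $x$: because $\widehat{\Delta}^{-1}=\sum_k\mu_k^{-1}p_k$ commutes with $\cM$, one gets $e_2(\partial_j^\Delta x\widehat{\Delta}^{-1})=\lambda^{-1/2}\omega_j^{1/2}e_2(\mu_j^{-1}xe_1-e_1x)\overline{p_j}$, and then multiplying by $\widehat{\Delta}^{1/2}$ on the right and using $e_2\mathfrak{F}^{-1}(\overline{p_j})=\lambda^{-1/2}e_2e_1p_j$ yields $\omega_j^{1/2}e_2(\mu_j^{-1/2}x\mathfrak{F}^{-1}(\overline{p_j})-\mu_j^{1/2}\mathfrak{F}^{-1}(\overline{p_j})x)$.

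For the consequences, I would apply the trace-preserving conditional expectation $\bE_{\cM_1}$ to both sides of each identity. Since $\partial_j x$ and $\partial x$ lie in $\cM_1$ and $\bE_{\cM_1}$ is $\cM_1$-bimodular with $\bE_{\cM_1}(e_2)=\lambda$ (the defining property of the $\lambda$-extension $\cM_1\subset\cM_2$), the $e_2$ is absorbed and one is left with $\lambda\partial_j x$, respectively $\lambda\,\omega_j^{1/2}(\mu_j^{-1/2}x\mathfrak{F}^{-1}(\overline{p_j})-\mu_j^{1/2}\mathfrak{F}^{-1}(\overline{p_j})x)$; dividing by $\lambda$ gives the two intermediate formulae, and summing over $j\in\sI_0\cup\sI_1$ — using $\partial=\sum_j\partial_j$ and $\partial^\Delta=\sum_j\partial_j^\Delta$ (since $\widehat{\cL}_0^{1/2}\widehat{\Delta}^{-1/4}=\sum_j\omega_j^{1/2}p_j$) — produces the two global statements, including $\partial x=\lambda^{-1}\bE_{\cM_1}(e_2(\partial^\Delta x)\widehat{\Delta}^{-1/4})$.

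The algebra is routine; the one place that genuinely needs care is the bookkeeping of the involution $j\mapsto j^*$. Carrying a power of $\widehat{\Delta}$ past $\overline{p_j}$ produces the eigenvalue indexed by $j^*$, not by $j$, so one must consistently use $\overline{p_j}=p_{j^*}$, $\omega_j=\omega_{j^*}$ and $\mu_j\mu_{j^*}=1$ on the support of $\widehat{\Phi}_t$. This is precisely what fixes the left/right placement and the sign of every $\widehat{\Delta}$-exponent in the statement, and it is the only subtlety not already handled verbatim in the proof of Lemma~\ref{lem:derivation}.
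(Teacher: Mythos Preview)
Your approach is exactly the paper's --- write $e_2(\partial_j^\Delta y)=\lambda^{-1/2}\omega_j^{1/2}e_2[y,e_1]\overline{p_j}$, substitute $y=x\widehat{\Delta}^{-1}$, right-multiply by $\widehat{\Delta}^{1/2}$, and convert $e_2e_1\overline{p_j}$ back via the left-handed Fourier identity --- but your explicit bookkeeping slips in precisely the place you flagged.

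You write $e_2(\partial_j^\Delta x\widehat{\Delta}^{-1})=\lambda^{-1/2}\omega_j^{1/2}e_2(\mu_j^{-1}xe_1-e_1x)\overline{p_j}$; the correct expression, and what the paper obtains, is $e_2(xe_1-\mu_j\, e_1x)\overline{p_j}$. The point is that in $e_2\, x\widehat{\Delta}^{-1}e_1\overline{p_j}$ the factor $\widehat{\Delta}^{-1}$ lies between $e_2x=xe_2$ and $e_1$, so it is absorbed via $e_2\widehat{\Delta}^{-1}=e_2$ and contributes no $\mu$, whereas in $e_2e_1x\widehat{\Delta}^{-1}\overline{p_j}$ it commutes past $x$ to meet $\overline{p_j}=p_{j^*}$ and yields $\mu_{j^*}^{-1}=\mu_j$. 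You have transplanted the coefficient pattern $(\mu_j^{-1}xe_1-e_1x)$ from Lemma~\ref{lem:derivation}, where $p_j$ sits on the \emph{left} and $e_2$ on the \emph{right}; in the mirrored setup the two terms swap roles. The same reversal affects your final conversion: the expression in hand is $e_2(\cdots)e_1\overline{p_j}$, and $e_2e_1\overline{p_j}=\lambda^{1/2}e_2\mathfrak{F}^{-1}(p_j)$ (apply your own quoted identity with $j\mapsto j^*$), so the outcome carries $\mathfrak{F}^{-1}(p_j)$, matching the lemma's second displayed line. With these two index corrections your argument coincides with the paper's.
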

\begin{proof}
For any $x\in \cM$, we have that
\begin{align*}
 e_2(\partial_j^\Delta x \widehat{\Delta}^{-1}) \widehat{\Delta}^{1/2}
=& \lambda^{-1/2}\omega_j^{1/2} e_2  [x \widehat{\Delta}^{-1}, e_1] \overline{p_j}\widehat{\Delta}^{1/2} \\
=& \lambda^{-1/2}\mu_j^{-1/2}\omega_j^{1/2} e_2 (xe_1- \mu_j e_1x)\overline{p_j}\\
=&\lambda^{-1/2}\omega_j^{1/2} e_2 (\mu_j^{-1/2}xe_1- \mu_j^{1/2} e_1x)\overline{p_j} \\
=&\omega_j^{1/2}e_2( \mu_j^{-1/2}x\mathfrak{F}^{-1}(\overline{p_j})  - \mu_j^{1/2}\mathfrak{F}^{-1}(\overline{p_j}) x ).
\end{align*}
The rest equalities follows directly.
\end{proof}

\begin{remark}
    We have that $\overline{\partial}x =\lambda^{-1}\bE_{\cM_1}(\widehat{\Delta}^{-1/4}(\partial^\Delta x)e_2) = \lambda^{-1}\bE_{\cM_1}(e_2(\partial^\Delta x)\widehat{\Delta}^{1/4})$.
\end{remark}

\begin{proposition}\label{prop:gnslap1}
For any $x\in \cM$, 
\begin{align}
\cL(x)= & \frac{\lambda^{-2}}{2} \bE_{\cM}(  \widehat{\Delta}^{-1/4}(\partial^\Delta x)e_2 e_1\overline{\widehat{\cL}_0^{1/2} } ) 
-\frac{\lambda^{-2}}{2} \bE_{\cM}( \overline{\widehat{\cL}_0^{1/2}}e_1e_2 (\partial^\Delta x)\widehat{\Delta}^{-1/4}).
\end{align}
\end{proposition}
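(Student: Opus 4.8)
The plan is to reduce the asserted identity to the two presentations of the Laplacian already obtained, using that bimodule GNS symmetry annihilates the Hamiltonian part. By Proposition~\ref{prop:keeplap}, under the standing bimodule GNS symmetry with respect to $\widehat{\Delta}$ we have $\cL_w=0$ and $\cL=\cL_a$, so it suffices to rewrite the formula of Lemma~\ref{lem:laplacian1}, namely $\cL_a(x)=-\tfrac{\lambda^{-1/2}}{2}\bE_{\cM}(\mathfrak{F}^{-1}(\overline{\widehat{\cL}_0^{1/2}})(\partial x))+\tfrac{\lambda^{-1/2}}{2}\bE_{\cM}((\overline{\partial}x)\mathfrak{F}^{-1}(\widehat{\cL}_0^{1/2}))$, in terms of the balanced derivation $\partial^\Delta$.

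First I would substitute the conditional-expectation presentations of $\partial$ and $\overline{\partial}$ supplied by Lemma~\ref{lem:derivation0} and the remark following it, namely $\partial x=\lambda^{-1}\bE_{\cM_1}(e_2(\partial^\Delta x)\widehat{\Delta}^{-1/4})$ and $\overline{\partial}x=\lambda^{-1}\bE_{\cM_1}(\widehat{\Delta}^{-1/4}(\partial^\Delta x)e_2)$. Since $\mathfrak{F}^{-1}(\overline{\widehat{\cL}_0^{1/2}})$ and $\mathfrak{F}^{-1}(\widehat{\cL}_0^{1/2})$ lie in $\cN'\cap\cM_1\subseteq\cM_1$, they can be pulled inside $\bE_{\cM_1}$ on the appropriate side, after which the tower identity $\bE_{\cM}\circ\bE_{\cM_1}=\bE_{\cM}$ removes the inner conditional expectation. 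Up to powers of $\lambda$ this turns the first term into $\bE_{\cM}(\mathfrak{F}^{-1}(\overline{\widehat{\cL}_0^{1/2}})\,e_2(\partial^\Delta x)\widehat{\Delta}^{-1/4})$ and the second into $\bE_{\cM}(\widehat{\Delta}^{-1/4}(\partial^\Delta x)\,e_2\mathfrak{F}^{-1}(\widehat{\cL}_0^{1/2}))$.

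Finally I would apply Lemma~\ref{lem:fourier} to convert these: $\mathfrak{F}^{-1}(b)e_2=\lambda^{-1/2}b\,e_1e_2$ gives $\mathfrak{F}^{-1}(\overline{\widehat{\cL}_0^{1/2}})e_2=\lambda^{-1/2}\overline{\widehat{\cL}_0^{1/2}}e_1e_2$, and taking adjoints together with $\mathfrak{F}^{-1}(b)^*=\mathfrak{F}^{-1}(\overline{b})$ and the self-adjointness of $\widehat{\cL}_0^{1/2}$ and $\overline{\widehat{\cL}_0^{1/2}}$ gives $e_2\mathfrak{F}^{-1}(\widehat{\cL}_0^{1/2})=\lambda^{-1/2}e_2e_1\overline{\widehat{\cL}_0^{1/2}}$. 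Substituting these and collecting the $\lambda$-powers, the two terms become exactly $-\tfrac{\lambda^{-2}}{2}\bE_{\cM}(\overline{\widehat{\cL}_0^{1/2}}e_1e_2(\partial^\Delta x)\widehat{\Delta}^{-1/4})$ and $\tfrac{\lambda^{-2}}{2}\bE_{\cM}(\widehat{\Delta}^{-1/4}(\partial^\Delta x)e_2e_1\overline{\widehat{\cL}_0^{1/2}})$, which is the claim.

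The computation is short, and the only delicate point is bookkeeping rather than a genuine obstacle: keeping the two Fourier transforms, the identity $\mathfrak{F}^{-1}(b)^*=\mathfrak{F}^{-1}(\overline{b})$, and the position of $\widehat{\Delta}^{-1/4}$ relative to $e_2$ all consistent, so that each $e_2\mathfrak{F}^{-1}$ versus $\mathfrak{F}^{-1}e_2$ conversion lands on the correct side. The GNS-symmetry relations of Proposition~\ref{prop:bimoduleder} enter only implicitly, through the fact that they make $\partial^\Delta$ well behaved (in particular self-contragredient) and validate Lemmas~\ref{lem:derivation} and~\ref{lem:derivation0}; no further input is needed.
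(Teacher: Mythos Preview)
Your proposal is correct and follows essentially the same route as the paper: start from $\cL=\cL_a$ (Proposition~\ref{prop:keeplap}), invoke the formula of Lemma~\ref{lem:laplacian1}, replace $\partial$ and $\overline{\partial}$ by their $\partial^\Delta$ presentations from Lemma~\ref{lem:derivation0} and the remark after it, and then apply Lemma~\ref{lem:fourier} to trade $\mathfrak{F}^{-1}(\cdot)e_2$ for $(\cdot)e_1e_2$. The paper's proof is exactly this three-line computation, and your bookkeeping of the $\lambda$-powers and of the identity $\mathfrak{F}^{-1}(b)^*=\mathfrak{F}^{-1}(\overline{b})$ matches.
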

\begin{proof}
Now by Lemmas \ref{lem:laplacian1} and \ref{lem:derivation0}, we have that 
\begin{align*}
\cL_a (x)= & -\frac{\lambda^{-1/2}}{2} \bE_{\cM}(\mathfrak{F}^{-1}(\overline{\widehat{\cL}_0^{1/2}}) (\partial x )) + \frac{\lambda^{-1/2}}{2} \bE_{\cM}(  (\overline{\partial} x) \mathfrak{F}^{-1}(\widehat{\cL}_0^{1/2} ) ) \\
= & -\frac{\lambda^{-3/2}}{2} \bE_{\cM}(\mathfrak{F}^{-1}(\overline{\widehat{\cL}_0^{1/2}})e_2 (\partial^\Delta x)\widehat{\Delta}^{-1/4}) + \frac{\lambda^{-3/2}}{2} \bE_{\cM}(  \widehat{\Delta}^{-1/4}(\partial^\Delta x)e_2 \mathfrak{F}^{-1}(\widehat{\cL}_0^{1/2} ) ) \\
=& -\frac{\lambda^{-2}}{2} \bE_{\cM}( \overline{\widehat{\cL}_0^{1/2}}e_1e_2 (\partial^\Delta x)\widehat{\Delta}^{-1/4}) + \frac{\lambda^{-2}}{2} \bE_{\cM}(  \widehat{\Delta}^{-1/4}(\partial^\Delta x)e_2 e_1\overline{\widehat{\cL}_0^{1/2} } ).
\end{align*}
This completes the computation.
\end{proof}

\begin{proposition}
    Suppose $\cN=\bC$ and $\cM=M_n(\bC)$.
    We have that 
    \begin{align*}
    \cL(x)=& \frac{1}{2}\sum_{j \in \sI_0}\omega_j \mu_j^{-1/2}[x, v_j]v_j^*-\omega_j \mu_j^{1/2}v_j^* [x, v_j],\\
    =& \frac{1}{2}\sum_{j\in \sI_0} \omega_j\mu_j^{1/2} ([x, v_j^*]v_j-v_j^*[x, v_j]),
\end{align*}
subject to $\tau(v_jv_k^*)=\lambda^{1/2} \delta_{j,k}$, $\tau(v_j)=0$,  and $\omega_j \geq 0$.
\end{proposition}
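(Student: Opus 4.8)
The plan is to specialize the description of the generator of a bimodule GNS symmetric semigroup developed in this section to the inclusion $\bC\subset M_n(\bC)$, using the explicit data for $\bC\subset M_n(\bC)$ recorded in Section~2. Since $\{\Phi_t\}_{t\geq 0}$ is bimodule GNS symmetric with respect to $\widehat{\Delta}$, Proposition~\ref{prop:keeplap} gives $\cL_w=0$, so $\cL=\cL_a$ and
\begin{align*}
\cL(x)=\frac12(1*\widehat{\cL}_0)x+\frac12\,x(1*\widehat{\cL}_0)-x*\widehat{\cL}_0 .
\end{align*}
From the spectral decomposition preceding this proposition, $\widehat{\cL}_0=\sum_{j\in\sI_0}\omega_j\mu_j^{1/2}p_j$ with $\{p_j\}_{j\in\sI_0}$ orthogonal minimal projections; and for $\bC\subset M_n(\bC)$ the dictionary of Section~2 gives $\mathfrak{F}^{-1}(p_j)=v_j\otimes\overline{v_j^*}$, together with $x*p_j=v_j^*xv_j$ and $1*p_j=v_j^*v_j$ as recorded just before Proposition~\ref{prop:generatorformula}, where $\tau(v_j)=0$ and $\tau(v_jv_k^*)=\lambda^{1/2}\delta_{j,k}$ are the translations of $p_je_2=0$ and $p_jp_k=\delta_{j,k}p_j$, and $\omega_j\geq0$ since $\widehat{\cL}_0\geq0$.

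The first step is to substitute these into the displayed identity, giving $\cL(x)=\frac12\sum_{j\in\sI_0}\omega_j\mu_j^{1/2}\bigl(v_j^*v_jx+xv_j^*v_j-2v_j^*xv_j\bigr)$, and then to use the elementary identity $v_j^*v_jx+xv_j^*v_j-2v_j^*xv_j=[x,v_j^*]v_j-v_j^*[x,v_j]$ to read off the second displayed form of $\cL(x)$.

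The second step is to identify the first displayed form with the second. For this I would use the $j\leftrightarrow j^*$ symmetry of the spectral data, namely $\overline{p_j}=p_{j^*}$, $\omega_{j^*}=\omega_j$, and $\mu_j\mu_{j^*}=1$ whenever $\omega_j\neq0$ (the last from $\cR(\widehat{\cL})\overline{\widehat{\Delta}}\widehat{\Delta}=\cR(\widehat{\cL})$ in Proposition~\ref{prop:bimoduleder}). Under the dictionary of Section~2 the contragredient $p_j\mapsto\overline{p_j}$ corresponds, on matrix data, to $v_j\mapsto v_j^*$, so one may choose $v_{j^*}=v_j^*$. Expanding the first claimed form and relabelling $j\mapsto j^*$ in the two summands carrying the weight $\mu_j^{-1/2}$ (using $v_{j^*}=v_j^*$, $\mu_{j^*}=\mu_j^{-1}$, $\omega_{j^*}=\omega_j$) converts them into summands carrying $\mu_j^{1/2}$, after which the expansion of the first form coincides term-by-term with that of the second; combined with the first step this proves both displayed formulas. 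Alternatively, the first form can be read off directly from Lemma~\ref{lem:laplacian1} together with the balanced-derivation identity~\eqref{eq:derivation2} of Lemma~\ref{lem:derivation}, specialized to $\bC\subset M_n(\bC)$.

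I expect the relabelling in the second step to be the main point needing care: one must pin down how the contragredient on $\cM'\cap\cM_2$ acts on the chosen family $\{p_j\}$ in terms of the matrix elements $v_j$, and check that, compatibly with $\widehat{\Delta}=\sum_j\mu_jp_j$, the relation $\mu_j\mu_{j^*}=1$, and the normalization $\tau(v_jv_k^*)=\lambda^{1/2}\delta_{j,k}$, the choice $v_{j^*}=v_j^*$ can indeed be made so that the powers $\mu_j^{\pm1/2}$ redistribute as claimed. Everything else is routine bookkeeping with the planar data already in place.
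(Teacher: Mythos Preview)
Your proposal is correct. The route differs from the paper's: the paper does not start from the raw Laplacian formula $\cL_a(x)=\tfrac12(1*\widehat{\cL}_0)x+\tfrac12 x(1*\widehat{\cL}_0)-x*\widehat{\cL}_0$, but instead invokes Proposition~\ref{prop:gnslap1} (the balanced-derivation reformulation of $\cL$) and evaluates its two summands term-by-term in the $\bC\subset M_n(\bC)$ dictionary, obtaining the \emph{first} displayed form $\tfrac12\sum_j(\omega_j\mu_j^{-1/2}[x,v_j]v_j^*-\omega_j\mu_j^{1/2}v_j^*[x,v_j])$ directly. The second form then follows by the same $j\leftrightarrow j^*$ relabelling you describe, but run in the opposite direction. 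Your alternative suggestion (Lemma~\ref{lem:laplacian1} with Lemma~\ref{lem:derivation}) is essentially the paper's computation, since Proposition~\ref{prop:gnslap1} is precisely the combination of those lemmas. Your primary approach is more elementary in that it bypasses the balanced-derivation machinery entirely; the paper's approach has the advantage of illustrating that machinery in the concrete matrix setting, which is presumably the expository point of the proposition. Your concern about the choice $v_{j^*}=v_j^*$ is well placed but routine: the pictorial description of $p_j$ via $v_j$ makes the contragredient $p_j\mapsto\overline{p_j}$ correspond to $v_j\mapsto v_j^*$ up to phase, and the phase can be absorbed consistently with the orthogonality and trace constraints.
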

\begin{proof}
We have that 
\begin{align*}
& \frac{\lambda^{-1}}{2} \bE_{\cM}(  \widehat{\Delta}^{-1/4}(\partial^\Delta x)e_2 e_1\overline{\widehat{\cL}_0^{1/2} } ) \\
=& \frac{\lambda^{-1}}{2}\sum_{j\in  \sI_0} \mu_j^{-1/4} \omega_j^{1/2}\mu_j^{-1/4}\omega_j^{1/2}\bE_{\cM} (p_j[x, \mathfrak{F}^{-1}(p_j)]e_2e_1  p_j)\\
=& \frac{\lambda^{-1/2}}{2}\sum_{j\in  \sI_0} \mu_j^{-1/2}  \omega_j\bE_{\cM} ([x, \mathfrak{F}^{-1}(p_j)]e_2\mathfrak{F}^{-1}(p_j)^*)\\
=& \frac{\lambda}{2}\sum_{j\in \sI_0}\mu_j^{-1/2}\omega_j[x, v_j]v_j^*,
\end{align*}
and 
\begin{align*}
&\frac{\lambda^{-1}}{2} \bE_{\cM}( \overline{\widehat{\cL}_0^{1/2}}e_1e_2 (\partial^\Delta x)\widehat{\Delta}^{-1/4})\\
=& \frac{\lambda^{-1}}{2}\sum_{j\in  \sI_0} \mu_j^{1/4} \omega_j^{1/2}\mu_j^{1/4}\omega_j^{1/2}\bE_{\cM} (\overline{p_j}e_1e_2[x, \mathfrak{F}^{-1}(p_j)]\overline{p_j})\\
=& \frac{\lambda}{2}\sum_{j\in \sI_0}\mu_j^{1/2} \omega_j v_j^*[x, v_j].
\end{align*}
This implies that 
\begin{align*}
    \cL(x)=& \frac{1}{2}\sum_{j \in \sI_0}\omega_j \mu_j^{-1/2}[x, v_j]v_j^*-\omega_j \mu_j^{1/2}v_j^* [x, v_j],\\
    =& \frac{1}{2}\sum_{j\in \sI_0} \omega_j\mu_j^{1/2} ([x, v_j^*]v_j-v_j^*[x, v_j]).
\end{align*}
This completes the computation.
\end{proof}

\begin{proposition}
For any $x\in \cM$,
\begin{align}
    \cL^*(x) = & \frac{\lambda^{-3/2}}{2}\bE_{\cM}\left( \partial^\Delta \bE_{\cM_1}( e_2(\partial^\Delta x\widehat{\Delta}^{-1}) \widehat{\Delta}^{1/2})\right)\nonumber\\
    =&  \frac{\lambda^{-1/2}}{2}\sum_{j\in \sI_0\cup \sI_1} \omega_j^{1/2} \bE_{\cM}\left( \partial^\Delta\left( \mu_j^{-1/2} x\mathfrak{F}^{-1}(p_j)-\mu_j^{1/2}\mathfrak{F}^{-1}(p_j) x \right) \right).
\end{align}
\end{proposition}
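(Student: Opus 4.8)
The plan is to deduce the formula by duality from Proposition~\ref{prop:gnslap1}, using only that $\{\Phi_t\}_{t\geq 0}$ is bimodule GNS symmetric with respect to $\widehat{\Delta}$, so that $\cL_w=0$ and $\cL=\cL_a$ by Proposition~\ref{prop:keeplap}. First I would dispose of the easy half, namely that the two displayed expressions for $\cL^*(x)$ coincide. By the last identity in Lemma~\ref{lem:derivation0} one has $\lambda^{-1}\bE_{\cM_1}(e_2(\partial^\Delta x\,\widehat{\Delta}^{-1})\widehat{\Delta}^{1/2})=\sum_{j\in\sI_0\cup\sI_1}\omega_j^{1/2}(\mu_j^{-1/2}x\,\mathfrak{F}^{-1}(p_j)-\mu_j^{1/2}\mathfrak{F}^{-1}(p_j)x)$; applying $\lambda^{-1}\bE_{\cM}\circ\partial^\Delta$ to both sides and using $\partial^\Delta=\sum_j\partial_j^\Delta$ together with linearity turns the first expression into the second. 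So it is enough to prove that $\cL^*(x)$ equals the closed form $\frac{\lambda^{-2}}{2}\bE_{\cM}(\partial^\Delta\,\bE_{\cM_1}(e_2(\partial^\Delta x\,\widehat{\Delta}^{-1})\widehat{\Delta}^{1/2}))$.

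For the main computation I would fix $z\in\cM$ and evaluate $\tau(\cL^*(x)z)=\tau(x\,\cL(z))$. Substituting the formula of Proposition~\ref{prop:gnslap1} for $\cL(z)$ and pulling $x$ inside the conditional expectation (via $\tau(x\,\bE_{\cM}(a))=\tau_2(xa)$ for $x\in\cM$, $a\in\cM_2$) produces a sum of two $\tau_2$-traces. Expanding the commutator $\partial^\Delta z=[z,\mathfrak{F}^{-1}(\widehat{\cL}_0^{1/2}\widehat{\Delta}^{-1/4})]=[z,\sum_j\omega_j^{1/2}\mathfrak{F}^{-1}(p_j)]$ and then using cyclicity of $\tau_2$ to bring $z$ to the front, one reaches $\tau(x\,\cL(z))=\tau_2(z\,V(x))$ for some $V(x)\in\cM_2$. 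The reduction of $V(x)$ to the claimed form uses Lemma~\ref{lem:fourier} (to shift $e_1,e_2$ and to replace $\widehat{\cL}_0^{1/2}e_1e_2$ by $\lambda^{1/2}\mathfrak{F}^{-1}(\widehat{\cL}_0^{1/2})e_2$, and similarly for $\overline{\widehat{\cL}_0^{1/2}}$), the commutation relations of Proposition~\ref{prop:bimoduleder} ($\widehat{\cL}_0\widehat{\Delta}=\widehat{\Delta}\widehat{\cL}_0$, $\overline{\widehat{\cL}_0\widehat{\Delta}^{-1/2}}=\widehat{\cL}_0\widehat{\Delta}^{-1/2}$, $\cR(\widehat{\cL})\overline{\widehat{\Delta}}\widehat{\Delta}=\cR(\widehat{\cL})$) and $\widehat{\Delta}e_2=e_2\widehat{\Delta}=e_2$, together with the spectral data $\widehat{\cL}_0=\sum_j\mu_j^{1/2}\omega_j p_j$, $\widehat{\Delta}=\sum_j\mu_j p_j$, $\widehat{\cL}_0^{1/2}\widehat{\Delta}^{-1/4}=\sum_j\omega_j^{1/2}p_j$. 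Recognizing the resulting expression, via Lemma~\ref{lem:derivation0} again, as $\frac{\lambda^{-2}}{2}\bE_{\cM}(\partial^\Delta\,\bE_{\cM_1}(e_2(\partial^\Delta x\,\widehat{\Delta}^{-1})\widehat{\Delta}^{1/2}))$ and invoking faithfulness of $\tau$ then yields the proposition.

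A parallel route avoids Proposition~\ref{prop:gnslap1}: since $\cL=\cL_a$, Lemma~\ref{lem:laplacian20} expresses $\cL^*(x)=\cL_a^*(x)$ through $\partial^*$ and $\overline{\partial}^*$; one then decomposes $\partial^*=\sum_j\partial_j^*$, $\overline{\partial}^*=\sum_j\overline{\partial_j}^*$, substitutes $\mathfrak{F}^{-1}(\widehat{\cL}_0^{1/2})=\sum_k\mu_k^{1/4}\omega_k^{1/2}\mathfrak{F}^{-1}(p_k)$ and $\mathfrak{F}^{-1}(\overline{\widehat{\cL}_0}^{1/2})=\sum_k\mu_k^{-1/4}\omega_k^{1/2}\mathfrak{F}^{-1}(p_k)$, collapses the double sums by $p_jp_k=\delta_{j,k}p_j$ and the Plancherel identity, and recombines the surviving $\mu^{\pm1/4}$ weights with the $\mu_j^{1/4}$ hidden in $\partial_j=\mu_j^{1/4}\partial_j^\Delta$. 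In the special case $\cN=\bC$, $\cM=M_n(\bC)$, this should reduce to the formula obtained by taking termwise adjoints in the matrix expression for $\cL$ recorded just above the proposition, which is a convenient consistency check.

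The step I expect to be the real obstacle is the bookkeeping of the contragredient together with the $\mu_j$-weights. One must track, after each application of Lemma~\ref{lem:fourier}, whether a given $p_j$ is to be read as $p_j$ or as $\overline{p_j}=p_{j^*}$; verify that the $p_j$-form and the $\overline{p_j}$-form of the divergence sum in Lemma~\ref{lem:derivation0} really agree, which is precisely where $\overline{p_j}=p_{j^*}$, $\omega_j=\omega_{j^*}$ and $\mu_j\mu_{j^*}=1$ on $\{\,j:\omega_j\neq0\,\}$ enter; and keep a careful count of the powers of $\lambda$ through each use of $\bE_{\cM}$, $\bE_{\cM_1}$, Lemma~\ref{lem:fourier} and the Plancherel identity. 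Once these are handled, the identification of $V(x)$ with the claimed right-hand side is routine.
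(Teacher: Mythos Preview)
Your proposal is correct and follows essentially the same approach as the paper: the paper computes $\lambda^2\tau(\cL^*(y)^*x)=\lambda^2\tau(y^*\cL(x))$ by substituting Proposition~\ref{prop:gnslap1}, uses cyclicity of $\tau_2$ and Lemma~\ref{lem:fourier} to rearrange the Fourier-multiplier factors (exploiting $\cR(\widehat{\cL})\overline{\widehat{\Delta}}\widehat{\Delta}=\cR(\widehat{\cL})$ from Proposition~\ref{prop:bimoduleder} and $\widehat{\Delta}e_2=e_2$), combines the two terms into a single $\partial^\Delta$, and then invokes Lemma~\ref{lem:derivation0} for the second formula, exactly as you outline. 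Your alternative route via Lemma~\ref{lem:laplacian20} is sound but not the one taken in the paper.
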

\begin{proof}
For any $x, y\in \cM$, we have that 
\begin{align*}
\lambda^2 \tau(\cL^*(y)^*x) = &\lambda^2 \tau(y^* \cL(x)) \\
=& \frac{1}{2} \tau_2\left(y^*\widehat{\Delta}^{-1/4}(\partial^\Delta x)e_2 e_1\overline{\widehat{\cL}_0^{1/2} } \right) -\frac{1}{2} \tau_2\left( y^*\overline{\widehat{\cL}_0^{1/2}}e_1e_2 (\partial^\Delta x)\widehat{\Delta}^{-1/4}\right) \\
=& \frac{1}{2} \tau_2\left((\partial^\Delta x)e_2 e_1 \overline{\widehat{\cL}_0^{1/2} } \widehat{\Delta}^{-1/4} y^* \right) -\frac{1}{2} \tau_2\left( (\partial^\Delta x) y^*\widehat{\Delta}^{-1/4} \overline{\widehat{\cL}_0^{1/2}} e_1 e_2\right) \\
=& \frac{1}{2} \lambda^{1/2}\tau_2\left((\partial^\Delta x)e_2 \mathfrak{F}^{-1}(  \widehat{\cL}_0^{1/2} \overline{\widehat{\Delta}}^{-1/4}) y^* \right) -\frac{1}{2} \tau_2\left( (\partial^\Delta x) y^*\widehat{\Delta}^{-1/4} \overline{\widehat{\cL}_0^{1/2}} e_1 e_2\right) \\
=& \frac{1}{2} \tau_2\left((\partial^\Delta x) \widehat{\cL}_0^{1/2} \overline{\widehat{\Delta}}^{-1/4}e_1e_2 y^* \right) -\frac{1}{2} \tau_2\left( (\partial^\Delta x) y^*\widehat{\Delta}^{-3/4} \widehat{\cL}_0^{1/2} e_1 e_2\right) \\
=& \frac{1}{2} \tau_2\left((\partial^\Delta x) \widehat{\cL}_0^{1/2} \widehat{\Delta}^{1/4}(e_1 y^* -\widehat{\Delta}^{-1}y^*e_1)e_2 \right)\\
=& \frac{1}{2} \tau_2\left((\partial^\Delta x) \widehat{\Delta}^{1/2} \widehat{\cL}_0^{1/2} \widehat{\Delta}^{-1/4}(e_1 y^*\widehat{\Delta}^{-1} -y^*\widehat{\Delta}^{-1}e_1)e_2 \right)\\
=& \frac{1}{2} \lambda^{1/2}\tau_2\left((\partial^\Delta x) \widehat{\Delta}^{1/2} (\partial^\Delta y\widehat{\Delta}^{-1})^* e_2 \right) \\
=& \frac{1}{2}  \lambda^{1/2} \tau_1\left((\partial^\Delta x) \bE_{\cM_1}(\widehat{\Delta}^{1/2} (\partial^\Delta y\widehat{\Delta}^{-1})^* e_2) \right) \\
=& \frac{1}{2} \lambda^{1/2} \tau_1\left( x (\partial^\Delta \bE_{\cM_1}( e_2(\partial^\Delta y\widehat{\Delta}^{-1}) \widehat{\Delta}^{1/2})^* \right). 
\end{align*}
The rest follows from Lemma \ref{lem:derivation0}.
This completes the computation.
\end{proof}

\begin{proposition}
    Suppose that $\cN=\bC$ and $\cM=M_n(\bC)$.
    We have that 
    \begin{align*}
         \cL^*(x)=\frac{1}{2}\sum_{j\in \sI_0}\omega_j[(\mu_j^{-1/2} xv_j -\mu_j^{1/2}v_j x), v_j^*], 
    \end{align*}
    subject to $\tau(v_jv_k^*)=\lambda^{1/2} \delta_{j,k}$, $\tau(v_j)=0$,  and $\omega_j \geq 0$.
\end{proposition}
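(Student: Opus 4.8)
The plan is to read off the formula from the matrix expression for $\cL$ already established for this inclusion by a straightforward trace‑duality computation, avoiding any re‑entry into the planar‑algebraic bookkeeping. Since $\cN=\bC$ and $\cM=M_n(\bC)$ is irreducible we have $\sI_1=\emptyset$ and $\cL_w=0$, and the spectral data $\{p_j,\omega_j,\mu_j\}_{j\in\sI_0}$ together with the operators $v_j\in\cM$ subject to $\tau(v_jv_k^*)=\lambda^{1/2}\delta_{j,k}$, $\tau(v_j)=0$, $\omega_j\ge 0$ are precisely those appearing in the identity proved earlier,
\[
\cL(x)=\frac12\sum_{j\in\sI_0}\omega_j\mu_j^{-1/2}[x,v_j]v_j^*-\omega_j\mu_j^{1/2}v_j^*[x,v_j].
\]
First I would fix $x,y\in\cM$, substitute this into $\tau(y^*\cL(x))$, and expand the two commutators, obtaining a sum over $j\in\sI_0$ of four trace monomials.

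Next, for each monomial I would use cyclicity of the tracial state $\tau$ to put it in the form $\tau(w(y)^*x)$. Concretely, $\mu_j^{-1/2}\tau(y^*xv_jv_j^*)=\mu_j^{-1/2}\tau\big((yv_jv_j^*)^*x\big)$; $-\mu_j^{-1/2}\tau(y^*v_jxv_j^*)=-\mu_j^{-1/2}\tau\big((v_j^*yv_j)^*x\big)$; $-\mu_j^{1/2}\tau(y^*v_j^*xv_j)=-\mu_j^{1/2}\tau\big((v_jyv_j^*)^*x\big)$; and $\mu_j^{1/2}\tau(y^*v_j^*v_jx)=\mu_j^{1/2}\tau\big((v_j^*v_jy)^*x\big)$. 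Since $\tau(\cL^*(y)^*x)=\tau(y^*\cL(x))$ for all $x\in\cM$, summing these forces
\[
\cL^*(y)=\frac12\sum_{j\in\sI_0}\omega_j\Big(\mu_j^{-1/2}yv_jv_j^*-\mu_j^{-1/2}v_j^*yv_j-\mu_j^{1/2}v_jyv_j^*+\mu_j^{1/2}v_j^*v_jy\Big).
\]
Finally I would expand $\big[\mu_j^{-1/2}yv_j-\mu_j^{1/2}v_jy,\,v_j^*\big]=\mu_j^{-1/2}yv_jv_j^*-\mu_j^{1/2}v_jyv_j^*-\mu_j^{-1/2}v_j^*yv_j+\mu_j^{1/2}v_j^*v_jy$ and observe that the right‑hand side above is exactly $\frac12\sum_{j\in\sI_0}\omega_j[\mu_j^{-1/2}yv_j-\mu_j^{1/2}v_jy,\,v_j^*]$, which is the claimed identity; the normalization constraints $\tau(v_jv_k^*)=\lambda^{1/2}\delta_{j,k}$, $\tau(v_j)=0$, $\omega_j\ge 0$ are inherited verbatim from the spectral decomposition and require nothing further.

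There is no serious obstacle here: the argument is a routine computation once the matrix form of $\cL$ is in hand, and the only thing to watch is the bookkeeping of adjoints and signs when cyclically permuting inside $\tau$. As an independent consistency check one can instead specialize the general formula $\cL^*(x)=\frac{\lambda^{-1}}{2}\sum_{j}\omega_j^{1/2}\bE_{\cM}\big(\partial^\Delta(\mu_j^{-1/2}x\mathfrak{F}^{-1}(p_j)-\mu_j^{1/2}\mathfrak{F}^{-1}(p_j)x)\big)$ from the preceding proposition, using $\mathfrak{F}^{-1}(p_j)=v_j\otimes\overline{v_j^*}$, the fact that the $\overline{v_j^*}$ lie in $\cM'\cap\cM_1$ and hence commute with $\cM$, and that $\bE_{\cM}$ collapses the conjugate tensor leg through the trace together with $\tau(v_jv_k^*)=\lambda^{1/2}\delta_{j,k}$; I expect that route to be the fiddlier of the two because of the modular conjugation on the second leg, which is why the trace‑duality computation above is the path I would actually carry out.
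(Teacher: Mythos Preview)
Your computation is correct. The paper takes the other route you flag as a consistency check: it invokes the general formula
\[
\cL^*(x)=\frac{\lambda^{-1}}{2}\sum_{j}\omega_j^{1/2}\,\bE_{\cM}\!\left(\partial^\Delta\!\left(\mu_j^{-1/2}x\,\mathfrak{F}^{-1}(p_j)-\mu_j^{1/2}\mathfrak{F}^{-1}(p_j)\,x\right)\right)
\]
from the preceding proposition and specializes it, using $\mathfrak{F}^{-1}(p_j)=v_j\otimes\overline{v_j^*}$ and collapsing the conjugate leg with $\bE_{\cM}$. Your trace-duality argument from the already-established matrix form of $\cL$ is cleaner and entirely self-contained for this inclusion, needing only cyclicity of $\tau$; the paper's route, by contrast, serves as an explicit check that the abstract $\cL^*$ formula reproduces the expected matrix expression, which fits the logical flow of that section. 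Either way the result is the same, and your bookkeeping of adjoints and signs is correct.
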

\begin{proof}
We have that 
\begin{align*}
 \cL^*(x)
 = &   \frac{\lambda^{-1/2}}{2}\sum_{j\in \sI_0 } \omega_j^{1/2} \bE_{\cM}\left( \partial^\Delta\left( \mu_j^{-1/2} x\mathfrak{F}^{-1}(p_j)-\mu_j^{1/2}\mathfrak{F}^{-1}(p_j) x \right) \right) \\
 =& \frac{1}{2}\sum_{j\in \sI_0}\omega_j[(\mu_j^{-1/2} xv_j -\mu_j^{1/2}v_j x), v_j^*].
\end{align*}
This completes the computation.
\end{proof}

\subsection{Divergence} 

Suppose that $D\in \cM$ is strictly positive and $\mu>0$. 
We define the linear map $\K_{D, \mu}:\cM_1\to \cM_1$ as follows: 
\begin{align*}
\K_{D, \mu} (x)=\int_0^1 \mu^{1-2s}D^{s} x D^{1-s} ds, \quad x\in \cM_1.   
\end{align*}
Let $f(s)=\mu^{1-2s}D^{s} v D^{1-s}$ for $v\in \cM_1$.
Then 
\begin{align*}
f'(s)= & (\mu^{-1}D)^s((\log\mu^{-1}D)v-v\log\mu D) (\mu D)^{1-s} \\
=& (\mu^{-1}D)^s( (\log D)v-v\log D) (\mu D)^{1-s} -2(\log \mu)(\mu^{-1}D)^s v (\mu D)^{1-s}.
\end{align*}
This implies that 
\begin{align*}
 \K_{D, \mu}((\log\mu^{-1} D ) v- v\log \mu D ) 
 = \mu^{-1} D v  - \mu v D, \quad v\in \cM_1.
\end{align*}
The inverse of $\K_{D, \mu}$ is known to be 
\begin{align*}
  \K_{D, \mu}^{-1}(x) = \int_0^\infty (s+\mu^{-1} D)^{-1} x  (s+\mu D)^{-1} ds, \quad x\in \cM_1.
\end{align*}
For any $j\in \sI_0\cup \sI_1$, we let $\K_{D, j}=\K_{D, \mu_j^{-1/2}}$ and 
\begin{align*}
\widetilde{\K}_{D, j}(x)=\K_{D, j}(x-\lambda^{-1}\bE_{\cM_1}((\partial_j^\Delta \log \widehat{\Delta} )e_2)), \quad x\in \cM_1.
\end{align*}

\begin{remark}\label{rem:hidden}
Suppose that $ \widehat{\Delta}=  \vcenter{\hbox{\begin{tikzpicture}[scale=1.2]
        \draw [blue] (0, -0.5)--(0, 0.5) (0.5, -0.5)--(0.5, 0.5);
        \draw [fill=white] (-0.2, -0.2) rectangle (0.2, 0.2);
        \node at (0, 0) {\tiny $\overline{\Delta}$};
        \begin{scope}[shift={(0.5, 0)}]
         \draw [fill=white] (-0.2, -0.2) rectangle (0.2, 0.2);
        \node at (0, 0) {\tiny $\Delta^{-1}$};   
        \end{scope}
    \end{tikzpicture}}}$, for some strictly positive $\Delta\in \cN'\cap \cM$.
Then 
\begin{align*}
 \bE_{\cM_1}((\partial_j^\Delta \log \widehat{\Delta} )e_2))
 =\omega_j^{1/2}\bE_{\cM_1}([\log \widehat{\Delta}, \mathfrak{F}^{-1}(p_j)] e_2)
 =\lambda \omega_j^{1/2}[\log \Delta, \mathfrak{F}^{-1}(p_j)].
\end{align*}
This indicates that the item $\bE_{\cM_1}((\partial_j^\Delta \log \widehat{\Delta} )e_2))$ contains more information.
\end{remark}

\begin{theorem}\label{thm:adjointform}
Suppose that $D\in \cM$ is strictly positive.
We have that
\begin{align*}
   \cL^*(D)= \frac{\lambda^{-1/2}}{2}\sum_{j\in \sI_0\cup \sI_1} \bE_{\cM}\left( \partial^\Delta \widetilde{\K}_{D,j^*}\left( \partial_j^\Delta(\log   D ) \right) \right).
\end{align*}
\end{theorem}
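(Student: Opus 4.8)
The plan is to start from the closed formula for $\cL^{*}$ proved in the previous proposition,
\begin{align*}
\cL^{*}(D)=\frac{\lambda^{-1}}{2}\sum_{j\in\sI_{0}\cup\sI_{1}}\omega_{j}^{1/2}\,\bE_{\cM}\Bigl(\partial^{\Delta}\bigl(\mu_{j}^{-1/2}D\,\mathfrak{F}^{-1}(p_{j})-\mu_{j}^{1/2}\mathfrak{F}^{-1}(p_{j})D\bigr)\Bigr),
\end{align*}
or equivalently its $\overline{p_{j}}$-form coming from Lemma~\ref{lem:derivation0}, and to prove the theorem term by term. It suffices to show that, for each $j$ with $\omega_{j}\neq 0$ (the $\omega_{j}=0$ terms vanish on both sides since $\partial_{j}^{\Delta}$ carries a factor $\omega_{j}^{1/2}$), the element inside $\bE_{\cM}\circ\partial^{\Delta}$ coincides, up to the normalizing scalar, with $\widetilde{\K}_{D,j}\bigl(\partial_{j}^{\Delta}(\log D)\bigr)$ as an element of $\cM_{1}$; then one sums over $j$, after the relabelling $j\leftrightarrow j^{*}$ that aligns the parameter of $\K_{D,j}=\K_{D,\mu_{j}^{-1/2}}$ with the $\mu_{j}^{1/2}$ that the computation below produces, using $\omega_{j^{*}}=\omega_{j}$, $\mu_{j^{*}}=\mu_{j}^{-1}$, $\overline{p_{j}}=p_{j^{*}}$. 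Since $D$ is strictly positive, $\log D\in\cM$ and $\partial_{j}^{\Delta}(\log D)$ is defined; the extensions of $\partial^{\Delta}$ and $\partial_{j}^{\Delta}$ to $\cM_{2}$ from Lemmas~\ref{lem:derivation} and \ref{lem:derivation0} are used both to apply $\partial^{\Delta}$ to the $\cM_{1}$-valued output of $\widetilde{\K}_{D,j}$ and to make sense of $\partial_{j}^{\Delta}(\log\widehat{\Delta})$.

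The analytic core is the operator identity $\K_{D,\mu}\bigl((\log\mu^{-1}D)v-v\log\mu D\bigr)=\mu^{-1}Dv-\mu vD$ established just before the theorem, which I would apply with $v=\mathfrak{F}^{-1}(p_{j})$ (or $\mathfrak{F}^{-1}(\overline{p_{j}})$ after relabelling) and $\mu=\mu_{j}^{1/2}$, so that $\mu^{-1}Dv-\mu vD$ is exactly the summand of the $\cL^{*}$-formula. Expanding, $(\log\mu^{-1}D)v-v\log\mu D=[\log D,v]-2(\log\mu)\,v$, where $[\log D,v]=\omega_{j}^{-1/2}\partial_{j}^{\Delta}(\log D)$ by the definition of the balanced directional derivation, and $-2(\log\mu)v=-(\log\mu_{j})v$ is a scalar multiple of $v$. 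The key intermediate claim is that this scalar term is precisely the correction that $\widetilde{\K}_{D,j}$ subtracts, i.e.
\begin{align*}
\lambda^{-1}\bE_{\cM_{1}}\bigl((\partial_{j}^{\Delta}\log\widehat{\Delta})e_{2}\bigr)= c\,\omega_{j}^{1/2}(\log\mu_{j})\,\mathfrak{F}^{-1}(p_{j})
\end{align*}
for the sign and constant $c$ that make the pieces fit. I would prove this by direct computation: in the commutative algebra $\mathscr{A}$ one has $\widehat{\Delta}=\sum_{l}\mu_{l}p_{l}$, hence $\log\widehat{\Delta}=\sum_{l}(\log\mu_{l})p_{l}$ on $\cR(\widehat{\Phi}_{t})$ (the summands outside $\cR(\widehat{\Phi}_{t})$ being irrelevant here), so $\partial_{j}^{\Delta}(\log\widehat{\Delta})=\omega_{j}^{1/2}[\log\widehat{\Delta},\mathfrak{F}^{-1}(p_{j})]$, and then $\bE_{\cM_{1}}(\,\cdot\,e_{2})$ is evaluated using $p_{j}e_{2}=0$ for $j\in\sI_{0}$ (and the $z_{e_{2}}$-splitting for $j\in\sI_{1}$) together with the Fourier relation of Lemma~\ref{lem:fourier} and its counterpart for $\mathfrak{F}$ on $\cM'\cap\cM_{2}$. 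This is consistent with Remark~\ref{rem:hidden}: when $\widehat{\Delta}$ comes from $\Delta\in\cN'\cap\cM$ the quantity equals $\lambda\omega_{j}^{1/2}[\log\Delta,\mathfrak{F}^{-1}(p_{j})]$ and $\mathfrak{F}^{-1}(p_{j})$ is a modular eigenoperator of $\Delta$ of weight $\mu_{j}$, so $[\log\Delta,\mathfrak{F}^{-1}(p_{j})]$ is indeed a multiple of $\mathfrak{F}^{-1}(p_{j})$.

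Granting this, $\partial_{j}^{\Delta}(\log D)-\lambda^{-1}\bE_{\cM_{1}}\bigl((\partial_{j}^{\Delta}\log\widehat{\Delta})e_{2}\bigr)=\omega_{j}^{1/2}\bigl((\log\mu^{-1}D)v-v\log\mu D\bigr)$, so $\widetilde{\K}_{D,j}\bigl(\partial_{j}^{\Delta}(\log D)\bigr)=\omega_{j}^{1/2}\K_{D,\mu}\bigl((\log\mu^{-1}D)v-v\log\mu D\bigr)=\omega_{j}^{1/2}(\mu^{-1}Dv-\mu vD)$, which is the $\cL^{*}$-summand up to the scalar $\lambda^{-1}\omega_{j}^{1/2}$; applying $\bE_{\cM}\circ\partial^{\Delta}$, summing over $j$, and cancelling the factors of $\tfrac12$ yields $\cL^{*}(D)=\tfrac12\sum_{j}\bE_{\cM}(\partial^{\Delta}\widetilde{\K}_{D,j}(\partial_{j}^{\Delta}\log D))$. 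I expect the main obstacle to be exactly the constant bookkeeping in the middle step: reconciling the powers of $\lambda$ (there are several $\lambda$-normalizations hidden in the $\cM_{1}$- and $\cM_{2}$-extensions of $\partial^{\Delta}$ and in the conditional expectations), the exponents $\mu_{j}^{\pm1/2}$, the $p_{j}$-versus-$\overline{p_{j}}$ conventions, and the sign in the $\widetilde{\K}$-correction, so that the scalar produced by the $\K_{D,\mu}$-identity is matched on the nose; once the conventions are pinned down the remainder is routine manipulation with conditional expectations and the Fourier transform.
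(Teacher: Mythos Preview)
Your proposal is correct and takes essentially the same approach as the paper: start from the previous proposition's formula for $\cL^*$, invert each summand via the identity $\K_{D,\mu}\bigl((\log\mu^{-1}D)v - v\log\mu D\bigr) = \mu^{-1}Dv - \mu vD$ with $v=\mathfrak{F}^{-1}(p_j)$, expand, and identify the scalar piece $-(\log\mu_j)\mathfrak{F}^{-1}(p_j)$ with the correction $\lambda^{-1}\bE_{\cM_1}\bigl((\partial_j^\Delta\log\widehat{\Delta})e_2\bigr)$ built into $\widetilde{\K}_{D,j}$. The paper handles this last identification in one stroke by observing $(\log\widehat{\Delta})e_2 = 0$ (since $\widehat{\Delta}e_2 = e_2$), which together with $\mathfrak{F}^{-1}(p_j)e_2=\lambda^{-1/2}p_je_1e_2$ gives $\lambda^{-1}\bE_{\cM_1}\bigl((\partial_j^\Delta\log\widehat{\Delta})e_2\bigr) = \omega_j^{1/2}(\log\mu_j)\mathfrak{F}^{-1}(p_j)$ uniformly in $j$, so the $\sI_0/\sI_1$ case split you sketch is not needed.
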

\begin{proof}
Note that $(\log \widehat{\Delta}) e_2=0$. 
We have that 
\begin{align*}
& \lambda^{1/2}\cL^*(D)\\
= & \frac{1}{2}\sum_{j\in \sI_0\cup \sI_1} \omega_j^{1/2} \bE_{\cM}\left( \partial^\Delta\left( \mu_j^{-1/2} D\mathfrak{F}^{-1}(p_j)-\mu_j^{1/2}\mathfrak{F}^{-1}(p_j) D \right) \right) \\
=&  \frac{1}{2}\sum_{j\in \sI_0\cup \sI_1} \omega_j^{1/2} \bE_{\cM}\left( \partial^\Delta \K_{D,\mu_j^{1/2}}\left( (\log \mu_j^{-1/2} D)\mathfrak{F}^{-1}(p_j)-\mathfrak{F}^{-1}(p_j) (\log \mu_j^{1/2} D) \right) \right) \\
=& \frac{1}{2}\sum_{j\in \sI_0\cup \sI_1} \omega_j^{1/2} \bE_{\cM}\left( \partial^\Delta \K_{D,\mu_j^{1/2}}\left( (\log   D)\mathfrak{F}^{-1}(p_j)-\mathfrak{F}^{-1}(p_j) (\log   D)- \mathfrak{F}^{-1}(p_j)\log \mu_j \right) \right) \\
=& \frac{\lambda^{-1}}{2}\sum_{j\in \sI_0\cup \sI_1}  \omega_j^{1/2} \bE_{\cM}\left( \partial^\Delta \K_{D,j^*}\left( \bE_{\cM_1}((\log   D \widehat{\Delta}^{-1})\mathfrak{F}^{-1}(p_j)e_2-\mathfrak{F}^{-1}(p_j) (\log   D\widehat{\Delta}^{-1})e_2) \right) \right) \\
=& \frac{1}{2}\sum_{j\in \sI_0\cup \sI_1} \bE_{\cM}\left( \partial^\Delta \widetilde{\K}_{D,j^*}\left(  \partial_j^\Delta(\log   D)  \right) \right).
\end{align*}
This completes the computation.
\end{proof}

\begin{remark}
    Theorem \ref{thm:adjointform} generalizes Theorem 5.10 in \cite{CarMaa17}
\end{remark}

For bimodule GNS symmetric semigroups, we define the modified divergence and gradient as follows. 
The divergence $\Div: \cM_1^{|\sI|} \to \cM$ is defined as
\begin{align*}
    \Div (x_j)_{j\in \sI}=\lambda^{-1/4}\sum_{j\in \sI}\bE_{\cM} (\partial_{j}^{\Delta *} x_j), 
\end{align*}
and the gradient $\nabla: \cM\to \cM_1^{|\sI|}$ is defined as 
\begin{align*}
    \nabla x=\lambda^{-1/4} (\partial_j^\Delta x)_{j\in \sI},
\end{align*}
where $x_j\in \cM_1$, $j\in \sI$ and $x\in \cM$.

\begin{lemma}\label{lem:divrange}
Suppose the bimodule quantum Markov semigroup $\{\Phi_t\}_{t\geq 0}$ is relatively ergodic.
Then the range of $\Div$ is
\begin{align*}
\Ran( \Div)= \{x\in \cM: \bE_{\cN}(x)=0\}.
\end{align*}
\end{lemma}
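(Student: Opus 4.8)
The plan is to establish the two inclusions separately. For $\Ran(\Div) \subseteq \{x \in \cM : \bE_{\cN}(x) = 0\}$, I would first show that $\bE_{\cN}\circ \Div = 0$. Recall that $\partial_j^{\Delta *} x_j = \omega_j^{1/2}\bE_{\cM}([x_j, \mathfrak{F}^{-1}(p_j)^*])$ (the analogue of the formula for $\partial_j^*$), so $\Div(x_j)_{j} = \sum_{j\in\sI}\omega_j^{1/2}\bE_{\cM}([x_j, \mathfrak{F}^{-1}(p_j)^*])$. Applying $\bE_{\cN}$ and using that $\bE_{\cN}\bE_{\cM} = \bE_{\cN}$ on $\cM_1$ together with the fact that $\bE_{\cN}$ restricted to $\cN'\cap\cM_1$ is a trace-like functional (more precisely, $\bE_{\cN}$ kills commutators in $\cM_1$ after applying $\bE_\cM$, since $\tau$ is tracial and $\bE_{\cN}([a,b]) = 0$ for $a\in\cM_1$, $b\in\cM_1$ whenever the relevant products land appropriately), I would conclude $\bE_{\cN}(\Div(x_j)_j) = 0$. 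The cleanest route is probably to test against $y\in\cN$: $\tau(y^*\Div(x_j)_j) = \sum_j \omega_j^{1/2}\tau_1(y^*[x_j,\mathfrak{F}^{-1}(p_j)^*])$ and since $y\in\cN$ commutes with $\mathfrak{F}^{-1}(p_j)^* \in \cN'\cap\cM_1$, the trace of the commutator with $y^*$ inserted vanishes by traciality. Hence $\bE_\cN(\Div(\cdot)) = 0$.

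For the reverse inclusion $\{x \in \cM : \bE_{\cN}(x) = 0\} \subseteq \Ran(\Div)$, I would argue by a dimension/orthogonal-complement count. The operator $\Div\,\nabla = \sum_{j\in\sI}\partial_j^{\Delta *}\partial_j^{\Delta}$ is, up to the positive scalar $2\lambda^{1/2}$, the modified Laplacian $\cL_a + \cL_{\overline a}$ rescaled — more precisely, combining $\partial_j = \mu_j^{1/4}\partial_j^\Delta$ with Proposition~\ref{prop:laplacian2} gives $\frac{\lambda^{-1/2}}{2}\sum_j \mu_j^{1/2}\partial_j^{\Delta *}\partial_j^\Delta = \cL_a + \cL_{\overline a}$, and by Lemma~\ref{lem:gnssupport} (applicable since the semigroup is relatively ergodic, hence bimodule GNS symmetric in the ambient hypotheses, so $\Ker(\cL_a) = \Ker\cL = \cN$) one has $\Ker(\Div\,\nabla) = \cN$. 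Since $\Div\,\nabla$ is a positive self-adjoint operator on the finite-dimensional-modulo-$\cN$ space (really: self-adjoint with respect to $\tau$ on $\cM$), its range equals the orthogonal complement of its kernel, namely $\{x : \bE_{\cN}(x) = 0\} = \cN^\perp$ in the $L^2(\cM,\tau)$ inner product. Therefore $\cN^\perp = \Ran(\Div\,\nabla) \subseteq \Ran(\Div)$, which gives the reverse inclusion.

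Combining the two inclusions yields $\Ran(\Div) = \{x\in\cM : \bE_\cN(x) = 0\}$.

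The main obstacle I anticipate is the reverse inclusion, specifically justifying that $\Ran(\Div\,\nabla) = \Ker(\Div\,\nabla)^\perp$ with the correct identification of the kernel. This requires knowing that $\Div\,\nabla$ is self-adjoint and positive as an operator on $\cM$ equipped with the $\tau$-inner product (which follows from $\partial_j^{\Delta *}$ being the genuine adjoint of $\partial_j^\Delta$ and the range-kernel decomposition for self-adjoint operators on a finite-dimensional space — or more carefully, on the Hilbert space $L^2(\cM,\tau)$ where $\cM$ is finite so this operator has closed range), and that its kernel is exactly $\cN$. The kernel computation leans on Lemma~\ref{lem:gnssupport}, so I need the standing assumption that $\{\Phi_t\}$ is bimodule GNS symmetric (which is the setting of this subsection) in addition to relative ergodicity; I should make sure the statement of Lemma~\ref{lem:divrange} is understood to carry that hypothesis, or else note that $\Ker(\Div\,\nabla) = \Ker\Gamma^\Delta = \Ker\Gamma = \cN$ directly via Proposition~\ref{prop:gradientform2} and relative ergodicity. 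A secondary subtlety is matching normalization constants ($\lambda$ powers and $\mu_j$ weights) between $\Div\,\nabla$ and $\cL_a + \cL_{\overline a}$, but these are routine and do not affect the range.
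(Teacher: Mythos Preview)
Your proposal is correct and arrives at the same conclusion by the same underlying mechanism, but it is considerably more elaborate than the paper's proof. The paper dispatches the lemma in one line: since $\Div$ is by construction the Hilbert-space adjoint of $\nabla$, one has $\Ran(\Div)=(\Ker\nabla)^{\perp}$, and $\Ker\nabla=\cN$ (via Proposition~\ref{prop:gradientform2}/Lemma~\ref{lem:gnssupport} together with relative ergodicity), so $\Ran(\Div)=\cN^{\perp}=\{x\in\cM:\bE_{\cN}(x)=0\}$. Your first inclusion is really just the generic containment $\Ran(T^{*})\subseteq(\Ker T)^{\perp}$, and your route through the self-adjoint composite $\Div\,\nabla$ for the reverse inclusion is a valid but unnecessary detour --- the same conclusion follows immediately from closed-range duality once you know $\Ker\nabla=\cN$. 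Your concerns about the GNS-symmetry hypothesis and matching $\mu_j$ weights are well placed but, as you note yourself, handled by the standing assumptions of the section and the relation $\partial_j=\mu_j^{1/4}\partial_j^{\Delta}$.
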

\begin{proof}
Note that $\Ker \nabla=\cN$.
We have that 
\begin{align*}
\Ran ( \Div)=(\Ker\nabla)^\perp=\cN^\perp=\{x\in \cM: \bE_{\cN}(x)=0\}.
\end{align*}
This completes the proof of the Lemma.
\end{proof}

\begin{remark}
Note that $\cM \subset \Ker \partial_j^{\Delta *}$.
We have that $\cM^{|\sI|}\subset \Ker \Div$.
\end{remark}

We extend $\K_{D, j}$ to the linear map $\K_D: \cM_1^{|\sI|}\to \cM_1^{|\sI|}$ as 
\begin{align*}
    \K_D (x_j)_{j \in \sI} =(\K_{D, j} x_j)_{j\in \sI},
\end{align*}
where $(x_j)_{j \in \sI}\in \cM^{|\sI|}$.
The inverse of $\K_D$ is $\K_D^{-1}=(\K_{D, j}^{-1})_{j \in \sI}$.

\begin{definition}
Suppose  $X=(x_j)_{j\in \sI}$  and $Y=(y_j)_{j\in \sI}$, where $x_j, y_j\in \cM_1$.
We define $\langle X, Y\rangle_{D, \widehat{\Delta}}$ as follows:
\begin{align*}
    \langle X, Y \rangle_{D, \widehat{\Delta}} =\sum_{j \in \sI}\tau_1((\K_{D, j} y_j)^* x_j).
\end{align*}
The sesquilinear form $\langle \cdot, \cdot\rangle_{D, \widehat{\Delta}}$ is an inner product on $\cM_1^{|\sI|}$.
The induced norm is denoted by $\|\cdot\|_{D, \widehat{\Delta}}$.
\end{definition}

Suppose $X, Y\in \cM_1^{|\sI|}$.
We denote by $X\perp_{D, \widehat{\Delta}} Y$ if $\langle X, Y \rangle_{D, \widehat{\Delta}}=0$.
We denote by $\langle \cdot, \cdot \rangle$ the usual inner product, i.e. $\displaystyle \langle X, Y\rangle =\sum_{j\in \sI}\tau_1(y_j^*x_j)$.
Hence we have that 
\begin{align*}
    \langle X, Y\rangle_{D, \widehat{\Delta}} =\langle \K_{D} X, Y\rangle=\langle X, \K_D Y\rangle.
\end{align*}

Suppose that $\Psi:\cM_1\to \cM_1$ is a trace-preserving completely positive map.
Denote by $\Psi^{|\sI|}: \cM_1^{|\sI|} \to \cM_1^{|\sI|}$ the map $(x_j)_{j \in \sI} \mapsto (\Psi(x_j))_{j \in \sI}$.
Then 
\begin{align}\label{eq:concavity}
   \left \langle \Psi^{|\sI|}( X),\K_{\Psi(D)}^{-1}\Psi^{|\sI|}(X)\right \rangle \leq \langle X, \K_D^{-1}(X) \rangle,
\end{align}
which follows from the Lieb's concavity theorem \cite{Lie73,Pet85} for the following map:
\begin{align*}
    (x, D) \mapsto \int_0^\infty \tau_1((s+\mu D)^{-1} x(s+\mu D)^{-1} x^*) ds.
\end{align*}

\begin{theorem}
Suppose the bimodule GNS symmetric quantum Markov semigroup $\{\Phi_t\}_{t\geq 0}$ is relatively ergodic,
$\{D_s\}_{s\in [-\epsilon, \epsilon]}$ is a continuous family of density operators in $\cM$ with $\bE_{\cN}(D_s)$ independent of $s$.
Then there exists $X=(x_j)_{j \in \sI}\in \cM_1^{|\sI|}$ such that
\begin{align}
    \dot{D}_0=\frac{1}{2}\Div \K_{D} (X).
\end{align}
If $X$ is minimal subject to $\|\cdot\|_{D, \widehat{\Delta}}$, then there exists a unique self-adjoint element $x\in \cM$ such that $X=\nabla x$ and $\bE_{\cN}(x)=0$.
\end{theorem}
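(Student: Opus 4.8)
The plan is to turn both assertions into an orthogonal–decomposition argument in the Hilbert space $\cM_1^{|\sI|}$ carrying the twisted inner product $\langle\cdot,\cdot\rangle_{D,\widehat\Delta}$. I would first prove existence of $X$: since $s\mapsto\bE_{\cN}(D_s)$ is constant, differentiating at $s=0$ gives $\bE_{\cN}(\dot D_0)=0$, so by Lemma \ref{lem:divrange} we have $2\dot D_0\in\Ran(\Div)$. The map $\K_D$ is positive and invertible on $\cM_1^{|\sI|}$ (its inverse being the explicit integral $\K_{D,\mu}^{-1}$ recorded above), hence $\Ran(\Div\,\K_D)=\Ran(\Div)$, and there is $X\in\cM_1^{|\sI|}$ with $2\dot D_0=\Div\,\K_D(X)$, i.e. $\dot D_0=\tfrac12\Div\,\K_D(X)$.

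Next I would analyse the minimal solution. The set of all such $X$ is a nonempty closed affine subspace $X_{0}+\Ker(\Div\,\K_D)$, so its element of least $\|\cdot\|_{D,\widehat\Delta}$–norm is the unique one lying in $\Ker(\Div\,\K_D)^{\perp_{D,\widehat\Delta}}$. I would identify this complement: writing $Y=\K_D^{-1}W$ with $W\in\Ker\Div$ and using $\langle A,B\rangle_{D,\widehat\Delta}=\langle\K_D A,B\rangle$ together with $\K_D^{*}=\K_D$ (which holds since $\K_{D,\mu}^{*}=\K_{D,\mu}$, the weight $\mu^{1-2s}$ being real), one obtains $\langle X,Y\rangle_{D,\widehat\Delta}=\langle X,W\rangle$; hence $\Ker(\Div\,\K_D)^{\perp_{D,\widehat\Delta}}=(\Ker\Div)^{\perp}$. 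Since $\Div=\nabla^{*}$ — a short computation using $\bE_{\cM}(ax)=\bE_{\cM}(a)x$ for $x\in\cM$ and trace invariance of $\bE_{\cM}$ — this complement equals $\overline{\Ran\nabla}$, so the minimal $X$ has the form $X=\nabla x$ for some $x\in\cM$.

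Finally I would pin down $x$. By relative ergodicity and Lemma \ref{lem:gnssupport} (the semigroup being bimodule GNS symmetric with respect to $\widehat\Delta$, as assumed throughout this section), $\Ker\nabla=\Ker\Gamma=\cN$; thus $x$ is determined up to an element of $\cN$, and the normalisation $\bE_{\cN}(x)=0$ makes it unique. For self–adjointness: $\dot D_0=\dot D_0^{*}$ because each $D_s$ is a density operator, and the conjugation identities $\overline{\partial_j^{\Delta}(\cdot)}=-\partial_{j^{*}}^{\Delta}\overline{(\cdot)}$ together with $\mu_j\mu_{j^{*}}=1$, $\omega_j=\omega_{j^{*}}$ and $\K_{D,j}^{*}=\K_{D,j}$ show that $\nabla(x^{*})$ is again a solution of $\dot D_0=\tfrac12\Div\,\K_D(\,\cdot\,)$ with $\|\nabla x^{*}\|_{D,\widehat\Delta}=\|\nabla x\|_{D,\widehat\Delta}$; minimality then forces $\nabla x=\nabla x^{*}$, whence $x-x^{*}\in\cN$, and since $\bE_{\cN}(x-x^{*})=0$ we conclude $x=x^{*}$.

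The step I expect to be the main obstacle is passing from $X\in\overline{\Ran\nabla}$ to $X=\nabla x$ with $x\in\cM$: this requires $\Ran\nabla$ to be closed (automatic when $\cM$ is finite dimensional, and in general amounting to a lower bound for $\nabla$ on $\cN^{\perp}$, equivalently a spectral gap for $\nabla^{*}\nabla=\sum_{j}\partial_j^{\Delta*}\partial_j^{\Delta}$ away from its kernel $\cN$) and some care to keep the minimiser inside $\cM_1^{|\sI|}$ rather than merely in $L^{2}$. The remaining ingredients — the adjointness $\Div=\nabla^{*}$, the self–adjointness of $\K_D$, and the conjugation symmetry of $\Div\,\K_D\nabla$ and of $\|\cdot\|_{D,\widehat\Delta}$ — are routine manipulations with identities already established in the excerpt.
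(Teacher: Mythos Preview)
Your proposal is correct and follows essentially the same route as the paper: both use $\bE_{\cN}(\dot D_0)=0$ together with Lemma~\ref{lem:divrange} and invertibility of $\K_D$ for existence, and both reduce minimality to the orthogonality $X\perp\Ker\Div$ (the paper does this by the variational perturbation $Y=X+r\K_D^{-1}A$, you by identifying $\Ker(\Div\K_D)^{\perp_{D,\widehat\Delta}}=(\Ker\Div)^\perp$), concluding $X\in\Ran\nabla$. The only genuine difference is in the self-adjointness step: the paper simply observes that $\nabla x_1^*$ is also a solution and replaces $x_1$ by $\tfrac12(x_1+x_1^*)-\bE_\cN(\tfrac12(x_1+x_1^*))$, whereas you argue via uniqueness of the minimizer together with the norm identity $\|\nabla x^*\|_{D,\widehat\Delta}=\|\nabla x\|_{D,\widehat\Delta}$; your argument is the more explicit one, and in fact the paper's averaging step tacitly relies on the same norm identity to ensure $\nabla x_2=X$. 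Your caveat about closedness of $\Ran\nabla$ is well taken and is a point the paper does not address either; it passes directly from $X\perp\Ker\Div$ to $X\in\Ran\nabla$ without comment.
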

\begin{proof}
Note that $\bE_{\cN}(\dot{D}_0)=0$.
By the relative ergodicity and Lemma \ref{lem:divrange}, there exists $X\in \cM_1^{|\sI|}$ such that $\displaystyle \dot{D}_0=\frac{1}{2}\Div \K_D (X)$.

Suppose $X\in \cM_1^{|\sI|}$ is minimal subject to the norm $\|\cdot\|_{D, \widehat{\Delta}}$ and $A\in \cM_1^{|\sI|}$ with $\Div A=0$.
Let $Y=X+r \K_D^{-1} A$, $r\in\bC$.
Then $\displaystyle \dot{D}_0=\frac{1}{2}\Div \K_D (Y)$ and $\langle X, X\rangle_{D, \widehat{\Delta}}\leq \langle Y, Y\rangle_{D, \widehat{\Delta}}$.
This implies that 
\begin{align*}
\langle X, \K_D^{-1} A\rangle_{D,\widehat{\Delta}}=0=\langle X, A\rangle .
\end{align*}
Hence $X\perp \Ker \Div$, i.e. $X\in \Ran (\nabla)$.
Therefore, there exists $x_1\in \cM$ such that $\nabla x_1=X$.
Note that the adjoint $x_1^*$ also satisfies $\displaystyle \dot{D}_0=\frac{1}{2}\Div \K_D (\nabla x_1^*)$.
Then $\displaystyle x_2=\frac{1}{2}(x_1+x_1^*)$ is also a solution.
Let $x=x_2-\bE_{\cN}(x_2)$.
We see that $x$ is a solution.

Suppose that $y$ is a solution such that $y=y^*$ and $\bE_{\cN}(y)=0$.
We have that $\nabla (x-y)=0$.
This implies that $x-y\in \cN$, i.e. $x=y$.
We see that $x$ is unique subject to $x=x^*$ and $\bE_{\cN}(x)=0$.
\end{proof}

Let $X_\Delta\in \cM$ such that
\begin{align*}
\left(\lambda^{-1}\bE_{\cM_1}((\partial_j^\Delta \log \widehat{\Delta} )e_2)\right)_{j \in \sI} =\nabla X_\Delta +(x_j)_{j\in \sI},
\end{align*}
where $(x_j)_{j\in \sI} \perp_{D, \widehat{\Delta}} \Ran(\nabla)$.
Note that 
\begin{align*}
\lambda^{-1}\bE_{\cM_1}((\partial_j^\Delta \log \widehat{\Delta} )e_2=(\log\mu_j )\mathfrak{F}^{-1}(p_j).
\end{align*}
By the fact that $(\log\mu_j )\mathfrak{F}^{-1}(p_j)^*=-\log \mu_{j^*}\mathfrak{F}^{-1}(p_{j^*})$.
We have that $X_\Delta$ can be chosen as a self-adjoint element and $\bE_{\cN}(X_{\Delta})=0$. 
Let $D_\Delta=e^{X_\Delta}$.
We shall call the element $D_{\Delta}$ as a \emph{hidden density}. 
Note that this hidden density depends on the strictly positive element $D$.

\begin{remark}
Suppose that $\widehat{\Delta}=  \vcenter{\hbox{\begin{tikzpicture}[scale=1.2]
        \draw [blue] (0, -0.5)--(0, 0.5) (0.5, -0.5)--(0.5, 0.5);
        \draw [fill=white] (-0.2, -0.2) rectangle (0.2, 0.2);
        \node at (0, 0) {\tiny $\overline{\Delta}$};
        \begin{scope}[shift={(0.5, 0)}]
         \draw [fill=white] (-0.2, -0.2) rectangle (0.2, 0.2);
        \node at (0, 0) {\tiny $\Delta^{-1}$};   
        \end{scope}
    \end{tikzpicture}}}$.
Then by Remark \ref{rem:hidden}, we see that $X_\Delta$ can be taken as $\log \Delta$ and $D_\Delta=\Delta$.
By Remark \ref{rem:gnslimit}, we have that $\overline{\bE_{\cM_1}\left( \overline{\widehat{\Delta}} \right)^{-1}}=\gamma_{1, +}^{-1}(\Delta)$.
However, $D_\Delta$ might not be taken as $\overline{F_\Phi}$ in general.
\end{remark}

\begin{proposition}
Suppose that $D$ is a strictly positive element in $\cM$.
We have that 
\begin{align*}
\cL^*(D)=  \frac{1}{2} \Div\K_D \left( \nabla \log D - \nabla \log D_\Delta\right).
\end{align*}
\end{proposition}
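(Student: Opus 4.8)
The plan is to combine two facts already established in the excerpt: Theorem~\ref{thm:adjointform}, which writes $\cL^*(D)$ in terms of the balanced derivations $\partial^\Delta$ and the operators $\widetilde{\K}_{D,j}$, and the definitions of the modified divergence $\Div$, gradient $\nabla$, the family $\K_D$, and the hidden density $D_\Delta$. First I would recall from Theorem~\ref{thm:adjointform} that
\begin{align*}
   \cL^*(D)= \frac{1}{2}\sum_{j\in \sI_0\cup \sI_1} \bE_{\cM}\left( \partial^\Delta \widetilde{\K}_{D,j}\left( \partial_j^\Delta(\log   D ) \right) \right),
\end{align*}
and then unfold the definition $\widetilde{\K}_{D,j}(x)=\K_{D,j}(x-\lambda^{-1}\bE_{\cM_1}((\partial_j^\Delta \log \widehat{\Delta})e_2))$. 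This turns the argument of $\K_{D,j}$ into $\partial_j^\Delta(\log D) - \lambda^{-1}\bE_{\cM_1}((\partial_j^\Delta \log\widehat{\Delta})e_2)$, so the sum becomes $\tfrac12\sum_j \bE_{\cM}(\partial^\Delta \K_{D,j}(\partial_j^\Delta\log D - \lambda^{-1}\bE_{\cM_1}((\partial_j^\Delta\log\widehat{\Delta})e_2)))$.

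The second step is to recognize this sum as $\tfrac12\Div\K_D(\nabla\log D - V)$, where $V=(\lambda^{-1}\bE_{\cM_1}((\partial_j^\Delta\log\widehat{\Delta})e_2))_{j\in\sI}$: indeed $\Div(x_j)_j = \sum_j \bE_{\cM}(\partial_j^{\Delta*}x_j)$ and here the outer $\partial^\Delta$ together with $\bE_{\cM}$ plays the role of $\partial_j^{\Delta*}$ componentwise, while $\K_D$ acts as $\K_{D,j}$ on the $j$-th slot and $\nabla\log D=(\partial_j^\Delta\log D)_j$. So we arrive at $\cL^*(D)=\tfrac12\Div\K_D(\nabla\log D - V)$. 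The remaining task is to replace $V$ by $\nabla\log D_\Delta$. By the construction of the hidden density, $V$ decomposes as $V=\nabla X_\Delta + (x_j)_{j\in\sI}$ with $(x_j)_{j\in\sI}\perp_{D,\widehat\Delta}\Ran(\nabla)$, and $D_\Delta=e^{X_\Delta}$, so $\nabla\log D_\Delta=\nabla X_\Delta$. Hence $\nabla\log D - V = \nabla\log D - \nabla\log D_\Delta - (x_j)_j$.

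The key point that finishes the proof is that the orthogonal-complement piece $(x_j)_{j\in\sI}$ does not contribute after applying $\Div\K_D$. This is because $(x_j)_j\perp_{D,\widehat\Delta}\Ran(\nabla)$ means, by the definition of $\langle\cdot,\cdot\rangle_{D,\widehat\Delta}=\langle\K_D\,\cdot,\cdot\rangle$, that $\K_D(x_j)_j \perp \Ran(\nabla)$ in the usual inner product; equivalently $\K_D(x_j)_j \in (\Ran\nabla)^\perp = \Ker\Div$ (using $\Div$ as the adjoint of $\nabla$, as in Lemma~\ref{lem:divrange}). Therefore $\Div\K_D(x_j)_j=0$, and
\begin{align*}
\cL^*(D)=\frac{1}{2}\Div\K_D\left(\nabla\log D - \nabla\log D_\Delta\right).
\end{align*}
The main obstacle I anticipate is bookkeeping rather than conceptual: making sure the outer $\partial^\Delta$ composed with $\bE_{\cM}$ in the expansion of Theorem~\ref{thm:adjointform} genuinely matches the componentwise $\partial_j^{\Delta*}$ in the definition of $\Div$ (i.e., that $\bE_{\cM}\circ\partial^\Delta$ restricted to the $j$-th component equals $\bE_{\cM}\circ\partial_j^{\Delta*}$, which uses $\partial^\Delta=\sum_j\partial_j^\Delta$ and the orthogonality $\partial_j^{\Delta*}\partial_k^\Delta=0$ for $j\neq k$ from the analogue of the earlier lemma on $\partial_j^*\partial_k$), and verifying that the adjointness $\Div=\nabla^*$ holds with respect to the plain inner product so that $(\Ran\nabla)^\perp=\Ker\Div$. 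Once those identifications are in place the computation is a short substitution.
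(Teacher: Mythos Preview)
Your proposal follows exactly the paper's route: start from Theorem~\ref{thm:adjointform}, unfold $\widetilde{\K}_{D,j}$, rewrite the sum as $\tfrac12\Div\K_D(\nabla\log D - V)$ with $V=(\lambda^{-1}\bE_{\cM_1}((\partial_j^\Delta\log\widehat\Delta)e_2))_j$, and then replace $V$ by $\nabla\log D_\Delta$. The paper's own proof is in fact terser than yours---it silently performs the identification you flag as bookkeeping and passes directly from $V$ to $\nabla\log D_\Delta$ without comment---so your explicit argument that the orthogonal remainder $(x_j)_j$ satisfies $\K_D(x_j)_j\in(\Ran\nabla)^\perp=\Ker\Div$ (using $\langle\cdot,\cdot\rangle_{D,\widehat\Delta}=\langle\K_D\,\cdot,\cdot\rangle$ and $\Div=\nabla^*$) is a genuine completion of a step the paper leaves implicit.
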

\begin{proof}
We have that 
    \begin{align*}
\cL^*(D)
= & \frac{\lambda^{-1/2}}{2}\sum_{j\in \sI_0\cup \sI_1} \bE_{\cM}\left( \partial^\Delta \widetilde{\K}_{D,j}\left( \partial_j^\Delta(\log   D ) \right) \right)\\
=&\frac{\lambda^{-1/4}}{2} \Div\left (\widetilde{\K}_{D,j} ( \partial_j^\Delta(\log   D ))\right)_{j \in \sI}\\
=& \frac{\lambda^{-1/4}}{2} \Div\K_D \left( ( \partial_j^\Delta(\log   D ))_{j\in \sI} - \left(\lambda^{-1}\bE_{\cM_1}((\partial_j^\Delta \log \widehat{\Delta} )e_2)\right)_{j \in \sI} \right)\\
=&  \frac{1}{2} \Div\K_D \left( \nabla \log D - \nabla \log D_\Delta\right).
    \end{align*}
    This completes the computation.
\end{proof}



\subsection{Relative Entropy}

In this section, we study the relative entropy functional with respect to the hidden density. 
For a strictly positive density operator $D\in \cM_+$, we define 
\begin{align*}
    \cM_{+}(D) = \{\rho\in \cM_+\vert \bE_{\cN}(\rho) = \bE_{\cN}(D)\}.
\end{align*}
\begin{definition}[Riemannian Metric]
Let $D$ be a strictly positive density operator in $\cM_+$, and suppose $\{D_s\}_{s\in (a,b)}$ is a continuous path in $\cM_+(D)$ with $D_0 = D$, where $a<0< b$. 
Define the Riemannian metric $g_{\cL}$ on $\cM_+$ with respect to $\{\Phi_t\}_{t\geq 0}$ is defined as 
\begin{align*}
    \|\dot{D}_0\|_{g_{\cL}}=\min\left\{\|X\|_{D, \widehat{\Delta}}: \dot{D}_0=\frac{1}{2}\Div \K_{D} (X)\right\}.
\end{align*}
\end{definition}

Suppose $f:\cM_+\to \bR$ is a differentiable function.
For any self-adjoint $x\in \cM$ with $\tau(x)=0$, there exists $\displaystyle \frac{df}{dD}\in \cM$ such that 
\begin{align*}
    \lim_{s\to 0} \frac{f(D+sx)-f(D)}{s} =\tau\left(\frac{df}{dD} x\right).
\end{align*}

\begin{definition}[Gradient Vector Field]
Suppose $f$ is a differentiable function on $\cM_+$.
The gradient vector field $\grad_{g, D} f\in \{\nabla x: x=x^*\in \cM, \bE_{\cN}(x)=0\}$ is defined as
\begin{align*}
    \left.\frac{d}{ds}f(D_s)\right|_{s=0}= \frac{1}{2}\left\langle \grad_{g, D} f, \nabla x\right\rangle_{D, \widehat{\Delta}},
\end{align*}
where $\displaystyle \dot{D}_0=\frac{1}{2}\Div\K_D(\nabla x)$ and $D_0=D$.
Note that the gradient vector field is unique.
\end{definition}

\begin{lemma}\label{lem:vecterfieldform}
Suppose $f$ is a differentiable function. 
Then 
\begin{align*}
   \grad_{g, D}f= \nabla\frac{df}{dD}.
\end{align*}
and 
\begin{align*}
    \| \grad_{g, D}f\|_{D, \widehat{\Delta}}^2= \left\langle \Div \K_D \nabla \frac{df}{dD}, \frac{df}{dD}\right\rangle.
\end{align*}
\end{lemma}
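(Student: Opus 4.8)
The plan is to prove the two formulae in Lemma~\ref{lem:vecterfieldform} by unwinding the definitions of the gradient vector field and the Riemannian metric, using the self-adjointness of $\K_D$ with respect to the inner product $\langle\cdot,\cdot\rangle$ and the adjoint relation between $\nabla$ and $\Div$. First I would fix a self-adjoint $x\in\cM$ with $\bE_{\cN}(x)=0$ and set $\dot D_0=\frac12\Div\K_D(\nabla x)$, so that $D_s$ is a path through $D$ with $\bE_{\cN}(D_s)$ independent of $s$ (this uses that $\Ran(\Div)=\{y:\bE_{\cN}(y)=0\}$ from Lemma~\ref{lem:divrange} and that $\cM\subset\Ker\partial_j^{\Delta*}$, so $\bE_{\cN}(\Div\K_D(\nabla x))=0$). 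By the definition of $\dfrac{df}{dD}$,
\begin{align*}
\left.\frac{d}{ds}f(D_s)\right|_{s=0}
=\tau\!\left(\frac{df}{dD}\,\dot D_0\right)
=\frac12\,\tau\!\left(\frac{df}{dD}\,\Div\K_D(\nabla x)\right).
\end{align*}
Then I would move $\Div$ across to $\nabla$ via the adjoint relation $\tau\big(y^*\Div(Z)\big)=\langle\nabla y, Z\rangle$ (which is exactly the defining property of $\partial_j^{\Delta*}$ summed over $j$, together with $\Div(Z)=\sum_j\bE_{\cM}(\partial_j^{\Delta*}z_j)$), obtaining
\begin{align*}
\left.\frac{d}{ds}f(D_s)\right|_{s=0}
=\frac12\left\langle \nabla\frac{df}{dD},\ \K_D(\nabla x)\right\rangle
=\frac12\left\langle \nabla\frac{df}{dD},\ \nabla x\right\rangle_{D,\widehat\Delta},
\end{align*}
using $\langle X,\K_D Y\rangle=\langle X,Y\rangle_{D,\widehat\Delta}$. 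Comparing with the definition of $\grad_{g,D}f$ and invoking the stated uniqueness of the gradient vector field gives $\grad_{g,D}f=\nabla\frac{df}{dD}$.

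For the norm identity, I would plug $X=\grad_{g,D}f=\nabla\frac{df}{dD}$ into the definition of the Riemannian metric. By the first part, the tangent vector realised by this $X$ is $\dot D_0=\frac12\Div\K_D\big(\nabla\frac{df}{dD}\big)$, and one must check that $\nabla\frac{df}{dD}$ is the minimal element of $\{X:\dot D_0=\frac12\Div\K_D(X)\}$ with respect to $\|\cdot\|_{D,\widehat\Delta}$; this follows because $\Ran(\nabla)$ is the orthogonal complement (for $\langle\cdot,\cdot\rangle$, equivalently for $\langle\cdot,\cdot\rangle_{D,\widehat\Delta}$ after applying $\K_D^{-1}$) of $\Ker\Div$, exactly as in the proof of the preceding theorem on gradient flows. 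Hence
\begin{align*}
\|\grad_{g,D}f\|_{D,\widehat\Delta}^2
=\left\langle \nabla\frac{df}{dD},\ \nabla\frac{df}{dD}\right\rangle_{D,\widehat\Delta}
=\left\langle \K_D\nabla\frac{df}{dD},\ \nabla\frac{df}{dD}\right\rangle
=\left\langle \Div\K_D\nabla\frac{df}{dD},\ \frac{df}{dD}\right\rangle,
\end{align*}
where the last equality again uses the $\nabla$--$\Div$ adjunction. This is the claimed formula.

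The main obstacle I expect is bookkeeping rather than conceptual: making sure the adjoint relation $\tau(y^*\Div Z)=\langle\nabla y,Z\rangle$ holds in the precise form needed (it is the aggregate of the earlier identities $\tau((\partial_j^\Delta y)^* z)=\tau(y^*\bE_{\cM}(\partial_j^{\Delta*}z))$ over $j\in\sI$, valid for $y\in\cM$ and $z\in\cM_1$), and tracking the $\frac12$ factors consistently with the conventions fixed in the definitions of $g_{\cL}$ and $\grad_{g,D}f$. A secondary point to handle carefully is self-adjointness of $\frac{df}{dD}$ and the constraint $\bE_{\cN}\big(\frac{df}{dD}\big)=0$: the differential of $f$ is only tested against self-adjoint traceless directions, so $\frac{df}{dD}$ is only determined modulo $\cN$, and one should either normalise it by $\bE_{\cN}\big(\frac{df}{dD}\big)=0$ or note that $\nabla$ kills $\cN$ so that $\nabla\frac{df}{dD}$ is well-defined regardless; I would adopt the latter observation to keep the statement clean.
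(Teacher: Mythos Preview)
Your proposal is correct and follows essentially the same line as the paper: write $\dot D_0=\tfrac12\Div\K_D(\nabla x)$, compute $\left.\tfrac{d}{ds}f(D_s)\right|_{s=0}=\tau\big(\tfrac{df}{dD}\,\dot D_0\big)$, move $\Div$ across via the $\nabla$--$\Div$ adjunction, and identify the result with $\tfrac12\langle\nabla\tfrac{df}{dD},\nabla x\rangle_{D,\widehat\Delta}$. Your treatment is in fact more complete than the paper's, which stops after the first identity; you also supply the derivation of the norm formula and address the normalisation issue for $\tfrac{df}{dD}$ modulo $\cN$, which the paper leaves implicit.
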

\begin{proof}
For any differentiable path $\{D_s\}_s$ of density operators, by definition 
\begin{align*}
 \left.\frac{d}{ds}f(D_s)\right|_{s=0}
 =& \left\langle \frac{df}{dD},  \dot{D}_0\right\rangle \\
=& \frac{1}{2}\left\langle \frac{df}{dD},  \Div\K_D (\nabla x )\right\rangle \\
=&\frac{1}{2}\left\langle \nabla\frac{df}{dD}, \K_D (\nabla x )\right\rangle\\
=& \frac{1}{2}\left\langle \nabla\frac{df}{dD},  \nabla x \right\rangle_{D, \widehat{\Delta}}
\end{align*}
This implies that $\displaystyle  \grad_{g, D}f= \nabla\frac{df}{dD}$.
\end{proof}


Let $f(\rho)=H(\rho\| D_{\Delta})$ be the relative entropy functional on $\cM_{+}(D)$. 
Then a differentiable path $\{D_s\}_s$ is a gradient flow (See \cite{Maa11}) for $f$ if $\displaystyle \dot{D}_s=\frac{1}{2}\Div\K_D \left(\nabla\frac{df}{dD_s} \right)$. 
Then we have the following theorem:

\begin{theorem}
Suppose that $\{\Phi_t\}_{t\geq 0}$ is a bimodule quantum Markov semigroup bimodule GNS symmetric with respect to $\widehat{\Delta}$.
Suppose $\{D_s\}_{s\in (-\epsilon, \epsilon)}$ is a differentiable path of density operators in $\cM_{+}(D)$ satisfying
\begin{align*}
    D_0 = D,\quad \dot{D}_s=\cL^*(D_s) \quad \forall s\geq0. 
\end{align*}
Then it is the gradient flow for the relative entropy functional $f(\rho) = H(\rho\| D_{\Delta})$.
\end{theorem}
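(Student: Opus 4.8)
The plan is to match the evolution equation $\dot D_s = \cL^*(D_s)$ with the abstract gradient flow equation $\dot D_s = \tfrac12\Div\K_{D_s}\bigl(\grad_{g,D_s}f\bigr)$ for $f(D) = H(D\|D_\Delta)$. By the preceding proposition we already know $\cL^*(D) = \tfrac12\Div\K_{D}\bigl(\nabla\log D - \nabla\log D_\Delta\bigr)$ for strictly positive $D$, and by Lemma~\ref{lem:vecterfieldform} we have $\grad_{g,D}f = \nabla\frac{df}{dD}$. So everything reduces to the single identity $\nabla\frac{df}{dD} = \nabla\log D - \nabla\log D_\Delta$, i.e.\ to identifying the differential of the relative entropy.

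First I would record the standing positivity facts: $D_\Delta = e^{X_\Delta}$ is strictly positive by construction, and the $D_s$ are (strictly positive) density operators, so $\log D_s$, $\log D_\Delta$ and the maps $\K_{D_s}$, $\widetilde{\K}_{D_s,j}$ are all well defined along the path, and the preceding proposition applies at each $s$. I would also note that a path with $\bE_{\cN}(D_s)$ independent of $s$ has $\bE_{\cN}(\dot D_s) = 0$, hence $\tau(\dot D_s) = 0$; this is the regime in which $\frac{df}{dD}$ is defined, and it is also consistent with $\cL^*$ since $\cN\subseteq\Ker\cL$ forces $\bE_{\cN}(\cL^*(D)) = 0$.

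Next I would compute $\frac{df}{dD}$ for $f(D) = H(D\|D_\Delta) = \tau(D\log D) - \tau(D\log D_\Delta)$. Using the first-order expansion $\frac{d}{ds}\tau\bigl(g(D+sx)\bigr)\big|_{s=0} = \tau\bigl(g'(D)x\bigr)$ with $g(t) = t\log t$ and any self-adjoint $x$ with $\tau(x) = 0$, the first term contributes $\tau\bigl((\log D + 1)x\bigr) = \tau\bigl((\log D)x\bigr)$ and the second $-\tau\bigl((\log D_\Delta)x\bigr)$, so one may take $\frac{df}{dD} = \log D - \log D_\Delta$ (the additive constant being irrelevant since it is paired only against traceless elements). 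Because $\{\Phi_t\}_{t\geq0}$ is relatively ergodic we have $\Ker\nabla = \cN \ni 1$, so $\nabla\frac{df}{dD} = \nabla\log D - \nabla\log D_\Delta$ independently of the chosen representative. Hence $\grad_{g,D}f = \nabla\log D - \nabla\log D_\Delta$, and combining with the formula for $\cL^*$ yields
\begin{align*}
\dot D_s = \cL^*(D_s) = \tfrac12\Div\K_{D_s}\bigl(\nabla\log D_s - \nabla\log D_\Delta\bigr) = \tfrac12\Div\K_{D_s}\bigl(\grad_{g,D_s}f\bigr),
\end{align*}
which is precisely the gradient flow equation for $H(\,\cdot\,\|D_\Delta)$, as recalled just before the theorem.

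The only genuinely delicate point is the differentiation of the entropy functional — justifying the operator first-order formula $\frac{d}{ds}\tau(g(D+sx))|_{s=0} = \tau(g'(D)x)$ in the finite von Neumann algebra setting and keeping track of the additive-constant ambiguity in $\frac{df}{dD}$; once that is in place, the theorem is a direct substitution into the already-established formula for $\cL^*$ and Lemma~\ref{lem:vecterfieldform}.
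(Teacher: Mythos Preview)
Your proposal is correct and follows essentially the same route as the paper: compute $\frac{df}{dD}=\log D-\log D_\Delta$ for $f=H(\cdot\|D_\Delta)$, then combine the proposition $\cL^*(D)=\tfrac12\Div\K_D(\nabla\log D-\nabla\log D_\Delta)$ with $\grad_{g,D}f=\nabla\frac{df}{dD}$. The only cosmetic differences are that the paper handles the derivative of $\tau(D\log D)$ via the integral representation $\int_0^\infty \tau\bigl(D(D+r)^{-2}x\bigr)\,dr=\tau(x)=0$ rather than your $g'(t)=\log t+1$ argument, and that your appeal to relative ergodicity for $\nabla 1=0$ is unnecessary since $1\in\cN\subseteq\Ker\Gamma$ already.
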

\begin{proof}
For any $x\in \cM$ with $\bE_{\cN}(x)=0$, we have 
\begin{align*}
    \left.\frac{d}{ds}f(D+tx)\right|_{s=0}=& \tau(x(\log D-\log D_\Delta))+\int_0^\infty \tau\left(\frac{D}{(D+r)^2}x\right)dr\\
    =& \tau(x(\log D-\log D_\Delta)).
\end{align*}
Hence $\displaystyle \frac{df}{dD}=\log D-\log D_{\Delta}$.
On the other hand, we have that 
\begin{align*}
 \cL^*(D_s) =&  \frac{1}{2} \Div\K_D \left( \nabla \log D_s - \nabla \log D_\Delta\right)\\
 =&  \frac{1}{2} \Div\K_D \left( \nabla \frac{df}{dD_s}\right).
\end{align*}
This completes the proof of the theorem.
\end{proof}

\begin{corollary}\label{cor:diffgrad}
Suppose that $\{D_s\}_{s\in (-\epsilon, \epsilon)}$ is a differentiable path of density operators such that $\bE_{\cN}(D_s)$ is independent of $s$.
Then 
\begin{align*}
   \left \|\frac{d}{ds} D_s\right\|_{g_{\cL}} = \|\grad_{g, D} f\|_{D, \widehat{\Delta}}^2,
\end{align*}
where $f(D)=H(D\| D_{\Delta})$.
\end{corollary}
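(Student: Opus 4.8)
The plan is to derive the identity from the gradient-flow characterization established in the preceding proposition and theorem, the definition of the Riemannian metric $g_{\cL}$, and Lemma~\ref{lem:vecterfieldform}; as in that theorem I take the path to satisfy $\dot{D}_s=\cL^*(D_s)$, so that $\{D_s\}$ is the gradient flow of $f(D)=H(D\Vert D_{\Delta})$.

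First I would substitute the formula $\cL^*(D)=\tfrac{1}{2}\Div\K_D(\nabla\log D-\nabla\log D_{\Delta})$ and recall that $\tfrac{df}{dD}=\log D-\log D_{\Delta}$, so that along the flow $\dot{D}_s=\tfrac{1}{2}\Div\K_D(\nabla\tfrac{df}{dD})$. By Lemma~\ref{lem:vecterfieldform} we have $\grad_{g,D}f=\nabla\tfrac{df}{dD}$, hence $X:=\grad_{g,D}f$ is an admissible competitor in the minimization defining $\Vert\dot{D}_s\Vert_{g_{\cL}}$, which already gives the inequality $\Vert\dot{D}_s\Vert_{g_{\cL}}\le\Vert\grad_{g,D}f\Vert_{D,\widehat{\Delta}}$.

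The reverse inequality amounts to showing that this competitor is the $\Vert\cdot\Vert_{D,\widehat{\Delta}}$-minimal one. Here I would reuse the orthogonality $\Ran(\nabla)=(\Ker\Div)^{\perp}$ with respect to the untwisted pairing $\langle\cdot,\cdot\rangle$ (the same fact underlying Lemma~\ref{lem:divrange} and the proof of the existence theorem). Any other admissible $X'$ has the form $X'=X+\K_D^{-1}A$ with $\Div A=0$; since $\K_D$ is self-adjoint for $\langle\cdot,\cdot\rangle$ and $X\in\Ran(\nabla)$ is orthogonal to $A\in\Ker\Div$, one gets $\langle\K_D X,\K_D^{-1}A\rangle=\langle X,A\rangle=0$, whence $\Vert X'\Vert_{D,\widehat{\Delta}}^2=\Vert X\Vert_{D,\widehat{\Delta}}^2+\Vert\K_D^{-1}A\Vert_{D,\widehat{\Delta}}^2\ge\Vert X\Vert_{D,\widehat{\Delta}}^2$. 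Thus $\Vert\dot{D}_s\Vert_{g_{\cL}}=\Vert\grad_{g,D}f\Vert_{D,\widehat{\Delta}}$, and squaring together with the identity $\Vert\grad_{g,D}f\Vert_{D,\widehat{\Delta}}^2=\langle\Div\K_D\nabla\tfrac{df}{dD},\tfrac{df}{dD}\rangle$ from Lemma~\ref{lem:vecterfieldform} yields the claim.

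The main obstacle is this minimality step: one must make sure that the visibly admissible vector field $\nabla\tfrac{df}{dD}$ is genuinely the minimizer, which rests on the Hodge-type decomposition $\cM_1^{|\sI|}=\Ran(\nabla)\oplus\Ker\Div$ and on the positive-definiteness of $\langle\cdot,\cdot\rangle_{D,\widehat{\Delta}}$ (ultimately Lieb concavity, already recorded around \eqref{eq:concavity}); granting these, $\Div\K_D$ is injective on $\Ran(\nabla)$, so the minimizer is unique and equals $\grad_{g,D}f$. Everything else is direct substitution.
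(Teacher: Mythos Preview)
Your proposal is correct and follows essentially the same route as the paper. The paper's proof is terser: it writes the definition of $\|\dot D_s\|_{g_{\cL}}$ as a minimum and then simply asserts that the minimum equals $\|\nabla\tfrac{df}{dD_s}\|_{D,\widehat{\Delta}}=\|\grad_{g,D}f\|_{D,\widehat{\Delta}}$, implicitly invoking the earlier uniqueness theorem for the minimizer together with the gradient-flow proposition, whereas you spell out that minimality step via the orthogonality $\Ran(\nabla)=(\Ker\Div)^{\perp}$; both arguments also tacitly read the corollary as applying to the gradient-flow path $\dot D_s=\cL^*(D_s)$.
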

\begin{proof}
We have that 
\begin{align*}
  \left \|\frac{d}{ds} D_s\right\|_{g_{\cL}}
  = & \min\left\{\|X\|_{D_s, \widehat{\Delta}}: \frac{d}{ds}D_s=\frac{1}{2}\Div \K_{D_s} (X).\right\}\\
  =& \left\| \nabla \frac{df}{dD_s}\right\|_{D, \widehat{\Delta}}\\
  =&\|\grad_{g, D} f\|_{D, \widehat{\Delta}}.
\end{align*}
This completes the computation.
\end{proof}

Now we shall show the bimodule version of the modified logarithm Sobolev inequality and Talagrand inequality under the intertwining property introduced by Carlen and Maas\cite{CarMaa17}. 
\begin{definition}[Intertwining Property]
Suppose that $\{\Phi_t\}_{t\geq 0}$ is a bimodule quantum Markov semigroup bimodule GNS symmetric with respect to $\widehat{\Delta}$. 
We say that $\{\Phi_t\}_{t\geq 0}$ satisfies the intertwining property if there exist $\beta>0$ and a trace-preserving completely positive map $\widetilde{\Phi_t}^*: \cM_1\to \cM_1$ such that $\widetilde{\Phi_t}^*|_{\cM}=\Phi^*_t$ and
\begin{align*}
   \Phi_t^* \Div = e^{-\beta t}\Div \widetilde{\Phi_t}^*, \quad  t\geq 0.
\end{align*}   
\end{definition}

\begin{theorem}[Bimodule Logarithmic Sobolev Inequality]\label{thm:entropydecay}
Suppose that $\{\Phi_t\}_{t\geq 0}$ is a bimodule quantum Markov semigroup bimodule GNS symmetric with respect to $\widehat{\Delta}$ and satisfies the intertwining property.    
Then for any strictly positive density operator $D\in \cM$, we have that 
\begin{equation}\label{eq:entropy1}
\begin{aligned}
 & H(\Phi_t^*(D)\|D_\Delta) -H(\bE_{\cN}(D)\overline{F_\Phi} \|D_{\Delta}) \\
 \leq &  e^{-2\beta t}\left( H(D\| D_{\Delta})  - H(\bE_{\cN}(D)\overline{F_\Phi} \|D_{\Delta})\right). 
\end{aligned}
\end{equation}
Furthermore, we have that 
\begin{align}\label{eq:logsob}
H(D\|D_{\Delta})- H(\bE_{\cN}(D)\overline{F_\Phi} \|D_{\Delta}) \leq \frac{1}{2\beta} \tau(\cL^*(D)( \log D-\log D_{\Delta})).    
\end{align}
This is called the generalized logarithmic Sobolev inequality.
\end{theorem}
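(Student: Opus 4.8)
The plan is to run the entropy--information--integration scheme: establish a de Bruijn identity for the relative entropy along the flow, prove exponential contraction of the associated information functional from the intertwining hypothesis together with the Lieb concavity inequality \eqref{eq:concavity}, and conclude by a Grönwall argument which also yields \eqref{eq:logsob} as the $t=0$ boundary case.

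First I would set up notation and the de Bruijn identity. Put $D_t=\Phi_t^*(D)$. Since $\cN\subseteq\fix(\Phi_t)$ and $\Phi_t$ is unital, $\Phi_t^*$ preserves $\tau$ and $\bE_{\cN}(D_t)=\bE_{\cN}(D)$ for every $t$; moreover the divergence representation $\cL^*(\cdot)=\tfrac12\Div\K_{(\cdot)}\!\left(\nabla\log(\cdot)-\nabla\log D_\Delta\right)$ shows $D_\Delta\in\Ker\cL^*$, so $\Phi_t^*(D_\Delta)=D_\Delta$. Define the information functional
\[
\mathcal I(D')=\tfrac12\bigl\|\nabla\log D'-\nabla\log D_\Delta\bigr\|_{D',\widehat\Delta}^2 .
\]
Using the divergence representation, that $\Div$ is the adjoint of $\nabla$, and $\langle X,Y\rangle_{D',\widehat\Delta}=\langle\K_{D'}X,Y\rangle$, one gets $\mathcal I(D')=\tau\!\bigl(\cL^*(D')(\log D'-\log D_\Delta)\bigr)$ (this is the content of Lemma~\ref{lem:vecterfieldform} and Corollary~\ref{cor:diffgrad}). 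Differentiating $H(D_t\|D_\Delta)=\tau(D_t\log D_t)-\tau(D_t\log D_\Delta)$, using that $\tau(D_t\,\tfrac{d}{dt}\log D_t)=\tau(\dot D_t)=-\tau(\cL^*(D_t))=-\tau(D_t\cL(1))=0$, gives
\[
\frac{d}{dt}H(D_t\|D_\Delta)=-\tau\!\bigl(\cL^*(D_t)(\log D_t-\log D_\Delta)\bigr)=-\mathcal I(D_t)\le 0 .
\]
By Theorem~\ref{thm:gnslimit} the limit $\lim_{t\to\infty}D_t$ exists, and by Corollary~\ref{cor:gnslimitdual} it equals $\bE_{\cN}(D)\overline{F_\Phi}$; writing $c:=H(\bE_{\cN}(D)\overline{F_\Phi}\|D_\Delta)$ and $g(t):=H(D_t\|D_\Delta)-c$, integration gives $g(t)=\int_t^\infty\mathcal I(D_s)\,ds$ for all $t\ge 0$.

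Next I would prove the contraction $\mathcal I(\Phi_r^*D')\le e^{-2\beta r}\mathcal I(D')$ for every density $D'$ and $r\ge 0$. Write $\mathcal I(D')=\tfrac12\langle W_{D'},\K_{D'}^{-1}W_{D'}\rangle$ with $W_{D'}:=\K_{D'}\!\bigl(\nabla\log D'-\nabla\log D_\Delta\bigr)$; by Lemmas~\ref{lem:derivation}--\ref{lem:derivation0} and the computation inside the proof of Theorem~\ref{thm:adjointform}, $W_{D'}$ is (modulo an element of $\Ker\Div$, harmless because $\cL^*=\tfrac12\Div W_{(\cdot)}$) the expression \emph{linear} in $D'$ with $j$-th component $\omega_j^{1/2}\bigl(\mu_j^{-1/2}D'\,\mathfrak F^{-1}(p_j)-\mu_j^{1/2}\mathfrak F^{-1}(p_j)\,D'\bigr)$. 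Applying the intertwining hypothesis $\Phi_t^*\Div=e^{-\beta t}\Div\,\widetilde{\Phi_t}^*$ to $W_{D'}$ and using $\cL^*\Phi_t^*=\Phi_t^*\cL^*$ yields $\Div\bigl(W_{\Phi_t^*D'}-e^{-\beta t}(\widetilde{\Phi_t}^*)^{|\sI|}W_{D'}\bigr)=0$, and — after identifying $W_{D'}$ with its correct representative, see the obstacle below — the covariance $W_{\Phi_t^*D'}=e^{-\beta t}(\widetilde{\Phi_t}^*)^{|\sI|}W_{D'}$. Since $\widetilde{\Phi_t}^*=\Phi_t^*\circ\bE_{\cM}$ is completely positive and $\tau_1$-preserving on $\cM_1$ (trace preservation uses $\Phi_t(1)=1$) and $\widetilde{\Phi_t}^*(D')=\Phi_t^*(D')$ for $D'\in\cM$, the Lieb concavity inequality \eqref{eq:concavity} with $\Psi=\widetilde{\Phi_t}^*$ and $X=W_{D'}$ gives
\[
\mathcal I(\Phi_t^*D')=\frac{e^{-2\beta t}}{2}\Bigl\langle(\widetilde{\Phi_t}^*)^{|\sI|}W_{D'},\,\K_{\Phi_t^*D'}^{-1}(\widetilde{\Phi_t}^*)^{|\sI|}W_{D'}\Bigr\rangle\le\frac{e^{-2\beta t}}{2}\bigl\langle W_{D'},\K_{D'}^{-1}W_{D'}\bigr\rangle=e^{-2\beta t}\mathcal I(D').
\]

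Finally I would combine the two. Applying the contraction with base point $D_t$ and exponent $s-t$ gives $\mathcal I(D_s)=\mathcal I(\Phi_{s-t}^*D_t)\le e^{-2\beta(s-t)}\mathcal I(D_t)$ for $s\ge t$, so
\[
g(t)=\int_t^\infty\mathcal I(D_s)\,ds\le\mathcal I(D_t)\int_t^\infty e^{-2\beta(s-t)}\,ds=\frac{1}{2\beta}\mathcal I(D_t)=-\frac{1}{2\beta}\,g'(t).
\]
Thus $\bigl(e^{2\beta t}g(t)\bigr)'=e^{2\beta t}\bigl(2\beta g(t)+g'(t)\bigr)\le 0$, so $e^{2\beta t}g(t)\le g(0)$, which is exactly \eqref{eq:entropy1}. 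Specializing $2\beta g(t)\le-g'(t)=\mathcal I(D_t)$ to $t=0$ and recalling $\mathcal I(D)=\tau\!\bigl(\cL^*(D)(\log D-\log D_\Delta)\bigr)$ gives \eqref{eq:logsob}. The main obstacle is the covariance identity $W_{\Phi_t^*D'}=e^{-\beta t}(\widetilde{\Phi_t}^*)^{|\sI|}W_{D'}$: the intertwining is assumed only through $\Div$, which has a kernel, so one must verify that the linearization of $\K_{D'}(\nabla\log D'-\nabla\log D_\Delta)$ supplied by Lemmas~\ref{lem:derivation}--\ref{lem:derivation0} is precisely the representative transforming covariantly under $\widetilde{\Phi_t}^*$ — a bookkeeping computation with the operators $\K_{D',j}$, $\widetilde{\K}_{D',j}$ and the spectral data $\{p_j,\omega_j,\mu_j\}_{j\in\sI}$, together with checking the hypotheses of \eqref{eq:concavity}. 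Everything else is the elementary Grönwall step above.
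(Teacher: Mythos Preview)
Your overall architecture---de Bruijn identity, exponential contraction of the information via intertwining and Lieb concavity, then Gr\"onwall---is exactly what the paper does. The one point where you stall is also the one point that matters, and the fix is simpler than you suggest.

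The covariance $W_{\Phi_t^*D'}=e^{-\beta t}(\widetilde{\Phi_t}^*)^{|\sI|}W_{D'}$ need not hold as an equality; intertwining only gives it modulo $\Ker\Div$, and the ``bookkeeping computation'' you propose will not close that gap in general. The paper never claims equality. Instead it uses that $\mathcal I(D')$ equals the Riemannian speed $\|\cL^*(D')\|_{g_{\cL}}^2$ up to a constant (this is Corollary~\ref{cor:diffgrad}), and the Riemannian norm is by \emph{definition} a minimum over all $X$ with $\cL^*(D')=\tfrac12\Div\K_{D'}X$. Intertwining shows that $e^{-\beta t}\K_{\Phi_t^*D'}^{-1}\widetilde{\Phi_t}^*W_{D'}$ is one admissible competitor for $\|\cL^*(\Phi_t^*D')\|_{g_{\cL}}$, hence
\[
2\,\mathcal I(\Phi_t^*D')\;\le\;e^{-2\beta t}\bigl\langle \widetilde{\Phi_t}^*W_{D'},\,\K_{\Phi_t^*D'}^{-1}\widetilde{\Phi_t}^*W_{D'}\bigr\rangle\;\le\;e^{-2\beta t}\bigl\langle W_{D'},\K_{D'}^{-1}W_{D'}\bigr\rangle\;=\;2e^{-2\beta t}\mathcal I(D'),
\]
the second inequality being Lieb concavity \eqref{eq:concavity} with $\Psi=\widetilde{\Phi_t}^*$. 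So replace your equality by this inequality and the rest of your argument goes through verbatim; the minimum property absorbs exactly the $\Ker\Div$ ambiguity you worried about.
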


\begin{proof}
Suppose that $X(s)$ is the solution of $\displaystyle \frac{d}{ds} D_s=\frac{1}{2}\Div \K_D X(s)$ minimizing $\|X(s)\|_{D, \widehat{\Delta}}$, where $\{D_s\}_s$ is a differentiable path in $\cM_+(D)$ such that $D_0=D$. 
We have that 
\begin{align*}
    \frac{d}{ds} \Phi_t^*(D_s) 
    = &\frac{1}{2} \Phi_t^*\Div \K_D X(s)
    = \frac{1}{2}e^{-\beta t}\Div \widetilde{\Phi_t}^*(\K_D X(s) )\\
    = & \frac{1}{2}e^{-\beta t} \Div \K_{\Phi_t^*(D)} \left(\K_{\Phi_t^*(D)}^{-1} \widetilde{\Phi_t}^*(\K_D X(s))\right).
\end{align*}
Now we have that 
\begin{align*}
 \left\|\frac{d}{ds}\Phi_t^*(D_s)\right\|_{g_\cL} ^2
\leq & e^{-2\beta t} \|\K_{\Phi_t^*(D)}^{-1}\widetilde{\Phi_t}^* (\K_D X(s))\|_{\Phi_t^*(D), \widehat{\Delta}}^2 \\
\leq & e^{-2\beta t} \left\langle \K_{\Phi_t^*(D)}^{-1}\widetilde{\Phi_t}^* (\K_D X(s)), \widetilde{\Phi_t}^* \K_D X(s) \right\rangle \\
\leq & e^{-2\beta t}\left\langle \K_D^{-1}  (\K_D  X(s) , ( \K_D X(s) \right\rangle   \\
=& e^{-2\beta t}\left\langle  \K_D  X(s) ,  X(s) \right\rangle   \\
= & e^{-2\beta t} \left\|\frac{d}{ds} D_s\right\|_{g_\cL}^2.
\end{align*}
This implies that $\displaystyle \lim_{t\to \infty}  \left\|\frac{d}{ds}\Phi_t^*(D_s)\right\|_{g_\cL}=0$ and 
\begin{align*}
   \frac{d}{dt}  \left\|\frac{d}{ds}\Phi_t^*(D_s)\right\|_{g_{\cL}} ^2  
 \leq -2\beta \left\|\frac{d}{ds}\Phi_t^*( D_s)\right\|_{g_{\cL}}^2.
\end{align*}
By Corollary \ref{cor:diffgrad}, we have that 
\begin{align*}
\|\grad_{g, \Phi_t^*(D_s)} f\|_{\Phi_t^*(D_s), \widehat{\Delta}}^2   \leq e^{-2\beta t} \|\grad_{g, D_s} f\|_{D_s, \widehat{\Delta}}^2. 
\end{align*}
Hence 
\begin{align*}
  \left. \frac{d}{dt}  \|\grad_{g, \Phi_t^*(D_s)} f\|_{D, \widehat{\Delta}}^2\right|_{t=0^+} \leq -2\beta \|\grad_{g, D_s} f\|_{D, \widehat{\Delta}}^2.
\end{align*}
Note that 
\begin{align*}
    \left.\frac{d}{dt} f(\Phi_t^*(D))\right|_{t=0}
    =& -\tau\left(\frac{df}{d D}\cL^*(D)\right)\\
    =& - \frac{1}{2}\tau\left(\frac{df}{d D}\Div\K_D \left( \nabla \frac{df}{dD}\right)\right)\\
    =& -\frac{1}{2}\left\langle \K_D \left( \nabla \frac{df}{dD}\right), \nabla\frac{df}{d D}\right\rangle \\
    =& -\frac{1}{2}\|\grad_{g, D} f\|_{D, \widehat{\Delta}}^2.
\end{align*}
We see that 
\begin{align}\label{eq:energe}
2 \frac{d}{dt} H(\Phi_t^*(D_s)\|D_{\Delta}) =-\left\|\grad_{g, \Phi_t^*(D_s)} f\right\|_{\Phi_t^*(D_s), \widehat{\Delta}}^2.
\end{align}
Hence
\begin{align*}
    2\frac{d}{dt} H(\Phi_t^*(D_s)\|D_{\Delta})= & -\left\|\frac{d}{ds}\Phi_t^*(D_s)\right\|_{g_\cL} ^2\\
    =& \int_t^\infty \frac{d}{dr}\left\|\frac{d}{ds}\Phi_r^*(D_s)\right\|_{g_\cL} ^2 dr\\
    \leq & -2\beta \int_t^\infty \|\grad_{g, \Phi^*_r(D_s)} f\|_{\Phi_r^*(D_s), \widehat{\Delta}}^2 dr\\
    = & 4\beta H(\bE_\Phi^*(D_s)\|D_{\Delta}) -4\beta  H(\Phi_t^*(D_s)\|D_{\Delta}),
\end{align*}
i.e.
\begin{align}\label{eq:entropy0}
    \frac{d}{dt} H(\Phi_t^*(D_s)\|D_{\Delta}) \leq 2\beta H(\bE_\Phi^*(D_s)\|D_{\Delta}) -2\beta H(\Phi_t^*(D_s)\| D_{\Delta}).
\end{align}
By integrating Equation \eqref{eq:entropy0} with respect to $t$, we obtain that
\begin{align*}
 H(\Phi_t^*(D_s)\|D_\Delta)-H(\bE_\Phi^*(D_s)\|D_{\Delta}) \leq e^{-2\beta t} (H(D_s\| D_{\Delta}) - H(\bE_\Phi^*(D_s)\|D_{\Delta})).  
\end{align*}
By Equation \eqref{eq:energe}, we obtain that 
\begin{align*}
 H(D_s\| D_{\Delta}) - H(\bE_\Phi^*(D_s)\|D_{\Delta})\leq \frac{1}{4\beta} \|\grad_{g, D} f\|_{D, \widehat{\Delta}}^2. 
\end{align*}
Note that 
\begin{align*}
   H(\bE_\Phi^*(D_s)\|D_{\Delta}) =H(\bE_{\cN}(D_s)\overline{F_\Phi} \|D_{\Delta}).
\end{align*}
We see the first equality of the theorem is true.
By Lemma \ref{lem:vecterfieldform}, we have that 
\begin{align*}
    \|\grad_{g, D} f\|_{D, \widehat{\Delta}}^2=2\tau(\cL^*(D)(\log D-\log D_{\Delta})).
\end{align*}
Therefore, the second inequality of the theorem is true.

\end{proof}

\begin{remark}
If $\widehat{\Delta}=  \vcenter{\hbox{\begin{tikzpicture}[scale=1.2]
        \draw [blue] (0, -0.5)--(0, 0.5) (0.5, -0.5)--(0.5, 0.5);
        \draw [fill=white] (-0.2, -0.2) rectangle (0.2, 0.2);
        \node at (0, 0) {\tiny $\overline{\Delta}$};
        \begin{scope}[shift={(0.5, 0)}]
         \draw [fill=white] (-0.2, -0.2) rectangle (0.2, 0.2);
        \node at (0, 0) {\tiny $\Delta^{-1}$};   
        \end{scope}
    \end{tikzpicture}}}$, then by Remarks \ref{rem:gnslimit} and \ref{rem:hidden}, we have that 
\begin{align*}
    H(\bE_{\cN}(D)\overline{F_\Phi} \|D_{\Delta})
    =&  H(\bE_{\cN}(D)\Delta\|\Delta)\\
    =& \tau\left(\bE_{\cN}(D)\Delta \log \bE_{\cN}(D)\Delta - \bE_{\cN}(D)\Delta \log \Delta\right)\\
    =& \tau (\bE_{\cN}(D)\Delta \log \bE_{\cN}(D)).
\end{align*}
If $\cN=\bC$, we see that $ H(\bE_{\cN}(D)\overline{F_\Phi} \|D_{\Delta})=0$ for density operator $D\in \cM$.
\end{remark}

The celebrated Otto-Villani theorem \cite{OttoVillani2000} establishes the connection between log-Sobolev inequality and Talagrand inequality in the context of Riemannian geometry. 
In \cite{CarMaa17}, Talagrand inequality \cite{Tal96} was obtained from modified log-Sobolev inequality of quantum Markov semigroups. 
In the following, we obtain a bimodule version of Talagrand inequality.
\begin{theorem}[Bimodule Talagrand Inequality]\label{thm:talagrand}
Suppose that $\{\Phi_t\}_{t\geq 0}$ is a bimodule quantum Markov semigroup bimodule GNS symmetry with respect to $\widehat{\Delta}$ and relatively ergodic. 
Suppose that Equation \eqref{eq:logsob} holds. 
We have that 
\begin{align*}
    d(D, D_\Delta) \leq 2\sqrt{ \frac{ H(D\| D_\Delta)- H(\bE_{\cN}(D)\overline{F_\Phi} \|D_{\Delta})}{\beta}}.
\end{align*}
\end{theorem}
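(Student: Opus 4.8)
The plan is to run the Otto--Villani argument: evolve $D$ along the gradient flow of the relative entropy $H(\,\cdot\,\|D_\Delta)$, which by the theorem identifying $\dot D_s=\cL^*(D_s)$ as the gradient flow of $H(\,\cdot\,\|D_\Delta)$ is the curve $D_t=\Phi_t^*(D)$, bound the Riemannian length of this curve in the metric $g_{\cL}$, and close the estimate using the logarithmic Sobolev inequality \eqref{eq:logsob}.

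First I would set $f(D')=H(D'\|D_\Delta)$ and $D_t=\Phi_t^*(D)$. Since each $\Phi_t$ is a unital $\cN$-bimodule channel, $\Phi_t^*$ is $\cN$-bimodule and $\bE_{\cN}(D_t)=\bE_{\cN}(D)$ for all $t\ge 0$, so $\{D_t\}_{t\ge 0}$ is an admissible path for $g_{\cL}$, and by the gradient-flow theorem $\dot D_t=\cL^*(D_t)=\tfrac12\Div\K_{D_t}(\nabla\log D_t-\nabla\log D_\Delta)$. From Equation \eqref{eq:energe} and the identity $\|\grad_{g,D}f\|_{D,\widehat{\Delta}}^2=2\tau(\cL^*(D)(\log D-\log D_\Delta))$ established in the proof of Theorem \ref{thm:entropydecay}, together with Corollary \ref{cor:diffgrad}, one gets
\begin{align*}
\|\dot D_t\|_{g_{\cL}}^2=\|\grad_{g,D_t}f\|_{D_t,\widehat{\Delta}}^2=2\tau(\cL^*(D_t)(\log D_t-\log D_\Delta))=-2\tfrac{d}{dt}f(D_t).
\end{align*}
Put $c=H(\bE_{\cN}(D)\overline{F_\Phi}\|D_\Delta)$, which is constant along the flow since $\bE_{\cN}(D_t)$ is, and $g(t)=f(D_t)-c\ge 0$. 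Then \eqref{eq:logsob} reads $g(t)\le\tfrac{1}{2\beta}(-g'(t))$, i.e. $-g'(t)\ge 2\beta g(t)$, so $g(t)\le g(0)e^{-2\beta t}\to 0$, and by Corollary \ref{cor:gnslimitdual} with relative ergodicity $D_t\to \bE_{\cN}(D)\overline{F_\Phi}$ as $t\to\infty$.

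Next I would estimate the length. The previous display gives $\|\dot D_t\|_{g_{\cL}}=\sqrt{-2g'(t)}$, so the curve $\{D_t\}_{t\ge 0}$ has length $L=\int_0^\infty\sqrt{-2g'(t)}\,dt$, and since $d(\,\cdot\,,\,\cdot\,)$ is the infimum of lengths of admissible paths, $d(D,\bE_{\cN}(D)\overline{F_\Phi})\le L$. On the set where $g(t)>0$, the bound $-g'(t)\ge 2\beta g(t)$ yields
\begin{align*}
\sqrt{-2g'(t)}=\sqrt{2}\,\frac{-g'(t)}{\sqrt{-g'(t)}}\le\frac{-g'(t)}{\sqrt{\beta g(t)}}=-\frac{2}{\sqrt{\beta}}\frac{d}{dt}\sqrt{g(t)},
\end{align*}
and integrating, using $g(\infty)=0$,
\begin{align*}
L\le\frac{2}{\sqrt{\beta}}\sqrt{g(0)}=2\sqrt{\frac{H(D\|D_\Delta)-H(\bE_{\cN}(D)\overline{F_\Phi}\|D_\Delta)}{\beta}}.
\end{align*}
Finally one records that the endpoint $\bE_{\cN}(D)\overline{F_\Phi}$ coincides with $D_\Delta$ in the cases of interest (by the remarks after Corollary \ref{cor:gnslimitdual} and Remark \ref{rem:hidden}; e.g. when $\widehat{\Delta}$ has the product form of Remark \ref{rem:hidden} and $\cN=\bC$, where this limit equals $\Delta=D_\Delta$), giving the claimed $d(D,D_\Delta)\le 2\sqrt{(H(D\|D_\Delta)-H(\bE_{\cN}(D)\overline{F_\Phi}\|D_\Delta))/\beta}$.

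The main obstacle is not the ODE manipulation but the analytic bookkeeping around the flow. One must keep $D_t$ strictly positive so that $\log D_t$, $\K_{D_t}$ and the Riemannian data are defined along the whole curve; this is handled by first replacing $D$ with $(1-\varepsilon)D+\varepsilon\,\bE_{\cN}(D)\overline{F_\Phi}$ (which lies in the same $\bE_{\cN}$-fiber), running the argument, and letting $\varepsilon\to 0^+$ by continuity of both sides. One also needs that the length of this particular curve dominates the geodesic distance (taking $d$ as a limit of the lengths of the restrictions $\{D_t\}_{t\in[0,T]}$), and that Equation \eqref{eq:energe} and Corollary \ref{cor:diffgrad} apply along the whole flow line rather than only at a single tangent vector. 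A secondary point, to be pinned down in the write-up, is the exact identification of the limit $\bE_{\cN}(D)\overline{F_\Phi}$ with $D_\Delta$ beyond the product-form case, or else to phrase the conclusion with this limit as the reference density.
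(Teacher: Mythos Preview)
Your approach is the same Otto--Villani strategy as the paper: evolve $D$ by the dual semigroup $D_t=\Phi_t^*(D)$, use the energy identity $\|\dot D_t\|_{g_\cL}^2=-2\tfrac{d}{dt}H(D_t\|D_\Delta)$, and bound the length by the entropy drop via the logarithmic Sobolev inequality. The difference is in how the length integral is controlled. The paper discretizes: for each $\epsilon>0$ it picks times $t_k$ so that $H(\Phi_{t_k}^*(D)\|D_\Delta)-c=e^{-k\epsilon}(H(D\|D_\Delta)-c)$, applies Cauchy--Schwarz on each $[t_{k-1},t_k]$, sums a geometric series, and sends $\epsilon\to 0$. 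Your continuous argument, using $-g'(t)\ge 2\beta g(t)$ to write $\sqrt{-2g'(t)}\le -\tfrac{2}{\sqrt{\beta}}\tfrac{d}{dt}\sqrt{g(t)}$ and integrating directly, is cleaner and reaches exactly the same constant; it avoids the partition and the $\epsilon\to 0$ limit, at the cost of needing $g(t)>0$ along the way (which you handle).

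On your ``secondary point'': the paper's own proof sets $\Delta_D=\bE_{\cN}(D)\overline{F_\Phi}$ and bounds $d(D,\Delta_D)$, without a separate step identifying $\Delta_D$ with $D_\Delta$; so the mismatch you flag between the stated endpoint $D_\Delta$ and the actual limit of the flow is present in the paper as well, and your caveat is appropriate rather than a defect of your argument.
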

\begin{proof}
Suppose that $D\in \cM$ is a density operator.
Let $\Delta_D=\bE_{\cN}(D)\overline{F_\Phi}$.
Then the distance between $D$ and $\Delta_D$ is described by the following equation:
\begin{align*}
d(D, \Delta_D)= & \int_0^\infty \left\|\frac{d}{dt} \Phi_t^*(D)\right\|_{g_\cL} dt .
\end{align*}
Note that $\displaystyle \left\|\frac{d}{dt} \Phi_t^*(D)\right\|_{g_\cL}^2 =-2\frac{d}{dt} H(\Phi_t^*(D)\| D_{\Delta})$.
By the Cauchy-Schwarz inequality, for any $0< t_1<  t_2< \infty$, we have that 
\begin{align*}
    \int_{t_1}^{t_2} \left\|\frac{d}{dt} \Phi_t^*(D)\right\|_{g_\cL} dt
    \leq \sqrt{2} \sqrt{t_2-t_1} \sqrt{H(\Phi_{t_1}^*(D)\| D_{\Delta})-H(\Phi_{t_2}^*(D)\|D_\Delta)}
\end{align*}
Fix $\epsilon>0$, for each $k\in \bN$, we take $t_k\in \bR$ such that 
\begin{align*}
   &  H(\Phi_{t_k}^*(D)\| D_{\Delta})-  H(\bE_{\cN}(D)\overline{F_\Phi} \|D_{\Delta})\\
    =&  e^{-k \epsilon} (H(D\| D_{\Delta})-  H(\bE_{\cN}(D)\overline{F_\Phi} \|D_{\Delta})).
\end{align*}
By Equation \eqref{eq:entropy1}, we have that $\displaystyle t_k-t_{k-1}\leq \frac{\epsilon}{2\beta}$.
By Equation \eqref{eq:entropy1} again, we have that
\begin{equation}\label{eq:taga}
\begin{aligned}
   &  \int_{t_{k-1}}^{t_k} \left\|\frac{d}{dt} \Phi_t^*(D)\right\|_{g_{\cL}}dt \\
     \leq & \sqrt{2} \sqrt{t_2-t_1} \sqrt{H(\Phi_{t_1}^*(D)\| D_{\Delta})-H(\Phi_{t_2}^*(D)\|D_\Delta)} \\
    \leq & \sqrt{2} \sqrt{\frac{\epsilon}{2\beta} (e^{-(k-1)\epsilon}-e^{-k\epsilon}) (H(D\| D_{\Delta})-  H(\bE_{\cN}(D)\overline{F_\Phi} \|D_{\Delta}))}\\
    =&\sqrt{2} e^{-k\epsilon /2}\sqrt{\epsilon (e^\epsilon-1)} \sqrt{\frac{H(D\| D_{\Delta})-  H(\bE_{\cN}(D)\overline{F_\Phi} \|D_{\Delta})}{2\beta}} .
\end{aligned}
\end{equation}
Note that 
\begin{align*}
 \lim_{\epsilon\to 0} \sum_{k=1}^\infty e^{-k\epsilon /2}\sqrt{\epsilon (e^\epsilon-1)} 
=& \lim_{\epsilon\to 0} \frac{e^{-\epsilon/2}\sqrt{\epsilon (e^{\epsilon}-1)}}{1-e^{-\epsilon/2}}\\
=& \lim_{\epsilon\to 0} \frac{ \sqrt{\epsilon (e^{\epsilon}-1)}}{e^{\epsilon/2}-1}\\
=&\lim_{\epsilon\to 0}  \sqrt{\frac{\epsilon (e^{\epsilon/2}+1)}{e^{\epsilon/2}-1}}\\
=& \sqrt{2}\lim_{\epsilon\to 0}  \sqrt{\frac{\epsilon}{e^{\epsilon/2}-1}}=2.
\end{align*}
By taking summation of Equation \eqref{eq:taga} with respect to $k$, we see that the theorem is true.

\end{proof}

\subsection{Fermion Algebras}

In this section, we establish a pictorial proof of the intertwining property for the semigroup described in \cite{CarMaa17}, working entirely within this framework.

We begin by describing the extension of a bimodule map.
Suppose that $\Psi: \cM_1\to\cM_1$ is a bimodule completely positive map for the inclusion $\cN \subset \cM_1$ which is a $\lambda^2$-extension.
Recall that the associated Jones projections $\widetilde{e}_1=\lambda^{-1} e_2e_1e_3e_2$ and $\widetilde{e}_2=\lambda^{-1} e_4e_3e_5e_4$.
We have that 
\begin{align}\label{eq:jones1}
    e_1e_2 \widetilde{e}_2\widetilde{e}_1=\lambda^2 e_1e_4e_3e_2.
\end{align}
Then the Fourier multiplier $\widehat{\Psi}$ is in $\cM_1'\cap \cM_5$ and 
\begin{align*}
    \Psi(x)=& \lambda^{-5}\bE_{\cM_1}(\widetilde{e}_2\widetilde{e}_1 \widehat{\Psi} x \widetilde{e}_1 \widetilde{e}_2)\\
    =& \lambda^{-9} \bE_{\cM_1}(e_4e_5e_3e_4e_2e_3e_1e_2 \widehat{\Psi}x e_2e_1e_3e_2e_4e_3e_5e_4).
\end{align*}
and 
\begin{align*}
      \Psi(x)\widetilde{e}_2=\lambda^{-3} \widetilde{e}_2\widetilde{e}_1 \widehat{\Psi} x \widetilde{e}_1 \widetilde{e}_2
\end{align*}

\begin{proposition}
Suppose that $\Phi: \cM\to \cM$ and $\Psi: \cM_1 \to \cM_1$ are bimodule maps.
Then $\Psi|_{\cM}=\Phi$ if and only if 
\begin{align}\label{eq:extension1}
   \lambda^{-7/2} \bE_{\cM_4}(e_5e_4e_3e_2\widehat{\Psi}e_2e_3e_4e_5)=\widehat{\Phi}.
\end{align}
Pictorially, we have that 
\begin{align*}
    \vcenter{\hbox{
    \begin{tikzpicture}
       \draw [blue] (-0.45, -0.3).. controls+(0, -0.3) and +(0, -0.3).. (-0.75, -0.3)--(-0.75, 0.3) .. controls +(0, 0.3) and +(0, 0.3)..(-0.45, 0.3) (-0.15, -0.6)--(-0.15, 0.6) (0.15, -0.6)--(0.15, 0.6) (0.45, -0.6)--(0.45, 0.6);
       \draw[fill=white] (-0.6, -0.3) rectangle (0.6, 0.3);
       \node at (0, 0) {\tiny $\widehat{\Psi}$};
    \end{tikzpicture}
    }}
    =   \lambda^{-1} \vcenter{\hbox{
    \begin{tikzpicture}
       \draw [blue]   (-0.15, -0.6)--(-0.15, 0.6) (0.15, -0.6)--(0.15, 0.6) (0.45, -0.6)--(0.45, 0.6);
       \draw[fill=white] (-0.35, -0.3) rectangle (0.35, 0.3);
       \node at (0, 0) {\tiny $\widehat{\Phi}$};
    \end{tikzpicture}
    }}.
\end{align*}
\end{proposition}
\begin{proof}
Suppose that $\Psi|_{\cM}=\Phi$.
For any $x\in \cM$, we have that 
\begin{align*}
   \lambda^{-3} \widetilde{e}_2\widetilde{e}_1 \widehat{\Psi} x \widetilde{e}_1 \widetilde{e}_2
   = \Phi(x)\widetilde{e}_2 
   = \lambda^{-5/2} \bE_{\cM}(e_2e_1 \widehat{\Phi} x e_1e_2) \widetilde{e}_2. 
\end{align*}
Multiplying $e_1e_2$ from the left hand side, we obtain that 
\begin{align*}
   \lambda^{-3} e_1e_2\widetilde{e}_2\widetilde{e}_1 \widehat{\Psi} x \widetilde{e}_1 \widetilde{e}_2
    =& \lambda^{-3/2} e_1e_2e_1 \widehat{\Phi} x e_1e_2 \widetilde{e}_2
\end{align*}
By Equation \eqref{eq:jones1}, we have that 
\begin{align*}
 \lambda^{-1} e_1 x e_4e_3e_2  \widehat{\Psi} \widetilde{e}_1 \widetilde{e}_2
    =& \lambda^{-1/2} e_1 x \widehat{\Phi} e_1e_2 \widetilde{e}_2
\end{align*}
Applying the Pimsner-Popa basis, we obtain that 
\begin{align*}
  e_4e_3e_2  \widehat{\Psi} \widetilde{e}_1 \widetilde{e}_2
    =& \lambda^{1/2}  \widehat{\Phi} e_1e_2 \widetilde{e}_2
\end{align*}
Multiplying $\widetilde{e}_1$ from the right hand side, we see that 
\begin{align*}
      e_4e_3e_2  \widehat{\Psi} \widetilde{e}_1
    =& \lambda^{1/2}  \widehat{\Phi} e_1e_4e_3e_2.
\end{align*}
Expanding $\widetilde{e}_1$, we have that 
\begin{align*}
    e_4e_3e_2  \widehat{\Psi} e_2e_1e_3e_2
    =& \lambda^{3/2}  \widehat{\Phi} e_1e_4e_3e_2
\end{align*}
Multiplying $e_3$ from the right hand side, we see that 
\begin{align*}
      e_4e_3e_2  \widehat{\Psi} e_2e_1e_3 
    =& \lambda^{3/2}  \widehat{\Phi} e_1e_4e_3.
\end{align*}
By taking the conditional expectation $\bE_{\cM'}$, we have that 
\begin{align*}
          e_4e_3e_2  \widehat{\Psi} e_2 e_3
    =& \lambda^{1/2} \widehat{\Phi} e_4e_3.
\end{align*}
Multiplying $e_4$ from the right hand side, we have that 
\begin{align}\label{eq:extension2}
e_4e_3e_2  \widehat{\Psi} e_2 e_3e_4
    =\lambda^{3/2} \widehat{\Phi} e_4.
\end{align}
Multiplying $e_5$ from the both sides, we have that 
\begin{align*}
  e_5 e_4e_3e_2  \widehat{\Psi} e_2 e_3e_4 e_5
    =\lambda^{5/2} \widehat{\Phi} e_5 . 
\end{align*}
By taking the conditional expectation $\bE_{\cM_4}$, we have that Equation \eqref{eq:extension1} is true.

Suppose that Equation \eqref{eq:extension1} is true.
We see that Equation \eqref{eq:extension2} is true.
By the previous computation, we see that $\Psi|_{\cM}=\Phi$.
\end{proof}

Suppose that $\{\Psi_t\}_{t\geq 0}$ is a bimodule quantum Markov semigroup such that $\Psi_t|_{\cM}=\Phi_t$ and 
\begin{align*}
   \Phi^*_t \Div = e^{-\widetilde{\beta} t}\Div \Psi_t^*.
\end{align*}
Let $\cJ$ be the generator of the semigroup $\{\Psi_t\}_{t \geq 0}$ and $\widehat{\cJ}|_{\cM}=\widehat{\cL}$ .
The intertwining property is
\begin{align*}
    \partial_k \cL - \cJ \partial_k =\widetilde{\beta} \partial_k. 
\end{align*}
Pictorially, we have that 
\begin{align}\label{eq:inter}
 \vcenter{\hbox{
\begin{tikzpicture}
\draw [blue] (-0.15, -0.8)--(-0.15, 0.8);
 \draw [blue] (0.15, -0.8)--(0.15, 0.3) .. controls +(0, 0.35) and +(0, 0.35).. (0.6, 0.3);
 \draw [fill=white] (-0.3, -0.3) rectangle (0.3, 0.3);
 \node at (0, 0) {\tiny $\widehat{\cL}$};
 \begin{scope}[shift={(0.75, 0)}]
  \draw [fill=white] (-0.3, -0.3) rectangle (0.3, 0.3);
 \node at (0, 0) {\tiny $E_k$};
 \draw [blue] (0.15, 0.3)--(0.15, 0.8) (-0.15, -0.3)--(-0.15, -0.8);
  \draw [blue] (0.15, -0.3) .. controls +(0, -0.35) and +(0, -0.35).. (0.6, -0.3)--(0.6, 0.8);
 \end{scope}
\end{tikzpicture}
 }}
 -
\vcenter{\hbox{
    \begin{tikzpicture}
    \draw [blue] (-0.15, -0.8)--(-0.15, 0.8);
 \draw [blue] (0.15, 0.8)--(0.15, -0.3) .. controls +(0, -0.35) and +(0, -0.35).. (0.6, -0.3);
 \draw [fill=white] (-0.3, -0.3) rectangle (0.3, 0.3);
 \node at (0, 0) {\tiny $\widehat{\cL}$};
 \begin{scope}[shift={(0.75, 0)}]
  \draw [fill=white] (-0.3, -0.3) rectangle (0.3, 0.3);
 \node at (0, 0) {\tiny $\overline{E_k}$};
 \draw [blue] (0.15, -0.3)--(0.15, -0.8) (-0.15, 0.3)--(-0.15, 0.8);
  \draw [blue] (0.15, 0.3) .. controls +(0, 0.35) and +(0, 0.35).. (0.6, 0.3)--(0.6, -0.8);
 \end{scope}
    \end{tikzpicture}
    }}
    -
       \lambda\vcenter{\hbox{
    \begin{tikzpicture}
       \draw [blue] (-0.75, -0.3)--(-0.75, 0.6) (-0.45, -0.3)--(-0.45, 0.6) (-0.15, -0.8)--(-0.15, 0.6) (0.15, -0.8)--(0.15, 1.6) (0.45, -0.8)--(0.45, 1.6);
       \draw[fill=white] (-0.6, -0.3) rectangle (0.6, 0.3);
       \node at (0, 0) {\tiny $\widehat{\cJ}$};
       \begin{scope}[shift={(-0.6, 0.8)}]
          \draw[fill=white] (-0.3, -0.3) rectangle (0.3, 0.3); 
           \node at (0, 0) {\tiny $E_k$};
        \draw[blue] (0.15, 0.3).. controls+(0, 0.3) and +(0, 0.3).. (0.45, 0.3)--(0.45, -0.3);
        \draw[blue] (-0.15, 0.3)--(-0.15, 0.8);
        \draw [blue] (-0.15, -1.1) .. controls +(0, -0.25) and +(0, -0.25).. (0.15, -1.1);
       \end{scope}
    \end{tikzpicture}
    }}
    +
    \lambda\vcenter{\hbox{
    \begin{tikzpicture}
       \draw [blue]   (-0.15, 0.8)--(-0.15, 0.3) (0.15, -0.8)--(0.15, 0.8) (0.45, -0.8)--(0.45, 0.8);
        \draw[fill=white] (-0.6, -0.3) rectangle (0.6, 0.3);
       \node at (0, 0) {\tiny $\widehat{\cJ}$};
       \begin{scope}[shift={(-1.2, 0)}]
       \draw [blue] (0.15, 0.3) .. controls +(0, 0.6) and +(0, 0.6) .. (-0.75, 0.3)--(-0.75, -0.8);
          \draw[fill=white] (-0.3, -0.3) rectangle (0.3, 0.3); 
           \node at (0, 0) {\tiny $E_k$};
        \draw[blue] (0.15, -0.3).. controls+(0, -0.25) and +(0, -0.25).. (0.45, -0.3)--(0.45, 0.3).. controls +(0, 0.3) and +(0, 0.3) .. (0.75, 0.3);
        \draw[blue] (-0.15, 0.3).. controls+(0, 0.3) and +(0, 0.3).. (-0.45, 0.3)--(-0.45, -0.3).. controls +(0, -0.6) and +(0, -0.6) .. (1.05, -0.3);
        \draw [blue] (-0.15, -0.3).. controls +(0, -0.4) and +(0, -0.4) .. (0.75, -0.3);
       \end{scope}
    \end{tikzpicture}
    }}
 =
\widetilde{\beta} \vcenter{\hbox{
    \begin{tikzpicture}
 \draw [blue] (0.15, 0.8)--(0.15, -0.3) .. controls +(0, -0.35) and +(0, -0.35).. (0.6, -0.3)--(0.6, 0.8);
  \draw [blue] (-0.15, 0.8)--(-0.15, -0.8);
 \draw [fill=white] (-0.3, -0.3) rectangle (0.3, 0.3);
 \node at (0, 0) {\tiny $E_k$};
  \draw [blue] (-0.7, -0.8) .. controls +(0, 0.25) and +(0, 0.25).. (-0.4, -0.8);
    \end{tikzpicture}
    }}  
-
\widetilde{\beta} \vcenter{\hbox{
    \begin{tikzpicture}
 \draw [blue] (0.15, -0.8)--(0.15, 0.3) .. controls +(0, 0.35) and +(0, 0.35).. (0.6, 0.3)--(0.6, -0.8);
  \draw [blue] (-0.15, 0.8)--(-0.15, -0.8);
 \draw [fill=white] (-0.3, -0.3) rectangle (0.3, 0.3);
 \node at (0, 0) {\tiny $\overline{E_k}$};
  \draw [blue] (-0.7, 0.8) .. controls +(0, -0.25) and +(0, -0.25).. (-0.4, 0.8);
    \end{tikzpicture}
    }} . 
\end{align}

Suppose that $Q_1, \ldots, Q_m, P_1, \ldots, P_m$ be the generators with
\begin{align*}
Q_jQ_k+Q_kQ_j=P_jP_k+P_kP_j=2\delta_{j,k}, \quad Q_jP_k+P_kQ_j=0, \quad 1\leq j, k\leq m.
\end{align*}
Let $\displaystyle w=i^m\prod_{j=1}^m Q_jP_j$ and $\displaystyle v_j=\frac{1}{\sqrt{2}}w(Q_j+iP_j)$.
Then $wv_jw=-v_j$.

Suppose that 
\begin{align*}
    \widehat{\cL}_0=\frac{1}{2}\sum_{j=1}^m e^{\beta a_j/2} \vcenter{\hbox{\begin{tikzpicture}[scale=0.65]
    \begin{scope}[shift={(0,1.5)}]
    \draw [blue] (-0.5, 0.8)--(-0.5, 0) .. controls +(0, -0.6) and +(0,-0.6).. (0.5, 0)--(0.5, 0.8);    
\begin{scope}[shift={(0.5, 0.3)}]
\draw [fill=white] (-0.3, -0.3) rectangle (0.3, 0.3);
\node at (0, 0) {\tiny $v_j$};
\end{scope}
    \end{scope}
\draw [blue] (-0.5, -0.8)--(-0.5, 0) .. controls +(0, 0.6) and +(0,0.6).. (0.5, 0)--(0.5, -0.8);
\begin{scope}[shift={(0.5, -0.3)}]
\draw [fill=white] (-0.3, -0.3) rectangle (0.3, 0.3);
\node at (0, 0) {\tiny $v_j^*$};
\end{scope}
\end{tikzpicture}}}
+ e^{-\beta a_j/2} \vcenter{\hbox{\begin{tikzpicture}[scale=0.65]
    \begin{scope}[shift={(0,1.5)}]
    \draw [blue] (-0.5, 0.8)--(-0.5, 0) .. controls +(0, -0.6) and +(0,-0.6).. (0.5, 0)--(0.5, 0.8);    
\begin{scope}[shift={(0.5, 0.3)}]
\draw [fill=white] (-0.3, -0.3) rectangle (0.3, 0.3);
\node at (0, 0) {\tiny $v_j^*$};
\end{scope}
    \end{scope}
\draw [blue] (-0.5, -0.8)--(-0.5, 0) .. controls +(0, 0.6) and +(0,0.6).. (0.5, 0)--(0.5, -0.8);
\begin{scope}[shift={(0.5, -0.3)}]
\draw [fill=white] (-0.3, -0.3) rectangle (0.3, 0.3);
\node at (0, 0) {\tiny $v_j$};
\end{scope}
\end{tikzpicture}}},
\end{align*}
where $a_j\in \bR$ and $\beta>0$.
Let
\begin{align*}
    \mathbf{y}=1*\widehat{\cL}_0 =\frac{1}{2}\sum_{j=1}^m e^{\beta a_j/2} v_j^*v_j +e^{-\beta a_j/2} v_jv_j^*.
\end{align*}
The directional derivation is given by $
    E_k=\vcenter{\hbox{\begin{tikzpicture}[scale=0.65]
    \begin{scope}[shift={(0,1.5)}]
    \draw [blue] (-0.5, 0.8)--(-0.5, 0) .. controls +(0, -0.6) and +(0,-0.6).. (0.5, 0)--(0.5, 0.8);    
\begin{scope}[shift={(0.5, 0.3)}]
\draw [fill=white] (-0.4, -0.3) rectangle (0.4, 0.3);
\node at (0, 0) {\tiny $v_k$};
\end{scope}
    \end{scope}
\draw [blue] (-0.5, -0.8)--(-0.5, 0) .. controls +(0, 0.6) and +(0,0.6).. (0.5, 0)--(0.5, -0.8);
\begin{scope}[shift={(0.5, -0.3)}]
\draw [fill=white] (-0.4, -0.3) rectangle (0.4, 0.3);
\node at (0, 0) {\tiny $v_k^*$};
\end{scope}
\end{tikzpicture}}}$ and skew derivation is given by $wE_k$.

The extension $\cJ$ of $\cL$ is given by
\begin{align*}
\widehat{\cJ}_0=\frac{\lambda^{-1}}{2}\sum_{j=1}^m e^{\beta a_j/2} \vcenter{\hbox{\begin{tikzpicture}[scale=0.65]
    \begin{scope}[shift={(0,1.5)}]
    \draw [blue] (-0.5, 0.8)--(-0.5, 0) .. controls +(0, -0.6) and +(0,-0.6).. (0.5, 0)--(0.5, 0.8);  
    \draw [blue] (-1, 0.8)--(-1, 0) .. controls +(0, -0.8) and +(0,-0.8).. (1, 0)--(1, 0.8); 
\begin{scope}[shift={(0.5, 0.3)}]
\draw [fill=white] (-0.3, -0.3) rectangle (0.3, 0.3);
\node at (0, 0) {\tiny $v_j$};
\end{scope}
    \end{scope}
\draw [blue] (-0.5, -0.8)--(-0.5, 0) .. controls +(0, 0.6) and +(0,0.6).. (0.5, 0)--(0.5, -0.8);
\draw [blue] (-1, -0.8)--(-1, 0) .. controls +(0, 0.8) and +(0,0.8).. (1, 0)--(1, -0.8); 
\begin{scope}[shift={(0.5, -0.3)}]
\draw [fill=white] (-0.3, -0.3) rectangle (0.3, 0.3);
\node at (0, 0) {\tiny $v_j^*$};
\end{scope}
\end{tikzpicture}}}
+ e^{-\beta a_j/2} \vcenter{\hbox{\begin{tikzpicture}[scale=0.65]
    \begin{scope}[shift={(0,1.5)}]
    \draw [blue] (-0.5, 0.8)--(-0.5, 0) .. controls +(0, -0.6) and +(0,-0.6).. (0.5, 0)--(0.5, 0.8);   \draw [blue] (-1, 0.8)--(-1, 0) .. controls +(0, -0.8) and +(0,-0.8).. (1, 0)--(1, 0.8);  
\begin{scope}[shift={(0.5, 0.3)}]
\draw [fill=white] (-0.3, -0.3) rectangle (0.3, 0.3);
\node at (0, 0) {\tiny $v_j^*$};
\end{scope}
    \end{scope}
\draw [blue] (-0.5, -0.8)--(-0.5, 0) .. controls +(0, 0.6) and +(0,0.6).. (0.5, 0)--(0.5, -0.8);
\draw [blue] (-1, -0.8)--(-1, 0) .. controls +(0, 0.8) and +(0,0.8).. (1, 0)--(1, -0.8); 
\begin{scope}[shift={(0.5, -0.3)}]
\draw [fill=white] (-0.3, -0.3) rectangle (0.3, 0.3);
\node at (0, 0) {\tiny $v_j$};
\end{scope}
\end{tikzpicture}}}.
\end{align*}
The four items in the intertwining property is the following:
\begin{align*}
\frac{1}{2}\vcenter{\hbox{\begin{tikzpicture}[scale=0.65]
     \draw [blue] (1.4, 2.3)--(1.4, -0.8);
    \begin{scope}[shift={(0,1.5)}]
    \draw [blue] (-0.5, 0.8)--(-0.5, 0) .. controls +(0, -0.6) and +(0,-0.6).. (0.5, 0)--(0.5, 0.8); 
\begin{scope}[shift={(0.5, 0.3)}]
\draw [fill=white] (-0.5, -0.3) rectangle (0.5, 0.3);
\node at (0, 0) {\tiny $\mathbf{y} v_k$};
\end{scope}
    \end{scope}
\draw [blue] (-0.5, -0.8)--(-0.5, 0) .. controls +(0, 0.6) and +(0,0.6).. (0.5, 0)--(0.5, -0.8);
\begin{scope}[shift={(0.5, -0.3)}]
\draw [fill=white] (-0.3, -0.3) rectangle (0.3, 0.3);
\node at (0, 0) {\tiny $w$};
\end{scope}
\begin{scope}[shift={(1.4, 0.8)}]
\draw [fill=white] (-0.3, -0.3) rectangle (0.3, 0.3);
\node at (0, 0) {\tiny $\overline{v_k^*}$};  
\end{scope}
\end{tikzpicture}}}
+
\frac{1}{2}\vcenter{\hbox{\begin{tikzpicture}[scale=0.65]
     \draw [blue] (1.4, 2.3)--(1.4, -0.8);
    \begin{scope}[shift={(0,1.5)}]
    \draw [blue] (-0.5, 0.8)--(-0.5, 0) .. controls +(0, -0.6) and +(0,-0.6).. (0.5, 0)--(0.5, 0.8); 
\begin{scope}[shift={(0.5, 0.3)}]
\draw [fill=white] (-0.4, -0.3) rectangle (0.4, 0.3);
\node at (0, 0) {\tiny $v_k$};
\end{scope}
    \end{scope}
\draw [blue] (-0.5, -0.8)--(-0.5, 0) .. controls +(0, 0.6) and +(0,0.6).. (0.5, 0)--(0.5, -0.8);
\begin{scope}[shift={(0.5, -0.3)}]
\draw [fill=white] (-0.3, -0.3) rectangle (0.3, 0.3);
\node at (0, 0) {\tiny $w\mathbf{y}$};
\end{scope}
\begin{scope}[shift={(1.4, 0.8)}]
\draw [fill=white] (-0.3, -0.3) rectangle (0.3, 0.3);
\node at (0, 0) {\tiny $\overline{v_k^*}$};  
\end{scope}
\end{tikzpicture}}}
-\frac{1}{2}\sum_{j=1}^m e^{\beta a_j/2}\vcenter{\hbox{\begin{tikzpicture}[scale=0.65]
     \draw [blue] (1.4, 2.3)--(1.4, -0.8);
    \begin{scope}[shift={(0,1.5)}]
    \draw [blue] (-0.5, 0.8)--(-0.5, 0) .. controls +(0, -0.6) and +(0,-0.6).. (0.5, 0)--(0.5, 0.8); 
\begin{scope}[shift={(0.5, 0.3)}]
\draw [fill=white] (-0.55, -0.3) rectangle (0.55, 0.3);
\node at (0, 0) {\tiny $v_jv_k$};
\end{scope}
    \end{scope}
\draw [blue] (-0.5, -0.8)--(-0.5, 0) .. controls +(0, 0.6) and +(0,0.6).. (0.5, 0)--(0.5, -0.8);
\begin{scope}[shift={(0.5, -0.3)}]
\draw [fill=white] (-0.3, -0.3) rectangle (0.3, 0.3);
\node at (0, 0) {\tiny $wv_j^*$};
\end{scope}
\begin{scope}[shift={(1.4, 0.8)}]
\draw [fill=white] (-0.3, -0.3) rectangle (0.3, 0.3);
\node at (0, 0) {\tiny $\overline{v_k^*}$};  
\end{scope}
\end{tikzpicture}}}
-\frac{1}{2}\sum_{j=1}^m e^{-\beta a_j/2}\vcenter{\hbox{\begin{tikzpicture}[scale=0.65]
     \draw [blue] (1.4, 2.3)--(1.4, -0.8);
    \begin{scope}[shift={(0,1.5)}]
    \draw [blue] (-0.5, 0.8)--(-0.5, 0) .. controls +(0, -0.6) and +(0,-0.6).. (0.5, 0)--(0.5, 0.8); 
\begin{scope}[shift={(0.5, 0.3)}]
\draw [fill=white] (-0.55, -0.3) rectangle (0.55, 0.3);
\node at (0, 0) {\tiny $v_j^*v_k$};
\end{scope}
    \end{scope}
\draw [blue] (-0.5, -0.8)--(-0.5, 0) .. controls +(0, 0.6) and +(0,0.6).. (0.5, 0)--(0.5, -0.8);
\begin{scope}[shift={(0.5, -0.3)}]
\draw [fill=white] (-0.3, -0.3) rectangle (0.3, 0.3);
\node at (0, 0) {\tiny $wv_j$};
\end{scope}
\begin{scope}[shift={(1.4, 0.8)}]
\draw [fill=white] (-0.3, -0.3) rectangle (0.3, 0.3);
\node at (0, 0) {\tiny $\overline{v_k^*}$};  
\end{scope}
\end{tikzpicture}}}
\end{align*}

\begin{align*}
\frac{1}{2}\vcenter{\hbox{\begin{tikzpicture}[scale=0.65]
     \draw [blue] (1.4, 2.3)--(1.4, -0.8);
    \begin{scope}[shift={(0,1.5)}]
    \draw [blue] (-0.5, 0.8)--(-0.5, 0) .. controls +(0, -0.6) and +(0,-0.6).. (0.5, 0)--(0.5, 0.8); 
\begin{scope}[shift={(0.5, 0.3)}]
\draw [fill=white] (-0.5, -0.3) rectangle (0.5, 0.3);
\node at (0, 0) {\tiny $\mathbf{y}$};
\end{scope}
    \end{scope}
\draw [blue] (-0.5, -0.8)--(-0.5, 0) .. controls +(0, 0.6) and +(0,0.6).. (0.5, 0)--(0.5, -0.8);
\begin{scope}[shift={(0.5, -0.3)}]
\draw [fill=white] (-0.4, -0.3) rectangle (0.4, 0.3);
\node at (0, 0) {\tiny $w v_k $};
\end{scope}
\begin{scope}[shift={(1.4, 0.8)}]
\draw [fill=white] (-0.3, -0.3) rectangle (0.3, 0.3);
\node at (0, 0) {\tiny $\overline{v_k^*}$};  
\end{scope}
\end{tikzpicture}}}
+
\frac{1}{2}\vcenter{\hbox{\begin{tikzpicture}[scale=0.65]
     \draw [blue] (1.4, 2.3)--(1.4, -0.8);
    \begin{scope}[shift={(0,1.5)}]
    \draw [blue] (-0.5, 0.8)--(-0.5, 0) .. controls +(0, -0.6) and +(0,-0.6).. (0.5, 0)--(0.5, 0.8); 
    \end{scope}
\draw [blue] (-0.5, -0.8)--(-0.5, 0) .. controls +(0, 0.6) and +(0,0.6).. (0.5, 0)--(0.5, -0.8);
\begin{scope}[shift={(0.5, -0.3)}]
\draw [fill=white] (-0.55, -0.3) rectangle (0.55, 0.3);
\node at (0, 0) {\tiny $wv_k\mathbf{y}$};
\end{scope}
\begin{scope}[shift={(1.4, 0.8)}]
\draw [fill=white] (-0.3, -0.3) rectangle (0.3, 0.3);
\node at (0, 0) {\tiny $\overline{v_k^*}$};  
\end{scope}
\end{tikzpicture}}}
-\frac{1}{2}\sum_{j=1}^m e^{\beta a_j/2}\vcenter{\hbox{\begin{tikzpicture}[scale=0.65]
     \draw [blue] (1.4, 2.3)--(1.4, -0.8);
    \begin{scope}[shift={(0,1.5)}]
    \draw [blue] (-0.5, 0.8)--(-0.5, 0) .. controls +(0, -0.6) and +(0,-0.6).. (0.5, 0)--(0.5, 0.8); 
\begin{scope}[shift={(0.5, 0.3)}]
\draw [fill=white] (-0.3, -0.3) rectangle (0.3, 0.3);
\node at (0, 0) {\tiny $v_j$};
\end{scope}
    \end{scope}
\draw [blue] (-0.5, -0.8)--(-0.5, 0) .. controls +(0, 0.6) and +(0,0.6).. (0.5, 0)--(0.5, -0.8);
\begin{scope}[shift={(0.5, -0.3)}]
\draw [fill=white] (-0.55, -0.3) rectangle (0.55, 0.3);
\node at (0, 0) {\tiny $wv_kv_j^*$};
\end{scope}
\begin{scope}[shift={(1.4, 0.8)}]
\draw [fill=white] (-0.3, -0.3) rectangle (0.3, 0.3);
\node at (0, 0) {\tiny $\overline{v_k^*}$};  
\end{scope}
\end{tikzpicture}}}
-\frac{1}{2}\sum_{j=1}^m e^{-\beta a_j/2}\vcenter{\hbox{\begin{tikzpicture}[scale=0.65]
     \draw [blue] (1.4, 2.3)--(1.4, -0.8);
    \begin{scope}[shift={(0,1.5)}]
    \draw [blue] (-0.5, 0.8)--(-0.5, 0) .. controls +(0, -0.6) and +(0,-0.6).. (0.5, 0)--(0.5, 0.8); 
\begin{scope}[shift={(0.5, 0.3)}]
\draw [fill=white] (-0.3, -0.3) rectangle (0.3, 0.3);
\node at (0, 0) {\tiny $v_j^*$};
\end{scope}
    \end{scope}
\draw [blue] (-0.5, -0.8)--(-0.5, 0) .. controls +(0, 0.6) and +(0,0.6).. (0.5, 0)--(0.5, -0.8);
\begin{scope}[shift={(0.5, -0.3)}]
\draw [fill=white] (-0.55, -0.3) rectangle (0.55, 0.3);
\node at (0, 0) {\tiny $wv_k v_j$};
\end{scope}
\begin{scope}[shift={(1.4, 0.8)}]
\draw [fill=white] (-0.3, -0.3) rectangle (0.3, 0.3);
\node at (0, 0) {\tiny $\overline{v_k^*}$};  
\end{scope}
\end{tikzpicture}}}
\end{align*}

\begin{align*}
 \frac{1}{2}\vcenter{\hbox{\begin{tikzpicture}[scale=0.65]
     \draw [blue] (1.4, 2.3)--(1.4, -0.8);
    \begin{scope}[shift={(0,1.5)}]
    \draw [blue] (-0.5, 0.8)--(-0.5, 0) .. controls +(0, -0.6) and +(0,-0.6).. (0.5, 0)--(0.5, 0.8); 
\begin{scope}[shift={(0.5, 0.3)}]
\draw [fill=white] (-0.5, -0.3) rectangle (0.5, 0.3);
\node at (0, 0) {\tiny $v_k \mathbf{y} $};
\end{scope}
    \end{scope}
\draw [blue] (-0.5, -0.8)--(-0.5, 0) .. controls +(0, 0.6) and +(0,0.6).. (0.5, 0)--(0.5, -0.8);
\begin{scope}[shift={(0.5, -0.3)}]
\draw [fill=white] (-0.3, -0.3) rectangle (0.3, 0.3);
\node at (0, 0) {\tiny $w$};
\end{scope}
\begin{scope}[shift={(1.4, 0.8)}]
\draw [fill=white] (-0.3, -0.3) rectangle (0.3, 0.3);
\node at (0, 0) {\tiny $\overline{v_k^*}$};  
\end{scope}
\end{tikzpicture}}}
+
\frac{1}{2}\vcenter{\hbox{\begin{tikzpicture}[scale=0.65]
     \draw [blue] (1.4, 2.3)--(1.4, -0.8);
    \begin{scope}[shift={(0,1.5)}]
    \draw [blue] (-0.5, 0.8)--(-0.5, 0) .. controls +(0, -0.6) and +(0,-0.6).. (0.5, 0)--(0.5, 0.8); 
\begin{scope}[shift={(0.5, 0.3)}]
\draw [fill=white] (-0.4, -0.3) rectangle (0.4, 0.3);
\node at (0, 0) {\tiny $v_k$};
\end{scope}
    \end{scope}
\draw [blue] (-0.5, -0.8)--(-0.5, 0) .. controls +(0, 0.6) and +(0,0.6).. (0.5, 0)--(0.5, -0.8);
\begin{scope}[shift={(0.5, -0.3)}]
\draw [fill=white] (-0.3, -0.3) rectangle (0.3, 0.3);
\node at (0, 0) {\tiny $\mathbf{y}w$};
\end{scope}
\begin{scope}[shift={(1.4, 0.8)}]
\draw [fill=white] (-0.3, -0.3) rectangle (0.3, 0.3);
\node at (0, 0) {\tiny $\overline{v_k^*}$};  
\end{scope}
\end{tikzpicture}}}
-\frac{1}{2}\sum_{j=1}^m e^{\beta a_j/2}\vcenter{\hbox{\begin{tikzpicture}[scale=0.65]
     \draw [blue] (1.4, 2.3)--(1.4, -0.8);
    \begin{scope}[shift={(0,1.5)}]
    \draw [blue] (-0.5, 0.8)--(-0.5, 0) .. controls +(0, -0.6) and +(0,-0.6).. (0.5, 0)--(0.5, 0.8); 
\begin{scope}[shift={(0.5, 0.3)}]
\draw [fill=white] (-0.55, -0.3) rectangle (0.55, 0.3);
\node at (0, 0) {\tiny $v_kv_j$};
\end{scope}
    \end{scope}
\draw [blue] (-0.5, -0.8)--(-0.5, 0) .. controls +(0, 0.6) and +(0,0.6).. (0.5, 0)--(0.5, -0.8);
\begin{scope}[shift={(0.5, -0.3)}]
\draw [fill=white] (-0.3, -0.3) rectangle (0.3, 0.3);
\node at (0, 0) {\tiny $v_j^*w$};
\end{scope}
\begin{scope}[shift={(1.4, 0.8)}]
\draw [fill=white] (-0.3, -0.3) rectangle (0.3, 0.3);
\node at (0, 0) {\tiny $\overline{v_k^*}$};  
\end{scope}
\end{tikzpicture}}}
-\frac{1}{2}\sum_{j=1}^m e^{-\beta a_j/2}\vcenter{\hbox{\begin{tikzpicture}[scale=0.65]
     \draw [blue] (1.4, 2.3)--(1.4, -0.8);
    \begin{scope}[shift={(0,1.5)}]
    \draw [blue] (-0.5, 0.8)--(-0.5, 0) .. controls +(0, -0.6) and +(0,-0.6).. (0.5, 0)--(0.5, 0.8); 
\begin{scope}[shift={(0.5, 0.3)}]
\draw [fill=white] (-0.55, -0.3) rectangle (0.55, 0.3);
\node at (0, 0) {\tiny $v_kv_j^*$};
\end{scope}
    \end{scope}
\draw [blue] (-0.5, -0.8)--(-0.5, 0) .. controls +(0, 0.6) and +(0,0.6).. (0.5, 0)--(0.5, -0.8);
\begin{scope}[shift={(0.5, -0.3)}]
\draw [fill=white] (-0.3, -0.3) rectangle (0.3, 0.3);
\node at (0, 0) {\tiny $v_jw$};
\end{scope}
\begin{scope}[shift={(1.4, 0.8)}]
\draw [fill=white] (-0.3, -0.3) rectangle (0.3, 0.3);
\node at (0, 0) {\tiny $\overline{v_k^*}$};  
\end{scope}
\end{tikzpicture}}}   
\end{align*}

\begin{align*}
\frac{1}{2}\vcenter{\hbox{\begin{tikzpicture}[scale=0.65]
     \draw [blue] (1.4, 2.3)--(1.4, -0.8);
    \begin{scope}[shift={(0,1.5)}]
    \draw [blue] (-0.5, 0.8)--(-0.5, 0) .. controls +(0, -0.6) and +(0,-0.6).. (0.5, 0)--(0.5, 0.8); 
\begin{scope}[shift={(0.5, 0.3)}]
\draw [fill=white] (-0.5, -0.3) rectangle (0.5, 0.3);
\node at (0, 0) {\tiny $\mathbf{y}$};
\end{scope}
    \end{scope}
\draw [blue] (-0.5, -0.8)--(-0.5, 0) .. controls +(0, 0.6) and +(0,0.6).. (0.5, 0)--(0.5, -0.8);
\begin{scope}[shift={(0.5, -0.3)}]
\draw [fill=white] (-0.4, -0.3) rectangle (0.4, 0.3);
\node at (0, 0) {\tiny $wv_k $};
\end{scope}
\begin{scope}[shift={(1.4, 0.8)}]
\draw [fill=white] (-0.3, -0.3) rectangle (0.3, 0.3);
\node at (0, 0) {\tiny $\overline{v_k^*}$};  
\end{scope}
\end{tikzpicture}}}
+
\frac{1}{2}\vcenter{\hbox{\begin{tikzpicture}[scale=0.65]
     \draw [blue] (1.4, 2.3)--(1.4, -0.8);
    \begin{scope}[shift={(0,1.5)}]
    \draw [blue] (-0.5, 0.8)--(-0.5, 0) .. controls +(0, -0.6) and +(0,-0.6).. (0.5, 0)--(0.5, 0.8); 
    \end{scope}
\draw [blue] (-0.5, -0.8)--(-0.5, 0) .. controls +(0, 0.6) and +(0,0.6).. (0.5, 0)--(0.5, -0.8);
\begin{scope}[shift={(0.5, -0.3)}]
\draw [fill=white] (-0.55, -0.3) rectangle (0.55, 0.3);
\node at (0, 0) {\tiny $\mathbf{y}wv_k$};
\end{scope}
\begin{scope}[shift={(1.4, 0.8)}]
\draw [fill=white] (-0.3, -0.3) rectangle (0.3, 0.3);
\node at (0, 0) {\tiny $\overline{v_k^*}$};  
\end{scope}
\end{tikzpicture}}}
-\frac{1}{2}\sum_{j=1}^m e^{\beta a_j/2}\vcenter{\hbox{\begin{tikzpicture}[scale=0.65]
     \draw [blue] (1.4, 2.3)--(1.4, -0.8);
    \begin{scope}[shift={(0,1.5)}]
    \draw [blue] (-0.5, 0.8)--(-0.5, 0) .. controls +(0, -0.6) and +(0,-0.6).. (0.5, 0)--(0.5, 0.8); 
\begin{scope}[shift={(0.5, 0.3)}]
\draw [fill=white] (-0.3, -0.3) rectangle (0.3, 0.3);
\node at (0, 0) {\tiny $v_j$};
\end{scope}
    \end{scope}
\draw [blue] (-0.5, -0.8)--(-0.5, 0) .. controls +(0, 0.6) and +(0,0.6).. (0.5, 0)--(0.5, -0.8);
\begin{scope}[shift={(0.5, -0.3)}]
\draw [fill=white] (-0.55, -0.3) rectangle (0.55, 0.3);
\node at (0, 0) {\tiny $v_j^*wv_k$};
\end{scope}
\begin{scope}[shift={(1.4, 0.8)}]
\draw [fill=white] (-0.3, -0.3) rectangle (0.3, 0.3);
\node at (0, 0) {\tiny $\overline{v_k^*}$};  
\end{scope}
\end{tikzpicture}}}
-\frac{1}{2}\sum_{j=1}^m e^{-\beta a_j/2}\vcenter{\hbox{\begin{tikzpicture}[scale=0.65]
     \draw [blue] (1.4, 2.3)--(1.4, -0.8);
    \begin{scope}[shift={(0,1.5)}]
    \draw [blue] (-0.5, 0.8)--(-0.5, 0) .. controls +(0, -0.6) and +(0,-0.6).. (0.5, 0)--(0.5, 0.8); 
\begin{scope}[shift={(0.5, 0.3)}]
\draw [fill=white] (-0.3, -0.3) rectangle (0.3, 0.3);
\node at (0, 0) {\tiny $v_j^*$};
\end{scope}
    \end{scope}
\draw [blue] (-0.5, -0.8)--(-0.5, 0) .. controls +(0, 0.6) and +(0,0.6).. (0.5, 0)--(0.5, -0.8);
\begin{scope}[shift={(0.5, -0.3)}]
\draw [fill=white] (-0.55, -0.3) rectangle (0.55, 0.3);
\node at (0, 0) {\tiny $ v_jwv_k$};
\end{scope}
\begin{scope}[shift={(1.4, 0.8)}]
\draw [fill=white] (-0.3, -0.3) rectangle (0.3, 0.3);
\node at (0, 0) {\tiny $\overline{v_k^*}$};  
\end{scope}
\end{tikzpicture}}}
\end{align*}

By the fact that
\begin{align*}
\mathbf{y}v_k-v_k \mathbf{y} =  e^{-\beta a_k/2}v_k  -   e^{\beta a_k/2} v_k,
\end{align*}
we see that the left hand side of Equation \eqref{eq:inter} is 
\begin{align*}
& \frac{1}{2} (e^{-\beta a_k/2} -  e^{\beta a_k/2})
 \vcenter{\hbox{\begin{tikzpicture}[scale=0.65]
     \draw [blue] (1.4, 2.3)--(1.4, -0.8);
    \begin{scope}[shift={(0,1.5)}]
    \draw [blue] (-0.5, 0.8)--(-0.5, 0) .. controls +(0, -0.6) and +(0,-0.6).. (0.5, 0)--(0.5, 0.8); 
\begin{scope}[shift={(0.5, 0.3)}]
\draw [fill=white] (-0.3, -0.3) rectangle (0.3, 0.3);
\node at (0, 0) {\tiny $v_k$};
\end{scope}
    \end{scope}
\draw [blue] (-0.5, -0.8)--(-0.5, 0) .. controls +(0, 0.6) and +(0,0.6).. (0.5, 0)--(0.5, -0.8);
\begin{scope}[shift={(0.5, -0.3)}]
\draw [fill=white] (-0.3, -0.3) rectangle (0.3, 0.3);
\node at (0, 0) {\tiny $ w$};
\end{scope}
\begin{scope}[shift={(1.4, 0.8)}]
\draw [fill=white] (-0.3, -0.3) rectangle (0.3, 0.3);
\node at (0, 0) {\tiny $\overline{v_k^*}$};  
\end{scope}
\end{tikzpicture}}}  
- e^{-\beta a_k/2} 
\vcenter{\hbox{\begin{tikzpicture}[scale=0.65]
     \draw [blue] (1.4, 2.3)--(1.4, -0.8);
    \begin{scope}[shift={(0,1.5)}]
    \draw [blue] (-0.5, 0.8)--(-0.5, 0) .. controls +(0, -0.6) and +(0,-0.6).. (0.5, 0)--(0.5, 0.8); 
    \end{scope}
\draw [blue] (-0.5, -0.8)--(-0.5, 0) .. controls +(0, 0.6) and +(0,0.6).. (0.5, 0)--(0.5, -0.8);
\begin{scope}[shift={(0.5, -0.3)}]
\draw [fill=white] (-0.4, -0.3) rectangle (0.4, 0.3);
\node at (0, 0) {\tiny $ wv_j$};
\end{scope}
\begin{scope}[shift={(1.4, 0.8)}]
\draw [fill=white] (-0.3, -0.3) rectangle (0.3, 0.3);
\node at (0, 0) {\tiny $\overline{v_k^*}$};  
\end{scope}
\end{tikzpicture}}} \\
& + \frac{1}{2} (e^{-\beta a_k/2} -  e^{\beta a_k/2})
 \vcenter{\hbox{\begin{tikzpicture}[scale=0.65]
     \draw [blue] (1.4, 2.3)--(1.4, -0.8);
    \begin{scope}[shift={(0,1.5)}]
    \draw [blue] (-0.5, 0.8)--(-0.5, 0) .. controls +(0, -0.6) and +(0,-0.6).. (0.5, 0)--(0.5, 0.8); 
    \end{scope}
\draw [blue] (-0.5, -0.8)--(-0.5, 0) .. controls +(0, 0.6) and +(0,0.6).. (0.5, 0)--(0.5, -0.8);
\begin{scope}[shift={(0.5, -0.3)}]
\draw [fill=white] (-0.3, -0.3) rectangle (0.3, 0.3);
\node at (0, 0) {\tiny $ wv_k$};
\end{scope}
\begin{scope}[shift={(1.4, 0.8)}]
\draw [fill=white] (-0.3, -0.3) rectangle (0.3, 0.3);
\node at (0, 0) {\tiny $\overline{v_k^*}$};  
\end{scope}
\end{tikzpicture}}}  +   e^{\beta a_k/2}\vcenter{\hbox{\begin{tikzpicture}[scale=0.65]
     \draw [blue] (1.4, 2.3)--(1.4, -0.8);
    \begin{scope}[shift={(0,1.5)}]
    \draw [blue] (-0.5, 0.8)--(-0.5, 0) .. controls +(0, -0.6) and +(0,-0.6).. (0.5, 0)--(0.5, 0.8); 
\begin{scope}[shift={(0.5, 0.3)}]
\draw [fill=white] (-0.3, -0.3) rectangle (0.3, 0.3);
\node at (0, 0) {\tiny $v_k$};
\end{scope}
    \end{scope}
\draw [blue] (-0.5, -0.8)--(-0.5, 0) .. controls +(0, 0.6) and +(0,0.6).. (0.5, 0)--(0.5, -0.8);
\begin{scope}[shift={(0.5, -0.3)}]
\draw [fill=white] (-0.55, -0.3) rectangle (0.55, 0.3);
\node at (0, 0) {\tiny $w$};
\end{scope}
\begin{scope}[shift={(1.4, 0.8)}]
\draw [fill=white] (-0.3, -0.3) rectangle (0.3, 0.3);
\node at (0, 0) {\tiny $\overline{v_k^*}$};  
\end{scope}
\end{tikzpicture}}} \\
=&
-\frac{1}{2} (e^{-\beta a_k/2} + e^{\beta a_k/2})
 \vcenter{\hbox{\begin{tikzpicture}[scale=0.65]
     \draw [blue] (1.4, 2.3)--(1.4, -0.8);
    \begin{scope}[shift={(0,1.5)}]
    \draw [blue] (-0.5, 0.8)--(-0.5, 0) .. controls +(0, -0.6) and +(0,-0.6).. (0.5, 0)--(0.5, 0.8); 
    \end{scope}
\draw [blue] (-0.5, -0.8)--(-0.5, 0) .. controls +(0, 0.6) and +(0,0.6).. (0.5, 0)--(0.5, -0.8);
\begin{scope}[shift={(0.5, -0.3)}]
\draw [fill=white] (-0.3, -0.3) rectangle (0.3, 0.3);
\node at (0, 0) {\tiny $ wv_k$};
\end{scope}
\begin{scope}[shift={(1.4, 0.8)}]
\draw [fill=white] (-0.3, -0.3) rectangle (0.3, 0.3);
\node at (0, 0) {\tiny $\overline{v_k^*}$};  
\end{scope}
\end{tikzpicture}}}  + \frac{1}{2} (e^{-\beta a_k/2} + e^{\beta a_k/2}) \vcenter{\hbox{\begin{tikzpicture}[scale=0.65]
     \draw [blue] (1.4, 2.3)--(1.4, -0.8);
    \begin{scope}[shift={(0,1.5)}]
    \draw [blue] (-0.5, 0.8)--(-0.5, 0) .. controls +(0, -0.6) and +(0,-0.6).. (0.5, 0)--(0.5, 0.8); 
\begin{scope}[shift={(0.5, 0.3)}]
\draw [fill=white] (-0.3, -0.3) rectangle (0.3, 0.3);
\node at (0, 0) {\tiny $v_k$};
\end{scope}
    \end{scope}
\draw [blue] (-0.5, -0.8)--(-0.5, 0) .. controls +(0, 0.6) and +(0,0.6).. (0.5, 0)--(0.5, -0.8);
\begin{scope}[shift={(0.5, -0.3)}]
\draw [fill=white] (-0.3, -0.3) rectangle (0.3, 0.3);
\node at (0, 0) {\tiny $w$};
\end{scope}
\begin{scope}[shift={(1.4, 0.8)}]
\draw [fill=white] (-0.3, -0.3) rectangle (0.3, 0.3);
\node at (0, 0) {\tiny $\overline{v_k^*}$};  
\end{scope}
\end{tikzpicture}}}
\end{align*}
This show that the semigroup given by $\widehat{\cL}_0$ has the intertwining property.
It is worth noting that the extension is not unique.
The analysis of the generalized logarithmic Sobolev inequality will be pursued in a forthcoming paper.
\bibliographystyle{abbrv}
\bibliography{Markov}
\end{document}